\documentclass[final,leqno,onefignum]{siamltex}

\headheight=8pt \topmargin=0pt
\textheight=624pt \textwidth=432pt
\oddsidemargin=30pt \evensidemargin=30pt
\marginparwidth=60pt

\usepackage[utf8]{inputenc}
\usepackage[dvips]{epsfig}
\usepackage{amsmath,amssymb,amsfonts,mathrsfs,latexsym}
\usepackage{enumerate,enumitem} 
\usepackage{graphicx,color}

\setlength{\parindent}{1em}
\setlength{\parskip}{.5em}


\newtheorem{example}{Example}[section]
\newtheorem{remark}{Remark}[section]
\newtheorem{assumption}{Assumption}

\newcommand{\nn}{\nonumber}
\newcommand{\beq}{\begin{equation}}
\newcommand{\eeq}{\end{equation}}
\newcommand{\beqa}{\begin{eqnarray}}
\newcommand{\eeqa}{\end{eqnarray}}
\newcommand{\lara}[1]{\langle #1 \rangle}
\newcommand{\Lara}[1]{\Big\langle #1 \Big\rangle}
\newcommand{\arr}[2]{\begin{array}{#1}#2\end{array}}
\newcommand{\norm}[2]{{\| #1 \|}_{#2}}
\newcommand{\sbs}{\subseteq}
\newcommand{\sps}{\supseteq}
\newcommand{\ovl}[1]{\overline{#1}}
\newcommand{\wtil}[1]{\widetilde{#1}}

\renewcommand{\d}{\ensuremath{\partial}}
\renewcommand{\div}{\mathrm{div}\:}
\newcommand{\surf}{\wtil}
\newcommand{\sd}{\surf{\d}}

\newcommand{\sdiv}{\surf{\div}}
\newcommand{\snabla}{\surf{\nabla}}
\newcommand{\dint}{{d}}

\newcommand{\dvol}{{d}V}
\newcommand{\dsurf}{{d}S}
\newcommand{\dline}{{d}l}

\newcommand{\Id}{\mathrm{Id}}
\newcommand{\tr}{\mathrm{tr}}

\newcommand{\loc}{\ensuremath{\mathrm{loc}}}
\newcommand{\cpt}{\ensuremath{\mathrm{comp}}}
\newcommand{\op}{\ensuremath{\mathrm{op}}}
\newcommand{\sym}{\ensuremath{ \mathrm{sym}}}

\newcommand{\cC}{\ensuremath{\mathcal{C}}}
\newcommand{\cD}{\ensuremath{{\cal D}}}
\newcommand{\cE}{\ensuremath{{\cal E}}}
\newcommand{\cS}{\ensuremath{{\cal S}}}
\newcommand{\Lip}{\mathrm{Lip}}

\newcommand{\mb}[1]{\ensuremath{\mathbb{#1}}}
\newcommand{\NN}{\mb{N}}

\newcommand{\RR}{\mb{R}}
\newcommand{\CC}{\mb{C}}

\newcommand{\al}{\alpha}

\newcommand{\ga}{\gamma}
\newcommand{\Ga}{\Gamma}
\newcommand{\de}{\delta}

\newcommand{\eps}{\epsilon}
\newcommand{\veps}{\varepsilon}

\newcommand{\vphi}{\varphi}

\newcommand{\la}{\lambda}
\newcommand{\La}{\Lambda}

\newcommand{\Om}{\Omega}

\newcommand{\Sig}{\Sigma}


\newcommand{\earth}{{\ensuremath{B}}}
\newcommand{\earthi}{{\ensuremath{\scriptstyle{B}}}}
\newcommand{\earthfs}{\ensuremath{\textstyle{\earth^{\fluid\solid}}}}

\newcommand{\earthfsc}{\ensuremath{\textstyle{\earth^{\fluid\solid\cpl}}}}
\newcommand{\earthfsci}{\ensuremath{\scriptstyle{\earthi^{\fluid\solid\cpl}}}}
\newcommand{\func}{\mathcal{J}}
\newcommand{\action}{\mathscr A}

\newcommand{\tim}{t}
\newcommand{\timd}{{t'}}
\newcommand{\timdd}{{t''}}
\newcommand{\timO}{{t_0}}
\newcommand{\kin}{\mathrm{\scriptstyle{kin}}}
\newcommand{\pot}{\mathrm{\scriptstyle{pot}}}
\newcommand{\solid}{\mathrm{\scriptscriptstyle{S}}}
\newcommand{\fluid}{\mathrm{\scriptscriptstyle{F}}}
\newcommand{\cpl}{\mathrm{\scriptscriptstyle{C}}}
\newcommand{\fxf}{\mathrm{\scriptscriptstyle{FF}}}
\newcommand{\fxs}{\mathrm{\scriptscriptstyle{FS}}}
\newcommand{\sxs}{\mathrm{\scriptscriptstyle{SS}}}
\newcommand{\PK}{\mathrm{\scriptstyle{PK}}}

\newcommand{\elast}{\mathrm{\scriptstyle{elast}}}

\newcommand{\grav}{\mathrm{\scriptstyle{gravity}}}
\newcommand{\ext}{\mathrm{\scriptstyle{force}}}
\newcommand{\motion}{\mathrm{\scriptstyle{motion}}}
\newcommand{\density}{\mathrm{\scriptstyle{density}}}

\newcommand{\Reg}{\mathrm{Reg}}
\newcommand{\Am}{\mathcal{A}}
\newcommand{\scJ}{\mathscr{J}}
\newcommand{\std}{\mathrm{ss}}

\title{Variational formulation of the earth's elastic-gravitational
  deformations under low regularity conditions}

\author{Katharina Brazda~\thanks{Fakult\"{a}t f\"{u}r Mathematik,
    Universit\"{a}t Wien, 1090 Vienna, Austria ({\tt
      katharina.brazda@univie.ac.at})}
\and Maarten V. de Hoop~\thanks{Simons Chair in Computational and
  Applied Mathematics and Earth Science, Rice University, Houston TX
  77005, USA ({\tt mdehoop@rice.edu})}
\and G\"unther H\"ormann~\thanks{Fakult\"{a}t f\"{u}r Mathematik,
  Universit\"{a}t Wien, 1090 Vienna, Austria ({\tt
    guenther.hoermann@univie.ac.at})}
}

\begin{document}
\maketitle

\pagestyle{myheadings}
\thispagestyle{plain}
\markboth{BRAZDA, DE HOOP, and H\"{O}RMANN}{Earth's
  elastic-gravitational deformations}

\begin{abstract}
We present a construction of the action, in the framework of the
calculus of variations and Sobolev spaces, describing deformations and
the oscillations of a uniformly rotating, elastic and self-gravitating
earth. We establish the Fr\'{e}chet differentiability of the action
under minimal regularity assumptions, which constrain the possible
composition of an earth model. Thus we obtain well-defined
Euler-Lagrange equations, weakly and strongly, that is, the system of
elastic-gravitational equations.
\end{abstract}


\section{Introduction} \label{sec:Introduction}

We present a construction of the action, in the framework of the
calculus of variations and Sobolev spaces, describing the oscillations
of a rotating earth, or any terrestrial planet, and establish its
Fr\'{e}chet differentiability. The key result pertains to obtaining
minimal regularity conditions representative of a ``real'' earth in a
consistent variational formulation with well-defined
elastic-gravitational, Euler-Lagrange equations. In the process, we
highlight the underlying conservation laws.

Under significantly stronger regularity conditions, the action has
been well known and, although without a rigorous analysis, described
in the literature (e.g., \cite{PeJa:58, WoDa:78, Gilbert:80,
  Valette:86, LoRo:90, DaTr:98, WoDe:07}). Minimal regularity
conditions are reflected in the allowable roughness of material
parameters of the interior and of the topography of the (major)
interior boundaries, namely, discontinuities such as the Moho, the
core-mantle boundary and the inner-core boundary. In a second paper
\cite{dHHP:15}, following a variational formulation, we establish the
existence and uniqueness of solutions, that is, a minimizer of the
action.

The regularity conditions constrain the composition of an earth model
on different levels. As a whole, the earth is viewed as a continuous
body, in fact, as a composite domain with interior boundaries. These
accommodate, for example, the outer core, the crust, the oceans, but
the interior boundaries can also coincide with general phase
transitions and separate distinct regions. The interior boundaries
correspond to a segmentation of the earth into subbodies that are
modeled as Lipschitz domains. On each part, the (particle) motions
minimally have to be positively oriented Lipschitz regular. This
follows from the principles of continuum mechanics
\cite{MaHu:83rep}. We can allow the material parameters relevant to
gravity and elasticity, the initial density of mass and the elements
of the stiffness tensor, to be $L^{\infty}$.

Even though we start from a general, nonlinear physics perspective, we
then assume small displacements for the oscillations, that is, small
perturbations of the equilibrium position, and invoke a natural
linearization on which, in fact, normal mode seismology is based. The
reference, here, is an unperturbed rotating earth with an associated
reference gravitational potential and prestress. These and the
associated perturbations can be shown to lie in subsets of appropriate
Sobolev spaces, hence the calculus of variations applies.

Perhaps the most technical part of our analysis appears in gluing
fluid and solid subdomains, in a composite domain, together. The
fluid-solid interior boundaries give explicit contributions to the
action. The mathematical construction from first principles (in fact,
Newton's third law) of appropriate fluid-solid interior boundary
conditions, with frictionless tangential slip, requires special
care. These have profound implications on the analyses of
well-posedness and of the spectrum. The form of the equations that is
directly obtained from the variational principle does not have a
spatial component operator that is coercive. Thus the existence of
energy estimates could be unclear. The remedy for this is discussed in
the mentioned second paper. Moreover, the presence of a fluid outer
core in a solid mantle implies the existence of an essential
spectrum. We discuss this, and a general characterization of the
spectrum, in a third paper [De Hoop, Holman, Jimbo \& Nakamura, in
  preparation]. Finally, we note that the fluid-solid interior
boundary term in the action can be straightforwardedly modified to
represent a rupture, in the linearized framework, and provides a
coupling with nonlinear friction laws in rupture dynamics.

The outline of the paper is as follows. In Section~\ref{sec:2} we
introduce the geometric, kinematic and physical field components and
their regularity. In particular, we define admissible motions. In
Section~\ref{sec:3} we construct and analyze the action, that is,
volume and surface (fluid-solid boundary) Lagrangian densities
starting from nonlinear physics; the regularity conditions introduced
in Section~\ref{sec:2} guarantee that it is well defined. We briefly
prove the consistency with underlying conservation laws. In
Section~\ref{sec:4}, we develop the linearization. The resulting
action is the subject of the further analysis based on quadratic
volume and surface Lagrangian densities. In Section~\ref{sec:5}, we
justify the application of Hamilton's principle by proving the
existence of the Fr\'{e}chet derivative of the action. We conclude
with writing the Euler-Lagrange equations in the spatially weak
form. As we mentioned before, their intuitive derivation has been well
known under significantly stronger regularity conditions. In
Section~\ref{sec:ruptures} we add shear faulting or shear ruptures,
incorporating a friction law, to the spatially weak formulation of the
Euler-Lagrange equations.
 
\subsection{Basic notation} \label{ssec:basic}

Inner products are denoted by brackets $\lara{\cdot|\cdot}$, whereas
$\lara{\cdot,\cdot}$ stands for a duality. We frequently write $a
\cdot b := \lara{a|b}_{\RR^n}$ for the Euclidean inner product (dot
product) of $a,b\in\RR^n$. Matrix multiplication (composition of
linear operators) also is denoted by a dot: $A \cdot B$. This is
consistent, as we identify row vectors with column vectors
($\RR^{1\times n} = \RR^{n\times 1} = \RR^n$). The matrix inner
product of $A,B\in\RR^{m\times n}$ is based on the Frobenius norm:
$A:B = \lara{A|B}_{\RR^{m\times n}} := \tr(A^T\cdot B)$. Components
are defined with respect to Cartesian coordinates: $a\cdot b=a_ib_i$
and $A:B = A_{ij} B_{ij}$. Here, and occasionally also the main text,
we employ summation convention. The derivative $Df$ of $f\colon
\RR^n\to \RR^m$ is a linear operator acting from $\RR^n$ to $\RR^m$
and $Df(a)$ is the derivative of $f$ in direction of $a\in\RR^n$ (see
Definition \ref{def:FrechetGateaux}). By abuse of terminology, $Df$ is
also referred to as the gradient of $f$, even though the true gradient
is given by the column vector $\mathrm{grad}\:f=(Df)^T$ if $m=1$, see
also \cite[p.\ xii]{MaHu:83rep}). We generally replace $D$ by $\nabla$
if only spatial coordinates are involved and denote the time
derivative by $\d_t$ or an over dot. The matrix representation of $Df$
follows from the identification $Df \cdot a := Df(a)$. Thus the
components of the matrix $Df$ read $(Df)_{ij}=\partial_j f_i$, for
$1\leq i\leq m$ and $1\leq j\leq n$.  We further use the row-wise
definition $\div f=(\d_jf_{ij})_{i=1}^m$ of the divergence of
$f\colon\RR^n\to\RR^{m\times n}$, that is, the derivative operator is
always contracted with the last index of $f$. Note that this
convention is transposed to the one in \cite{DaTr:98}, which also
explains the difference of their surface derivative operator
\cite[p.\ 827, (A.73)]{DaTr:98} and of (\ref{eq:surfdiff}), which is
the same as e.g.\ \cite{WoDe:07}, \cite[p.\ 355]{AmFuPa:00}, or
\cite[p.\ 95]{Gurtin:00}. The notation for surface operations
$\snabla$ and $\sdiv$ is summarized in Appendix \ref{app:surf}.

\subsection{Hamilton's principle of stationary action} 

Classical and relativistic mechanics and field theory can be
formulated in terms of the Hamilton's principle of stationary action
\cite{Synge:60, SeWh:68, FrSc:78, Goldstein:80, Salmon:88, MaRa:94}.
In classical point mechanics, when $q\colon [t_0,t_1]\to\RR^m$
describes the trajectory of a particle in $\RR^m$ between the times
$t_0<t_1\in\RR$, the action typically is given by an integral
$\action(q)=\int_I\mathscr{L}(t,q(t),\dot q(t))\:\dint t$ where
$\mathscr{L}\colon I\times\RR^m\times\RR^m\to\RR$ is called the
Lagrangian and $I:=[t_0,t_1]$. In field theory the state of the system
is characterized by a function $y$ of space and time, that is $y\colon
B\times I\to\RR^m$ for an interval $I\subseteq\RR$ and
$B\subseteq\RR^n$ open and bounded.  The {\bf Lagrangian}
$\mathscr{L}$ for a field theory is a functional of $y$ which may be
expressed as a volume and possibly also a surface integral. Hence, the
action is of the form
\begin{equation}\label{eq:Ham_action} 
\action(y)=\int_I\mathscr{L}(y)\:\dint t
=\int_I \left(\int_BL\:\dint V+\int_{S}{L_S}\:\dint S+\int_{\d B}{L_{\d B}}\:\dint S\right)\dint t.
\end{equation}
The integrand $L \colon B \times I \times \RR^m \times \RR^{m \times
  n} \times \RR^m \to \RR$ is called the volume Lagrangian
density. Its argument is $(x,t,y(x,t),\nabla y(x,t),\dot y(x,t))$. For
conservative systems $L$ is given by the difference of kinetic energy
density $E_{\kin}$ and potential energy density $E_{\pot}$ (see
\cite[p.\ 108]{Synge:60}, \cite[p.\ 22]{Goldstein:80})
\begin{equation}
L=E_{\kin}-E_{\pot}.
\end{equation}
The surface Lagrangian density $L_S\colon S\times I\times\RR^m \times
\RR^{m \times n} \times \RR^m \to \RR$ takes into account the
interaction energy of different regions within $B$ separated by the
hypersurface $S\sbs B$. The exterior Lagrangian $L_{\d B}\colon \d
B\times I\times\RR^m \times \RR^{m \times n} \times \RR^m \to\RR$
corresponds to conditions prescribed on the exterior boundary $\d
B$. Both surface Lagrangians depend on $(x,t,y(x,t),\snabla
y(x,t),\dot y(x,t))$. Here $\snabla$ denotes the surface gradient
operator introduced in (\ref{eq:surfdiff}). The Lagrangians may also
depend on higher derivatives or even in a nonlocal manner on the state
variables $y$.

Under suitable regularity conditions (see Appendix \ref{app:var})
stationarity of $\action$ at $y$ implies that $y$ is a solution of the
\textbf{Euler-Lagrange equations (EL)}
\begin{equation}\label{eq:Ham_EL}
\d_t(\d_{\dot y}L)+\nabla\cdot(\d_{\nabla y}L)-\d_yL=0
\quad\textrm{in}\quad B\times I
\end{equation}
and, if $L_{\d B}=0$, satisfies the \textbf{natural boundary
  conditions (NBC)}
\begin{equation}\label{eq:Ham_NBC} 
(\d_{\nabla y}L)\cdot\nu=0
\quad\textrm{on}\quad\d B\times I,
\end{equation}
where $\nu\colon\d B\to\RR^n$ denotes the exterior unit normal vector
to $\d B$. On sufficiently smooth orientable interior surfaces
$\Sig\sbs B$ with unit normal $\nu\colon\Sig\to\RR^n$, $y$ satisfies
the natural jump relations, also referred to as \textbf{natural
  interior boundary conditions (NIBC)},
\begin{equation}\label{eq:Ham_IBC} 
[\d_{\nabla y}L]_-^{+}\cdot\nu=0
\quad\textrm{on}\quad(\Sig\setminus S)\times I
\end{equation}
and
\begin{equation}\label{eq:Ham_IBCS} 
\d_{y}{L_S}-\d_t(\d_{\dot{y}}{L_S})-\snabla\cdot(\d_{\snabla y}{L_S})-[\d_{\nabla y}L]_-^{+}\cdot\nu=0
\quad\textrm{on}\quad S\times I
\end{equation}
with $[\:.\:]_-^{+}$ denoting the jump of the enclosed quantity across
the surface, see (\ref{eq:jump}).  The EL (\ref{eq:Ham_EL}) coincide
with Newton's equation of motion and the NBC (\ref{eq:Ham_NBC}) and
NIBC (\ref{eq:Ham_IBC}, \ref{eq:Ham_IBCS}) are the associated
dynamical boundary or jump conditions. Thus, Hamilton's principle
incorporates Newton's balance of momentum equation, boundary
conditions, and even additional constraints in one single
functional. Moreover, by Noether's theorem, the symmetry properties of
the action yield the conserved quantities of the system.

An integral formulation of the equations of motion typically requires
less regularity from the fields involved than the classical
differential form.  In particular, the weak EL, that is, the integral
formulation of the stationarity of the action $\de\action=0$,
coincide with the ``principle of virtual work'' (also, though
incorrectly, known as the ``principle of virtual power''), stated for
$L_{\d B}=0$:
\begin{eqnarray}\label{eq:Ham_virtual} 
\de\action(y,h)&=&\int_I\int_B(\d_yL\cdot h+\d_{\dot y}L\cdot\dot h+\d_{\nabla y}L:\nabla h)\:\dint V\dint t\nn\\
&&+\int_I\int_{S}(\d_yL_S\cdot h+\d_{\dot y}L_S\cdot\dot h+\d_{\snabla y}L_S:\snabla h)\:\dint S\dint t=0
\end{eqnarray}
for all ``virtual displacements'' $h\colon B\times I\to\RR^m$ (where
for simplicity we have omitted the argument of the Lagrangian
densities).  Here $\de\action(y,h)$ denotes the first variation of
$\action$ at $y$ in direction of $h$. Formally, the differential EL
are obtained from the weak form (\ref{eq:Ham_virtual}) by the
divergence theorem (\ref{eq:IBC_calc}).  Moreover, the principle of
virtual work can be shown to be equivalent to the integral formulation
of Newton's second law without requiring classical regularity
\cite{AnOs:79}.

\section{An earth model of low regularity}
\label{sec:2}

\subsection{Continuous bodies and composite domains}
\label{ssec:continuum}

\subsubsection{Continuous bodies}

A body is an abstraction of heuristic ideas of aggregations of matter capable of
deformation and motion.   We model continuous bodies by suitable subsets $B \subseteq \RR^3$ taken from a specified family $\mathcal{B}$ in $\RR^3$, which in addition to volume possess mass and other physical properties and can support forces \cite{Antman:05}. The elements of a body are called particles. 
A reasonable requirement is measurability and boundedness of the sets $B\in\mathcal{B}$ that can be assumed by continuous bodies. However this class still is too general for most purposes of continuum mechanics, whereas the class of compact subsets with smooth boundary is often too restrictive in order to model the behavior of general, physically realistic, bodies in mathematical terms.  Since fundamental concepts and key arguments of continuum mechanics rely on the validity of a version of the divergence theorem, the set of bodies may be restricted to the following:
\begin{eqnarray}\label{eq:bodies}
&& \mathcal{B}:=\{B\sbs\RR^3:\:B\:\text{open, bounded, connected, and with boundary $\d B$ such}\qquad\\
&&\qquad \text{ that the divergence theorem holds for $\cC^1$-vector fields}\}.\nn
\end{eqnarray}
The natural setting satisfying these requirements is based on geometric measure theory \cite[p.\ 56]{Silhavy:97}: Bodies $B\in\mathcal{B}$ are sets of finite perimeter and motions are functions of bounded variation (see also \cite{NoVi:88,DeOw:93,DelPiero:03, DelPiero:07}). Describing slip conditions associated to fracture and fluid-solid boundary conditions lead to the consideration of finite disjoint unions of such suitable regions. 
For our purposes of modeling parts of a composite fluid-solid earth, it suffices to consider continuous bodies described by connected open sets $B$ in $\RR^3$ together with any parts of their boundaries (in accordance with \cite[p.\ 432]{Antman:05}) and further assume that the boundaries $\d B$ are Lipschitz ($\Lip$) with $B$ located locally on one side of $\d B$.  

\begin{definition}[{{\bf $\Lip$-domains, boundaries, and surfaces}}]
\label{def:Lipdom}
\begin{enumerate}[label=(\roman*)]
\item A bounded open connected set $V\sbs\RR^n$ is called a
  $\Lip$-domain if its boundary $\d V$ is locally (in a finite cover
  by open neighborhoods) given as the graph of a Lipschitz continuous
  function and such that $V$ in this local representation is located
  only on one side of the graph describing the boundary
  (cf. \ \cite[Def.\ 1.2.1.1, p.\ 5]{Grisvard:85} or
  \cite[p.\ 64]{Ziemer:89} or \cite[p.\ 32]{Ciarlet:88}).
\item The closure of a $\Lip$-domain $V$ is an $n$-dimensional
  $\Lip$-submanifold in $\RR^n$ with boundary, see
  \cite[Def.\ 1.2.1.2, p.\ 6 and p.\ 7]{Grisvard:85}. In this case the
  boundary $\d V$ is referred to as a $\Lip$-boundary.  Any connected
  $(n-1)$-dimensional submanifold $S\subseteq\d V$ is a
  $\Lip$-submanifold (possibly with boundary) and is called a
  $\Lip$-(hyper-)surface. Note that a $\Lip$-boundary $\d V$ is
  orientable and we choose
$$
\nu\colon\d V\to\RR^n
$$ 
  as the exterior unit normal vector.  It satisfies $\nu\in
  L^{\infty}(\d V)^n$ as on a $\Lip$-surfaces $\nu$ exists almost
  everywhere (see \cite[p.\ 83]{KiOd:88}, \cite[p.\ 35]{Ciarlet:88}).
\end{enumerate}
\end{definition}
 
\begin{remark}[{{\bf Characterization of domains}}]
\begin{enumerate}[label=(\roman*)]
\item $\cC^{k,\al}$-domains, boundaries, and surfaces are defined
analogously by replacing $\Lip(=\cC^{0,1})$ in Definition \ref{def:Lipdom} by $\cC^{k,\al}$ for $k\in\NN_0$ and $\al\in(0,1]$. However, the case $\al=1$, that is, $\cC^{k,1}$, appears naturally in models of the earth.

\item We note that modeling of fractal boundaries would require uniform H\"older regularity $\cC^{0,\al}$ with $0 < \al < 1$ (\cite{Jaffard:97a,Jaffard:97b}), which is below Lipschitz regularity, thus does not allow for well-defined $L^\infty$ normal vector fields and applications of  divergence theorems close to the classical form. 
\end{enumerate}
\end{remark}

We note that the boundaries of $\Lip$-domains may have corners or
edges. We recall the classical divergence theorem for matrix-valued
fields on $\Lip$-domains, but state it already in the form of its
natural extension to $H^1$ regular fields:

\begin{lemma}[{{\bf Divergence theorem}}]\label{lem:divthm} 
If $V\sbs\RR^n$ is a $\Lip$-domain and $f\in H^1(V)^{m\times n}$,  then 
\begin{equation}\label{eq:divthm}
\int_V\div f\:\dvol=\int_{\d V}f\cdot\nu\:\dsurf.
\end{equation}
\end{lemma}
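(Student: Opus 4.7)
My plan is to reduce to the scalar case and then pass from smooth functions to $H^1$ by density.

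First, I would observe that the statement is row-wise: if $f\in H^1(V)^{m\times n}$ and $f_i\in H^1(V)^n$ denotes its $i$-th row, then by the convention $(\div f)_i=\d_j f_{ij}$ and $(f\cdot\nu)_i=f_{ij}\nu_j$, the identity reduces to proving
\begin{equation*}
\int_V \div g \,\dvol=\int_{\d V} g\cdot \nu\,\dsurf
\qquad\text{for every }g\in H^1(V)^n,
\end{equation*}
component by component (with $g=f_i$). So from now on I focus on the vector-field version.

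Next, I would invoke the classical divergence theorem for $\cC^1$-vector fields on $\Lip$-domains. Since $V$ is a Lipschitz domain, the exterior unit normal $\nu\in L^\infty(\d V)^n$ exists a.e.\ by Definition~\ref{def:Lipdom}, and the surface measure $\dsurf$ is the natural Hausdorff measure on $\d V$; under these hypotheses the identity
\begin{equation*}
\int_V \div g\,\dvol=\int_{\d V} g\cdot\nu\,\dsurf
\end{equation*}
holds for $g\in\cC^1(\overline V)^n$ (this is precisely the property built into the class $\mathcal B$ in~\eqref{eq:bodies}; a proof via a Lipschitz partition of unity and local flattening of the boundary is standard).

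I would then approximate. For Lipschitz domains, $\cC^\infty(\overline V)^n$ is dense in $H^1(V)^n$ (Meyers--Serrin combined with a bounded extension operator, or a direct reflection/mollification argument available for Lipschitz boundaries). Pick $g_k\in\cC^\infty(\overline V)^n$ with $g_k\to g$ in $H^1(V)^n$. Then $\div g_k\to \div g$ in $L^2(V)$, so the left-hand sides converge. For the right-hand sides I would use continuity of the trace operator $\gamma\colon H^1(V)\to H^{1/2}(\d V)\hookrightarrow L^2(\d V)$, which is available on Lipschitz domains; this gives $g_k|_{\d V}\to g|_{\d V}$ in $L^2(\d V)^n$. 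Since $\nu\in L^\infty(\d V)^n$, the boundary integrals converge, and the identity follows in the limit.

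The main obstacle, and the reason this is not entirely trivial at $H^1$ regularity, is the interplay between the trace and the Lipschitz geometry: one needs both (i) the availability of an almost-everywhere exterior normal with $\nu\in L^\infty(\d V)^n$ and (ii) a well-defined trace on $\d V$ compatible with the a.e.\ pointwise values of smooth approximants. Both of these are genuinely nontrivial on Lipschitz boundaries but are classical; they are the technical ingredients that make the density-plus-trace argument go through, and are exactly the minimal structure assumed in Definition~\ref{def:Lipdom}. No additional smoothness of $\d V$ is needed.
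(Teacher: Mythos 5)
Your proof is correct, but it takes a genuinely different route from the one the paper relies on. The paper does not prove Lemma~\ref{lem:divthm} directly; it cites a geometric-measure-theoretic result (the Chen--Torres--Ziemer divergence theorem for sets of finite perimeter), notes that $\Lip$-domains have finite perimeter, and then explains how the surface integral is to be read, in general, as the $H^{1/2}$--$H^{-1/2}$ duality pairing between the trace of $f$ and $\nu\in L^\infty(\d V)\subset H^{-1/2}(\d V)$. That route has the advantage of applying unchanged to the weaker hypothesis $f\in H_{\div}(V)^{m\times n}$ mentioned immediately after the lemma, where the normal trace only lands in $H^{-1/2}(\d V)$ and an $L^1(\d V)$ interpretation is not available.

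Your argument is the classical density route: row-wise reduction, the $\cC^1$ divergence theorem on $\Lip$-domains (by partition of unity and local flattening), density of $\cC^\infty(\ovl V)^n$ in $H^1(V)^n$ via an extension operator, and passage to the limit using boundedness of the trace $H^1(V)\to L^2(\d V)$ together with $\nu\in L^\infty(\d V)^n$. This is self-contained, avoids BV/finite-perimeter machinery, and — since for $f\in H^1$ the trace is genuinely in $L^2(\d V)$ — lets you read the right-hand side as an honest Lebesgue surface integral rather than a duality pairing, which is exactly the interpretation the paper singles out as a special feature of the $H^1$ case. What you lose relative to the paper's citation is that the density argument does not extend directly to $H_{\div}$ fields: there one cannot control the full trace, only the normal trace in $H^{-1/2}(\d V)$, and one must argue by duality against $H^1(V)$ test functions rather than by $L^2$ convergence of boundary values. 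One cosmetic point: appealing to the defining property of $\mathcal B$ in~\eqref{eq:bodies} for the $\cC^1$ base case is circular, since that class is \emph{defined} by validity of the divergence theorem; but you correctly fall back on the standard local flattening proof, which is what is actually needed.
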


In \cite[Thm.\ 2.35]{ChToZi:09}, the divergence theorem is proven for
$V$ having finite perimeter, which is true for $\Lip$-domains,
cf.\ \cite[p.\ 248]{Ziemer:89}. Note that in the Sobolev space case
the surface integral in (\ref{eq:divthm}) has to be understood as
Sobolev duality of the traces of the component functions of $f$ in
$H^{\frac{1}{2}}(\d V)$ with corresponding components of $\nu$ in
$L^{\infty}(\d V)\sbs L^2(\d V)\sbs H^{-\frac{1}{2}}(\d V)$. Instead
of $H^1(V)^{m\times n}$ it suffices to assume that (see
\cite[p.\ 93]{KiOd:88}, \cite[p.\ 511]{DL:V5})
\begin{equation}\label{eq:Hdiv}
f\in H_{\div}(V)^{m\times n}:=\{f\in L^2(V)^{m\times n}:\:\div f\in  L^2(V)^m\}.
\end{equation}  
Thus, by Definition \ref{def:Lipdom} and Lemma \ref{lem:divthm}, a
$\Lip$-domain (or a $\cC^{k,\al}$-domain with $k \geq 1$) in
particular satisfies the properties of an element of $\mathcal{B}$
given in (\ref{eq:bodies}). Our basic setting will be the Lipschitz
framework and we will require higher regularity, that is $\cC^{k,1}$
with $k\geq 1$, only where necessary. Consequently, for our purposes
we may redefine the class of bodies as
\begin{equation}\label{eq:bodiesLip}
\mathcal{B}:=\{B\subseteq \RR^3:\:\text{the open interior } B^\circ \text{ is a $\Lip$-domain}\}.
\end{equation}

\subsubsection{Interior boundaries of composite domains}

We introduce an abstract notion of  so-called interior boundary for a general subset of $\RR^n$, whose intended meaning can be understood from the following very simple example: Let $V$ be the open unit ball centered at $0$ in $\RR^3$ with an arbitrary plane through $0$ removed. Then the unit sphere will be considered the exterior boundary of $V$, while the disc inside the removed plane, but without its surrounding unit circle, will be called  the interior boundary  $\Sig$ of $V$. Observe that in this example we obtain $\Sigma$ topologically by removing the boundary of the closed unit ball $\ovl{V}$, that is, the unit sphere, from the boundary of the two open disjoint half balls comprising $V$.

\begin{definition}[{{\bf Interior boundaries}}]\label{def:intbd}
The interior boundary $\Sig$ of a set $V\sbs\RR^n$ is defined by
\begin{equation}\label{eq:intbd}
\Sig:=\d V\setminus \d\ovl V.
\end{equation}
\end{definition}

According to the definition, interior boundary points of $V$ are
elements of the boundary $\d V$ which are not boundary points of the
closure $\ovl{V}$. We recall that one always has $\d\ovl{V}\sbs\d V$,
because $\d \ovl{V} = \ovl{\ovl{V}} \setminus \ovl{V}^\circ = \ovl{V}
\setminus \ovl{V}^\circ \subseteq \ovl{V} \setminus V^\circ = \d
V$. By construction, we have the disjoint union $\d
V=\Sig\cup\d\ovl{V}$ and combined with $\ovl{V}=V\cup\d V$, we obtain
the representation
\begin{equation}\label{eq:intbd_decomp_ovl}
\ovl{V}=V\cup\Sig\cup\d\ovl{V}
\end{equation}  
which gives a decomposition into disjoint sets, if $V$ is open. Moreover, noting that $\ovl{V}^{\:\circ}=\ovl{V}\setminus\d\ovl{V}$ we may then also deduce the identity
\begin{equation}\label{eq:intbd_decomp}
\ovl{V}^{\:\circ}=V\cup\Sig.
\end{equation}
Examples for interior boundaries are cuts inside $V$ and  hypersurfaces $S\sbs \ovl{V}^\circ \setminus V^\circ$ such that $V$ is located on both sides of $S$. In particular, we emphasize that, due to Definition \ref{def:Lipdom}, 
$$ 
  \Sigma = \emptyset \text{ for all $\Lip$-domains}. 
$$

However, we may model a set with $\Lip$-surfaces as interior
boundaries by considering the following composition of $\Lip$-domains:

\begin{definition}[{{\bf $\Lip$-composite domains}}]\label{def:composite}
A subset $V$ of $\RR^n$ is called a $\Lip$-composite domain if it can
be written as a finite union of pairwise disjoint $\Lip$-domains
$V_k\subseteq \RR^n$ ($k = 1, \ldots, k_0)$, that is,
\begin{equation}\label{eq:composite}
V=\bigcup_{k=1}^{k_0} V_k,\qquad \qquad V_k\cap V_{k'}=\emptyset \quad (k\neq k'),
\end{equation}
with the additional property that for every subset $M
\sbs\{1,\ldots,k_0\}$ the set $V^M := (\bigcup_{k\in M}
\ovl{V_k})^{\:\circ}$ is a finite disjoint union of $\Lip$-domains.
\end{definition}

Note that the condition on the sets $V^M$ is required to rule out sets
with cusps of the exterior or interior boundary. Furthermore, we
observe that if $V$ is a $\Lip$-composite domain with $\ovl{V}$
connected, then $\ovl{V}^{\:\circ}$ is a $\Lip$-domain. Indeed, as is
clear from Definition \ref{def:intbd}, taking the closure of a set
removes all its interior boundaries. (Taking the interior removes all
exterior boundaries which are defined as the interior boundaries of
the complement.)
 
If $V$ is a $\Lip$-composite domain with disjoint union
$V=\bigcup_{k=1}^{k_0} V_k$ as in the definition, then the following two
relations may be derived, where the first equality makes use of the
fact that $V$ is a disjoint union of open sets and the second equality
employs the property $\ovl{V_k}^\circ = V_k$ (i.e, empty interior
boundary) of the Lip-domains $V_k$:
\begin{equation}
\d V=\bigcup_{k=1}^{k_0} \d V_k=\bigcup_{1\leq k<k'\leq k_0}\Sig_{kk'}\:\cup\:\d\ovl V
\end{equation}
with
\begin{equation}
\Sig_{kk'}:=\d V_k\cap \d V_{k'}\quad\textrm{for}\quad k\neq k'. 
\end{equation}
Calling on \eqref{eq:intbd}, Definition \ref{def:intbd}, the interior
boundary $\Sig$ of $V$ is thus given by
\begin{equation}\label{eq:composite_intbd}
\Sig=\bigcup_{1\leq k<k'\leq k_0}\Sig_{kk'}\:\setminus\:\d\ovl V.
\end{equation}
We note that the sets $\Sig_{kk'}$ are $(n-1)$-dimensional
$\Lip$-surfaces in the sense of Definition \ref{def:Lipdom} and their
boundaries $\d\Sig_{kk'}$ (here, boundary in the sense of a
submanifold of $\d V$ with boundary) thus are $(n-2)$-dimensional
manifolds.  The boundary $\d\Sig$ of the interior boundary $\Sig$
consists of those parts of $\d\Sig_{kk'}$ that lie on the exterior
boundary $\d\ovl V$ of $V$, that is,
\begin{equation}\label{eq:composite_bdintbd}
\d\Sig
=\bigcup_{1\leq k<k'\leq k_0}\d\Sig_{kk'}\:\cap\:\d\ovl{V}.
\end{equation}
In particular, in $\RR^3$, $\d\Sig$ is a finite union of curves on the exterior boundary.

\subsubsection{Divergence theorem for composite domains and surfaces}

We present a variant (Lemma \ref{lem:divthmComposite}) of the
divergence theorem for $\Lip$-composite domains.  Compared to the
classical formulation (Lemma \ref{lem:divthm}) for $\Lip$-domains it
will contain an additional surface integral over the interior boundary
$\Sig$. Since by (\ref{eq:composite_intbd}) this is not a single
$\Lip$-surface but a finite union of those, the surface integral has
to be interpreted as the corresponding sum of surface integrals. More
precisely, if $g \colon \Sig \to \RR$ with restrictions
$g\!\mid_{\Sig_{kk'}} \in L^1(\Sig_{kk'})$ for all $1\leq k<k'\leq k_0$,
then $g\in L^1(\Sig)$ with
\begin{equation}
\int_{\Sig}g\:\dsurf\::=\sum_{1\leq k<k'\leq k_0}\int_{\Sig_{kk'}}g\:\dsurf.
\end{equation} 
Let $V=\bigcup_{k=1}^{k_0} V_k$ be a $\Lip$-composite domain as in \eqref{eq:composite}. By an $\RR^{m\times n}$-valued,  bounded piecewise $\cC^1$-function $f$ on $\ovl{V}$, with discontinuities being at most jumps contained in the interior boundaries $\Sig$, we mean
\begin{equation}
f\in\cC^1(\textstyle{\bigcup_{k=1}^{k_0} V_k})^{m\times n}\cap L^\infty(\ovl V)^{m\times n}
\end{equation}
such that every restriction $f|_{V_k}$ ($1 \leq k \leq k_0$) possesses
a $\cC^1$ extension to $\ovl{V_k}$, denoted by $f^{\overline{V}_k}$.
The classical partial derivative is continuous on $\bigcup_{k=1}^{k_0}
V_k$ and may be expressed as a sum involving cut-offs by
characteristic functions $\chi_{V_k}$ (having value $1$ on $V_k$ and
else $0$)
\begin{equation}
\d_j f \cdot \chi_V =\sum_{k=1}^{k_0}(\d_jf^{\ovl{V_k}})\chi_{V_k}\in L^\infty(\ovl V)^{m\times n}.
\end{equation}
Recalling that $\d V_k\setminus\Sig\sbs\d\ovl V$ we set
\begin{equation}\label{eq:partintpart}
\ovl{f}:=\sum_{k=1}^{k_0} f^{\ovl{V_k}}\chi_{_{\d V_k\setminus\Sig}}\in L^\infty(\d\ovl V)^{m\times n}
\end{equation}
and note that there is no contribution of $f^{\ovl{V_k}}$ to $\ovl{f}$
if $\d V_k\sbs\Sig$, that is, if $V_k$ is a completely interior
region. By construction, $\ovl f$ is $\cC^1$ on the finite union
$\d\ovl{V}\setminus\d\Sig$ of surfaces on the exterior boundary. To
simplify the notation we omit the bar introduced in
(\ref{eq:partintpart}) in the following, that is, we write $f$ instead
of the trace $\ovl{f}$ in surface integrals.

We define the {\bf jump} $[f]_-^{+}$ of $f$ when passing through
$\Sig$.  Strictly speaking, the notation $[.]_-^{+}$ for jumps across
the interior boundary is only applicable to a composite domain
\eqref{eq:composite} consisting of two parts, one labeled by $+$, the
other one labeled by $-$. Nevertheless it can also be extended to
composite domains which are such that every point has a neighborhood
that contains elements of only an even number of different interior
regions. In particular, triple-junctions or points where an odd number
of different interior regions meet, are not allowed. However, in the
later application, we will need to label the interior regions by two
different flags (fluid or solid). In addition, if two regions of the
same kind share a common boundary, they are glued together by taking
the closure of their union, see \eqref{eq:earth_fs}. This allows us to
consistently use the $[.]_-^{+}$-notation as well as the choice of the
normal vector.

After these preparatory observations we are ready to define the jump operator $[\:.\:]_-^{+}$ as the jump of the enclosed quantity across the surface $\Sig=\d V\setminus\d\ovl{V}$,
$[f]_-^{+}(x):=f^{\ovl{V_k}}(x)-f^{\ovl{V_{k'}}}(x)$ for $x\:\in\Sig_{kk'}$,
where $V_k$ corresponds to the region on the $+$- and $V_{k'}$ to the region on the $-$-side of $\Sig$.  
We observe that $[f]_-^{+}\in\cC^1(\Sig)^{m\times n}$. 
To simplify the notation we just write 
\begin{equation}\label{eq:jump}
[f]_-^+:=f^+-f^-
\end{equation} 
for the jump of $f$ from the $-$- to the $+$-side of a surface, with $f^\pm$ denoting the limit of $f$ when approaching the surface coming from the $\pm$-side. As \cite[Figure A.1, p.\ 826]{DaTr:98} we agree on the convention that the unit normal vector points in the direction of the jump, that is, from the $-$- to the $+$ side:
\begin{equation}\label{eq:normal}
\nu:=\nu^-=-\nu^+,
\end{equation}
with $\nu^\pm$ the normal vector of the region on the $\pm$-side of
the surface. Note that this is the reason for the negative sign in
front of the integrals over interior boundaries below.  By abuse of
notation we will occasionally write $[f\cdot\nu]_-^+$ instead of
$[f]_-^+\cdot\nu=f^+\cdot\nu-f^-\cdot\nu$. An $\RR^{n\times n}$-valued
field $f$ is said to satisfy the {\bf normality condition} on $\Sig$
\cite[eq.\ (3.82)]{DaTr:98} (or the $-$-side of $\Sig$) if
\begin{equation}\label{eq:normality}
f\cdot\nu=(\nu\cdot f\cdot\nu)\nu
\end{equation}
(or the same with $f$ replaced by $f^-$). See Appendix \ref{app:surf}
for further notation and identities related to surfaces in $\RR^n$.
The definitions above can be extended from the piecewise $\cC^1$-case
to piecewise $\Lip$- or $H^1$-functions $f$, if $\ovl{f}$ and
$[f]_-^+\in H^{\frac{1}{2}}(\Sig)^{m\times n}$ are interpreted almost
everywhere via the trace. Of particular importance will be the space
of piecewise $H^1$-vector fields with continuous normal component
across the interior boundary $\Sig$ of a $\Lip$-composite domain
$V=\bigcup_{k=1}^{k_0} V_k$:
\begin{equation}\label{eq:zeronormal}
H^1_{\Sig}(V)^{n}:=\{h\in H^1(V)^{n}:\:[h]_-^+\cdot\nu=0\:\text{ on }\:\Sig\}.
\end{equation}

We are now in the position to present the divergence theorem and
formula for integration by parts on composite domains, also in a
variant for integrands satisfying the normality condition
\eqref{eq:normality}:

\begin{lemma}[{{\bf Divergence theorem for composite domains and
  normality condition}}]\label{lem:divthmComposite} Let
  $V=\bigcup_{k=1}^{k_0} V_k \sbs \RR^n$ be a $\Lip$-composite domain
  \eqref{eq:composite} with interior boundary $\Sig=\d
  V\setminus\d\overline{V}$. If $f \in H^1(V)^{m\times n}$ then
\begin{equation}\label{eq:partdivthm}
\int_V\div f\:\dvol
=\int_{\d\ovl V}f\cdot\nu\:\dsurf-\int_{\Sig}[f]_-^+\cdot\nu\:\dsurf.
\end{equation}
If in addition
$h\in H^1(\ovl V)^m$, this leads to the following variant of the formula for integration by parts:
$\int_V\sum_{j=1}^nf_{ij}(\d_j h_i)\:\dvol=-\int_V\sum_{j=1}^n(\d_jf_{ij}) h_i\:\dvol
+\int_{\d\ovl V}\sum_{j=1}^n{f_{ij}}\:\nu_j h_i\:\dsurf-\int_{\Sig}\sum_{j=1}^n[f_{ij}]_-^+\:\nu_j h_i\:\dsurf$
or
\begin{equation}\label{eq:IBC_calc}
\int_V f\cdot Dh\:\dvol=-\int_V h\cdot\div f\:\dvol
+\int_{\d\ovl V}h\cdot f\cdot\nu\:\dsurf-\int_{\Sig}h\cdot[f]_-^+\cdot\nu\:\dsurf.
\end{equation}
Equation \eqref{eq:IBC_calc} also holds if 
$h\in H^1_{\Sig}(V)^{n}$ and $m=n$,
provided that $f\in H^1(V)^{n\times n}$ is isotropic: $f_{ij}=f^0\de_{ij}$ with $f^0\in H^1(V)$ (or in a neighborhood of $\Sig$).
If instead $f$ has no particular symmetry but satisfies the normality condition \eqref{eq:normality} on the $-$-side of $\Sig$,  
\eqref{eq:IBC_calc} still is true with $h\in H^1_{\Sig}(V)^{n}$, but one has to replace $h$ by $h^+$ in the last term: $\int_{\Sig}h^+\cdot[f]_-^+\cdot\nu\:\dsurf$.
\end{lemma}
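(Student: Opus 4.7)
The plan is to reduce all three parts of the lemma to the classical Lipschitz-domain version (Lemma~\ref{lem:divthm}) applied on each subdomain $V_k$, and then to carefully track how the boundary traces on the shared interfaces $\Sig_{kk'}$ combine. For \eqref{eq:partdivthm}, I would apply Lemma~\ref{lem:divthm} to $f$ on each $V_k$, obtaining $\int_{V_k}\div f\,\dvol=\int_{\d V_k}f\cdot\nu_k\,\dsurf$, with the boundary integral understood as the $H^{1/2}$--$H^{-1/2}$ duality between the one-sided trace of $f$ and $\nu_k\in L^\infty(\d V_k)^n$. The decomposition $\d V_k=(\d V_k\cap\d\ovl V)\cup\bigcup_{k'\neq k}\Sig_{kk'}$ inherited from \eqref{eq:composite_intbd} and the orientation convention \eqref{eq:normal} (so that on an interface $\Sig_{kk'}$ with $V_k$ on the $+$- and $V_{k'}$ on the $-$-side one has $\nu_k=-\nu$, $\nu_{k'}=\nu$) imply that, after summation over $k$, every interface $\Sig_{kk'}$ contributes $f^+\cdot(-\nu)+f^-\cdot\nu=-[f]_-^+\cdot\nu$, while the exterior pieces reassemble to $\int_{\d\ovl V}f\cdot\nu\,\dsurf$.

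For \eqref{eq:IBC_calc} with $h\in H^1(\ovl V)^m$, I would apply the standard $H^1$-Green's formula on each $\Lip$-piece $V_k$, namely $\int_{V_k}f\cdot Dh\,\dvol+\int_{V_k}h\cdot\div f\,\dvol=\int_{\d V_k}h\cdot f\cdot\nu_k\,\dsurf$, interpreted as a duality on $\d V_k$. Summing over $k$ and using that $h\in H^1(\ovl V)^m$ admits a single trace on $\Sig$ (so $h^+=h^-$ there), the same interface bookkeeping as before collapses the interior contributions to $-\int_\Sig h\cdot[f]_-^+\cdot\nu\,\dsurf$, which gives \eqref{eq:IBC_calc}.

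The more delicate extension is to $h\in H^1_\Sig(V)^n$, where $h$ may jump across $\Sig$ but satisfies $[h]_-^+\cdot\nu=0$. Repeating the $V_k$-wise Green's formula and summing now produces the interior integrand $h^+\cdot f^+\cdot\nu-h^-\cdot f^-\cdot\nu$ on each $\Sig_{kk'}$. In the isotropic case $f_{ij}=f^0\de_{ij}$, we have $f^\pm\cdot\nu=(f^0)^\pm\nu$, and $h\cdot\nu$ is unambiguous on $\Sig$ by hypothesis, so this interior contribution collapses to $[f^0]_-^+(h\cdot\nu)=h\cdot[f]_-^+\cdot\nu$. In the normality case, applying \eqref{eq:normality} on the $-$-side gives $h^-\cdot f^-\cdot\nu=(\nu\cdot f^-\cdot\nu)(h^-\cdot\nu)=(\nu\cdot f^-\cdot\nu)(h^+\cdot\nu)=h^+\cdot f^-\cdot\nu$, so the interior contribution reduces to $h^+\cdot(f^+-f^-)\cdot\nu=h^+\cdot[f]_-^+\cdot\nu$, as asserted.

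The principal technical obstacle is not the algebra above but the consistent interpretation of every surface integral as an $H^{1/2}$--$H^{-1/2}$ duality on each constituent face of $\d V_k$, and the verification that the piecewise traces $h^\pm$ appearing in the $H^1_\Sig$ variant genuinely coincide with the one-sided trace operators of the adjacent $\Lip$-domains $V_k$, $V_{k'}$. Once this is granted, both the normal-continuity constraint in \eqref{eq:zeronormal} and the trace-jump identity on each $\Sig_{kk'}$ are well posed, and the three assertions follow from the orientation bookkeeping described above.
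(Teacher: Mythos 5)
Your proposal is correct and takes essentially the same route as the paper: applying the classical divergence theorem on each Lipschitz piece $V_k$, summing, and tracking the interface contributions via the antiparallel normal convention \eqref{eq:normal}. The only cosmetic difference is in the final part: the paper invokes the Leibniz rule for the jump, $[h\cdot f]_-^+\cdot\nu = h^+\cdot[f]_-^+\cdot\nu + [h]_-^+\cdot f^-\cdot\nu$, and then shows the last term vanishes because $[h]_-^+$ is tangential while $f^-\cdot\nu$ is normal, whereas you expand $h^+\cdot f^+\cdot\nu - h^-\cdot f^-\cdot\nu$ directly and use normality to replace $h^-$ by $h^+$; these are the same algebra in different order.
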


As in the classical version (\ref{eq:divthm}) one has to interpret the
surface integrals as duality of the respective traces of $H^1$ in
$H^{\frac{1}{2}}$ and the normal vectors in $L^{\infty}\sbs L^2\sbs
H^{-\frac{1}{2}}$ on $\d\ovl V$ or $\Sig$. Moreover it again suffices
to assume $f\in H_{\div}$ instead of $H^1$. In the application $h$
will play the role of a test function.

\begin{proof}
Equation (\ref{eq:partdivthm}) is obtained by using the divergence theorem (\ref{eq:divthm}) for every region $V_k$ and summing up, using the fact that the two normal vectors on interior boundaries are antiparallel with $\nu^\pm=\mp\nu$, see (\ref{eq:normal}). For (\ref{eq:IBC_calc}) note that on every $V_k$ the product rule 
$$\d_j(f_{ij} h_i)=(\d_jf_{ij}) h_i+f_{ij}(\d_j h_i)$$ holds, implying
$f_{ij}(\d_j h_i)=(\d_jf_{ij}) h_i-\d_j(f_{ij} h_i)$ on
$\bigcup_{k=1}^{k_0} V_k$. If $h\in H^1(\ovl V)^m$ the function
$f_{ij} h_i$ is piecewise $H^1$ corresponding to the interior
boundaries $\Sig$ (that is $(f_{ij} h_i)_{j=1}^n=h\cdot f\in
H^1(V)^m$), hence integrating and applying (\ref{eq:partdivthm}) to
the second term yields the result. If $h\in H^1_{\Sig}(V)^{m}$, $h$
can not directly be pulled out from $[h\cdot f]_-^+\cdot\nu$ in the
surface integral even if $m=n$ (this is only possible if one assumes
that $f$, near $\Sigma$, is proportional to the identity matrix
$f_{ij}=f^0\de_{ij}$). However, the Leibniz rule for the jump
\eqref{eq:Leibnizjump} implies
$$[h\cdot f]_-^+\cdot\nu=h^+\cdot [f]_-^+\cdot\nu+[h]_-^+\cdot
f^-\cdot\nu.$$ The last term vanishes which will prove the claim:
Indeed, $[h]_-^+\cdot\nu=0$ implies that $[h]_-^+$ is parallel to
$\Sig$. But, due to normality of $f$ we have $f^-\cdot\nu=(\nu\cdot
f^-\cdot\nu)\nu$ which is normal to $\Sig$. Hence, their product is
zero.
\end{proof}

The surface divergence theorem is the classical divergence theorem but formulated on the surface as a bounded manifold, see e.g.\  \cite[Thm.\ III.7.5, p.\ 152]{Chavel:06} for a proof in the smooth case. The surface divergence $\sdiv$ is defined in \eqref{eq:surfdiv}.

\begin{lemma}[{{\bf Divergence theorem for surfaces}}]\label{lem:surfdivthm}
Let $S$ be a $\Lip$-hypersurface of $\RR^n$ and denote by $\dint\la$
the line element which is orthogonal to the surface boundary $\d
S$. If $f\in\Lip(U)^n$ for $U$ a neighborhood of $S$, then
\begin{equation}\label{eq:surfdivthm}
   \int_S \sdiv f\:\dsurf = \int_{\d S}f\cdot\dint\la.
\end{equation}
\end{lemma}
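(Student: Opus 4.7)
The plan is to localize via a finite Lipschitz partition of unity subordinate to an atlas in which $S$ is represented as a graph, and then reduce each local piece to the classical divergence theorem (Lemma~\ref{lem:divthm}) on a $\Lip$-domain in $\RR^{n-1}$. First I would choose a finite open cover $\{U_\al\}_{\al=1}^N$ of $\ovl{S}$ and, after possibly rotating coordinates, represent $S\cap U_\al$ as the graph of a Lipschitz function $\psi_\al\colon\Om_\al\to\RR$ over a bounded $\Lip$-domain $\Om_\al\sbs\RR^{n-1}$; for charts meeting $\d S$, the parameter domain $\Om_\al$ is arranged so that $\d S\cap U_\al$ corresponds to the graph over a portion of $\d\Om_\al$, while interior charts are chosen with $\d S\cap U_\al=\emptyset$. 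Taking a subordinate Lipschitz partition of unity $\{\chi_\al\}$ and writing $f=\sum_\al\chi_\al f$, by linearity it suffices to establish~(\ref{eq:surfdivthm}) separately for each $f_\al:=\chi_\al f\in\Lip(U_\al)^n$.

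On each chart I would transform both sides of~(\ref{eq:surfdivthm}) to the parameter domain $\Om_\al$ through the parametrization $\Phi(x')=(x',\psi_\al(x'))$. By Rademacher's theorem $\psi_\al$ is differentiable almost everywhere, and the induced surface measure factor $\sqrt{1+|\nabla'\psi_\al|^2}$ lies in $L^\infty(\Om_\al)$, so the tangential projector and the surface gradient $\snabla$ pull back to first-order operators in $\nabla'$ with $L^\infty$-coefficients. Consequently $\int_S\sdiv f_\al\,\dsurf$ becomes $\int_{\Om_\al}\nabla'\cdot g\,\dint x'$ for an explicit $g\in W^{1,\infty}(\Om_\al)^{n-1}$ built from $f_\al\circ\Phi$ and the a.e.-partial derivatives of $\psi_\al$, while the boundary line element $\dint\la$ along $\d S\cap U_\al$ corresponds to the arclength element along the relevant piece of $\d\Om_\al$ paired with the outward conormal of $\Om_\al$. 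Lemma~\ref{lem:divthm} on the $\Lip$-domain $\Om_\al$ then yields the identity in the chart: for interior charts the boundary term vanishes because $g$ has compact support in $\Om_\al$, and for boundary charts the boundary term coincides with $\int_{\d S\cap U_\al}f_\al\cdot\dint\la$. Summing over $\al$ gives the claim.

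The main obstacle is verifying, in the Lipschitz regime, that the reduction in the second step really produces $f\cdot\dint\la$ with the correct orientation of the conormal on $\d S$, since $\dint\la$ is defined intrinsically as the line element orthogonal to $\d S$ within $S$. The cleanest way to side-step a direct low-regularity computation of this geometric factor is an approximation argument: mollify $\psi_\al$ to smooth $\psi_\al^\veps$ with uniformly bounded $W^{1,\infty}$ norms and $\psi_\al^\veps\to\psi_\al$ almost everywhere and in $W^{1,p}$ for every finite $p$; apply the smooth surface divergence theorem \cite[Thm.~III.7.5, p.~152]{Chavel:06} on the resulting $\cC^\infty$-hypersurfaces $S^\veps$ (with $\d S^\veps$ the graph of $\psi_\al^\veps$ over $\d\Om_\al$) together with the Lipschitz field $f$; then pass to the limit $\veps\to 0$ via dominated convergence, using the uniform Lipschitz bounds on $\psi_\al^\veps$ and on $f$ to dominate the integrands on both sides. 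This reduces the entire argument to the smooth case already recorded in the literature.
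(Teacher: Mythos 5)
The paper does not, in fact, prove this lemma: it simply cites \cite[Thm.~III.7.5, p.~152]{Chavel:06} for the smooth case and offers a short remark identifying the $n=3$ case with the classical Stokes theorem $\int_S \operatorname{rot}(\nu\times f)\,\dsurf = \int_{\partial S}(\nu\times f)\cdot\dline$. Your proposal therefore supplies an argument the paper deliberately omits, and it is broadly consistent in spirit with the paper's stance — namely, that the Lipschitz version should follow from the smooth one.

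Within your own argument, however, the first route (chart reduction directly to Lemma~\ref{lem:divthm}) has a gap: the claim that the pulled-back field $g$ lies in $W^{1,\infty}(\Om_\al)^{n-1}$ does not hold. Writing $\Phi(x')=(x',\psi_\al(x'))$, the tangential projector, the surface measure factor $\sqrt{1+|\nabla'\psi_\al|^2}$, and the pull-back of $\nu$ all involve $\nabla'\psi_\al$, which is merely $L^\infty(\Om_\al)$ by Rademacher's theorem. Thus $g$ is a product of a Lipschitz function ($f\circ\Phi$) with $L^\infty$ coefficients, hence only $L^\infty$ in general, and $\nabla'\cdot g$ involves second derivatives of $\psi_\al$ that exist only as distributions (Radon measures), not as $L^2$ or $L^1$ functions. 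In the smooth calculation those second-derivative terms cancel identically, which is exactly why the smooth surface divergence theorem holds; but that cancellation happens \emph{before} any application of the flat divergence theorem, so it cannot be used to justify $g\in H_{\div}$ or $W^{1,1}$ when $\psi_\al$ is only Lipschitz. Your fallback — mollify $\psi_\al$ to smooth graphs $\psi_\al^\veps$ with uniform Lipschitz bounds, invoke the smooth surface divergence theorem on $S^\veps$, and pass to the limit — is the correct way to side-step this. Two details still need care there: (i) the surface divergence on $S^\veps$ is computed with the mollified normal $\nu^\veps$, so you must argue the convergence $\nu^\veps\to\nu$ a.e.\ and use the uniform $L^\infty$ bound on $\nabla f$ to dominate $\sdiv^{S^\veps}f$; and (ii) Chavel's theorem assumes at least a $\cC^1$ field, so $f$ should also be mollified (or one should note that the $\cC^1$-to-Lipschitz extension is routine by density). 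With those points spelled out, the mollification route is a correct and self-contained proof of the Lipschitz statement.
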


\begin{remark}
In $\RR^3$, the surface divergence theorem (\ref{eq:surfdivthm}) is
the classical theorem of Stokes, stating $\int_S
\operatorname{rot}(\nu\times f)\:\dsurf=\int_{\d S}(\nu\times
f)\cdot\dline$.  Indeed, by straightforward calculation we have
$\operatorname{rot}(\nu\times f)=\sdiv f$, and for the integrand on
the right-hand side, observe that $(\nu\times
f)\cdot\dline=f\cdot(\dline\times \nu)$. Since $\dline$ denotes the
classical line element which is parallel to the boundary curve $\d S$,
it follows that $\dline\times \nu=:\dint\la$ is normal to $\d S$.
\end{remark}

We will apply the surface divergence theorem also to the interior
boundary $\Sig$ of a $\Lip$-composite domain. In this case, we again
have to interpret it via summing up all integrals over $\Sig_{kk'}$
and $\d\Sig_{kk'}$. However, if there are no junctions where an odd
number of interior regions meet, the contributions of
$\d\Sig_{kk'}\setminus\d\ovl{V}$, that is, parts which lie in the
interior of $\ovl{V}$, cancel and in view of the specific form
(\ref{eq:composite_bdintbd}) of the interior boundary $\Sig$ we are
left with those from $\d\Sig$. Thus the surface divergence theorem
(\ref{eq:surfdivthm}) also holds for the interior boundary $\Sig$ of a
$\Lip$-composite domain.

We conclude with the well-known distributional analogue of integration
by parts:

\begin{lemma}[{{\bf Distributional integration by parts formula}}]\label{lem:div_dist}
Let $V\sbs\RR^n$ be open, $f\in\cD'(V)^{m\times n}$ be a matrix with
distributional components, and $h\in\cD(V)^m$ be a vector of test
functions. Then $D h \in\cD (V)^{m\times n}$, $\div f \in\cD'(V)^m$,
and we have
\begin{equation}\label{eq:IBC_calc_dist}
  \lara{f,D h} = - \lara{\div f,h}.
\end{equation}
The same equation holds with appropriate distributional dualities on a
$\Lip$-hypersurface $S$ in $\RR^n$, provided that the surface
differential operators are employed and we have $f\in
H^{\frac{1}{2}}(S)^{m\times n}$ and $h\in H^{\frac{1}{2}}(S)^m$.
\end{lemma}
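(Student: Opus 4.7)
The strategy is to reduce the statement to the one-component version of the distributional derivative rule and then sum over indices. First I observe that if $h\in\cD(V)^m$, then each component $h_i$ is a smooth compactly supported scalar test function, so each $\d_j h_i \in \cD(V)$ and thus $D h \in \cD(V)^{m\times n}$; simultaneously, for every distributional entry $f_{ij}\in\cD'(V)$, each partial $\d_j f_{ij}\in\cD'(V)$ is defined, hence $(\div f)_i=\d_j f_{ij}\in\cD'(V)$ in the row-wise convention of the paper, so $\div f\in\cD'(V)^m$.

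Next I unpack the two dualities in \eqref{eq:IBC_calc_dist}. By the matrix/Frobenius pairing and the definition of the distributional partial derivative applied entrywise,
\begin{equation*}
\lara{f,Dh}=\sum_{i,j}\lara{f_{ij},\d_j h_i}=-\sum_{i,j}\lara{\d_j f_{ij},h_i}=-\sum_i\lara{(\div f)_i,h_i}=-\lara{\div f,h}.
\end{equation*}
This establishes the interior identity; the only ingredient beyond definitions is that the distributional derivative rule $\lara{\d_jT,\varphi}=-\lara{T,\d_j\varphi}$ holds on open subsets of $\RR^n$, which requires no boundary term because $h$ has compact support in $V$.

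For the surface version, I would localize on $S$ by a finite partition of unity subordinate to a Lipschitz atlas of $S$, so that on each chart $S$ is parametrized by a bi-Lipschitz map from an open subset of $\RR^{n-1}$ and the surface differential operators $\snabla$, $\sdiv$ (see Appendix \ref{app:surf}) reduce, up to bounded tangential metric factors, to ordinary partials on the chart. On such a chart the identity follows as before from the $\cD'/\cD$-dualization of partial derivatives, provided the pairings are interpreted as the $H^{-1/2}(S)$–$H^{1/2}(S)$ duality, which is legitimate because $\snabla$ maps $H^{1/2}(S)$ continuously into $H^{-1/2}(S)^n$ on a $\Lip$-surface (tangentially), and similarly for $\sdiv$. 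Since $S$ has no boundary in the localized calculation (the cutoffs vanish before reaching $\d S$) or the contributions of $\d S$ cancel after reassembly, no boundary term appears.

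The main obstacle is the surface case: the definition of $\snabla$ and $\sdiv$ on a merely Lipschitz hypersurface requires care, since the tangential components of partials are only defined almost everywhere and must be shown to yield well-defined maps between $H^{\pm 1/2}(S)$ that are mutually adjoint. Once this duality is in place (either by the chart-based construction above or by invoking the extension arguments standard for Lipschitz manifolds, as in the references used for Lemma \ref{lem:surfdivthm}), the identity follows by exactly the same index manipulation as in the interior case.
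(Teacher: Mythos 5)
The paper gives no proof of Lemma \ref{lem:div_dist}; it is stated as a ``well-known distributional analogue of integration by parts'' and the text moves on immediately to the next subsection, so there is no argument of the paper's own to compare against.

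Your treatment of the interior case is correct and is essentially the only natural argument: $Dh$ lies componentwise in $\cD(V)$ and $\div f$ componentwise in $\cD'(V)$ by definition, and \eqref{eq:IBC_calc_dist} follows by applying the defining relation $\lara{\d_j T,\varphi}=-\lara{T,\d_j\varphi}$ to each pair $(f_{ij},h_i)$ and summing over $i,j$, with no boundary term because $h$ has compact support in $V$.

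The surface case has a genuine gap. You appeal to the mutual adjointness (up to sign) of $\snabla$ and $\sdiv$ between $H^{1/2}(S)$ and $H^{-1/2}(S)$, but establishing that adjointness requires that the surface integration-by-parts identity hold without a boundary term for, say, Lipschitz fields, and Lemma \ref{lem:surfdivthm} shows that exactly such a boundary term $\int_{\d S} f\cdot\dint\la$ is present whenever $\d S\neq\emptyset$. The sentence ``the cutoffs vanish before reaching $\d S$ or the contributions of $\d S$ cancel after reassembly'' is not correct for a generic partition of unity subordinate to an atlas of the compact manifold-with-boundary $\ovl S$: some cutoffs are necessarily supported near $\d S$, and the resulting boundary contributions do not cancel when $f$ and $h$ fail to vanish there. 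As written, your argument proves the surface identity only when $S$ is closed, or under the additional assumption that one of $f,h$ lies in the closure of the compactly supported smooth functions on the interior of $S$ inside $H^{1/2}(S)$. The lemma's phrase ``with appropriate distributional dualities'' is deliberately vague and arguably encodes exactly this caveat (the duality partner of $H^{1/2}(S)$ on a surface with boundary is a $\widetilde{H}^{-1/2}$-type space, not $H^{-1/2}(S)$ itself), but a proof must say so. In the paper's only downstream use of the surface version, Section \ref{sssec:varQuadratic}, the functional $\func_S$ is defined over a \emph{closed} $\Lip$-hypersurface in the interior of $V$, so the gap is harmless there, but it should be made explicit.
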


\subsection{Geometry and kinematics}

Kinematics is the study of motion, regardless of the forces causing
it. The primitive concepts are position, time and body. 

\begin{figure}
\centering
\includegraphics[width=120mm]{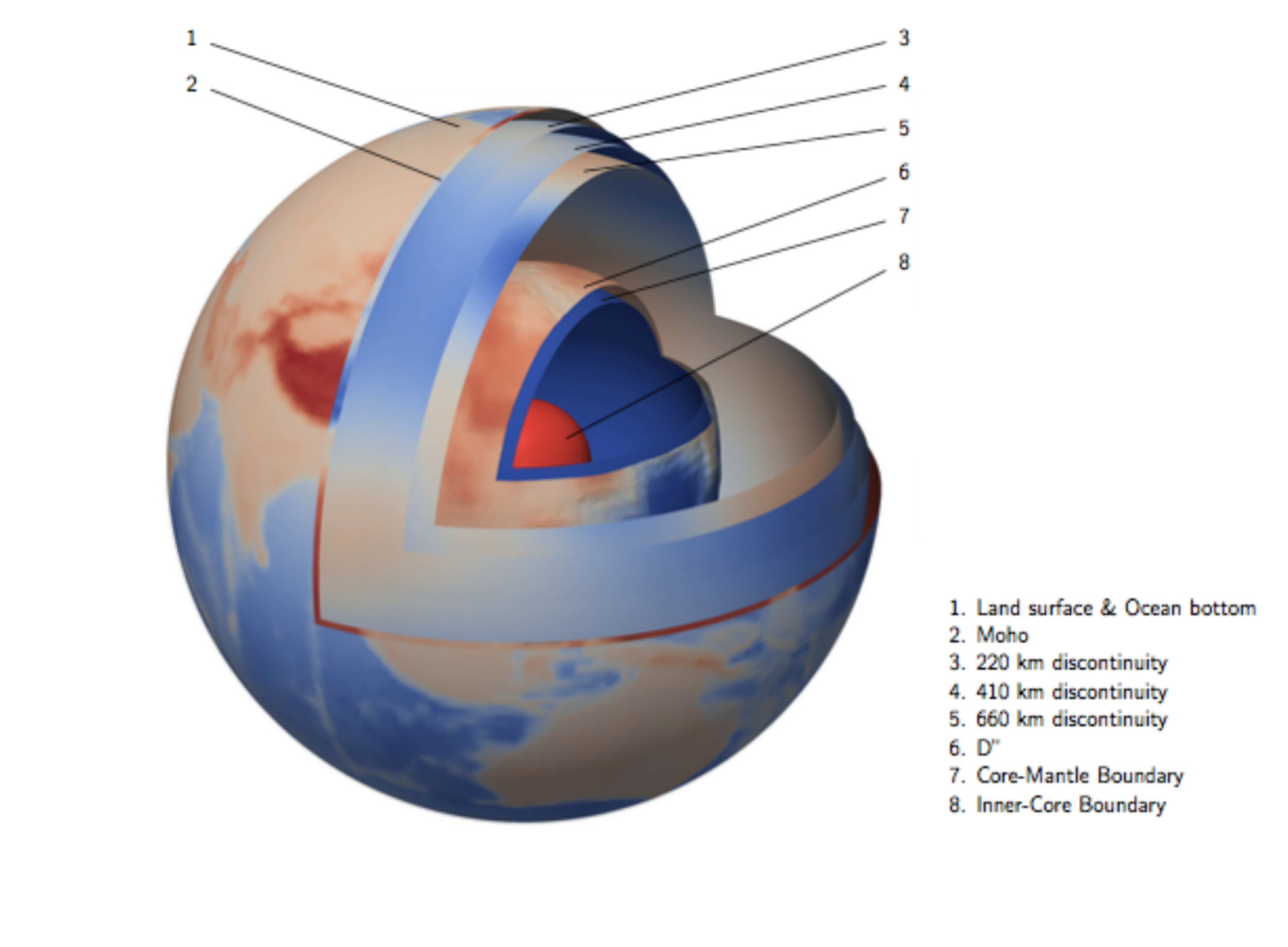}
\caption{Solid and fluid regions and major discontinuities in the
  earth. Color indicates topography (courtesy Ruichao Ye).}
\label{fig:1}
\end{figure}

\subsubsection{The composite fluid-solid earth model and its motion}

We consider a uniformly rotating, elastic and self-gravitating earth
model, subdivided into solid and fluid regions. Adopting the setting
of continuum mechanics, we study the moving earth as a bounded
continuous body that moves and deforms in space as time passes. In
particular, the general earth model is described within the framework
of classical Newtonian physics, where space-time is represented by
$\RR^3\times\RR$. The space $\RR^3$ is equipped with the usual
Euclidean norm $|.|$. By $\nabla=(\d_i)_{i=1}^3$ we denote
differentiation with respect to the space points, by
$\snabla=(\sd_i)_{i=1}^3$ the surface gradient which is defined as in
(\ref{eq:surfdiff}), and by $\d_t$ (or an over dot) the derivative
with respect to time (see \ref{ssec:basic}). Furthermore, for a
function $q$ defined on (a subset of) $\RR^3\times\RR$, we write
$q_\tim$ for the partial function $q(\:.\:,t)$ defined on (a subset
of) $\RR^3$.

We describe the earth as a family $\{\earth_\tim\}_{t\in I}$ of bodies, which in accordance with the class (\ref{eq:bodiesLip}) are $\Lip$-domains $\earth_\tim\sbs\RR^3$. 
The parameter $t$ is time, which varies in the closed interval
\begin{equation}
I:=[t_0,t_1]\sbs\RR.
\end{equation}
We write $I^\circ$ for the open interval $(t_0,t_1)$. The open set $\earth_\tim$ models the volume occupied by the material of the earth at time $t\in I$ and is referred to as a configuration of the earth. As a {\bf reference configuration} we identify the earth with the closure $\ovl\earth$ of the volume of its configuration at time $t_0$,
\begin{equation}
\earth:=\earth_\timO\sbs\RR^3.
\end{equation}
Thus the set $\earth$ is open and bounded with $\Lip$-boundary $\d\earth$. The different configurations $\earth_\tim$ are related by 
\begin{equation}
\earth_\tim=\vphi_\tim(\earth)
\end{equation}
where 
\begin{equation}\label{eq:motion} 
\vphi\colon\ovl\earth\times I\to\RR^3
\end{equation}
is the {\bf motion} of the earth. Due to elasticity of the material of the earth, there exists a natural equilibrium state which we assume to be adopted in the reference configuration at initial time $t_0$. That  is, $\earth$ is the equilibrium shape of the earth,  $\vphi(X,\timO) = X$ for every $X \in \ovl{B}$, or (by slight abuse of notation) 
\begin{equation}\label{eq:motion_initial} 
\vphi_\timO=\Id_{\ovl\earth}
\end{equation}
and the point  
\begin{equation}
x=\vphi(X,t)=\vphi_\tim(X)\in\RR^3
\end{equation} 
gives  the position of the material particle $X\in\ovl\earth$ of the earth at time $t\in I$. 

Uniform rotation of the earth means that the angular velocity of the
earth's rotation
\begin{equation}
   \Om \in \RR^3
\end{equation}
is independent of time and space. This approximation is valid in the
seismic frequency range. We establish a co-rotating reference frame of
Cartesian basis vectors in $\RR^3$, whose origin lies in the earth's
center of mass with respect to an inertial frame.

The earth model is subdivided into solid and fluid regions with
$\Lip$-continuous interior boundaries. This model is called a
composite fluid-solid earth model; see Figure~\ref{fig:1} for an
illustration. The color indicates topography of the relevant
(interior) boundaries. The Moho defines the bottom of the crust. The
discontinuities at an average depth of 410 km and 660 km and the top
of the so-called $D''$ layer correspond with major phase
transitions. We also indicated the oceans, the core-mantle boundary
(CMB) and the inner-core boundary (ICB) noting that the inner core is
solid and the outer core is fluid.

\begin{definition}[{{\bf Composite fluid-solid earth model}}]\label{def:earthmodel}
A composite fluid-solid earth model is a body $\earth\sbs\RR^3$ according to (\ref{eq:bodiesLip}), that is, a $\Lip$-domain, 
such that there exists a $\Lip$-composite domain $(\bigcup_{k\in K^\fluid}V_k^{\fluid})\cup(\bigcup_{k\in K^{\solid}}V_k^{\solid})$ as in Definition \ref{def:composite} consisting of (disjoint) fluid and solid interior regions $V_k^{\fluid}$ and $V_k^\solid$ respectively, and with interior boundary $\Sig$, such that
\begin{equation}\label{eq:earthmodel}
\earth=(\bigcup_{k\in K^\fluid}V_k^{\fluid})\cup(\bigcup_{k\in K^{\solid}}V_k^{\solid})\cup\Sig
\end{equation}
as disjoint union. Note that this decomposition corresponds to equation (\ref{eq:intbd_decomp}). 
The {\bf exterior boundary} of the earth thus is given by $\d B=\d\ovl{B}$.
The interior boundary $\Sig$ splits into {\bf solid-solid}, {\bf fluid-solid}, and {\bf fluid-fluid interior boundaries} denoted by $\Sig^{\sxs}$, $\Sig^\fxs$, and $\Sig^{\fxf}$ respectively, that is, we have the disjoint union
\begin{equation}
\Sig=\Sig^{\sxs}\cup\Sig^\fxs\cup\Sig^{\fxf}.
\end{equation}
\end{definition}

Different interior regions represent regions in the real earth of
different values or symmetry properties of physical parameters
(density, elasticity coefficients). In the seismic frequency range
(that is $>3$ mHz, \cite[p.\ 13]{LaWa:95}) where we can neglect
viscosity, the earth's crust, mantle, and inner core consist of solid
material, whereas the outer core and the oceans are fluid. The
solid-solid interior boundaries $\Sig^{\sxs}$ comprise discontinuities
in the crust, mantle and inner core such as the Moho, the LAB, and the
discontinuities in the transition zone. Our earth model, here, does
not contain fault surfaces, which would correspond to slipping
solid-solid interior boundaries. However, we will address ruptures in
Section~\ref{sec:ruptures}. The fluid-solid boundaries $\Sig^\fxs$
model the inner core boundary, the core-mantle boundary, and the ocean
bottom. Fluid-fluid interior boundaries $\Sig^{\fxf}$ separate
different fluid layers in the inner core or the ocean, but we will not
consider them in the following.  Interior boundaries in our model thus
only divide in welded interior boundaries $\Sig^{\sxs}$ and,
neglecting viscosity, slipping interior boundaries $\Sig^\fxs$. The
boundary of the interior boundaries $\d\Sig\subseteq\d\earth$, see
also (\ref{eq:composite_bdintbd}), in particular contains the coast
lines of oceans $\d\Sig^{\fxs}$.

\begin{remark}[{{\bf No welded $\Sig^{\sxs}$ for $L^{\infty}$ material parameters}}]
The concept of welded solid-solid boundaries $\Sig^{\sxs}$ is superfluous if the material parameters are modeled as $L^{\infty}$-functions of space (and one defines the boundary only based on jumps in the values and not on the changing symmetry behavior of e.g.\ elastic coefficients). Essentially, welded boundaries can only be defined if one assumes the material parameters to be at 
least piecewise continuous.
Indeed, a welded boundary between two different material types within the earth is any surface, across which the jump
$[.]_-^+$ (\ref{eq:jump}) of one of the material parameters is nonzero. Since the evaluation of $[.]_-^+$  involves calculating limits from both sides of the surface, defining a boundary surface amounts to requiring piecewise continuity of the material parameters. However, one may weaken this requirement by considering Sobolev spaces with mixed regularity, which are continuous only in one direction \cite[Appendix B, Def.\ B.1.10, p.\ 477]{Hoermander:V3}. 
\end{remark}

In the following we will formally introduce some physical quantities derived from the motion $\vphi\colon\ovl\earth\times I\to\RR^3$  of the earth. Sufficient regularity conditions are given later. The partial time derivative of the motion defines the {\bf velocity} field
\begin{equation}\label{eq:velocity}
v:=\dot\vphi.
\end{equation}
The derivative of $\vphi$ with respect to $X\in\earth$ is called the
{\bf deformation gradient} $\nabla\vphi$.  More precisely, for all
$t\in I$, $(\nabla\vphi)_\tim:=D(\vphi_\tim)$, and the Cartesian
components of $\nabla\vphi$ are given by $(\nabla\vphi)_{ij} =
\d_j\vphi_i$ for $1\leq i,j\leq 3$. The deformation gradient measures
the difference of final positions of initially adjacent material
particles. Since the motion $\vphi$, taking values in space $\RR^3$,
is differentiated with respect to points in the equilibrium earth
$\earth$, the deformation gradient is a so-called two-point tensor
field (see \cite[p.\ 48 and p.\ 70]{MaHu:83rep}). The {\bf Jacobian
  determinant} of $\vphi$ is the determinant of its deformation
gradient
\begin{equation}\label{eq:Jacobian}
J:=\det(\nabla\vphi).
\end{equation}
The {\bf material strain tensor} $e$ (also called the Lagrangian or Green-St.Venant strain tensor \cite{Ciarlet:88}) associated to $\vphi$ is a time-dependent symmetric two-tensor field on $\earth$ defined by
\begin{equation}\label{eq:nlstraintensor}
e:=\frac{1}{2}\Big(\nabla\vphi^T\cdot\nabla\vphi-1_{3\times 3}\Big) ,
\end{equation}
that is, in components,
$e_{ij}=\frac{1}{2}((\d_i\vphi_k)(\d_j\vphi_k)-\de_{ij})$ for $1\leq
i,j\leq 3$. The material strain tensor quantifies the associated
deformation of the earth, that is, changes in lengths and angles due
to the motion (\cite{Ciarlet:88}, \cite[p.\ 30]{DaTr:98}).

We discuss the relation between the {\bf spatial} and the {\bf
  material representation} (the latter is also referred to as {\bf
  referential description}). Physical fields are time and position
dependent (real or complex) quantities, that is, functions of space
and time taking values in a set $Q\sbs\CC^m$ ($m\in\NN$). When
describing physical fields we will often switch between the spatial
(or Eulerian) and the material (or Lagrangian) representation denoted
by $q^s$ and $q$ respectively. For every $t\in I$, the spatial
quantity $q_\tim^s$ depends on the space points
$x\in\vphi_\tim(\ovl\earth)$, that is
\begin{equation}
q^s\colon\bigcup_{t\in I} \big( \vphi_\tim(\ovl\earth)\times\{t\} \big) \to Q,
\end{equation}
where the collection of all configurations of the earth $\bigcup_{t\in I}(\vphi_\tim(\ovl\earth)\times\{t\})$ has the structure of a bundle on $I$ with fibers $\vphi_\tim(\ovl\earth)$ for $t\in I$.
Its material counterpart $q_\tim$ is defined as a function of the material points $X\in\ovl\earth$, that is
\begin{equation}
q\colon\ovl\earth\times I\to Q.
\end{equation}
Spatial and material representations are related by the motion $\vphi$: For all $t\in I$ we have 
\begin{equation}\label{eq:materialspatial}
  q_\tim=q_\tim^s\circ\vphi_\tim,
 \end{equation}
that is, $q(X,t)=q^s(\vphi(X,t),t)$ for all $(X,t)\in\ovl\earth\times I$ (cf.\ \cite[p.\ 27]{MaHu:83rep}).  The spatial  representation of the motion itself gives $x = \vphi(X,t) = \vphi^s(\vphi(X,t),t) = \vphi^s(x,t) = \vphi_t^s(x)$, that is, 
$\vphi_\tim^s=\Id_{\vphi_\tim(\ovl\earthi)}$.

Recalling $\vphi_\timO=\Id_{\ovl\earthi}$,  we obtain that the spatial and the material representation of a physical field coincide at equilibrium time and we may introduce the notation
\begin{equation}
q^0:=q_\timO=q_\timO^s.
\end{equation}
Note that a spatial quantity typically is defined on the whole space,
that is, $q^s\colon\RR^3\times I\to Q$.  However, starting from the
material representation $q$, one only gets and needs information about
the restriction of $q^s$ to the bundle $\bigcup_{t\in
  I}(\vphi_\tim(\ovl\earth)\times\{t\})$.  The situation at fixed time
$t$ is illustrated by the following diagram
\begin{eqnarray}
&& \vphi_\tim(\ovl\earth)\hookrightarrow\:\RR^3 \nn\\
&& \:\:{\scriptstyle \vphi_\tim}\!\Big\uparrow   
\qquad\:\:\;\Big\downarrow{\scriptstyle q^s_\tim} \nn \\
&& \quad\ovl\earth\:\underset{q_t}\longrightarrow\: Q \nn 
\end{eqnarray}

Note that in case of vector or tensor fields the transition from $q^s$ to $q$ does not give the differential geometric pull-back, since only the base point transformation is involved. One of the consequences to keep in mind is the following: For a $\Lip$-surface $S\sbs\earth$ with unit normal vector $\nu$ we denote the corresponding spatial unit normal vector of the surface $\vphi_\tim(S)$  by 
\begin{equation}\label{eq:spatialnormal}
\nu_\tim^s\colon\vphi_\tim(S)\to\RR^3.
\end{equation} 
Even  when $\vphi_t$ is a diffeomorphism, we typically encounter $\nu_\tim^s\neq\nu\circ\vphi_\tim^{-1}$ and the correct transformation turns out to be
$\nu_\tim^s=\frac{(J_\tim(\nabla\vphi_\tim)^{-T}\cdot\nu)}{|(J_\tim(\nabla\vphi_\tim)^{-T}\cdot\nu)|}\circ\vphi_\tim^{-1}$ (see \cite[p.\ 41]{Ciarlet:88}).

\subsubsection{Regularity of the motion and kinematical interface conditions}\label{sssec:regularmotion}

A reasonable minimal assumption for the regularity of the motion of
the earth is
\begin{equation}
   \vphi\in\cC^0(I,L^{\infty}(\earth))^3.
\end{equation}
First, continuity with respect to time is required for the initial
condition (\ref{eq:motion_initial}) to be defined. Second, requiring
$\vphi_\tim\in L^{\infty}(\earth)^3$ for all $t\in I$ is natural,
since the moving earth does not disperse to an unbounded set in space.
However, to exclude catastrophic phenomena like tearing or
inter-penetration of the material of the earth, we need to introduce
additional regularity conditions, which we define for a motion of a
set $\ovl V\sbs\RR^n$, where $V$ is bounded and open. We anticipate
that due to the possible slip along fluid-solid boundaries, the motion
of the composite fluid-solid earth can not be globally injective, see
Definition \ref{def:Am}.

\begin{definition}[{{\bf Classes of regular motions $\Reg$,
                         $\Reg_{\cC^k}$, $\Reg_\Lip$}}]
\label{def:Reg}
Let $I\sbs\RR$ be a closed interval, $t\in I$, $V\sbs\RR^n$ open and
bounded, and $\vphi\colon \ovl V\times I\to\RR^n$ be a motion such
that $\vphi\in\cC^0(I,L^{\infty}(V))^n$ with $\vphi_\timO=\Id_{\ovl
  V}$.
\begin{enumerate}[label=(\roman*)]
\item $\vphi$ is called {\bf regular}, $\vphi\in\Reg(\ovl V\times I)$,
  if $\vphi_\tim(V)\sbs\RR^n$ is open, $\vphi_\tim(\ovl V)\sbs\RR^n$
  is closed, and $\vphi_\tim\colon \ovl V\to\RR^n$ is injective.
\item $\vphi$ is called {\bf $\cC^k$-regular},
  $\vphi\in\Reg_{\cC^k}(\ovl V\times I)$, for $k\in\NN_0$, if $\vphi$
  is regular, $\vphi\in\cC^k(\ovl V \times I)^n$, and for all $t\in I$
  the inverse $\vphi_\tim^{-1}$ is $\cC^k$ as a map $\vphi_t(\ovl{V})
  \to \ovl{V}$ (that is, is the restriction of a $\cC^k$ map defined
  on an open neighborhood of $\vphi_t(\ovl{V})$).
\item $\vphi$ is called {\bf $\Lip$-regular}, $\vphi\in\Reg_\Lip(\ovl
  V\times I)$, if $\vphi$ is regular, $\vphi\in\Lip(\ovl V \times
  I)^n$, and for all $t\in I$ the inverse $\vphi_\tim^{-1}$ is
  Lipschitz-continuous $\vphi_t(\ovl{V}) \to \ovl{V}$.
\end{enumerate}
\end{definition}

By definition, we have the inclusion $\bigcup_{1\leq
  k\leq\infty}\Reg_{\cC^k}\sbs\Reg_{\Lip}\sbs\Reg_{\cC^0}\sbs\Reg$ on
$\ovl V\times I$. Regular and $\cC^k$-regular motions are also
discussed in \cite[Definition 1.4, p.\ 27]{MaHu:83rep}.

\begin{remark}[{{\bf Properties of regular motions}}]\label{rem:regmotion}
\begin{enumerate}[label=(\roman*)]
\item For $t\in I$, the injectivity of $\vphi_\tim\colon\ovl V\to\RR^n$ for a regular motion
$\vphi\in\Reg(\ovl V\times I)$ prohibits inter-penetration as well as self-contact of material, because two initially different points cannot be mapped to one and the same point by the motion. 

\item A $\cC^0$-regular motion prevents the material of the body from tearing, since under continuous mappings connected portions of material stay connected. In particular, as is shown in \cite[p.\ 16]{Ciarlet:88}, the conditions
$$
V\sbs\RR^n\:\textrm{ open},\quad{\ovl V}^{\:\circ}=V,\quad\textrm{and}\quad\vphi_\tim\in\cC^0(\ovl V)^n\:\textrm{ injective}
$$
imply
$$
\vphi_\tim(\ovl V)=\ovl{\vphi_\tim(V)},\quad\vphi_\tim(V)=(\vphi_\tim(V))^\circ,
\quad\textrm{and}\quad\vphi_\tim(\d V)=\d (\vphi_\tim(V))=\d (\vphi_\tim(\ovl V)).
$$
Since ${\ovl V}^{\:\circ}=V$ is true for $\Lip$-domains $V$, $\cC^0$-regular motions preserve the interior, the exterior and the boundary of a $\Lip$-domain.

\item
By definition, a $\cC^0$-regular motion $\vphi\in\Reg_{\cC^0}(\ovl
V\times I)$ is continuous up to the boundary $\d V$, hence we can
deduce continuity of $t\mapsto\vphi_\tim^{-1}(x)$ for all
$x\in\bigcup_{t\in I}\vphi_\tim(\ovl{V})$. Indeed, if
$\vphi_\tim\colon\ovl{V}\to\RR^n$ is continuous and injective for
$t\in I$ (closed), then $\wtil{\vphi}\colon\ovl{V\times
  I}\to\wtil{\vphi}(\ovl{V\times I}) =\bigcup_{t\in
  I}(\vphi_\tim(\ovl{V})\times\{t\})\sbs\RR^n\times I$ defined by
$\wtil{\vphi}(x,t):=(\vphi_\tim(x),t)$ is a continuous and bijective
map between compact sets. Consequently, its inverse given by
$(x,t)\mapsto\wtil{\vphi}^{-1}(x,t):=(\vphi_\tim^{-1}(x),t)$ is
continuous, which implies the continuity of
$t\mapsto\vphi_\tim^{-1}(x)$.
\item
$\cC^k$-regularity guarantees the continuity of derivatives of the
  motion up to order $k$ and includes that $\vphi_\tim\colon
  V\to\vphi_\tim(V)$ is a $\cC^k$-diffeomorphism.
\end{enumerate}
\end{remark}

Injectivity of $\vphi$ is a global requirement, that is, it depends on the motion as a map of the entire body.  A related pointwise condition is positive orientation of $\vphi$, defined in terms of the determinant of the deformation gradient:

\begin{definition}[{{\bf Positively oriented $\Lip$-regular motions $\Reg_\Lip^+$}}]
\label{def:Jpos}
Let $V\sbs\RR^n$ be open and  bounded and $I\sbs\RR$ be a closed interval. A motion $\vphi\in\Reg_\Lip(\ovl V\times I)$ is called positively oriented (or orientation preserving) if $J>0$ holds for every $t \in I$ almost everywhere. In this case we write $\vphi\in\Reg_\Lip^+(\ovl V\times I)$.
\end{definition}

\begin{remark}[{{\bf Orientation and injectivity}}]\label{rem:Jpos}
A $\cC^1$-regular motion $\vphi\colon\ovl V\times I\to\RR^n$ automatically satisfies $J>0$, since $J$ is continuous with respect to time, $J(X,t)\neq 0$ for all $(X,t)$ (due to injectivity), and $J^0(X)=J(X,t_0)=1$ (due to $\vphi_\timO=\Id_{\ovl V}$). Conversely, by the inverse function theorem, if $\vphi$ is $\cC^1$ and $J>0$ then $\vphi$ is locally invertible. However, this is generally not true under $\Lip$-regularity.
Further consequences of $J>0$ are discussed in Remark \ref{rem:rho_reg}.
\end{remark}

We have defined regular motions for a general domain $\ovl V\sbs\RR^n$. Considering the motion $\vphi\colon\ovl\earth\times I\to\RR^3$ of the earth, for an earth model without interior boundaries we  may simply take $V=\earth$. However, for a composite fluid-solid earth model (Definition \ref{def:earthmodel}) the regularity properties will be assumed to hold on every interior region, that is, for each of the restrictions $\vphi|_{\ovl{V_k^\solid}\times I}$  and  $\vphi|_{\ovl{V_k^\fluid}\times I}$  separately, but not necessarily for $\vphi$ globally. On the interior boundaries $\Sig$, we have to impose additional continuity conditions on the motion, which are referred to as {\bf kinematic interior boundary conditions}. The bracket $[.]_-^+$ denotes the jump of the enclosed quantity as defined in (\ref{eq:jump}). In view of the properties of welded solid-solid and slipping fluid-solid and fluid-fluid interior boundaries of a composite fluid-solid earth model, for all $t\in I$ we thus have (spatial) continuity of the motion across $\Sig^{\sxs}$, that is
\begin{equation}\label{eq:IBC_SS}
[\vphi_\tim]_-^+=0\qquad\textrm{on}\quad\Sig^{\sxs}
\end{equation}
and continuity of the normal component of the spatial velocity $v^s_\tim=v_\tim\circ\vphi_\tim^{-1}$ across the current $\Sig^\fxs$ (and also $\Sig^{\fxf}$)-boundaries $\vphi_\tim(\Sig^\fxs)$, that is
\begin{equation}\label{eq:IBC_FS}
[v_\tim^s]_-^+\cdot \nu_\tim^s=0 \qquad\textrm{on}\quad\vphi_\tim(\Sig^\fxs),
\end{equation}
with $\nu_\tim^s$  the spatial normal vector field (\ref{eq:spatialnormal}) on $\vphi_\tim(\Sig^\fxs)$, for all $t\in I$ (see also \cite[pp.\ 48, 67, and 71]{DaTr:98}).
 
Due to continuity \eqref{eq:IBC_SS} of the motion across $\Sig^\sxs$, a function that is $\Lip$-continuous on two adjacent solid interior regions $\ovl{V_k^{\solid}}$ and $\ovl{V_{k'}^{\solid}}$ is $\Lip$-continuous on the closure of their union $\ovl{V_k^{\solid}\cup V_{k'}^{\solid}}$. 
To define the regularity of the motion, it thus suffices to consider the composite earth model as a union of the open solid and fluid interior regions $\earth^\solid$ and $\earth^\fluid$ obtained by glueing together adjacent regions:  
\begin{equation}\label{eq:earth_fs}
\earth^\solid:= \Big( \bigcup_{k\in K^\solid} \ovl{V_k^{\solid}} \Big)^{\:\circ}
\qquad\textrm{and}\qquad
\earth^\fluid:= \Big( \bigcup_{k\in K^\fluid} \ovl{V_k^{\fluid}} \Big)^{\:\circ}.
\end{equation}
We recall that $\earth$, $V_k^\solid$ $(k\in K^\solid)$, and
$V_k^\fluid$ $(k\in K^\fluid)$ are $\Lip$-domains. The sets
$\earth^\solid$ and $\earth^\fluid$ are finite unions of
$\Lip$-domains, as they are obtained as finite unions of
$\Lip$-domains where possibly resulting interior boundaries are
removed by the closure. Note that cusps are ruled out by Definition
\ref{def:composite} of $\Lip$-composite domains. By construction we
have
\begin{equation}
\ovl\earth=\ovl{\earth^\solid}\cup\ovl{\earth^\fluid}, \qquad \Sig^\fxs\setminus\d B=\ovl{\earth^\solid}\cap\ovl{\earth^\fluid}=\d \earth^\solid\cap\d \earth^\fluid,
\end{equation} 
and the disjoint union 
\begin{equation}\label{eq:earth_fs_union}
\earth=\earth^\solid\cup\earth^\fluid\cup(\Sig^\fxs\setminus\d\earth),
\end{equation}
which simplifies the original decomposition $\earth=(\bigcup_{k\in K^\fluid}V_k^{\fluid})\cup(\bigcup_{k\in K^{\solid}}V_k^{\solid})\cup\Sig$ of Definition \ref{def:earthmodel}.
For convenience we further introduce the notation 
\begin{equation}\label{eq:domain_sfc}
\earthfs:=\earth^\solid\cup\earth^\fluid=\earth\setminus\Sig^\fxs, \quad
\earth^{\cpl}:=\RR^3\setminus\ovl{\earth},
\quad
\earthfsc:=\earth^\solid\cup\earth^\fluid\cup\earth^\cpl=\RR^3\setminus(\Sig^\fxs\cup\d\earth).\quad
\end{equation}

We are ready to define the class of admissible motions of a composite fluid-solid earth model. Essentially, an admissible motion   consists of $\Lip$-regular positively oriented motions on fluid and solid parts separately and satisfies additional compatibility conditions, including slip along $\Sig^\fxs$:

\begin{definition}[{{\bf Admissible motions $\Am$ of a composite fluid-solid earth model}}]\label{def:Am}
Let $\earth$ be a composite fluid-solid earth model as in Definition \ref{def:earthmodel} with
$\earth^\solid$, $\earth^\fluid$ defined by (\ref{eq:earth_fs}), and interior boundary $\Sig^\fxs$.
We define the associated class of admissible motions by 
$$
\Am(\ovl\earth\times I):=\Big\{\vphi\colon \ovl\earth\times I\to\RR^3:
\:\vphi\in\cC^0(I,L^{\infty}(\earth))^3,\:\vphi_\timO=\Id_{\ovl\earthi},\:\vphi\:
\textrm{satisfies }\mathrm{(i)}\textrm{ to }\mathrm{(iv)}\Big\}
$$
where
\begin{enumerate}[label=(\roman*)]
\item \label{def:Am_i} {\bf Global conditions:} $\forall\:t\in
  I:\:\vphi_\tim(\ovl\earth)$ is homeomorphic to $\ovl\earth$ and
  $\bigcup_{t\in I}\vphi_\tim(\earth)$ is bounded,
\item \label{def:Am_ii} {\bf Piecewise $\Lip$-regularity:}
  $\vphi|_{\earthi^\solid\times I}$ can be extended to
  $\Reg_\Lip^+(\ovl{\earth^\solid}\times I)$ and
  $\vphi|_{\earthi^\fluid\times I}$ to
  $\Reg_\Lip^+(\ovl{\earth^\fluid}\times I)$,
\item \label{def:Am_iii} {\bf No interpenetration:} $\forall\:t\in
  I:\:\vphi_\tim(\earth^\solid)\cap\vphi_\tim(\earth^\fluid)=\emptyset$,
  and
\item \label{def:Am_iv} {\bf Slip condition:} \eqref{eq:IBC_FS},
  $[v_\tim^s]_-^+\cdot \nu_\tim^s=0$ on $\vphi_\tim(\Sig^\fxs)$, holds
  for almost every $t\in I$.
\end{enumerate}
\end{definition}

The class of admissible motions $\Am(\ovl\earth\times I)$ thus
consists of functions $\vphi$ that preserve the connectedness
properties as well as boundedness of the earth by \ref{def:Am_i},
possess a positively oriented $\Lip$-regular extension to the closure
of each interior region by \ref{def:Am_ii}, prohibit interpenetration
of different interior regions by \ref{def:Am_iii}, satisfy the
slipping interior boundary conditions (\ref{eq:IBC_FS}) on fluid-solid
interior boundaries by \ref{def:Am_iv}, and satisfy the welded
interior boundary conditions (\ref{eq:IBC_SS}) on solid-solid interior
boundaries by construction using the domain $\earth^\solid$ introduced
in (\ref{eq:earth_fs}).

Thus a motion in $\Am(\ovl\earth\times I)$ is continuous across
$\Sig^{\sxs}$ but possibly discontinuous across $\Sig^\fxs$. We
accounted for this discontinuity in \ref{def:Am_ii} by demanding
only the existence of a positively oriented $\Lip$-regular extension
for each interior region instead of requiring $\vphi\in \Reg_\Lip^+$
on its closure. Since slip on interior surfaces is allowed, the
earth's motion is not globally injective. Hence, it is not a
$\cC^0$-regular motion of $\ovl\earth$ and the confinement condition
\ref{def:Am_i} needs to be imposed to preserve connectedness and
boundedness of $\vphi_\tim(\ovl\earth)$. However $\vphi$ is injective
in the interior of each interior region $\earth^\solid$ and
$\earth^\fluid$ separately by property \ref{def:Am_ii}. Since these
sets are finite unions of $\Lip$-domains, an admissible motion
satisfies
$\vphi_\tim(\ovl{\earth^\solid})=\ovl{\vphi_\tim(\earth^\solid)}$ and
$\vphi_\tim(\ovl{\earth^\fluid})=\ovl{\vphi_\tim(\earth^\fluid)}$ for
all $t\in I$ (cf.\ Remark \ref{rem:regmotion}). Combined with
property \ref{def:Am_i} we thus have
\begin{equation}\label{eq:motionclosure}
\vphi_\tim(\ovl\earth)=\ovl{\vphi_\tim(\earth)}.
\end{equation} 
For a discussion of further regularity properties, let $V=\earth^\solid$ or $\earth^\fluid$  for the moment. 
Then $\vphi\in\Am(\ovl\earth\times I)$ implies that the restriction $\vphi|_{V\times I}$
 can be extended to yield a motion in  $\Reg_\Lip^+(\ovl V\times I)\sbs\Lip(\ovl V \times I)^3$.
Motions in $\Am(\ovl\earth\times I)$ are thus Lipschitz continuous functions with respect to time and
take values in the space of all piecewise Lipschitz continuous functions of $\earth$.
Moreover, since $\Reg_\Lip^+(\ovl V\times I)\sbs H^1(V\times I^\circ)^3$ it follows
\begin{equation}\label{eq:SobolevRegpLip}
\Am(\ovl\earth\times I)\sbs H^1(\earthfs\times I^\circ)^3.
\end{equation}
The regularity of  $\vphi\in\Am(\ovl\earth\times I)$  implies that the first-order derivatives of the
$\Lip$-regular extension of $\vphi|_{V\times I}$ to $\ovl{V}$, denoted by $\vphi^{\ovl{V}}$,
satisfy $\nabla\vphi^{\ovl{V}}\in\Lip(I,L^{\infty}(V))^{3\times 3}$ and $\d_t\vphi^{\ovl{V}}\in L^{\infty}(I^\circ,\Lip(\ovl V))^3$.
Consequently, the components of $v$, $\nabla\vphi$, $e$, and $J$, defined by (\ref{eq:velocity}), (\ref{eq:Jacobian}),
and (\ref{eq:nlstraintensor}), are essentially bounded functions on $\earthfs\times I^\circ$. Thus, 
by $\earth=\earthfs\cup(\Sig^\fxs\setminus\d\earth)$ (\ref{eq:earth_fs_union}) and since $\Sig^\fxs$ 
has zero volume, they can be extended to $L^{\infty}(\earth\times I^\circ)$. 
In particular we note that the fluid-solid boundary condition $[v_\tim^s]_-^+\cdot \nu_\tim^s=0$ \eqref{eq:IBC_FS} (or condition \ref{def:Am_iv} in Definition \ref{def:Am}) makes sense within the regularity setting specified in \ref{def:Am_ii}, which shows consistency of definition of $\Am(\ovl\earth\times I)$: Indeed using the $\Lip$-regular
extension $\vphi_\tim^{\ovl{V}}$ yields $v_\tim^{\ovl{V}}=\d_t\vphi_\tim^{\ovl{V}}\in\Lip(\ovl{V})^3$ and thus
$(v_\tim^{\ovl{V}})^s=v_\tim^{\ovl{V}}\circ(\vphi_\tim^{\ovl{V}})^{-1}\in\Lip(\vphi_\tim^{\ovl{V}}(\ovl{V}))^3$ for almost
every $t\in I$ with uniform Lipschitz constant. Therefore $(v_\tim^{\ovl{V}})^s$ can be restricted to the moving boundary
$\vphi_\tim(\d V)=\d(\vphi_\tim(V))$ for almost every $t\in I$. This implies that $v_\tim^s$ can be restricted to
moving interior boundaries $\vphi_\tim(\Sig^\fxs)$ as in (\ref{eq:partintpart}). Hence, its jump $[v_\tim^s]_-^+$ is a continuous function on $\vphi_\tim(\Sig^\fxs)$ that can be multiplied with the bounded spatial unit normal $\nu_\tim^s$.

\begin{remark}[{{\bf Derivatives of discontinuous functions: Surface
measures}}] Note that the spatial derivatives occurring in the
  definitions of $\nabla\vphi$, $e$, and $J$ need not be identical to
  the global distributional spatial derivatives of the function
  $\vphi$, which may contain additional surface measures at
  $\Sig^\fxs$ (cf.\ \cite[Thm.\ 3.1.9, p.\ 60]{Hoermander:V1}).
\end{remark}

The following lemma relates the regularity of the material to that of
the spatial representations, establishing equivalence between
essential boundedness of a spatial quantity $q^s$ and of the
corresponding material quantity $q$ for an admissible motion of a
composite earth model.

\begin{lemma}[{{\bf Regularity of material and spatial representations
for admissible motions}}]\label{lem:Linfty} Let $\earth$ be a
  composite fluid-solid earth model as in Definition
  \ref{def:earthmodel}, $\vphi\in\Am(\ovl{\earth}\times I)$ be an
  admissible motion, and the maps $q \colon \ovl{B} \times I \to
  \RR^3$ and $q^s \colon \RR^3 \times I \to \RR^3$ be related by
  $q_\tim =q_\tim^s\circ\vphi_\tim$ for every $t\in I$. Then we have
$$q^s\in\cC^0(I,L^{\infty}(\RR^3))\quad\Longleftrightarrow\quad q\in\cC^0(I,L^{\infty}(\earth)).$$
\end{lemma}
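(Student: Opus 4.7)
The plan is to reduce the equivalence to a piecewise bi-Lipschitz change of variables between material and spatial coordinates, exploiting the regularity guaranteed by $\vphi\in\Am(\ovl\earth\times I)$. First, I would fix $t\in I$ and decompose $\earth$ into its finitely many fluid and solid pieces $V_k^{\fluid}$, $V_k^{\solid}$ (Definition \ref{def:earthmodel}). By Definition \ref{def:Am}\ref{def:Am_ii}, $\vphi$ extends to a $\Reg_\Lip^+$-motion on the closure of each piece, so $\vphi_\tim$ is bi-Lipschitz onto $\vphi_\tim(\ovl{V_k})$ with Lipschitz constants uniform in $t\in I$ (by the joint $\Lip$-regularity of $\vphi$ and compactness of $I$). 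Bi-Lipschitz maps of open subsets of $\RR^3$ preserve Lebesgue null sets in both directions, so $f\mapsto f\circ\vphi_\tim$ is a linear bijection $L^\infty(\vphi_\tim(V_k))\to L^\infty(V_k)$ preserving the essential supremum. Summing over the finitely many pieces and using that $\Sig^{\fxs}$ is Lebesgue negligible yields the pointwise-in-$t$ identity
\begin{equation*}
\|q_\tim\|_{L^\infty(\earth)}=\|q_\tim^s\|_{L^\infty(\vphi_\tim(\ovl\earth))}\leq\|q_\tim^s\|_{L^\infty(\RR^3)},
\end{equation*}
which already proves the equivalence of essential boundedness at each fixed time.

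For the continuity in $t$, in the direction $\Rightarrow$, I would take $t_n\to t$ in $I$ and split via triangle inequality
\begin{equation*}
\|q_{\tim_n}-q_\tim\|_{L^\infty(\earth)}\leq\|(q_{\tim_n}^s-q_\tim^s)\circ\vphi_{\tim_n}\|_{L^\infty(\earth)}+\|q_\tim^s\circ\vphi_{\tim_n}-q_\tim^s\circ\vphi_\tim\|_{L^\infty(\earth)}.
\end{equation*}
The first summand is dominated by $\|q_{\tim_n}^s-q_\tim^s\|_{L^\infty(\RR^3)}$ via the Step~1 pullback identity and vanishes by the hypothesis $q^s\in\cC^0(I,L^\infty(\RR^3))$. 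The reverse implication $\Leftarrow$ is handled symmetrically, running the same argument with $\vphi_\tim^{-1}$ in place of $\vphi_\tim$; its uniform Lipschitz bound in $t$ comes from Definition \ref{def:Am}\ref{def:Am_ii} combined with Remark \ref{rem:regmotion}.

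The main obstacle is the residual term $\|q_\tim^s\circ\vphi_{\tim_n}-q_\tim^s\circ\vphi_\tim\|_{L^\infty(\earth)}$, since a merely essentially bounded function need not depend continuously on its argument under uniform perturbations. I would control it by transporting back to the spatial side via $\vphi_\tim^{-1}$, which reduces the residual to an $L^\infty$-norm of $q_\tim^s$ concentrated on the symmetric difference $\vphi_\tim(\earth)\triangle\vphi_{\tim_n}(\earth)$; this set has Lebesgue measure tending to zero as $n\to\infty$ by the uniform convergence $\vphi_{\tim_n}\to\vphi_\tim$ on $\ovl\earth$ (itself a consequence of the joint $\Lip$-regularity of $\vphi$). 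Combined with the uniform $L^\infty(\RR^3)$-bound on $q^s$ provided by its $\cC^0(I,L^\infty)$-regularity, this residual is absorbed into the first summand and the estimate closes. The remaining verifications amount to routine bi-Lipschitz bookkeeping on each piece $V_k^{\solid}$, $V_k^{\fluid}$.
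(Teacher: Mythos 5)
Your pointwise-in-$t$ argument matches the paper's: both exploit that $\vphi_t$ is a piecewise bi-Lipschitz map on the finitely many interior regions, so that essential boundedness transfers between $q_t$ and $q_t^s$; the paper checks measurability by pulling back Borel preimages through the Lipschitz inverse and invoking preservation of Lebesgue measurability, while you invoke preservation of null sets --- these are equivalent routes to the same conclusion. Where your proposal breaks down is in the closure of the continuity-in-$t$ estimate.

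The residual $\|q_t^s\circ\vphi_{t_n}-q_t^s\circ\vphi_t\|_{L^\infty(\earth)}$, after the substitution by $\vphi_t^{-1}$ that you propose, becomes $\|q_t^s\circ(\vphi_{t_n}\circ\vphi_t^{-1})-q_t^s\|_{L^\infty(\vphi_t(\earth))}$. This is the $L^\infty$-oscillation of a \emph{fixed} merely essentially bounded function under a near-identity perturbation of its argument. It is not localized to the symmetric difference $\vphi_t(\earth)\triangle\vphi_{t_n}(\earth)$: the two copies of $q_t^s$ are evaluated at different points throughout the bulk of the domain, and for a step-function $q_t^s$ and a small translation $\vphi_{t_n}\circ\vphi_t^{-1}$, the residual equals the full jump height for every $n$, because composition with translations is not norm-continuous in $L^\infty$. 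Furthermore, even if the residual \emph{were} supported on a set of Lebesgue measure tending to zero, that would not help: the $L^\infty$-norm of a bounded function over a small-measure set does not shrink with the measure. So both halves of your concluding argument --- the symmetric-difference localization and the smallness-of-measure absorption --- are inapplicable here, and the triangle-inequality estimate does not close. (For comparison, the paper itself does not run this decomposition; it treats continuity in $t$ in a single sentence, asserting it follows from the continuity of $t\mapsto\vphi_t$ resp.\ $t\mapsto\vphi_t^{-1}$, with a pointer to Remark~\ref{rem:regmotion}(iii) for the latter.)
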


\begin{proof}
  We first establish the equivalence of the $L^{\infty}$-condition with respect the spatial
variables. Since $\vphi$ preserves interior boundaries which are of measure zero it suffices to show
$q_\tim^s\in L^{\infty}(\vphi_\tim(V))$ $\Longleftrightarrow$ $q_\tim\in L^{\infty}(V)$ for all interior regions $V=\earth^\solid$
or $\earth^\fluid$. We denote the restrictions of $q_\tim^s$ and $q_\tim$ to $\vphi_\tim(V)$ and $V$ again by $q_\tim^s$
and $q_\tim$. We start with the implication from left to right. First note that if $q_\tim^s\in L^{\infty}(\vphi_\tim(V))$
boundedness of $q_\tim=q_\tim^s\circ\vphi_\tim$
is clear. Measurability is guaranteed by the Lipschitz-continuity of $\vphi_\tim^{-1}$
on the corresponding interior regions $\vphi_\tim(V)$.
More precisely, for a Borel set $B$ in $\RR$ with $B\sbs q_\tim(V)$
consider the pre-image $q_\tim^{-1}(B)=\vphi_\tim^{-1}((q_\tim^s)^{-1}(B))$.
Since $(q_\tim^s)^{-1}(B)$
is Lebesgue measurable by assumption and Lipschitz-maps preserve Lebesgue measurability
(e.g.\ \cite[Lemma 3.6.3 on p.\ 192]{Bogachev:07}), we have the Lebesgue measurability of $q_\tim^{-1}(B)$.
Hence, $q_\tim\in L^{\infty}(V)$.
The other implication follows similarly by changing the roles of $\vphi_\tim$ and $\vphi_\tim^{-1}$.
Finally, continuity of $q_\tim:=q_\tim^s\circ\vphi_\tim$ resp.\ $q_\tim^s:=q_\tim\circ\vphi_\tim^{-1}$ with
respect to the time variable follows from the continuity of $t\mapsto\vphi_\tim$ resp.\ $t\mapsto\vphi_\tim^{-1}$
(see Remark \ref{rem:regmotion}, (iii)) for $t\in I$.
\end{proof}

For $\vphi\in\Am(\ovl\earth\times I)$, the spatial counterpart of the material velocity field
$v_\tim\in L^{\infty}(\earth)^3$ is given by $v_\tim^s:=v_\tim\circ\vphi_\tim^{-1}\in L^{\infty}(\RR^3)$ (extended to $\RR^3$ by zero) 
for almost all $t\in I$. Similarly, since $J$ and the components of $e$ are in $L^{\infty}(\earth\times I^\circ)$ it follows that $J_\tim^s$ and the components of $e_\tim^s$ are in $L^{\infty}(\RR^3)$ (extending again by zero).

Material and spatial representations of volume and surface integrals are related by the following transformation formulas:

\begin{lemma}[{{\bf Integrals in material and spatial representation}}]
\label{lem:intvolsurf}
Let $\earth$ be a composite fluid-solid earth model as in Definition \ref{def:earthmodel} and
$\vphi\in\Am(\ovl\earth\times I)$, $t\in I$. Then the following hold:  
\begin{enumerate}[label=(\roman*)]
\item For an open subset $A\sbs\earth$ and $f_\tim\in L^{\infty}(A)$,
\begin{equation}\label{eq:intvol}
\int_{\vphi_\tim(A)}f_\tim^s\:\dvol=\int_Af_\tim J_\tim\:\dvol.
\end{equation}
\item If $\vphi$ is $\cC^1$-regular, $S\sbs\earth$ a $\Lip$-surface and $g_\tim\in\cC^0(S)$, then 
\begin{equation}\label{eq:intsurf}
\int_{\vphi_\tim(S)}g_\tim^s\:\nu_\tim^s\dsurf=\int_Sg_\tim J_\tim\:(\nabla\vphi)_\tim^{-T}\cdot \nu\:\dsurf.
\end{equation}
\end{enumerate}
\end{lemma}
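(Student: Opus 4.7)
The plan is to reduce both identities to classical change-of-variables/Nanson-type formulas, applied piecewise on the fluid and solid parts so as to remain within the regularity provided by $\Am(\ovl\earth\times I)$.

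For part (i), first observe that the interior boundary $\Sig^\fxs$ has three-dimensional measure zero, so we may replace $A$ by its disjoint slices $A^\solid:=A\cap\earth^\solid$ and $A^\fluid:=A\cap\earth^\fluid$ in both integrals without changing their values. On each of these pieces $\vphi_\tim$ coincides with the $\Lip$-regular positively oriented extension $\vphi_\tim^{\ovl{V}}$ of Definition \ref{def:Am}\ref{def:Am_ii}, which is bi-Lipschitz by definition of $\Reg_\Lip^+$. For such bi-Lipschitz maps the classical area/change-of-variables formula (based on Rademacher's theorem, see e.g.\ the area formula for Lipschitz maps) gives
\begin{equation*}
\int_{\vphi_\tim(A^{\solid})}f_\tim^s\,\dvol=\int_{A^{\solid}}(f_\tim^s\circ\vphi_\tim)\,|J_\tim|\,\dvol
=\int_{A^{\solid}}f_\tim J_\tim\,\dvol,
\end{equation*}
where we used $f_\tim=f_\tim^s\circ\vphi_\tim$ and $J_\tim>0$ almost everywhere. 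The same identity holds on $A^\fluid$; summing the two and recalling that $\vphi_\tim(A^\solid)\cap\vphi_\tim(A^\fluid)=\emptyset$ by \ref{def:Am_iii} yields \eqref{eq:intvol}. Lemma \ref{lem:Linfty} and the $L^\infty$-bound on $J_\tim$ ensure both sides are finite.

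For part (ii), under $\cC^1$-regularity the map $\vphi_\tim$ is a $\cC^1$-diffeomorphism on each interior region, so the surface $\vphi_\tim(S)$ is $\cC^1$ on each piece and carries a well-defined normal $\nu_\tim^s$. The strategy is to derive Nanson's formula and then integrate against it. Locally parametrize an open patch of $S$ by $X=X(u^1,u^2)$ with tangent vectors $\tau_\al:=\partial_\al X$, so $\nu$ is parallel to $\tau_1\times\tau_2$; the push-forward patch of $\vphi_\tim(S)$ is parametrized by $x=\vphi_\tim(X(u^1,u^2))$ with tangent vectors $(\nabla\vphi_\tim)\cdot\tau_\al$, so $\nu_\tim^s$ is parallel to $((\nabla\vphi_\tim)\cdot\tau_1)\times((\nabla\vphi_\tim)\cdot\tau_2)$. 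Using the algebraic identity $(M\cdot a)\times(M\cdot b)=(\det M)\,M^{-T}\cdot(a\times b)$ valid for any invertible $M\in\RR^{3\times 3}$, applied with $M=\nabla\vphi_\tim$, we obtain the Nanson relation
\begin{equation*}
\nu_\tim^s\,\dint S_\tim=J_\tim\,(\nabla\vphi_\tim)^{-T}\cdot\nu\,\dsurf,
\end{equation*}
where $\dint S_\tim$ is the surface measure on $\vphi_\tim(S)$. Multiplying by the continuous scalar $g_\tim^s$ and using $g_\tim^s\circ\vphi_\tim=g_\tim$ converts the integral over $\vphi_\tim(S)$ into an integral over $S$ and gives \eqref{eq:intsurf}. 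Handling $S\cap\Sig^\fxs$ requires doing this argument on each side of $\Sig^\fxs$ and noting the resulting pieces glue consistently because the surface measure of $S\cap\Sig^\fxs$ carries no mass once accounted for on the appropriate side.

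The main obstacle is the algebraic/geometric derivation of Nanson's formula in part (ii), since the naive transformation rule $\nu_\tim^s=\nu\circ\vphi_\tim^{-1}$ fails (as already noted around \eqref{eq:spatialnormal}); the correct formula requires the cofactor identity for the cross product of pushed-forward tangent vectors. Once this identity is in hand the proof is essentially a chain rule computation on local parametrizations, with the piecewise treatment justified by Definition \ref{def:Am} and the measure-zero nature of $\Sig^\fxs$.
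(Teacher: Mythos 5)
Your proof of part (i) follows essentially the same route as the paper: split $A$ into its solid and fluid slices, apply the Lipschitz change-of-variables formula with $J_\tim>0$ on each piece, and recombine using the disjointness and measure-zero properties coming from Definition~\ref{def:Am} and \eqref{eq:earth_fs_union}; the paper also invokes Lemma~\ref{lem:Linfty} for the $L^\infty$ regularity of $f_\tim^s$ exactly as you do.

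For part (ii), the paper simply cites \cite[Thm.~1.7-1]{Ciarlet:88}, whereas you supply a self-contained derivation of Nanson's formula via the cofactor identity $(M\cdot a)\times(M\cdot b)=(\det M)\,M^{-T}\cdot(a\times b)$ applied to pushed-forward tangent vectors in a local parametrization. This is a legitimate alternative: Ciarlet's proof proceeds from the Piola identity $\operatorname{div}(J(\nabla\vphi)^{-T})=0$ together with the divergence theorem, while yours is a pointwise algebraic argument in local coordinates. Your version is more elementary and exhibits the transformation rule directly, at the cost of some bookkeeping with local charts; both rely on the $\cC^1$-regularity assumption so that $\nabla\vphi_\tim$ is well defined on the surface. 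Your remark about patching across $\Sig^\fxs$ is a bit terse, but since the $\cC^1$-regularity hypothesis in (ii) removes the discontinuity that would otherwise arise there, the gluing is unproblematic and the paper leaves this implicit too.
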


In the physics literature 
(see e.g.\ \cite[p.\ 31 and p.\ 40]{Ciarlet:88}, \cite[(4.5.24) and (4.5.29b)]{Malvern:69}, \cite[(2.30) and (2.37)]{DaTr:98})
the relations (\ref{eq:intvol}) and (\ref{eq:intsurf}) are usually stated in the form
\begin{equation}\label{eq:volsurf_elements}
\dvol_\tim=J_\tim\dvol\qquad\textrm{and}\qquad\nu_\tim^s\dsurf_\tim=J_\tim(\nabla\vphi)^{-T}_\tim\cdot\nu\:\dsurf.
\end{equation}
\begin{proof}
By Lemma \ref{lem:Linfty} we have $f_\tim^s=f_\tim\circ\vphi_\tim^{-1}\in L^{\infty}(\vphi_\tim(A))$ and, recalling that
$S$ is part of a $\Lip$-boundary, $g_\tim^s=g_\tim\circ\vphi_\tim^{-1}\in \cC^0(\vphi_\tim(S))$. For $A\sbs\earth^\solid$
or $A\sbs\earth^\fluid$ the first equation (\ref{eq:intvol}) follows from the substitution law for integrals and the
positivity of $J_\tim$. 
By the decomposition (\ref{eq:earth_fs_union}) together with the mapping properties of $\vphi\in\Am(\ovl\earth\times I)$
(properties (ii) and (iii) of Definition \ref{def:Am}) the result is true for a general open subset $A\sbs\earth$.
For a proof of the second equation (\ref{eq:intsurf}) see \cite[Thm.\ 1.7-1, p.\ 39]{Ciarlet:88}. 
\end{proof}

\subsection{Physical fields and material parameters}

\subsubsection{Density} \label{ssec:density}
The {\bf mass} $M(B)$ of a continuous body $B$ in $\RR^3$ describes the body's resistance to acceleration when a force acts on it.   
Following \cite{Antman:05}, mass has  the essential properties of a positive measure $M$ on the set of all continuous bodies, that is, $M(\emptyset)=0$, $B_1\sbs B_2$ implies $M(B_1)\leq M(B_2)$, and $M(\bigcup_k B_k)=\sum_k M(B_k)$, if $B_1, B_2, \ldots$ are disjoint.
In continuum mechanics one disregards point-, line- or surface-masses and assumes that if a subbody $A\sbs B$ has zero volume (that is $\int_A\dvol=0$
with respect to Lebesgue measure), then it also has zero mass, $M(A)=0$. In other words, $M$ is absolutely
continuous with respect to Lebesgue measure. The Radon-Nikodym theorem then implies the existence of a
nonnegative function $\rho^0\in L^1_\loc(B)$, the initial {\bf density} of mass, such that
\begin{equation}
M(B)=\int_B\rho^0\:\dvol.
\end{equation}
The existence of the spatial density $\rho^s_\tim\in L^1_\loc(B_\tim)$ with $B_\tim=\vphi_\tim(B)$ is deduced along the same lines:
\begin{equation}
M(B_\tim)=\int_{B_\tim}\rho^s_\tim\:\dvol.
\end{equation}
Moreover, assuming boundedness of mass density implies that a body's mass can be estimated by its volume, since  
$\rho^0\in L^\infty(B)$ implies $M(B)\leq \|\rho^0\|_{L^\infty(B)}\int_B\:\dvol$ (analogously in spatial formulation). 
 
In physics terms, density is a measure of mass per unit volume. The
spatial density of the earth is a non-negative function $\rho^s \colon
\RR^3 \times I\to\RR$ which is compactly supported by the actual
earth, that is,
\begin{equation}
\supp(\rho^s_\tim)=\ovl{\vphi_\tim(\ovl\earth)}=\vphi_\tim(\ovl\earth)
\end{equation}
for all $t\in I$, where we invoked (\ref{eq:motionclosure}). Thus we assume that  
\begin{equation}\label{eq:basic_rho_reg}
\rho^s\in  \cC^0(I,L_\cpt^\infty(\RR^3))\sbs\cC^0(I,L^{\infty}(\RR^3)).
\end{equation}
The corresponding material density $\rho\colon\earth\times I\to\RR$ is given by
$\rho_\tim
:=\rho^s_\tim\circ\vphi_\tim\colon\earth\to\RR$ for all $t\in I$, see \eqref{eq:materialspatial}.
Setting $\rho_\tim(X):=0$ if $X\notin\earth$ yields the extension $\rho\colon\RR^3\times I\to\RR$, with compact support 
$\supp(\rho_\tim)=\ovl\earth$ for all $t\in I$. If $\vphi\in\Am(\ovl\earth\times I)$ the condition
$\rho^s\in\cC^0(I,L^{\infty}(\RR^3))\sbs L^{\infty}(\RR^3\times I)$ implies
$\rho\in L^{\infty}(\earth\times I)$, or $\rho\in L^{\infty}(\RR^3\times I)$ with support of every $\rho_t$ being contained in a fixed compact set upon extension by zero.

The principle of conservation of mass states that during the motion of a body $V\sbs\RR^3$ mass can
neither be produced nor annihilated. For $\vphi\in\Reg_\Lip^+(\ovl V\times I)$ and
$\rho_\tim^s\in L^{\infty}(\vphi_\tim(V))$ for almost all $t\in I$
conservation of mass is equivalent to $M(\vphi_\timd(A))=M(\vphi_\timdd(A))$, that is,
\begin{equation}\label{eq:conservationofmass}
   \int_{\vphi_\timd(A)} \rho_\timd^s \: \dvol
       = \int_{\vphi_\timdd(A)} \rho_\timdd^s \: \dvol
\end{equation}
to hold for all open sets $A\sbs V$ and almost all times $t'$, $t''\in I$. Note that the principle of
conservation of mass often is formulated as $\frac{d}{dt}\int_{\vphi_\tim(A)}\rho_\tim^s(x)\:\dvol(x)=0$,
see e.g.\ \cite{MaHu:83rep}, which however requires differentiability of $\rho^s$ with respect to time.
The following lemma shows that conservation of mass allows us to relate the current material density $\rho_\tim$
to the initial density $\rho^0$ via $J_\tim$. It thus gives a necessary condition for conservation of mass in
a composite earth model $B$.

\begin{lemma}[{{\bf Conservation of mass}}]\label{lem:conservationofmass}
Let $\vphi\in\Am(\ovl\earth\times I)$ and $\rho_\tim^s\in L^{\infty}(\vphi_\tim(\earth))$
for (almost all) $t\in I$. Then conservation of mass implies the equation
\begin{equation}\label{eq:rho0rhoJ}
\rho^0=\rho_\tim J_\tim
\end{equation}
to hold almost everywhere on $\earth$ and for (almost all) $t\in I$.
\end{lemma}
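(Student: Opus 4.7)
The plan is to reduce the pointwise identity \eqref{eq:rho0rhoJ} to an equality of volume integrals over arbitrary open test sets, obtained by transporting the spatial integrals in the conservation law back to the reference configuration via Lemma~\ref{lem:intvolsurf}.

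First I would fix a time $t\in I$ for which the conservation of mass identity \eqref{eq:conservationofmass} holds in the pair $(t',t'')=(t_0,t)$; by hypothesis this is the case for almost every $t\in I$. For any open $A\subseteq\earth$, I apply the change-of-variables formula \eqref{eq:intvol} to both sides of \eqref{eq:conservationofmass}. At time $t_0$, the initial condition $\vphi_\timO=\Id_{\ovl\earthi}$ yields $\vphi_\timO(A)=A$, $J_\timO=\det(1_{3\times 3})=1$, and $\rho_\timO=\rho^s_\timO\circ\vphi_\timO=\rho^s_\timO$, so by definition $\rho^0=\rho_\timO=\rho^s_\timO$. Hence the left-hand side of \eqref{eq:conservationofmass} equals $\int_A\rho^0\,\dvol$. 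For the right-hand side, Lemma~\ref{lem:intvolsurf}(i) (applied with $f_\tim=\rho_\tim$, whose $L^\infty$-regularity on each open interior region follows from Lemma~\ref{lem:Linfty} and \eqref{eq:basic_rho_reg}) yields
\begin{equation*}
\int_{\vphi_\tim(A)}\rho^s_\tim\,\dvol=\int_A \rho_\tim J_\tim\,\dvol.
\end{equation*}
Combining the two identities gives the integral equality
\begin{equation*}
\int_A \rho^0\,\dvol=\int_A \rho_\tim J_\tim\,\dvol\qquad\text{for every open }A\sbs\earth.
\end{equation*}

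The final step is to conclude that the two $L^\infty(\earth)$-functions $\rho^0$ and $\rho_\tim J_\tim$ agree almost everywhere on $\earth$. This follows by a standard measure-theoretic argument: since open sets generate the Borel $\sigma$-algebra on $\earth$ and both integrands are in $L^1_{\loc}(\earth)$, the Lebesgue differentiation theorem (applied componentwise, or equivalently the uniqueness of the Radon–Nikodym derivative of the signed measure $A\mapsto\int_A(\rho^0-\rho_\tim J_\tim)\,\dvol$) forces $\rho^0(X)=\rho_\tim(X)\,J_\tim(X)$ for almost every $X\in\earth$. As this holds for almost every $t\in I$, the lemma follows.

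The mild obstacle is purely bookkeeping: one must note that Lemma~\ref{lem:intvolsurf}(i) is valid for open $A\sbs\earth$ in the composite setting (it is stated as such because $\Sig^\fxs$ has zero volume and $\vphi\in\Am(\ovl\earth\times I)$ preserves $\earth^\solid,\earth^\fluid$ by items \ref{def:Am_ii}–\ref{def:Am_iii} of Definition~\ref{def:Am}), and that the set of times $t$ for which conservation of mass holds in the pair $(t_0,t)$ has full measure in $I$, so the conclusion is a statement for almost every $t\in I$ as asserted.
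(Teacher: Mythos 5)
Your proof is correct and follows essentially the same route as the paper's: set $(t',t'')=(t_0,t)$ in \eqref{eq:conservationofmass}, transport the right-hand side back to the reference configuration by the change of variables (the paper invokes Federer's Theorem 3.2.3 directly, which is the ingredient underlying Lemma~\ref{lem:intvolsurf}(i) that you cite), and conclude from the resulting integral identity over arbitrary open $A$ by the fundamental lemma. The only cosmetic difference is that the paper works on $V=\earth^\solid$ or $\earth^\fluid$ separately and then glues, while you lean on Lemma~\ref{lem:intvolsurf}(i) already being stated for general open $A\sbs\earth$ — the content is the same.
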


\begin{proof}
Set $t'=t_0$, $t''=t\in I$ and let $V=\earth^\solid$ or $\earth^\fluid$.
The principle of conservation of mass for $\vphi|_{V\times I}$ then reads
$\int_A\rho^0(x)\:\dvol(x)=\int_{\vphi_\tim(A)}\rho_\tim^s(x)\:\dvol(x)$ for all $A\sbs V$ open, where on the left-hand
side we used $\vphi_\timO(A)=\Id_{\earth}(A)=A$ and $\rho_\timO^s=\rho_\timO=\rho^0$. Lipschitz-continuity of $\vphi_\tim$
on $A$ for all $t\in I$ allows to use change of variables on the right-hand side
(see \cite[Theorem 3.2.3, p.\ 243]{Federer:69}) which yields
$\int_A\rho^0(x)\:\dvol(x)=\int_A\rho_\tim^s(\vphi_\tim(X))\:|J_\tim(X)|\:\dvol(X)$.
Since $\rho_\tim^s\circ\vphi_\tim=\rho_\tim$ and $J_\tim$ is positive, $\int_A(\rho^0-\rho_\tim J_\tim)(X)\:\dvol(X)=0$ 
for every open subset $A\sbs V$, which is equivalent to $\rho^0=\rho_\tim J_\tim$ to hold
almost everywhere on each $\earthfs$ and thus almost everywhere on $\earth$, completing the proof.
\end{proof}

As a corollary of the Lemma \ref{lem:intvolsurf} and Lemma \ref{lem:conservationofmass} above we obtain
equations for the integral of a function times density in spatial and material representation.

\begin{corollary}
\label{cor:rho0int}
Under the assumptions of Lemma \ref{lem:intvolsurf} and Lemma \ref{lem:conservationofmass}, conservation of mass implies
\begin{equation}
\int_{\vphi_\tim(A)}f_\tim^s\rho_\tim^s\:\dvol=\int_Af_\tim\rho^0\:\dvol
\end{equation}
for all $\Lip$-domains $A\sbs \earth$ and (almost all) $t\in I$.
\end{corollary}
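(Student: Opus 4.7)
The plan is to combine the two lemmas in a direct two-step substitution, with essentially no obstruction beyond verifying that the integrability hypotheses of Lemma \ref{lem:intvolsurf} are satisfied for the product $f_\tim \rho_\tim$.

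First, I would note that $(f_\tim \rho_\tim)^s = f_\tim^s \rho_\tim^s$ holds pointwise on $\vphi_\tim(A)$, as the spatial-to-material correspondence $q_\tim = q_\tim^s \circ \vphi_\tim$ of \eqref{eq:materialspatial} is compatible with products. By Lemma \ref{lem:Linfty} applied to $\rho^s \in \cC^0(I,L^{\infty}(\RR^3))$ (given by \eqref{eq:basic_rho_reg}) and the admissible motion $\vphi \in \Am(\ovl\earth \times I)$, we obtain $\rho_\tim \in L^{\infty}(\earth)$ for $t \in I$. Combined with the hypothesis $f_\tim \in L^{\infty}(A)$ from Lemma \ref{lem:intvolsurf}, this yields $f_\tim \rho_\tim \in L^{\infty}(A)$, so Lemma \ref{lem:intvolsurf}(i) applies to the function $f_\tim \rho_\tim$ and gives
\begin{equation*}
  \int_{\vphi_\tim(A)} f_\tim^s \rho_\tim^s \: \dvol
  = \int_A (f_\tim \rho_\tim) J_\tim \: \dvol.
\end{equation*}

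Second, I would invoke Lemma \ref{lem:conservationofmass}, which under conservation of mass and the regularity hypotheses in force yields $\rho_\tim J_\tim = \rho^0$ almost everywhere on $\earth$, for almost every $t \in I$. Since $A \sbs \earth$, we may substitute this identity into the right-hand integrand to obtain
\begin{equation*}
  \int_A f_\tim \rho_\tim J_\tim \: \dvol = \int_A f_\tim \rho^0 \: \dvol,
\end{equation*}
which is exactly the claim. There is no real obstacle here; the only subtlety worth making explicit is that Lemma \ref{lem:intvolsurf}(i) is stated for open subsets $A \sbs \earth$, whereas the corollary is phrased for $\Lip$-domains $A \sbs \earth$—but $\Lip$-domains are open by Definition \ref{def:Lipdom}, so this is an immediate specialization. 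No manipulation of surface integrals, no change of variables beyond what is encapsulated in Lemma \ref{lem:intvolsurf}, and no issue with the interior boundaries $\Sig$ (which have zero volume by Lemma \ref{lem:conservationofmass}'s argument) is required.
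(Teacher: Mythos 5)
Your argument is correct and follows essentially the same route as the paper: apply Lemma \ref{lem:intvolsurf}(i) to change variables and Lemma \ref{lem:conservationofmass} to replace $\rho_\tim J_\tim$ by $\rho^0$. The only cosmetic difference is that you invoke Lemma \ref{lem:Linfty} on $\rho^s$ (to get $\rho_\tim\in L^\infty$ and hence $f_\tim\rho_\tim\in L^\infty(A)$), whereas the paper invokes it on $f$ (to get $f_\tim^s\in L^\infty$); both serve the same bookkeeping purpose of verifying the integrability needed for the substitution $x=\vphi_\tim(X)$.
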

\begin{proof} By Lemma \ref{lem:Linfty}, the condition $f_t\in L^{\infty}(\earth)$ implies that
$f_\tim^s=f_t\circ\vphi_\tim\in L^{\infty}(\vphi_\tim(\earth))$. Thus the assertion follows
  upon substitution $x=\vphi_\tim(X)$
 from Lemma \ref{lem:intvolsurf} and Lemma \ref{lem:conservationofmass}, since
$\int_{\vphi_\tim(A)}(f_\tim^s\rho_\tim^s)(x)\:\dvol(x)=\int_{A}f_t(X)\rho_\tim(X)\:J_\tim(X)\:\dvol(X)=\int_A(f_t\rho^0)(X)\:\dvol(X)$.
\end{proof}

In particular, by Lemma \ref{lem:conservationofmass}, conservation of mass relates the regularity of
density to the regularity of the motion:

\begin{remark}[{{\bf Improved regularity of material density by conservation of mass}}]
\label{rem:rho_reg}
If $\vphi\in\Am(\ovl\earth\times I)$ and $\rho_\tim^s\in L^{\infty}(\vphi_\tim(\earth))$, Lemma
\ref{lem:conservationofmass} shows that conservation of mass implies $\rho^0=\rho_\tim J_\tim$. Thus,
given the initial density $\rho^0$, this equation expresses the current density $\rho_\tim$ in terms of derivatives
of the motion $\vphi$. In particular, improving the regularity of $\vphi$ directly improves the regularity of
$t\mapsto\rho_\tim(x)=\rho^0(x)/J_\tim(x)$ for almost all $x$:

\begin{enumerate}[label=(\roman*)]
\item  If  $\vphi\in\Reg_{\cC^1}(\ovl\earth\times I)\sbs \cC^1(\ovl\earth \times I)^3$
(and thus positively oriented, see Remark \ref{rem:Jpos}) then we know that $J\in\cC^1(I,\cC^0(\ovl\earth))$ and
that $J$ is positive. Therefore, by conservation of mass it follows that
$\rho\in\cC^1(I,L^{\infty}(\earth))$ if $\rho^0\in L^{\infty}(\earth)$ or even $\rho\in\cC^1(I,\cC^0(\earth))$ if
$\rho^0\in\cC^0(\earth)$.

\item  For $\vphi\in\Am(\ovl\earth\times I)$ and $V=\earth^\solid$ or $\earth^\fluid$
we have $\vphi|_{V\times I}\in \Lip(V \times I )^3$ with $J$ is positive and bounded away from zero on $V$.
Consequently, if $\rho^0\in L^{\infty}(\earth)$, conservation of mass implies
$\rho|_{V\times I}\in\Lip(I,L^{\infty}(V))$ which yields $\rho\in\Lip(I,L^{\infty}(\earth))$. 

\item Note  that conservation of mass does not necessarily improve the regularity of the spatial density function $\rho^s\in\cC^0(I,L^{\infty}(\RR^3))$ for $\vphi\in\Am(\ovl\earth\times I)$, since existence of the time derivative 
$\d_t(\rho^s(x,t))=\d_t(\rho(\vphi_\tim^{-1}(x),t))=\d_t\rho(\vphi_\tim^{-1}(x),t)+\nabla\rho(\vphi_\tim^{-1}(x),t)\cdot\d_t\vphi_\tim^{-1}(x)$
would imply boundedness of $\nabla\rho$, which cannot be expected in the general case with merely  $\rho_\tim \in L^{\infty}(\earth)$.
\end{enumerate}
\end{remark}

\subsubsection{Gravity}\label{ssec:gravity}

The force of gravity is the mutual force of attraction of mass. The
corresponding acceleration field is conservative and thus equal to the
gradient of the {\bf gravitational potential} $\Phi^s$. The latter is
determined by the earth's actual density distribution
$\rho^s\in\cC^0(I,L^{\infty}(\RR^3))$ as the distributional solution
of Poisson's
equation\footnote{$\triangle:=\nabla\cdot\nabla=\sum_{l=1}^3\d_{x_l}^2$
  is the Laplacian and $G=6.67\cdot
  10^{-11}\:\text{m}^3\text{s}^{-2}\text{kg}^{-1}$ denotes the
  gravitational constant.}
\begin{equation}\label{eq:poisson}
\triangle \Phi_\tim^s=4\pi G\rho_\tim^s
\end{equation}
in $\RR^3$, which vanishes at infinity, that is, with $\lim_{|x|\to\infty}\Phi_\tim^s(x)=0$, for $t\in I$.
The material gravitational potential $\Phi$ on $\earth\times I$ is given by
$\Phi_\tim
:=\Phi_\tim^s\circ\vphi_\tim$ for all $t \in I$, see \eqref{eq:materialspatial}.

In the next lemma we review general properties of distributional solutions of Poisson's equation in $\RR^3$.
We introduce the solution set 
\begin{equation}
Y(\RR^3):=\{y\in\cD'(\RR^3):\:\triangle y\in\cE'(\RR^3),\:\lim_{|x|\to\infty}y(x)=0\}
\end{equation}
and its subset
\begin{equation}
Y^\infty(\RR^3):=\{y\in\cD'(\RR^3):\:\triangle y\in L_\cpt^\infty(\RR^3),\:\lim_{|x|\to\infty}y(x)=0\}.
\end{equation}
Let $E_3\in\cD'(\RR^3)$ denote the unique radial fundamental solution of $\triangle$ in $\RR^3$ (that is
$\triangle E_3=\de$) that vanishes at infinity ($E_3$ is given by $E_3(x)=-\frac{1}{4\pi|x|}$ for $x\neq 0$).
In the following $ D^\al$ for $\al\in\NN_0^3$, $\d_i$, and $\d_i\d_j$ for $i,j=1,2,3$ denote distributional derivatives
(with respect to $x\in\RR^3$).

\begin{lemma}[{{\bf Newtonian potential, decay conditions, and regularity}}]
\label{lem:poisson1} $\:$
\begin{enumerate}[label=(\roman*)]
\item If $y\in Y(\RR^3)$, then $y$ can be written as the Newtonian potential of its Laplacian, that is, as the
convolution $y=E_3\ast\triangle y$.

\item If $y\in Y(\RR^3)$, then $y$ satisfies
$ D^\al y(x)=\lara{\triangle y,1}( D^\al E_3)(x)+\mathcal O(1/|x|^{2+|\al|})$
as $|x|\to\infty$ for all multi-indices $\al\in\NN_0^3$.

\item  If $y\in Y^\infty(\RR^3)$, then  we have $y\in\bigcap_{1\leq p<\infty} W_\loc^{2,p}(\RR^3)\sbs \cC^1(\RR^3)$ $(1 \leq i,j \leq 3)$.
\end{enumerate}
\end{lemma}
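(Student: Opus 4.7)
The plan is to prove the three parts in order, deriving (ii) and (iii) from the representation formula established in (i).

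For part (i), since $\triangle y \in \cE'(\RR^3)$ has compact support, the convolution $u := E_3 \ast \triangle y$ is a well-defined distribution on $\RR^3$ and $\triangle u = (\triangle E_3) \ast \triangle y = \de \ast \triangle y = \triangle y$. Hence $w := y - u$ is a distributional solution of $\triangle w = 0$ on $\RR^3$; by hypoellipticity of $\triangle$ (Weyl's lemma), $w \in \cC^\infty(\RR^3)$ and is harmonic. Outside $\mathrm{supp}(\triangle y)$ one can write $u(x) = \lara{\triangle y(z),\,E_3(x-z)}$ with the distribution acting on a smooth kernel, and the uniform decay $E_3(x-z) = \mathcal{O}(|x|^{-1})$ on $\mathrm{supp}(\triangle y)$ combined with the continuity estimate for $\cE'$ shows $u(x) \to 0$ as $|x| \to \infty$. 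The hypothesis $y(x) \to 0$ at infinity is meaningful because $y$ is smooth outside $\mathrm{supp}(\triangle y)$ by elliptic regularity, so $w$ is a harmonic function on $\RR^3$ vanishing at infinity and hence identically zero by Liouville's theorem, giving $y = E_3 \ast \triangle y$.

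For part (ii), I differentiate the representation from (i) to obtain $D^\al y(x) = \lara{\triangle y(z),\,D^\al_x E_3(x-z)}$ for $x$ outside $\mathrm{supp}(\triangle y)$. Fix $R > 0$ with $\mathrm{supp}(\triangle y) \sbs \ovl{B(0,R)}$; for $|x| > 2R$, the homogeneity of $E_3$ of degree $-1$ yields $|D^\be E_3(x-z)| \leq C_\be |x|^{-1-|\be|}$ uniformly in $z \in \ovl{B(0,R)}$. A first-order Taylor expansion in $z$ about the origin gives $D^\al E_3(x-z) = D^\al E_3(x) + r_x(z)$ with $\sup_{z \in \ovl{B(0,R)}} |D^\ga_z r_x(z)| = \mathcal{O}(|x|^{-2-|\al|})$ for every multi-index $\ga$ up to the order of $\triangle y$. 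Pairing with $\triangle y$ and invoking the seminorm estimate $|\lara{\triangle y, g}| \leq C \sum_{|\ga| \leq N} \sup_K |D^\ga g|$ for a compactly supported distribution of finite order then splits $D^\al y(x)$ into the asserted main term $\lara{\triangle y, 1}\,D^\al E_3(x)$ and a remainder of order $\mathcal{O}(|x|^{-2-|\al|})$.

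For part (iii), since $\triangle y \in L^\infty_\cpt(\RR^3) \sbs L^p_\cpt(\RR^3)$ for every $p \in [1,\infty)$, the representation $y = E_3 \ast \triangle y$ from (i) together with Calder\'{o}n--Zygmund estimates for the Newtonian potential (equivalently, interior $L^p$-regularity for the Laplace equation applied to $\triangle y \in L^p_\loc$) yields $y \in W^{2,p}_\loc(\RR^3)$ for all $1 < p < \infty$; the $p = 1$ case then follows from the local inclusion $L^p_\loc \sbs L^1_\loc$. Choosing any $p > 3$ and invoking the Sobolev embedding $W^{2,p}_\loc(\RR^3) \hookrightarrow \cC^{1,\,1-3/p}_\loc(\RR^3)$, valid in dimension three, concludes $y \in \cC^1(\RR^3)$.

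The most delicate step is the remainder estimate in (ii), where one must simultaneously control the uniform decay rate of each derivative of $z \mapsto D^\al_x E_3(x-z) - D^\al E_3(x)$ on a fixed compact neighborhood of $\mathrm{supp}(\triangle y)$ and convert this into a decay rate of the distributional pairing via the order of $\triangle y$. Once the correct uniform bounds are assembled, this is a routine application of the continuity of compactly supported distributions.
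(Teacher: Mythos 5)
Your proof is correct and follows the same overall strategy as the paper's. The difference is chiefly one of presentation: the paper dispatches parts (i) and (ii) by citing Dautray--Lions (Chapter II.3), whereas you reconstruct the underlying standard arguments---convolution with $E_3$, Weyl's lemma and Liouville's theorem for the harmonic difference $w = y - E_3 * \triangle y$ in (i), and the uniform decay of $D^\gamma_z D^\alpha_x E_3(x-z)$ on compacta combined with the $\cE'$-seminorm estimate for the $\mathcal{O}(|x|^{-2-|\alpha|})$ remainder in (ii)---all correctly assembled. For part (iii) you use Calder\'on--Zygmund estimates for the Newtonian potential where the paper invokes H\"ormander's local $L^p$-elliptic regularity; both deliver the same $W^{2,p}_{\loc}$ conclusion for $1 < p < \infty$ (and your $p = 1$ completion by inclusion on compacta is the right way to close that gap), after which the Sobolev embedding $W^{2,p}_{\loc} \hookrightarrow \cC^1$ for $p > 3$ finishes identically to the paper.
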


\begin{proof}
The assertions (i) and (ii) follow by distributional solution theory of Poisson's equation
(cf.\ \cite[ch.\ II.3]{DL:V1}). For the Newtonian potential (i) see \cite[ch.\ II.3 Proposition 3, p.\ 279]{DL:V1}
and for the decay conditions (ii) see
\cite[ch.\ II.3 Proposition 2, p.\ 278]{DL:V1} for the case $n=3$ respectively.
To prove (iii), note that the regularity properties of $y\in Y^\infty(\RR^3)$ are a consequence of the
ellipticity of the Laplacian: By definition
$\triangle y\in L_\cpt^\infty(\RR^3)\sbs L_\loc^\infty(\RR^3)\sbs L_\loc^p(\RR^3)$ for every $p\geq 1$.
Thus (local) elliptic regularity for the $L^p$-based Sobolev spaces (cf.\ \cite[Theorem
7.9.7 p.\ 246 and Theorem 4.5.13 p.\ 123]{Hoermander:V1}) 
implies $ D^\al y\in\bigcap_{1\leq p<\infty} L_\loc^p(\RR^3)$ for $|\al|\leq 2$, in other words
$y\in\bigcap_{1\leq p<\infty} W_\loc^{2,p}(\RR^3)$.  Furthermore, by the Sobolev
embedding theorem (cf.\ \cite[Theorem 5.4 (part I case C), p.\ 97]{Adams:75}) $W_\loc^{2,p}(\RR^3)\sbs\cC^k(\RR^3)$
for $k\in\NN_0$ if $0\leq k<2-3/p$. Since this inequality is satisfied for $k=1$ and $p>3$ we have
$y\in\bigcap_{1\leq p<\infty} W_\loc^{2,p}(\RR^3)\sbs\cC^1(\RR^3)$, completing the proof.
\end{proof}

\begin{remark}
Note that the fact that $Y^\infty(\RR^3)\sbs\cC^1(\RR^3)$ in  Lemma \ref{lem:poisson1} (iii) can alternatively be
proven as follows: Let $y\in Y^\infty(\RR^3)$, then $y=E_3\ast\triangle y$ and $\d_iy=\d_iE_3\ast\triangle y$ by
Lemma \ref{lem:poisson1} (i). Since $E_3$ and $\d_iE_3\in L^1_\loc(\RR^3)$ and, by assumption,
$\triangle y\in L^{\infty}_\cpt(\RR^3)$, the relation $L^1_\loc\ast L^{\infty}_\cpt\sbs\cC^0$
(cf.\ \cite[ch.\ II.3 Lemma 3, p.\ 284]{DL:V1}) yields the result.
\end{remark}

Based on these results and the elliptic regularity of $\triangle$, we obtain a decomposition of
$y\in Y^\infty(\RR^3)$ which will be useful in constructions below. The idea is to separate the
monopole term of $y$ from its more rapidly decaying remainder at large distances to $\supp(\triangle y)$.

\begin{lemma}[{{\bf Decomposition of $Y^\infty(\RR^3)$}}]
\label{lem:poisson2} 
Let $y\in Y^\infty(\RR^3)$, let $R>0$ and sufficiently large to ensure $\supp(\triangle y)\sbs B_R(0)$, and
let $\chi\in\cC^\infty_\cpt(\RR^3)$ be a smooth cutoff with $\chi(x) = 1$ when $|x| < R$ and $\chi(x) = 0$
when $|x| > 2R$. Define the function $m_y\colon\RR^3\to\RR$ by $m_y:=\lara{\triangle y,1}E_3(1-\chi)$.
Then $m_y\in\cC^\infty(\RR^3)\cap L^{\infty}(\RR^3)$, $\supp(m_y)\sbs\RR^3\setminus\supp(\triangle y)$,
$\triangle m_y\in\cC_\cpt^\infty(\RR^3)$, and we have $y=m_y+\wtil y$ where $\wtil y\in H^2(\RR^3)$.
\end{lemma}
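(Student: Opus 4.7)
The plan is to set $\wtil y := y - m_y$ and verify all assertions in turn. The construction of $m_y$ is designed to absorb precisely the $1/|x|$ (monopole) term in the asymptotic expansion of $y$ at infinity furnished by Lemma \ref{lem:poisson1}(ii), so that the remainder decays at least like $1/|x|^2$ and is square-integrable on $\RR^3$.

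First I would check the properties of $m_y$ directly. The factor $1-\chi$ vanishes on a neighborhood of $0$, so it annihilates the only singularity of $E_3$; hence $E_3(1-\chi)\in\cC^\infty(\RR^3)$, and boundedness of $m_y$ follows from the decay of $E_3$ at infinity. The support claim is immediate from $\supp(1-\chi)\sbs\{|x|\geq R\}$ and $\supp(\triangle y)\sbs B_R(0)$. For $\triangle m_y$, the Leibniz rule yields
$$
\triangle(E_3(1-\chi))=(1-\chi)\triangle E_3-2\nabla\chi\cdot\nabla E_3-(\triangle\chi)E_3,
$$
where the first term vanishes as a distribution since $\triangle E_3=\de_0$ and $1-\chi=0$ near the origin, while the other two terms are smooth and compactly supported in the annulus $\{R\leq|x|\leq 2R\}$. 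Hence $\triangle m_y\in\cC_\cpt^\infty(\RR^3)$.

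Set $\wtil y := y - m_y$. Then $\triangle\wtil y=\triangle y-\triangle m_y\in L^\infty_\cpt(\RR^3)$ and $\wtil y$ vanishes at infinity, so $\wtil y\in Y^\infty(\RR^3)$. The crucial calculation is that the monopole of $\wtil y$ vanishes, i.e.\ $\lara{\triangle\wtil y,1}=0$. This I obtain by applying the divergence theorem to $m_y$ on $B_\rho$ with $\rho>2R$, using that there $m_y=\lara{\triangle y,1}E_3$ and $\d_\nu E_3(x)=(4\pi|x|^2)^{-1}$:
$$
\int_{\RR^3}\triangle m_y\:\dvol=\lim_{\rho\to\infty}\int_{\d B_\rho}\d_\nu m_y\:\dsurf=\lara{\triangle y,1},
$$
so $\lara{\triangle\wtil y,1}=\lara{\triangle y,1}-\lara{\triangle y,1}=0$.

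Applying Lemma \ref{lem:poisson1}(ii) to $\wtil y$ and using the vanishing of its monopole, I obtain $D^\al\wtil y(x)=\mathcal O(|x|^{-2-|\al|})$ as $|x|\to\infty$ for every multi-index $\al\in\NN_0^3$. For $|\al|\leq 2$ this gives $|D^\al\wtil y(x)|^2=\mathcal O(|x|^{-4-2|\al|})$, which is integrable against $r^2\,\dint r$ at infinity. Locally, Lemma \ref{lem:poisson1}(iii) applied to $y$ and the smoothness of $m_y$ yield $\wtil y\in\bigcap_{1\leq p<\infty}W^{2,p}_\loc(\RR^3)$, so $\wtil y$ and all its derivatives up to order two are locally bounded. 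Combining the local and asymptotic bounds gives $D^\al\wtil y\in L^2(\RR^3)$ for $|\al|\leq 2$, i.e.\ $\wtil y\in H^2(\RR^3)$. The main obstacle is the monopole cancellation $\lara{\triangle m_y,1}=\lara{\triangle y,1}$: without it, $\wtil y$ would still exhibit $1/|x|$ behavior at infinity and fail to be square-integrable on $\RR^3$.
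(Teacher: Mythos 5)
Your proof is correct, and it takes a genuinely different route from the paper's in the final step. The paper first establishes $\wtil y \in L^2(\RR^3)$ by substituting the asymptotic expansion of Lemma \ref{lem:poisson1}(ii) into $\wtil y = y - m_y$ and observing the explicit cancellation $\wtil y(x) = \lara{\triangle y,1}E_3(x)\chi(x) + \mathcal{O}(1/|x|^2) = \mathcal{O}(1/|x|^2)$ at infinity (combined with $\wtil y \in L^2_\loc$ from local elliptic regularity plus boundedness of $m_y$); it then invokes \emph{global} elliptic regularity for $\triangle$ on $L^2(\RR^3)$ to upgrade to $H^2(\RR^3)$. You instead make the monopole cancellation $\lara{\triangle \wtil y,1}=0$ explicit via the divergence theorem (a nice observation, only implicit in the paper's substitution), note $\wtil y \in Y^\infty(\RR^3)$, apply Lemma \ref{lem:poisson1}(ii) to $\wtil y$ to obtain $D^\al\wtil y(x)=\mathcal{O}(|x|^{-2-|\al|})$ for every $\al$, and verify $D^\al \wtil y \in L^2(\RR^3)$ for $|\al|\le 2$ directly by combining this decay with local $L^2$ regularity. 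Your route is more elementary (it never needs global elliptic regularity), is more informative (it exhibits quantitative decay rates for the derivatives), but costs you control over all derivatives up to order two rather than just $\wtil y$ itself.

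One small slip: the assertion that $\wtil y\in\bigcap_{1\le p<\infty}W^{2,p}_\loc(\RR^3)$ implies all derivatives of $\wtil y$ up to order two are \emph{locally bounded} overreaches. For $p>3$, $W^{2,p}_\loc(\RR^3)\hookrightarrow \cC^{1,\gamma}_\loc$, so $\wtil y$ and $\nabla\wtil y$ are locally bounded, but the second-order derivatives are only in $L^p_\loc$ for all finite $p$ and need not be in $L^\infty_\loc$. This does not affect your argument, since what you actually use is that $D^\al\wtil y\in L^2_\loc$ for $|\al|\le 2$, which already follows from $W^{2,2}_\loc=H^2_\loc$; you should simply say ``locally square-integrable'' rather than ``locally bounded''.
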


\begin{proof}
The regularity and the support property of $m_y$ is clear by definition. Using $\triangle E_3=\de$ a direct
calculation yields $\triangle m_y\in\cC_\cpt^\infty(\RR^3)$. Set $\wtil y:=y-m_y\in\cD'(\RR^3)$. Noting
that $\lara{\triangle y,1}E_3(x)\chi(x)$
vanishes when $|x| > 2R$ we obtain thanks to the decay conditions in Lemma \ref{lem:poisson1} (ii) that
$\wtil y(x)=y(x)-m_y(x)
=\lara{\triangle y,1}E_3(x)-\lara{\triangle y,1}E_3(x)(1-\chi(x))+{\mathcal O}(1/|x|^2)
=\lara{\triangle y,1}E_3(x)\chi(x)+{\mathcal O} (1/|x|^2)={\mathcal O} (1/|x|^2)$
as $|x|\to\infty$, which shows square integrability of $\wtil y$ outside $B_{2R}(0)$.
The proof of Lemma \ref{lem:poisson1} (iii) for $p=2$ gives $y\in H^2_\loc(\RR^3) \sbs L^2_\loc(\RR^3)$
(this also follows by applying (local) elliptic regularity for the $L^2$-based Sobolev spaces
(cf.\ \cite[Lemma 9.26, p.\ 307]{Folland:99})
of $\triangle$ to $\triangle y\in L_\cpt^\infty(\RR^3)\sbs L^2_\loc(\RR^3)$).
Boundedness of $m_y$ thus yields $\wtil y\in L^2_\loc(\RR^3)$. Consequently, $\wtil y\in L^2(\RR^3)$, which finally
allows to use (global) elliptic regularity (cf.\ \cite[Lemma 9.25, p.\ 307]{Folland:99}) to obtain
$\wtil y\in H^2(\RR^3)$.
\end{proof}

As concerns the gravitational potential, if $\rho_\tim^s\in L^{\infty}_\cpt(\RR^3)$, we have
$\Phi_\tim^s\in Y^\infty(\RR^3)$ for $t\in I$.
Thus by Lemma \ref{lem:poisson1}, $\Phi_\tim^s$ can be expressed as the Newtonian potential
\begin{equation}
\label{eq:poisson_newton}
\Phi_\tim^s(x)=(E_3\ast4\pi G\rho_\tim^s)(x)=-G\int_{\RR^3}\frac{\rho_\tim^s(x')}{|x-x'|}\dvol(x')
\qquad (x\in\RR^3).
\end{equation}
Moreover $\Phi_\tim^s\in \cC^1(\RR^3)$,
$\d_i\Phi_\tim^s\in\bigcap_{1\leq p<\infty} W_\loc^{1,p}(\RR^3)$, and
$\d_i\d_j\Phi_\tim^s\in\bigcap_{1\leq p<\infty} L_\loc^p(\RR^3)$ ($i,j=1,2,3$).
Furthermore, if $x\neq 0$, then
\begin{equation}
\lara{\triangle\Phi_\tim^s,1}( D^\al E_3)(x)=\lara{4\pi G\rho_\tim^s,1}( D^\al E_3)(x)
=4\pi G\:M(\earth)\:( D^\al E_3)(x).
\end{equation}
Here we have replaced the deformed earth's total mass
$M(\vphi_\tim(\earth))=\int_{\RR^3}\rho_\tim^s(x)\dvol(x)=\lara{\rho_\tim^s,1}$ by $M(\earth)$ which is possible
due to conservation of mass. Hence, $\Phi_\tim^s$ satisfies the asymptotic condition
\begin{equation}
\label{eq:poisson_decay}
 D^\al\Phi_\tim^s(x)=-G\:M(\earth)\: D^\al(1/|x|)+\mathcal O(1/|x|^{2+|\al|})
\qquad (|x|\to\infty,\:\al\in\NN_0^3).
\end{equation}
In particular, $\Phi_\tim^s(x)=-G\:M(\earth)/|x|+\mathcal O(1/|x|^{2})$ as $|x|\to\infty$, which is consistent with the
multipole expansion (see \cite[p.\ 62]{Torge:03}).
To interpret the result of Lemma \ref{lem:poisson2}, let $\chi\in\cC^\infty_\cpt(\RR^3)$ be a cutoff for the ball
$B_R(0)\sps\bigcup_{t\in I}\vphi_\tim(\earth)$
that vanishes outside $B_{2R}(0)$ 
(such a ball exists by condition (\ref{def:Am_i}) in Definition \ref{def:Am} of admissible motions).  
As $M(\earth)$ is constant, conservation of mass implies that the function
\begin{equation}
\label{eq:poisson_monopolefunction}
m^s(x):=m_{\Phi_\tim^s}(x)=-\frac{G\:M(\earth)}{|x|}\:(1-\chi(x))
\qquad (x\in\RR^3)
\end{equation}
is independent of $\Phi_\tim^s$ ($m^s$ depends only on the fixed total mass of the earth $M(\earth)$ and the choice
of the cutoff $\chi$).  
Therefore we have a decomposition of the gravitational potential $(t\in I)$  
\begin{equation}
\label{eq:poisson_monopole}
\Phi_\tim^s=m^s+\wtil{\Phi}_\tim^s, 
\end{equation}  
 where $m^s\in\cC^\infty(\RR^3)\cap L^{\infty}(\RR^3)$
with $\supp(m^s)\sbs\RR^3\setminus\bigcup_{t\in I}\vphi_\tim(\earth)$ represents the far field monopole term of the earth's
gravitational field at large distances outside the earth ($|x|\geq 2R$) and 
$\wtil{\Phi}_\tim^s\in H^2(\RR^3)$
contains the physical information and consists of all the higher order multipole
terms that model the earth's near and interior gravitational field.

Up to now we have not imposed any regularity condition on the gravitational potential with respect to time.
The reason is that due to Poisson's equation the temporal regularity of $\Phi^s$ is determined by that of $\rho^s$.
We assume that $\rho^s\in\cC^0(I,L^{\infty}(\RR^3))$ with compact support of $\rho^s_t$ in $\RR^3$ as in \eqref{eq:basic_rho_reg}. Thus, in view of Poisson's equation
$\triangle\Phi^s=4\pi G\rho^s$ or the representation of $\Phi^s$ as Newtonian potential $\Phi^s=E_3\ast(4\pi G\rho^s)$,
we have
\begin{equation}\label{eq:basic_Phi_reg}
\Phi^s\in\cC^0(I,Y^\infty(\RR^3))
\end{equation}
which implies $\wtil\Phi^s\in\cC^0(I,H^2(\RR^3))$.

\subsubsection{Conservative external field of force}\label{sssec:conforces}

In general, {\bf volume forces} (also called {\bf body forces}) are modeled by vector fields
\begin{equation}
f^s\colon\RR^3\times I\to\RR^3
\end{equation}
or $f\colon\earth\times I\to\RR^3$ in the material formulation. A
conservative external field of force can be naturally incorporated in
the calculus of variation \cite[p. 82]{Ciarlet:88}. Such a force can
be expressed as density times the gradient of a scalar potential
$F^s\colon\RR^3\times I\to\RR$,
\begin{equation}
f^s=\rho^s\nabla F^s
\end{equation}
($\nabla F^s$ is the associated acceleration field).
We assume that
\begin{equation}\label{eq:basic_force_reg}
F^s\in\cC^0(I,\Lip(\RR^3))
\qquad\text{which implies}\qquad
f^s\in\cC^0(I,L^{\infty}(\RR^3)^3)
\end{equation}
if $\rho^s\in\cC^0(I,L^{\infty}(\RR^3)^3)$.
In the calculus of variations, we need otherwise to introduce a
source via initial conditions, and then use Duhamel's
principle, or introduce friction in the originally frictionless
solid-fluid boundary integral contribution.

\subsection{Elasticity}\label{ssec:elasticity} 

We assume the material of the earth to be elastic, that is,
characterized by the existence of a time-independent relationship
between stress and strain, the constitutive relation.  The function
expressing stress in terms of strain is called the response function.
In case of hyperelasticity, the response function is the gradient of a
strain energy function $U$ with respect to strain, see
(\ref{eq:PKconstitutive}).

\subsubsection{The concept of stress}

We recall that strain, representing the deformation of the material,
may be quantified by the material strain tensor $e$
(\ref{eq:nlstraintensor}). Stress is a tensor defined by relating the
{\bf surface force} to the oriented area element on which it acts: Let
$S$ be a (hyper-)surface in a general continuum $V\sbs\RR^n$ with
sufficiently regular motion $\vphi\colon\ovl{V}\times I\to\RR^n$.  The
spatial {\bf Cauchy stress tensor} $T^s$ gives the relation of the
spatial surface force density, $\tau^s$, referred to as {\bf
  traction}, to the current area element with unit normal $\nu^s$
across which it acts:
\begin{equation}
\tau^s(\nu^s)=T^s\cdot \nu^s \qquad\text{on}\qquad \vphi_\tim(S).
\end{equation}
As is shown in \cite[Chapter 2, Theorem 2.10]{MaHu:83rep} or
\cite[Theorem 2.3-I]{Ciarlet:88}, the Cauchy stress tensor $T^s$ is
symmetric, if the axioms of conservation of mass, balance of momentum,
and balance of moment of momentum hold.

The {\bf first Piola-Kirchhoff stress tensor} $T^{\PK}$ relates $\tau^s$  to the original undeformed area element
(see \cite[(16.5) and (43 A.1)]{TrNo:04}, \cite[p.\ 71]{Ciarlet:88}). 
\begin{equation}
\tau^s(\nu^s)=T^{\PK}\cdot \nu \qquad\text{on}\qquad S.
\end{equation}
Occasionally  the stress tensor is defined in a transposed variant with the convention $\tau(\nu)=\nu\cdot T$ instead of  $\tau(\nu)=T\cdot \nu$, see e.g.\ \cite[(3.2.8) and (5.3.19)]{Malvern:69}). Note that, by construction, $T^{\PK}$ is a two-point tensor (see \cite[p.\ 48 and p.\ 70]{MaHu:83rep})
depending on space and time.   The total surface force acting on a deformed surface
$\vphi_\tim(S)$ is given by
\begin{equation}\label{surfforce}
\int_{\vphi_\tim(S)}\tau^s(\nu^s)\:\dsurf
=\int_{\vphi_\tim(S)}T_\tim^s\cdot \nu^s_\tim\:\dsurf
=\int_S J_\tim \, T_\tim \cdot(\nabla\vphi)_\tim^{-T}\cdot \nu\:\dsurf,
\end{equation}
where the  last equality follows from (\ref{eq:intsurf}) (under suitable regularity of $\vphi$ and $S$) and we have introduced $T_\tim :=T_\tim^s\circ\vphi_\tim$ as the material Cauchy stress tensor.  Varying $S$, we obtain for every $t\in I$ the relation
\begin{equation}\label{eq:TPK_Piola}
  T^{\PK}_\tim = J_\tim\, (T_\tim^s \circ \vphi_t) \cdot {\nabla\vphi_\tim}^{-T}  
  = J_\tim\, T_\tim \cdot {\nabla\vphi_\tim}^{-T},
\end{equation}
showing that  $T^{\PK}$ is the \textbf{Piola transform} of $T^s$ (\cite[pp.\ 38-39]{Ciarlet:88}).
We note that $J_\tim\, {\nabla \vphi_t}^{-1}\cdot T_\tim$  signifies the nominal stress while $T^{\PK}$ is its transpose.
Combining the divergence theorem, the integral identities (\ref{eq:intvol}) and \eqref{surfforce}, and \eqref{eq:TPK_Piola}, we obtain for an arbitrary $\Lip$ subbody $A\sbs V$, 
\begin{eqnarray}
  \int_{A}J_\tim\:\div T_\tim\:\dvol\nn = \int_{\vphi_\tim(A)}\div T_\tim^s\:\dvol
  &=& \int_{\d\vphi_\tim(A)}T_\tim^s\cdot \nu^s_\tim\:\dsurf\\
  =\int_{\d A}T_\tim^{\PK}\cdot \nu\:\dsurf
  &=&\int_{A}\div T_\tim^{\PK}\:\dvol.
\end{eqnarray}
Since  $A$ was arbitrary, we obtain the simple relation $\div T^{\PK} = J \, \div T$ (cf.\ \cite[Theorem 1.7-I]{Ciarlet:88}).

\subsubsection{Constitutive theory for elastic solids and elastic fluids}

The mechanical response of a material to an applied force depends on
the specific physical properties of the material and allows one a
classification of materials.  A body $V\sbs\RR^3$ is {\bf elastic}, if
there exists a {\bf constitutive function (response function)} $r
\colon V \times GL(3,\RR) \to \RR^{3\times 3}$, where $GL(3,\RR)$
denotes the so-called general linear group of invertible $3 \times
3$-matrices, such that the Cauchy stress tensor $T_t^s$ at $\vphi_t
(X)$ can be expressed as the value of $r$ at $X\in V$ and the
deformation gradient $(\nabla\vphi)(X,t)$, that is,
\begin{equation}
T_t^s (\vphi_t(X)) = r(X,(\nabla\vphi)(X,t))
\end{equation}
\cite[Chapter 4, (4) on p.129]{Chadwick:1998}. If thermal effects are
included, then $r$ also depends on the specific entropy.

Material symmetries are reflected by the invariance of the response function under certain changes of the reference configuration for an arbitrary deformation gradient. An arbitrary invertible matrix $A \in GL(3,\RR)$, representing such a change of reference configuartion, can be expressed as the product $A = (|\det A|^{1/3} \, 1_{3\times 3}) \cdot (|\det A|^{-1/3} A)$, where the first factor is a pure dilation and the second factor is a unimodular matrix. Thus, it
suffices to consider changes of the reference configurations associated with the unimodular group $\RR^{3 \times 3}_1:=\{A\in\RR^{3\times 3}:\:|\det A|=1\}$.
The {\bf isotropy group} $I(r)$ is the collection of all static
density-preserving deformations at $X\in V$ that cannot be detected by
experiment and is a subgroup of the unimodular group
$\RR^{3\times 3}_1$.
With respect to the reference configuration $\vphi_\timO=\Id_V$, the isotropy group is defined by
\begin{equation}
   I(r):=\{H\in\RR^{3\times 3}_1:  
   \:r(X,F)=r(X,F\cdot H)\:\:\forall\:F\in GL(3,\RR),\: X\in V\}.
\end{equation} 
The isotropy group $I(r)$ encodes the symmetry behavior of the constitutive function $r$ and allows to  classify the material as solid or as fluid (see \cite[p.\ 77 ff.]{TrNo:04}). In particular, a compressible inviscid fluid is an elastic material with maximal symmetry possessing no preferred configuration. This can be expressed more precisely in the form of the result, the material representation version of which is shown in \cite[Chapter 4, Section 5]{Chadwick:1998}, that an elastic material whose isotropy group is the entire unimodular group necessarily is a compressible inviscid fluid and has a Cauchy stress, which is a scalar multiple of the identity matrix. This may be turned into a definition for an elastic fluid. 

\begin{definition}[{{\bf Elastic fluid in terms of the isotropy group}}]\label{def:fluids} 
An elastic material whose isotropy group coincides with the unimodular
group is a compressible inviscid fluid and satisfies $T_t^s (x) =
- h(x,\rho^s_t(x)) \, 1_{3\times 3}$ with a scalar function $h$. 
\end{definition}

We recall from Remark \ref{rem:rho_reg}(ii) that an admissible motion
$\vphi$ will automatically be orientation preserving, that is, $J =
\det (\nabla \vphi) > 0$. Therefore, it suffices henceforth to
consider in the matrix argument $F$, representing $\nabla \vphi_t
(X)$, of the response functions $r(X,F)$ only elements from the group
$GL_+(3,\RR) :=\{F\in GL(3,\RR):\:\det F>0\}$ of invertible
orientation-preserving real $3\times 3$-matrices.

A {\bf hyperelastic material} (cf.\ \cite[p. 141]{Ciarlet:88} or
\cite[pp.\ 210-211]{MaHu:83rep}) is defined to be an elastic material
such that the response function is given in terms of a {\bf stored
  energy function} $U \colon V \times GL_+(3,\RR) \to \RR$, see
(\ref{geomresp}) below. If we assume the motion to be isentropic (that
is reversible), $U$ is given by the \textbf{internal energy density}
per unit mass (in the isothermal case this, up to a constant, also
equals the free energy).

It can be shown that the principle of material frame indifference (see
\cite[p.\ 146]{Ciarlet:88} or \cite[Chapter 3, Theorem
  2.10]{MaHu:83rep}) implies that the stored energy function $U$
depends on $F = \nabla\vphi$ only through the \textbf{right
  Cauchy-Green (or deformation) tensor}
$$
   C := \nabla\vphi^T\cdot\nabla\vphi
$$
or, equivalently, in terms of the material strain tensor $e = (C -
1_{3\times 3})/2$ introduced in (\ref{eq:nlstraintensor}).  In fact,
the theory of elasticity and its tensorial quantities are more
naturally treated in the context of metric components corresponding to
the Euclidean metric in more general coordinates describing (parts of)
the body $V$ and its deformed state $\vphi_t(V)$, or even with $V$ and
$\vphi_t(V)$ being submanifolds of arbitrary Riemannian manifolds
(cf.\cite{MaHu:83rep}). In such geometric context, the adjoint
appearing in the definition of the right Cauchy-Green tensor $C$ has
to be interpreted in terms of the Riemannian metrics on the tangent
spaces of $V$ and $\vphi_t(V)$. Then $C$ can be described in terms of
the pullback to $V$ of the \textbf{Riemannian metric} $g_t$ on the
deformed body $\vphi_t(V)$, that is,
$$
   C = \vphi_t^* g_t
$$
(strictly speaking, this pull-back relation holds for the associated
   tensor $C^\flat$, with the first index lowered, but we adopt the
   notational convention of keeping $C$ as in \cite[Remark on
     p.\ 194]{MaHu:83rep}). Therefore, $C$ and $U$ depend on the
   metric $g_t$ as well, that is, we have $C$ as a tensor field $(X,t)
   \mapsto C(\nabla \vphi_t(X), g_t(\vphi_t(X))) = (\vphi_t^*
   g_t)(X)$, and in a simplified setting one may read all formulae
   involving these quantities with $g_t$ equal to the Euclidean metric
   in standard Cartesian coordinates.

We consider the basic regularity and notational convention for the
internal energy. In the model of a fluid-solid composite earth model
$\earth$ with a positively oriented piecewise $\Lip$-regular motion
$\vphi\in\Am(\ovl\earth\times I)$ we follow \cite{Ciarlet:88} in
assuming that $U \colon B \times GL_+(3,\RR) \to \RR$ is a bounded
function of its first argument and continuously differentiable with
respect to its second argument, more precisely, we require that
\begin{multline}\label{eq:basic_U_reg}
   \text{the map}\ (X,F) \mapsto U(X,F)\ \text{belongs to}\
           L^{\infty}(\earth,\cC^1(GL_+(3,\RR)) ,
\\ 
   \text{which implies that}\ (X,t) \mapsto U(X,(\nabla\vphi)(X,t))\
   \text{is in}\ \cC^0(I,L^{\infty}(\earth)) .
\end{multline}
In the sense of the latter functional composition, we will in the
sequel often consider $U$ also as a function $\earth\times I\to\RR$,
or by even further abuse of notation, always write $U$ while
considering it interchangeably as function of $X$ and $\nabla \vphi_t$
or $C$ or $g_t$ or $e$ or $\rho$.

We recall that $g_t$ is the Riemannian metric after deformation under
$\varphi_t$ and that the transition between material and spatial
representations involves base point transformations only, hence we
have (with $x = \varphi_t(X)$)
\begin{multline}\label{UtransUs}
  U^s(x,\nabla \vphi_t(\vphi_t^{-1}(x)), g_t(x)) = 
  U^s(\vphi_t(X), \nabla \vphi_t(X), g_t(\varphi_t(X)))
\\[0.25cm]
  = U(X, C(\nabla \vphi_t(X), g_t(\varphi_t(X)))) 
  = U(X, (\varphi_t^* g_t)(X)).
\end{multline}
Observing the formulae for pull backs of the
$\left(\begin{smallmatrix} 0\\2 \end{smallmatrix}\right)$ tensor field
$A$, push forwards of the $\left(\begin{smallmatrix}
  2\\0 \end{smallmatrix}\right)$ tensor field $\frac{\d U}{\d C}$, and
basic properties of the trace defining the matrix inner product, we
obtain
\begin{multline*}
   \frac{\d U^s}{\d g_t} \colon A = 
    \frac{\d U}{\d C} \colon \left( \frac{\d C}{\d g_t} \cdot A \right) =
    \frac{\d U}{\d C} \colon \left( {\vphi_t}^* A \right) =
    \frac{\d U}{\d C} \colon \left( \nabla \vphi_t^T \cdot A \cdot \nabla \vphi_t \right) \\
   = \left( \nabla \vphi_t  \cdot \frac{\d U}{\d C} 
     \cdot \nabla \vphi_t^T \right) \colon A
    = \left( {\vphi_t}_*\, \frac{\d U}{\d C} \right) \colon A,
\end{multline*}
which we may state in the brief form
\begin{equation}
   \frac{\d U^s}{\d g_t} = {\vphi_t}_*\, \frac{\d U}{\d C}.
\end{equation}

\begin{remark} Note that the explicit mapping between $U^s$ and $U$ given in \eqref{UtransUs} above is implicitly contained in \cite[Equation (2.98)]{DaTr:98}, the latter  follows from \eqref{UtransUs} and Corollary \ref{cor:rho0int}.
\end{remark}

In the context of a general metric $g_t$ on the deformed body,  the isothermal hyperelastic response function is of the form 
\begin{equation}\label{geomresp}
r(X,\nabla \vphi_t(X))=\rho^0(X) \, g_t^\sharp(\vphi_t(X)) \cdot \frac{\d U}{\d (\nabla \vphi_t)}(X,\nabla \vphi_t(X)), 
\end{equation}
where $g_t^\sharp$ implements the raising of the first index to
produce a $\left(\begin{smallmatrix}
  2\\0 \end{smallmatrix}\right)$-tensor field from one of type
$\left(\begin{smallmatrix} 1\\1 \end{smallmatrix}\right)$. It is shown
by a sequence of arguments in \cite[Chapter 3, Theorem 2.4, the
  transition from $\hat{P}$ to $S$ on the bottom part of p.\ 195, and
  Propositions 2.11-12]{MaHu:83rep}, that the latter relation is
equivalent to
\begin{equation} \label{eq:Cconstitutive} T_t^s = 2
  \rho_t^s \, \frac{\d U^s}{\d g_t} = 2 \rho_t^s \, {\vphi_t}_*
  \frac{\d U}{\d C} = \rho_t^s \, {\vphi_t}_* \frac{\d U}{\d e}.
\end{equation}
 
Applying $T_\tim =T_\tim^s\circ\vphi_\tim$  in the relation \eqref{eq:TPK_Piola}, we obtain from \eqref{eq:Cconstitutive} the following expression for the first Piola-Kirchhoff stress tensor (in agreeement with \cite[Equation (2.141)]{DaTr:98} and \cite[Chapter 3, Proposition 2.6 and Notation 4.7]{MaHu:83rep} and suppressing the subscript $t$) 
\begin{equation}\label{eq:PKconstitutive}
  T^{\PK} =  \rho^0\, \nabla \vphi \cdot  \frac{\d U}{\d e}   =  
  \rho^0\, \frac{\d U}{\d(\nabla\vphi)},
\end{equation} 
where the second inequality follows from $\frac{\d e}{\d(\nabla
  \vphi)}\cdot A = \frac{1}{2} (A^T \cdot \nabla \vphi + \nabla
\vphi^T \cdot A)$, which implies that
\begin{multline*}
   \frac{\d U}{\d(\nabla \vphi)}\colon
   A = \frac{\d U}{\d e}\colon \frac{1}{2} (A^T \cdot \nabla \vphi +
\nabla \vphi^T \cdot A)
\\ = \frac{1}{2} \left( \nabla \vphi \cdot \frac{\d
  U}{\d e}^T \right) \colon (A^T)^T + \frac{1}{2} \left( \nabla \vphi \cdot
\frac{\d U}{\d e}^T \right) \colon A = (\nabla \vphi \cdot \frac{\d U}{\d e})
\colon A
\end{multline*}
by the properties of the matrix inner product (in terms of the trace)
and the symmetry of $\frac{\d U}{\d e}$ (thanks to that of $T^s$).

We briefly discuss the constitutive equations for
elastic fluids in the regions $\earth^\fluid$ of the earth. In a general elastic
continuum, (the spatial) \textbf{pressure} is defined by 
\begin{equation} 
  p^s=-\frac{1}{3} \, \tr\: T^s. 
\end{equation}
In case of an elastic fluid, Theorem \ref{def:fluids} implies $p_t^s(x) = h(x,\rho^s_t(x))$ and  therefore, recalling the general geometric form of the response function from \eqref{geomresp}, 
$$
   T_t^s = - p_t^s \, g_t^\sharp, 
$$   
or, in case of the ambient space being Euclidean with Cartesian coordinates, simply 
\begin{equation}\label{eq:fluidCauchystress} 
   T^s = - p^s \, 1_{3\times 3}.
\end{equation} 

In the hyperelastic case, the isotropy group can equivalently be determined from the invariance properties of the internal elastic energy in place of the response function, therefore the energy of an isentropic fluid depends only on its density, that is, in place of $U(X,C(\ldots))$ in \eqref{UtransUs} we have 
$$
   U(X, \rho_t(X)) = U^s(\vphi_t(X),\rho_t(X), g_t(\vphi_t(X))) 
   = U^s(\vphi_t(X),\rho_t^s(\vphi_t(X)), g_t(\vphi_t(X))), 
$$   
which (upon further abuse of notation) implies $\frac{\d U}{\d \rho}(X) = \frac{\d U^s}{\d \rho^s} (\vphi_t(X))$ and thus $(\frac{\d U}{\d \rho})^s = \frac{\d U^s}{\d \rho^s}$. Furthermore, 
we observe that 
$$
   \displaystyle \frac{\d U}{\d(\nabla\vphi)} = \frac{\d U}{\d \rho} \frac{\d \rho}{\d (\nabla\vphi)} =\frac{\d U}{\d \rho} \left( - \frac{\rho^0}{J^2} \frac{\d J}{\d(\nabla\vphi)}\right) = - \rho \, \frac{\d U}{\d \rho} (\nabla\vphi)^{-T},
$$ 
where we have used $\rho^0 = \rho J$ and the fact that $\frac{\d
  J}{\d(\nabla\vphi)} = J \, (\nabla\vphi)^{-T}$ (as noted also in
\cite[p.\ 10]{MaHu:83rep}). Taking all this into account, we may
determine the spatial pressure
$p_\tim^s\colon\vphi_\tim(\earth^\fluid)\to\RR$ (in agreement with
\cite[eq.\ (44.9), p.\ 246]{Gurtin:09}) directly from $U^s$ and the
spatial density $\rho^s$ in view of \eqref{eq:fluidCauchystress},
\eqref{eq:PKconstitutive}, and \eqref{eq:TPK_Piola} as follows:
\begin{equation}\label{eq:pressureUs}
    p^s =  - \left( J^{-1} \rho^0\, \frac{\d U}{\d \rho}
        \left(-\frac{\rho^0}{J}\right)\right)^s   
        = (\rho^s)^2 \, \frac{\d U^s}{\d \rho^s} .
\end{equation} 

Due to the conservation of mass we can express the dependence of
pressure on density again by a dependence on the deformation
gradient. This allows one to treat hyperelastic solid and fluid
regions in a unified way by prescribing $U$.

\subsubsection{Dynamical interface conditions}

We  come to a description of the dynamical interface conditions, constraining the traction on $\Lip$-surfaces $S$ inside the earth (recall Definition \ref{def:Lipdom} for $\Lip$-surfaces).  Note that an analysis carried out in Subsection \ref{nonlinearstationarity} below reveals that these interface conditions can be obtained as Euler-Lagrange equations or nonlinear boundary conditions from variational methods.
 By Newton's third law the spatial traction vector must satisfy \cite[(1.2)]{BeSa:00}
\begin{equation}
\tau^s(-\nu^s)=-\tau^s(\nu^s) 
\end{equation} 
on any spatial  surface $\vphi_t(S)$. Here the insertion of $\pm\nu^s$ also means evaluating $\tau^s$ on the $\pm$-side of the surface $\vphi_\tim(S)$. With the convention \eqref{eq:normal} that $\nu^s$ points from the $-$ to the $+$-side we identify $\nu^s=\nu^{s,-}$. The definition of the Cauchy stress then implies $-T^{s,-}\cdot\nu^s=T^{s,-}\cdot(-\nu^s)=\tau^s(-\nu^s)=-\tau^s(\nu^s)=-T^{s,+}\cdot\nu^s$.
Spatial traction is thus continuous across interior boundaries, that is, the  
 spatial jump condition 
\begin{equation}\label{eq:CauchyIBC}
[T^s]_-^+\cdot\nu^s=0\qquad\text{on}\qquad \vphi_\tim(S)
\end{equation} 
holds for all $t\in I$. 
In case of {\bf perfect slip}, that is, a {\bf frictionless surface}, $T^s$ also satisfies 
the normality condition \eqref{eq:normality} in the spatial formulation, while neglecting viscosity of the fluids inside the earth, that is  
\begin{equation}\label{eq:Cauchynofriction}
T^s\cdot\nu^s=(\nu^s\cdot T^s\cdot\nu^s)\nu^s \quad\text{on}\quad \vphi_t(\Sig^\fxs).
\end{equation}
If \eqref{eq:CauchyIBC} is considered on the earth's exterior boundary, if atmospheric stresses are neglected (that is the earth is considered as an elastic body in vacuum), it  gives the dynamical boundary condition
\begin{equation}\label{eq:CauchyIBC_ext}
T^s\cdot\nu^s=0\qquad\text{on}\qquad \vphi_\tim(\d\earth).
\end{equation}
In  combination, the conditions \eqref{eq:CauchyIBC} and \eqref{eq:Cauchynofriction} guarantee the absence of the fluid-solid interface and boundary integral contribution to the energy balance in the spatial representation.

\subsubsection{Prestress}

The stress at equilibrium time $t_0$ is called the {\bf prestress} of the medium:
\begin{equation}\label{eq:prestress}
T^0:=T^s_\timO=T_\timO=T_\timO^{\PK}.
\end{equation}
Evaluating \eqref{eq:CauchyIBC} at $t=t_0$ immediately gives the continuity of $T^0$ in directions normal to any surface $S\sbs \earth$,
\begin{equation}\label{eq:IBC_FST0}
[T^0]_-^+\cdot\nu=0\qquad\text{on}\qquad S.
\end{equation}
Similarly, \eqref{eq:CauchyIBC_ext} leads to the equilibrium zero-traction condition on $\d\earth$
\begin{equation}\label{eq:NBC_T0}
T^0\cdot\nu=0\qquad\text{on}\qquad \d\earth.
\end{equation}
The prestress in a fluid material is a pure pressure, 
\begin{equation}\label{eq:fluidprestress}
T^0=-p^0 1_{3\times 3},
\end{equation}
where $p^0$ is called the equilibrium or initial {\bf hydrostatic pressure}.
On a fluid-solid boundary $T^0$ satisfies the normality condition \eqref{eq:normality} 
\begin{equation}\label{eq:surfpressure}
T^0\cdot \nu=-p^0\nu\qquad\text{with}\qquad p^0=-\nu\cdot T^0\cdot \nu\qquad\text{on}\qquad \Sig^\fxs.
\end{equation}
Thus we have 
\begin{equation}\label{eq:IBC_FSp0}
\qquad [p^0]_-^+=0\qquad\text{on}\qquad \Sig^\fxs
\end{equation}
with $p^0=-\frac{1}{3}\tr\: T^0$. The trace-free difference 
\begin{equation}\label{eq:T0dev}
T^0_{\mathrm{dev}}:=T^0-\frac{1}{3}(\tr\: T^0)1_{3\times 3}=T^0-(-p^0) 1_{3\times 3}=T^0+p^0 1_{3\times 3}
\end{equation} 
is called {\bf deviatoric prestress}. As $T^0_{\mathrm{dev}}$ is symmetric (due to symmetry of $T^0$), only three of its components
are independent. It vanishes in the fluid regions but is generally non-zero in the solid parts of the earth.
  
\section{The action integral}\label{sec:3}

\subsection{The structure of the action integral of the composite fluid-solid earth model}
\label{ssec:actionstructure}
The state of the uniformly rotating, elastic, and self-gravitating earth model is characterized by specifying its
motion $\vphi$, its gravitational potential $\Phi^s$, and its density $\rho^s$ for given elastic properties which
are encoded in the internal energy density $U$, and given force potential $F^s$. In accordance with the
regularity conditions of Definition \ref{def:Am}, (\ref{eq:basic_Phi_reg}), and (\ref{eq:basic_rho_reg})
$\vphi$, $\Phi^s$, and $\rho^s$ are modeled as elements of the following basic configuration spaces:
\begin{definition}[{{\bf Basic configuration spaces}}]\label{def:nonlin_reg}
\begin{eqnarray}
W_\motion&:=&\{\vphi\in\cC^0(I,L^{\infty}(\RR^3))^3:\:\vphi|_{\ovl\earthi\times I}\in\Am(\ovl\earth\times I)\},\nn\\
W_\grav&:=&\cC^0(I,Y^\infty(\RR^3)),\nn\\
W_\density&:=&\{\rho^s\in\cC^0(I,L^{\infty}(\RR^3)):\:\supp(\rho_\tim^s)\sbs\vphi_\tim(\ovl\earth)\:\forall\:t\in I\}.\nn
\end{eqnarray}
\end{definition}
We further assume that $U$ satisfies (\ref{eq:basic_U_reg}), that is $U\in L^{\infty}(\earth,\cC^1(GL_+(3,\RR)))$.  By Lemma \ref{lem:Linfty} we obtain from $\vphi\in W_\motion$ that 
$U^s\in\cC^0(I,L^{\infty}(\RR^3))$ with $\supp(U_\tim^s) \subseteq \vphi_\tim(\ovl\earth)$ for all $t\in I$.
Finally, $F^s$ is required to satisfy (\ref{eq:basic_force_reg}), namely $F^s\in\cC^0(I,\Lip(\RR^3))$, which, combined with $\rho^s\in W_\density$, implies that the force $f^s=\rho^s\nabla F^s$ is in $\cC^0(I,L^{\infty}(\RR^3))^3$
(with compact support contained in $\vphi_\tim(\ovl\earth)$ at time $t\in I$).

By Hamilton's principle the earth's configuration $(\vphi,\Phi^s,\rho^s)\in W_\motion \times W_\grav \times W_\density$
is a stationary point for the action  functional
\begin{equation}
\action\colon W_\motion \times W_\grav \times W_\density \to \RR,
\end{equation}
which we suppose to be of the following basic structure:
\begin{equation}\label{eq:action_uncon}
\action(\vphi,\Phi^s,\rho^s)=\int_I\left(\int_{\vphi_\tim(\earthi)}L^s(x,t)\:\dvol(x)
+\int_{\vphi_\tim(\Sig^\fxs)}L^s_{\vphi_\tim(\Sig^\fxs)}(x,t)\:\dsurf(x)\right)\mathrm{d}t.
\end{equation}
Here, $L^s$ is a short-hand notation  for the volume Lagrangian density, which may depend explicitly on $(x,t)$ besides being a function of the space- and time-derivatives of the state-variables $(\vphi,\Phi^s,\rho^s)$. In detail, the integrand $L^s$ is to be understood as a  function
$$
(x,t)\mapsto L^s\left(x,t,\vphi(x,t),\Phi^s(x,t),\rho^s(x,t),\nabla\vphi(x,t),\nabla\Phi^s(x,t),\nabla\rho^s(x,t),
\dot\vphi(x,t),\dot\Phi^s(x,t),\dot\rho^s(x,t)\right).
$$
The  part of $\action$ representing the surface action consists of a temporally integrated surface integral over all fluid-solid boundaries within the earth model:
\begin{equation}\label{eq:action_surf}
\action_{\Sig^\fxs}(\vphi):=\int_I\action_{\Sig^\fxs,\tim}(\vphi)\:\mathrm{d}t
:=\int_I\int_{\vphi_\tim(\Sig^\fxs)}L^s_{\vphi_\tim(\Sig^\fxs)}(x,t)\:\dsurf(x)\:\mathrm{d}t
=\int_I\int_{\Sig^\fxs}L_{\Sig^\fxs}(X,t)\:\dsurf(X)\:\:\mathrm{d}t.
\end{equation} 
Accounting
for the mutual interaction of fluid and solid regions, $\action_{\Sig^\fxs}(\vphi)$ 
  only occurs if fluid regions are present in the earth model. 
We will see later that $\action_{\Sig^\fxs}$ is, in fact, independent of $\Phi^s$ and $\rho^s$, that is, $L^s_{\vphi_\tim(\Sig^\fxs)}$ is a function 
$$
(x,t)\mapsto L^s_{\vphi_\tim(\Sig^\fxs)}\left(x,t,\vphi(x,t),\nabla\vphi(x,t),\dot\vphi(x,t)\right).
$$
As will be seen in the  following section, in the linearized model the purely second-order surface action does not vanish, even in the frictionless case. Its Lagrangian density is given by \eqref{eq:Lsurf2}, an approximation which implies that the second-order surface energy terms account for the work done by slip of material at fluid-solid boundaries against the initial traction due to prestress. This interpretation is consistent with \cite[p.\ 96, (3.232)]{DaTr:98}.

Observe that the action does not contain an integral over the
(exterior) boundary of the earth $\d \earth$.  This will correspond to
the zero-traction (homogeneous Neumann) boundary condition
(\ref{eq:NBC_TPK}): $T^{\PK} \cdot\nu = 0$ on $\d\earth$. The explicit
form (\ref{eq:Lfull_surf}) of the material surface Lagrangian
$L_{\Sig^\fxs}$ will be obtained as a consequence of energy balance in
\ref{ssec:energybalance}.

Let us further specify the volume Lagrangian density by expressing it in terms of physical quantities.
Since we consider elastic, gravitational, and also other (internal or external) conservative forces, 
according to Hamilton's principle we have
\begin{equation}
L^s=E_\kin^s-E_\pot^s\qquad\textrm{with}\qquad
E_\pot^s=E_\elast^s+E_\grav^s+E_\ext^s.
\end{equation}
Here $E_\kin^s$, $E_\elast^s$, $E_\grav^s$, and $E_\ext^s$ are the kinetic, elastic, gravitational
energy densities of the earth and of the other (internal or external) conservative forces respectively, given by
\begin{equation}
E_\kin^s=\frac{1}{2}(v^s+\Omega\times x)^2\rho^s,\quad
E_\elast^s=U^s\rho^s, \quad
E_\grav^s=\frac{1}{2}\:\Phi^s\rho^s, \quad\textrm{and}\quad E_\ext^s=F^s\rho^s. \nn
\end{equation}
The factor $\frac{1}{2}$ in $E_\grav^s$ is due to self-gravitation
(cf.\ \cite{DaTr:98}, \cite{WoDe:07}). Thus
\begin{equation}\label{spatialLagrangianfirst}
L^s=\Big(\frac{1}{2}(v^s+\Omega\times x)^2-U^s-\frac{1}{2}\:\Phi^s-F^s\Big)\rho^s.
\end{equation}
Note that the additional term $\Omega\times x$ in the formula of $E_\kin^s$ is due to the adoption of a
co-rotating coordinate system \cite[p.\ 247]{MaRa:94}. 
The $i$th component of $\Omega\times x$ is $\eps_{ijk}\Omega_jx_k$  (with $\eps_{ijk}$ denoting the Levi-Civita permutation symbol and recalling summation convention), thus
we have $(v^s+\Omega\times x)^2=(v^s_i)^2 + 2\:v^s_i\eps_{ijk}\Omega_jx_k+\Omega_l^2x_k^2-(\Omega_kx_k)^2$. The expression $v^s_i\eps_{ijk}\Omega_jx_k$ represents the Coriolis force
(more precisely, the Coriolis acceleration is given by $2\:v^s\times\Omega$), while
the last two terms of the right-hand side correspond to the centrifugal force. Therefore, upon introducing the centrifugal potential $\Psi^s\colon\RR^3\to\mathbb R$,
\begin{equation}\label{eq:Psi}
\Psi^s(x)=-\frac{1}{2}\Big(\underbrace{\Omega^2x^2-(\Omega\cdot x)^2}_{= (\Omega \times x)^2}\Big)
=-\frac{1}{2}\Big(\Omega_l^2x_k^2-(\Omega_kx_k)^2\Big)
\qquad (x\in\RR^3),
\end{equation}
we may write $L^s$  in the form
\begin{equation}
\label{eq:Lagrangian_uncon}
L^s=\Big(\frac{1}{2}\:(v^s)^2 +v^s\cdot(\Omega\times x)-\Psi^s-U^s-\frac{1}{2}\:\Phi^s-F^s\Big)\rho^s.
\end{equation}
We note for later purpose that 
\begin{equation}\label{psinote}
  \nabla \Psi^s(x) = \Omega \times (\Omega \times x). 
\end{equation}

\begin{remark}[{{\bf Geometric formulation}}]\label{geom1} In terms of a more general $n$-dimensional Riemannian manifold structure on the deformed body we  would have to consider an invariant form of the Lagrange density $L^s  dV$ with $dV$ being the Riemannian volume form, which in coordinates reads $\sqrt{\det ((g_t(x))_{ij})_{1 \leq i,j \leq n}} \; dx_1 \wedge \ldots \wedge dx_n$. The first two terms in $L^s$ would be written invariantlty as $\frac{1}{2} g_t(v^s,v^s)$ and $g_t(v^s,\omega^s)$, where $\omega^s$ denotes the velocity field of the rotational  motion.  Moreover, $\Psi^s$ would read $g_t(\omega^s,\omega^s)$ in invariant form and $\Phi^s$ has to be obtained from geometric constructions with Green functions (or distributions) for the Laplacian corresponding to the metric $g_t$ replacing the convolution with the Newtonian potential in \eqref{eq:poisson_newton}. Thus, the geometric invariant version of the Lagrangian reads
\begin{equation}
\label{eq:Lagrangian_uncon_geom}
L^s=\Big(\frac{1}{2}\: g_t(v^s,v^s) + g_t(v^s,\omega^s) -
g_t(\omega^s,\omega^s) -U^s-\frac{1}{2}\:\Phi^s-F^s\Big)\rho^s.
\end{equation}
Note that the action involves a surface integral, which we also
discuss briefly here in terms of a Riemannian manifold $(M,g)$ with
boundary $\d M$, embedded as submanifold via $\iota_{\d M} \colon \d M
\hookrightarrow M$. The ``surface measure'' $\dsurf$ on $\d M$ is, in
fact, the volume form of the induced Riemannian metric $\tilde{g} =
\iota^*_{\d M} (g)$ on $\d M$. By \cite[Corollary 15.34]{Lee:13}, we
have $\dsurf = \iota^*_{\d M}(\nu \lrcorner \,\dvol)$, where $\nu$ is
the outward unit normal vector field on $\d M$ and $\nu \lrcorner
\,\dvol$ denotes contraction, that is, $\nu \lrcorner \,\dvol(Y_1,
\ldots, Y_{n-1}) = \dvol(\nu, Y_1, \ldots, Y_{n-1})$. Existence and
uniqueness of $\nu$ is guaranteed by \cite[Propositions
  15.33]{Lee:13}, where the construction of $\nu$ is based on a
function $f \colon M \to \RR$, with $d f \neq 0$ on $\d M$, defining
the boundary as level set $\d M = f^{-1}(0)$ (cf.\ \cite[Propositions
  5.43]{Lee:13}) in the form $\nu = - \nabla f / \sqrt{g(\nabla f,
  \nabla f)}$ (with $\nabla$ denoting the metric gradient).
\end{remark}

Since the functions $v^s$, $U^s$, $\rho^s$, $\Phi^s$, $F^s$ are
elements of $\cC^0(I,L^{\infty}(\RR^3))$, $\rho_t^s$ is compactly
supported for every $t \in I$, and $I$ is a bounded interval, we
obtain that $L^s$ belongs to $L^{\infty}(\RR^3 \times I)$ with compact
support $\supp(L_\tim^s)\sbs\vphi_\tim(\ovl\earth)$ for almost all
$t\in I$. The regularity conditions on $\vphi$ and $U$ imply
$L^s_{\Sig^\fxs}\in\cC^0(I,L^{\infty}(\bigcup_{t\in
  I}\vphi_\tim(\Sig^\fxs)))$, thus also the surface Lagrangian defines
an integrable function.  Consequently, the action functional
$\action(\vphi,\Phi^s,\rho^s)$ is defined on all of $W_\motion \times
W_\grav \times W_\density$.

\begin{remark}
The material quantities $U$, $\rho$ and $\Phi$ are functions in
$\cC^0(I,L^{\infty}(\earth))$ by Lemma \ref{lem:Linfty}. In
particular, since $\vphi\in\Am(\ovl\earth\times I)$ and $\rho^0\in
L^{\infty}(\earth)$, conservation of mass implies that
$\rho=\rho^0/J\in\Lip(I,L^{\infty}(\earth))$ (see Remark
\ref{rem:rho_reg}).
\end{remark}

In the following sections we will modify $\action$ and $L$ in order to incorporate self-gravitation and eliminate the dependence on density via conservation of mass. We thereby arrive at the material formulation of the action $\action''\colon W_\motion \times W_\grav\to\RR$ in \eqref{eq:action_full}.
However, due to the nonlinearity of $U$, the interrelation of spatial and material quantities, and the fact that $W_\motion \times W_\grav$ is not a Banach space  (more precisely, not normable), a rigorous mathematical framework for the linearization and a calculus lies beyond the basic notion of Fr\'echet differentiability or related concepts and the derivation of EL, NBC, and NIBC in (\ref{eq:Ham_EL}) to (\ref{eq:Ham_IBCS}) with rigorous proofs in full generality is left for a possible future study in finding the most appropriate concept among the infinite dimensional calculi in locally convex vector spaces.   
We will introduce a physically reasonable approximation of the action integral yielding a linearization in Subsection 
\ref{sec:approxaction}, which enables us to apply the calculus of variations in a Sobolev framework and eventually lead to
linear governing equations presented in Section \ref{sec:5}.

\subsection{A variational problem constrained by self-gravitation}\label{ssec:actionconstrained}
In addition to the stationarity of the action $\action$ (\ref{eq:action_uncon})
with respect to variations in $\vphi$, $\Phi^s$ and $\rho^s$, the fields $\Phi^s$ and $\rho^s$ are linked via Poisson's
equation (\ref{eq:poisson}) $\triangle \Phi^s=4\pi G\rho^s$ in $\RR^3\times I$.
Furthermore, $\vphi$ and $\rho^s$ are coupled through conservation of mass.
Therefore we have to interpret the stationarity of $\action$ to hold under these constraints. We begin with
incorporating Poisson's equation via a Lagrange multiplier method in Lemma \ref{prop:con_var1} below
(a similar result is also used in \cite[p.\ 34]{WoDe:07} or in \cite[p.\ 88]{DaTr:98}). We will
apply the Lagrange multiplier theorem for constrained variational problems in a Hilbert space setting to a
conveniently reformulated constrained variational problem. The result will be the modified action integral
$\action'$ (\ref{eq:action_mod}). Afterwards we will incorporate conservation of mass when
transforming the spatial Lagrangian density to its material representation, to finally arrive at the
action $\action''$ (\ref{eq:action_full}).

We begin with some preparatory observations. Let
$(\vphi,\Phi^s,\rho^s)\in W_\motion \times W_\grav \times W_\density$
(see Definition \ref{def:nonlin_reg}).  First, note that for $t\in I$,
$L_\tim^s=L^s(.,t)$ is proportional to $\rho_\tim^s$ and thus
supported in $\vphi_\tim(\earth)$, that is $L_\tim^s\in
L^{\infty}_\cpt(\RR^3)$ with $\supp(L_\tim^s) \subseteq
\vphi_\tim(\ovl\earth)$, and clearly also
$L^s_{\vphi_\tim(\Sig^\fxs)}$ vanishes outside
$\vphi_\tim(\ovl\earth)$. Hence, we may integrate $L_{\tim}^s$ over
$\RR^3$ without changing the definition of $\action$
(\ref{eq:action_uncon}). Second, we may neglect $L^s_{\Sig^\fxs}$ for
the moment, because $\Phi^s$ and $\rho^s$ do not directly contribute to
the surface action, and consider
\begin{equation}
   \action(\vphi,\Phi^s,\rho^s)
        = \int_I\int_{\RR^3}L^s(x,t)\:\dvol(x)\mathrm{d}t.
\end{equation} 
We now have $\RR^3\times I$ as both, the integration domain and the
domain of the constraint equation. Third, we observe that, if we
replace $\Phi_\tim^s$ by $m^s+\wtil{\Phi}_\tim^s$ according to
equation (\ref{eq:poisson_monopole}), $m^s$ does not contribute to
$\action$, since by construction (which implicitly uses conservation
of mass) its support is disjoint from $\supp(\rho_\tim^s) \supseteq
\supp(L_\tim^s)$. Therefore, $\action(\vphi,\Phi^s,\rho^s) =
\action(\vphi,\wtil\Phi^s,\rho^s)$. Fourth, since there is no
explicit dependence on time in Poisson's equation, it suffices to
investigate the constrained variational problem for the spatially
integrated part of the action integral
\begin{equation}
\action_\tim(\vphi_\tim,\Phi_\tim^s,\rho_\tim^s):=\int_{\RR^3}L^s(x,t)\:\dvol(x),
\end{equation}
 considered as a functional of $\vphi_\tim$, $\rho_\tim^s$, and $\Phi_\tim^s$ for fixed $t\in I$.
Since Poisson's equation does not involve $\vphi$, we can consider the constrained functional
$\action_\tim$ as a functional of $\rho_\tim^s$ and $\Phi_\tim^s$ only and keep $\vphi_\tim$ fixed (by abuse of notation we
use the same symbol $\action_\tim$).

The observations above thus show that for $t\in I$ fixed, $\action_\tim$ can be written in the form
\begin{equation}
\action_\tim(\Phi_\tim^s,\rho_\tim^s)=\action_\tim(\wtil\Phi_\tim^s,\rho_\tim^s)
=\int_{\RR^3}\Big(a_\tim^s(x)-\frac{1}{2}\wtil\Phi_\tim^s(x)\Big)\rho_\tim^s(x)\dvol(x)
=\lara{a_\tim^s-\frac{_1}{^2}\:\wtil\Phi_\tim^s|\rho_\tim^s}_{L^2(\RR^3)}.
\end{equation}
Here we use the abbreviation
\begin{equation}
a_\tim^s(x):=\frac{1}{2}(v_\tim^s(x))^2+v_\tim^s(x)\cdot(\Omega\times x)-\Psi^s(x)-U_\tim^s(x)-F_\tim^s(x).
\end{equation}
Note that  $a_\tim^s\in L^{\infty}_\cpt(\RR^3)\sbs L^2(\RR^3)$ (upon extending $\Psi^s$ by zero outside the current earth) and does not depend on $\Phi_\tim^s$ and $\rho_\tim^s$. The bracket
$\lara{.|.}_{L^2(\RR^3)}$ denotes the $L^2$ inner product.
Consequently,  we may identify $\action_\tim$  with the functional
\begin{equation}
\label{eq:action_Poisson}
\scJ:H^2(\RR^3)\times L^2(\RR^3)\to\RR\:,\quad
\scJ(\wtil{\Phi}^s,\rho^s)\:=\:\lara{a^s-\frac{_1}{^2}\:\wtil{\Phi}^s|\rho^s}_{L^2(\RR^3)},
\end{equation}
with $a^s\in L^2(\RR^3)$. Poisson's equation can be stated in the form $g=0$ with
\begin{equation}
\label{eq:action_g}
g:H^2(\RR^3)\times L^2(\RR^3)\to L^2(\RR^3)\:,\quad
g(\wtil{\Phi}^s,\rho^s)\:=\:\triangle\wtil{\Phi}^s+\triangle m^s-4\pi G\rho^s.
\end{equation}
Note that we have omitted the explicit time dependence to simplify the notation. The functions $\wtil\Phi^s$,
$\rho^s$, and $a^s$ in $\scJ$ and $g$ correspond to $\wtil\Phi_\tim^s$, $\rho_\tim^s$, and $a_\tim^s$ in $\action_\tim$ and
Poisson's equation $\triangle(\wtil\Phi_\tim^s+m^s)=4\pi G\rho_\tim^s$ for $t\in I$ as above.
Finally, the constrained variational problem for $\action$ (\ref{eq:action_uncon}) and Poisson's equation
(\ref{eq:poisson}) can be formulated as follows:
For $\scJ$ given by (\ref{eq:action_Poisson}) and $g$ given by (\ref{eq:action_g}), find
$(\wtil{\Phi}^s_\ast,\rho^s_\ast)\in H^2(\RR^3)\times L^2(\RR^3)$ such that $\scJ$ is stationary under the constraint $g=0$.

In the following lemma we deduce a necessary
condition for $(\wtil{\Phi}^s_\ast,\rho^s_\ast)$ by the Lagrange multiplier method (see \ref{rem:LagMult} for a review of the corresponding Banach space theory).

\begin{lemma}
\label{prop:con_var1}
Let $\scJ:H^2(\RR^3)\times L^2(\RR^3)\to\RR$ given in (\ref{eq:action_Poisson}) be stationary at
$(\wtil{\Phi}^s_\ast,\rho^s_\ast)\in H^2(\RR^3)\times L^2(\RR^3)$ under the constraint $g=0$ with
$g:H^2(\RR^3)\times L^2(\RR^3)\to L^2(\RR^3)$ given in (\ref{eq:action_g}).
Then there exists a Lagrange multiplier $\la\in L^2(\RR^3)$ such that the modified functional
$\scJ^\la:=\scJ+\lara{\la|g(.)}_{L^2(\RR^3)}:H^2(\RR^3)\times L^2(\RR^3)\to\RR$,
that is
$$
\scJ^\la(\wtil{\Phi}^s,\rho^s)=\scJ(\wtil{\Phi}^s,\rho^s)
+\lara{\la|\triangle(\wtil{\Phi}^s+m^s)-4\pi G\rho^s}_{L^2(\RR^3)},
$$
is stationary at $(\wtil{\Phi}^s_\ast,\rho^s_\ast)$.
Moreover, in this case $\la$ satisfies the equation $\triangle\la=\frac{1}{2}\:\rho^s_\ast$ in $\cD'(\RR^3)$.
\end{lemma}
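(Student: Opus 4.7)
The plan is to reduce the claim to an application of the Lagrange multiplier theorem in Hilbert spaces (as reviewed in Remark~\ref{rem:LagMult}, which the paper references), then to extract the distributional PDE for $\la$ from the resulting first-order stationarity condition by testing with a suitably chosen direction.

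First I would verify the hypotheses of the Lagrange multiplier theorem. The functional $\scJ$ is a bounded bilinear form on $H^2(\RR^3) \times L^2(\RR^3)$ in the variables $(\wtil\Phi^s,\rho^s)$ (up to the affine shift by $a^s \in L^2(\RR^3)$), so it is Fr\'echet differentiable with
\begin{equation*}
D\scJ(\wtil\Phi^s,\rho^s)(h,k) = \Lara{-\tfrac{1}{2}h\,\big|\,\rho^s}_{L^2(\RR^3)} + \Lara{a^s-\tfrac{1}{2}\wtil\Phi^s\,\big|\,k}_{L^2(\RR^3)}.
\end{equation*}
Similarly $g$ is affine and bounded from $H^2(\RR^3) \times L^2(\RR^3)$ into $L^2(\RR^3)$, with constant derivative
\begin{equation*}
Dg(\wtil\Phi^s,\rho^s)(h,k) = \triangle h - 4\pi G\,k.
\end{equation*}
The crucial hypothesis is surjectivity of $Dg$ at the stationary point. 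This is immediate: given any $\psi \in L^2(\RR^3)$, choose $h=0$ and $k=-\psi/(4\pi G) \in L^2(\RR^3)$, which yields $Dg(0,-\psi/(4\pi G)) = \psi$. Thus $Dg$ is not merely of closed range but genuinely onto, which is the cleanest setup for the multiplier theorem and sidesteps any regularity subtleties coming from the Laplacian.

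The Lagrange multiplier theorem then produces some $\mu \in L^2(\RR^3)^*$ such that $D\scJ(\wtil\Phi^s_\ast,\rho^s_\ast)(h,k) + \mu\bigl(Dg(\wtil\Phi^s_\ast,\rho^s_\ast)(h,k)\bigr) = 0$ for all $(h,k) \in H^2(\RR^3) \times L^2(\RR^3)$. Identifying $L^2(\RR^3)^* \cong L^2(\RR^3)$ via the Riesz representation theorem yields a $\la \in L^2(\RR^3)$ with $\mu(\cdot) = \lara{\la|\cdot}_{L^2(\RR^3)}$, and the stationarity of $\scJ^\la = \scJ + \lara{\la|g(\cdot)}_{L^2(\RR^3)}$ at $(\wtil\Phi^s_\ast,\rho^s_\ast)$ follows by linearity.

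Finally I would extract the PDE for $\la$ by testing the stationarity identity
\begin{equation*}
\Lara{-\tfrac{1}{2}h\,\big|\,\rho^s_\ast}_{L^2} + \Lara{a^s-\tfrac{1}{2}\wtil\Phi^s_\ast\,\big|\,k}_{L^2} + \lara{\la\,|\,\triangle h - 4\pi G\,k}_{L^2} = 0
\end{equation*}
with $k = 0$ and $h \in \cD(\RR^3) \subset H^2(\RR^3)$ arbitrary. This gives
\begin{equation*}
\lara{\la\,|\,\triangle h}_{L^2(\RR^3)} = \tfrac{1}{2}\lara{\rho^s_\ast\,|\,h}_{L^2(\RR^3)} \qquad \forall h \in \cD(\RR^3),
\end{equation*}
which, reading the $L^2$ pairing on the left as the distributional dual pairing $\lara{\triangle\la,h}$ (by the symmetry of $\triangle$ on test functions), is precisely the statement $\triangle\la = \tfrac{1}{2}\rho^s_\ast$ in $\cD'(\RR^3)$.

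I do not expect a genuine obstacle here: the linear-quadratic structure of $\scJ$ and the affine structure of $g$ make differentiability automatic, and the surjectivity of $Dg$ is trivial because $\rho^s$ enters $g$ as a free $L^2$ parameter. The only point that deserves care is the identification of the multiplier space $L^2(\RR^3)^*$ with $L^2(\RR^3)$ and the passage from the $L^2$ identity to the distributional equation, but both are standard.
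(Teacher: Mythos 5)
Your proof is correct and follows essentially the same route as the paper: verify Fr\'echet differentiability of $\scJ$ and $g$, check surjectivity of $Dg$ to apply the Lagrange multiplier theorem, identify $\la$ via $L^2$-duality, then test the stationarity identity with $k=0$ and $h\in\cD(\RR^3)$ to read off $\triangle\la=\tfrac{1}{2}\rho^s_\ast$ in $\cD'(\RR^3)$. The one place where you depart from the paper is the surjectivity of $Dg(\wtil\Phi^s_\ast,\rho^s_\ast)\colon(h,k)\mapsto\triangle h-4\pi Gk$: the paper constructs a preimage of $r\in L^2(\RR^3)$ by Fourier inversion, setting $f=-\mathscr F^{-1}\big(\mathscr F r/(|\xi|^2+4\pi G)\big)\in H^2(\RR^3)$ and taking $(h,k)=(f,f)$, whereas you simply take $h=0$ and $k=-\psi/(4\pi G)\in L^2(\RR^3)$, which already works because $\rho^s$ enters $g$ linearly as a free $L^2$ parameter. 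Your argument is strictly simpler and avoids invoking Fourier/elliptic mapping properties that play no role here. One small point you leave implicit that the paper flags: the Lagrange multiplier theorem as stated in Theorem~\ref{rem:LagMult} also requires $\ker Dg(\wtil\Phi^s_\ast,\rho^s_\ast)$ to admit a topological complement in $H^2(\RR^3)\times L^2(\RR^3)$; this is automatic in a Hilbert space and deserves a sentence, but is of course no obstacle.
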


\begin{proof}
To apply the Lagrange multiplier theorem we just have to show differentiability of $\scJ$ and $g$ and verify that $(\wtil{\Phi}^s_\ast,\rho^s_\ast)$ is
a regular point of $g$ (note that $\ker Dg(\wtil{\Phi}^s_\ast,\rho^s_\ast)$ automatically has a topological
complement in the Hilbert space $H^2(\RR^3)\times L^2(\RR^3)$).
The functional $\scJ$ is differentiable on $H^2(\RR^3)\times L^2(\RR^3)$, because it can be written as a sum
of a linear functional and a quadratic form associated to a bounded linear operator on
$H^2(\RR^3)\times L^2(\RR^3)$: 
$$
\scJ(\wtil{\Phi}^s,\rho^s)= 
\Lara{\left(\arr{cc}{0 \\ a}\right)\Big|\left(\arr{cc}{\wtil{\Phi}^s\\ \rho^s}\right)}_{L^2\times L^2}
-\frac{1}{2}\Lara{\left(\arr{cc}{0 & \Id_{L^2}\\ 0 & 0}\right)\cdot \left(\arr{cc}{\wtil{\Phi}^s\\ \rho^s}\right)
\Big|\left(\arr{cc}{\wtil{\Phi}^s\\ \rho^s}\right)}_{L^2\times L^2}
$$
Here, $\lara{.|.}_{L^2\times L^2}$ denotes the $L^2(\RR^3)\times
L^2(\RR^3)$-inner product and boundedness of $\scJ$ follows from the
estimate $\norm{.}{L^2\times L^2}\leq\norm{.}{H^2\times L^2}$.  Since
$g$ is a continuous (affine) linear operator its differentiability is
clear. The derivative $h := Dg(\wtil{\Phi}^s,\rho^s):H^2(\RR^3)\times
L^2(\RR^3)\to L^2(\RR^3)$ at $(\wtil{\Phi}^s,\rho^s)\in
H^2(\RR^3)\times L^2(\RR^3)$ reads $h(y,z) = \triangle y-4\pi Gz$.  We
show that $h$ is surjective. Indeed, let $r\in L^2(\RR^3)$ and define
$f:=-\mathscr F^{-1}(\frac{\mathscr F r(\xi)}{|\xi|^2+4\pi G})\in
H^2(\RR^3)$. Then $\mathscr F h(f,f)(\xi)=\mathscr F(\triangle f-4\pi
Gf)(\xi)=-(|\xi|^2+4\pi G)\mathscr F f(\xi)=\mathscr F r(\xi)$ on
$\cS'(\RR^3)$, that is $h(f,f)=r$.  Thus, every point in
$H^2(\RR^3)\times L^2(\RR^3)$ is a regular point of $g$ and we can
apply the Lagrange multiplier theorem: If
$(\wtil{\Phi}^s_\ast,\rho^s_\ast)\in H^2(\RR^3)\times L^2(\RR^3)$ is a
stationary point for $\scJ$ under the constraint $g=0$, then there
exists a linear functional $\la':L^2(\RR^3)\to\RR$ such that
$(\wtil{\Phi}^s_\ast,\rho^s_\ast)$ is stationary for the modified
functional $\scJ^{\la'}:H^2(\RR^3)\times L^2(\RR^3)\to\RR$,
$\scJ^{\la'}:=\scJ-\la'\circ g$.  Using $L^2$-duality we may replace
the action of $\la'$ by an inner product with $\la\in L^2(\RR^3)$ such
that we can rewrite $\scJ^{\la'}$ in the form $\scJ^\la$ claimed in
the proposition: for $(\wtil{\Phi}^s,\rho^s)\in H^2(\RR^3)\times
L^2(\RR^3)$,
$\scJ^\la(\wtil{\Phi}^s,\rho^s)\:=\:\scJ(\wtil{\Phi}^s,\rho^s)
+\lara{\la|\triangle(\wtil{\Phi}^s+m^s)-4\pi G\rho^s}_{L^2(\RR^3)}$ or
$\scJ^{\la}:=\scJ+\lara{\la|g(.)}_{L^2(\RR^3)}$ in short-hand
notation.

To prove $\triangle\la=\frac{1}{2}\:\rho^s_\ast$ in the sense of
distributions, observe that stationarity of $\scJ^\la$ at
$(\wtil{\Phi}^s_\ast,\rho^s_\ast)\in H^2(\RR^3)\times L^2(\RR^3)$ is
equivalent to $D\scJ^\la(\wtil{\Phi}^s_\ast,\rho^s_\ast)(y,z)=0$ for
all $(y,z)\in H^2(\RR^3)\times L^2(\RR^3)$ and differentiation yields
$0=D\scJ^\la(\wtil{\Phi}^s_\ast,\rho^s_\ast)(y,z)=\lara{a^s-\frac{1}{2}\:\wtil{\Phi}^s_\ast|z}_{L^2(\RR^3)}
-\lara{\frac{1}{2}\rho^s_\ast|y}_{L^2(\RR^3)}+\lara{\la|\triangle
  y-4\pi G z}_{L^2(\RR^3)}
=\lara{a^s-\frac{1}{2}\:\wtil{\Phi}^s_\ast-4\pi G \la|z}_{L^2(\RR^3)}
-\lara{\frac{1}{2}\rho^s_\ast|y}+\lara{\la|\triangle
  y}_{L^2(\RR^3)}$.  Setting $z=0$, which corresponds to considering
only the stationarity of $\scJ^\la$ with respect to its first variable
$\wtil{\Phi}^s$, it follows that
$\lara{\frac{1}{2}\rho^s_\ast|y}_{L^2(\RR^3)}=\lara{\la|\triangle
  y}_{L^2(\RR^3)}$ holds for every $y\in H^2(\RR^3)$.  Since
$\cD(\RR^3)\sbs H^2(\RR^3)$ this implies the
$\lara{\cD'(\RR^3),\cD(\RR^3)}$-duality
$\lara{\frac{1}{2}\rho^s_\ast,y}=\lara{\la,\triangle
  y}=\lara{\triangle\la,y}$, hence
$\triangle\la=\frac{1}{2}\:\rho^s_\ast$ holds in $\cD'(\RR^3)$.
\end{proof}

The modified functional $\scJ^\la$ in Lemma \ref{prop:con_var1} corresponds to the modified action integral
\begin{equation}
\action_\tim^{\la_\tim}(\vphi_\tim,\Phi_\tim^s,\rho_\tim^s):=\int_{\RR^3}\left(
L_\tim^s+\la_\tim(\triangle\Phi_\tim^s-4\pi G\rho_\tim^s)\right)(x)\:\dvol(x)
\end{equation}
for $t\in I$. Here, the time-dependence of the Lagrange multiplier $\la\in L^2(\RR^3)$ in the proposition is
indicated by $\la_\tim$.
As the proof shows, the equation $\triangle\la=\frac{1}{2}\:\rho^s$ in $\cD'(\RR^3)$ is a consequence of
the stationarity of the modified action integral $\scJ^{\la}$ with respect to variations in $\wtil\Phi^s$ solely.
The Fr\'echet derivatives of $\scJ^{\la_\tim}$ and of $\action_\tim^{\la_\tim}$ are of the exact same structure
(although the functionals act on different spaces). This suggests to consider the equation
\begin{equation}\label{eq:PoissonLag}
\triangle\la_\tim=\frac{1}{2}\:\rho_\tim^s\quad\textrm{in}\quad \cD'(\RR^3)
\end{equation}
as a necessary condition for the stationarity of the original action
integral $\action$ constrained by Poisson's equation
$\triangle\Phi_\tim^s=4\pi G\rho_\tim^s$.  The next lemma shows that
we can solve $\triangle\la_\tim=\frac{1}{2}\:\rho_\tim^s$ and identify
the Lagrange multiplier within the setting of Definition
\ref{def:nonlin_reg}. We recall that the gravitational potential is
given by $\Phi_\tim^s=4\pi G\, \rho_\tim^s\ast E_3$.

\begin{lemma}
\label{lem:con_var2} 
Let $\rho_\tim^s\in L^{\infty}_\cpt(\RR^3)$,  then  $\la_\tim := \frac{1}{2}\:\rho_\tim^s\ast E_3=\Phi_\tim^s/8\pi G$ is the unique solution to (\ref{eq:PoissonLag}) in $Y(\RR^3)$.
\end{lemma}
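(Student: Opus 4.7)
The plan is to verify existence by direct construction using the fundamental solution, identify $\lambda_t$ with $\Phi_t^s/(8\pi G)$ via the Newton potential representation, and then obtain uniqueness as an immediate consequence of Lemma \ref{lem:poisson1}(i).

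First, I would check that $\lambda_t := \tfrac{1}{2}\rho_t^s \ast E_3$ is a well-defined element of $\cD'(\RR^3)$: since $\rho_t^s \in L^\infty_\cpt(\RR^3) \sbs \cE'(\RR^3)$ and $E_3 \in L^1_\loc(\RR^3)$, the convolution makes sense, and in fact produces a continuous function on $\RR^3$ (by the standard $L^1_\loc * L^\infty_\cpt \sbs \cC^0$ argument already invoked in the remark following Lemma \ref{lem:poisson1}). Applying $\triangle$ and using $\triangle E_3 = \de$ yields $\triangle \lambda_t = \tfrac{1}{2}\rho_t^s \ast \triangle E_3 = \tfrac{1}{2}\rho_t^s \in L^\infty_\cpt(\RR^3) \sbs \cE'(\RR^3)$, so the equation (\ref{eq:PoissonLag}) is satisfied.

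Next, I would verify $\lambda_t \in Y(\RR^3)$ by establishing the decay condition. The cleanest route is to apply Lemma \ref{lem:poisson1}(ii) with $\al = 0$ (this is legitimate once we already know $\triangle \lambda_t \in \cE'(\RR^3)$ and rewrite the equation, or we can anticipate using $\lambda_t \in Y^\infty(\RR^3)$ since $\triangle \lambda_t$ is even in $L^\infty_\cpt$): we get $\lambda_t(x) = \lara{\tfrac{1}{2}\rho_t^s, 1} E_3(x) + \mathcal O(1/|x|^2)$ as $|x|\to \infty$, and since $E_3(x) = -1/(4\pi|x|) \to 0$, this implies $\lim_{|x|\to \infty} \lambda_t(x) = 0$. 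Alternatively one can argue directly: the support of $\rho_t^s$ is compact, so for $|x|$ large $|\lambda_t(x)| \leq \tfrac{1}{8\pi} \|\rho_t^s\|_{L^\infty} \int_{\supp \rho_t^s} |x-x'|^{-1} \dvol(x') \leq C/|x|$. Hence $\lambda_t \in Y(\RR^3)$. The identification $\lambda_t = \Phi_t^s/(8\pi G)$ follows directly from the Newtonian potential representation (\ref{eq:poisson_newton}), namely $\Phi_t^s = 4\pi G\, \rho_t^s \ast E_3$.

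For uniqueness, suppose $\lambda_t^{(1)}, \lambda_t^{(2)} \in Y(\RR^3)$ both solve (\ref{eq:PoissonLag}), and set $\mu := \lambda_t^{(1)} - \lambda_t^{(2)}$. Then $\mu \in Y(\RR^3)$ with $\triangle \mu = 0 \in \cE'(\RR^3)$ and $\mu$ vanishing at infinity. Lemma \ref{lem:poisson1}(i) applied to $\mu$ yields $\mu = E_3 \ast \triangle \mu = E_3 \ast 0 = 0$, completing the proof.

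There is no substantive obstacle here, since the machinery of Lemma \ref{lem:poisson1} was assembled precisely to reduce statements of this sort to a one-line convolution identity; the only mildly delicate point is to ensure that the decay hypothesis $\lim_{|x|\to \infty} \lambda_t(x) = 0$ is actually met by the candidate solution before invoking the uniqueness part, which is why I would argue the decay explicitly (either via the asymptotics or via the trivial $1/|x|$ estimate) rather than take it for granted.
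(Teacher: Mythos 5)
Your proof is correct and takes a slightly different route for the uniqueness step than the paper does. The paper's proof is terse on both points: existence is noted by construction (since $\triangle\la_\tim=\frac{1}{2}\rho^s_\tim\in L^\infty_\cpt$ places $\la_\tim$ in $Y^\infty(\RR^3)\sbs Y(\RR^3)$), and uniqueness is argued from the fact that any $y\in Y(\RR^3)$ with $\triangle y=0$ must vanish ``due to analyticity of $y$ and the decay condition in $Y(\RR^3)$'' --- essentially a Liouville-type argument. You instead re-invoke Lemma~\ref{lem:poisson1}(i): a harmonic $\mu\in Y(\RR^3)$ satisfies $\mu=E_3\ast\triangle\mu=E_3\ast 0=0$. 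Both are valid; your version is a nice self-contained one-liner reusing machinery the paper already assembled, whereas the paper appeals to the classical analyticity/decay argument. Your existence part is more careful than the paper's (explicitly verifying the $\cC^0$ regularity, the Laplacian computation, and the decay), which is harmless; the only thing to flag is that your decay argument via Lemma~\ref{lem:poisson1}(ii) is slightly circular as phrased --- you need $\lambda_t\in Y(\RR^3)$ to invoke that lemma, but decay is exactly what is still to be shown --- so the fallback direct $\mathcal O(1/|x|)$ estimate you give is the one that actually closes the loop (or, equivalently, observe $\triangle\la_\tim\in L^\infty_\cpt\sbs\cE'$ already and derive the decay of $E_3\ast\triangle\la_\tim$ directly). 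This is a minor point of ordering, not a gap.
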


\begin{proof}
Since $\rho_\tim^s\in L^{\infty}_\cpt(\RR^3)$ and  $\triangle\la_\tim=\frac{1}{2}\:\rho_\tim$ holds by construction, we have $\la_\tim\in Y(\RR^3)$ (even in $Y^\infty(\RR^3)$). It only remains to note that $y \in Y(\RR^3)$ and $\triangle y = 0$ implies $y = 0$ (due to analyticity of $y$ and the decay condition in $Y(\RR^3)$).
\end{proof}

Thus the Lagrange multiplier is proportional to the gravitational potential.
Consequently, for deriving the equations governing the state of our earth model, it is sufficient to study the
unconstrained stationarity of the action integral
$\action'(\vphi,\Phi^s,\rho^s):=\int_I\action'_t(\vphi_\tim,\Phi_\tim^s,\rho_\tim^s)\mathrm{d}t$
with, still omitting $\action_{\Sig^\fxs}(\vphi)$,
\begin{equation}
\action'_\tim(\vphi_\tim,\Phi_\tim^s,\rho_\tim^s):=\action_\tim^{\Phi_\tim^s/8\pi G}(\vphi_\tim,\Phi_\tim^s,\rho_\tim^s)
=\int_{\RR^3}\Big(L_\tim^s+\frac{1}{8\pi G}\Phi_\tim^s\triangle\Phi_\tim^s-\frac{1}{2}\:\Phi_\tim^s\rho_\tim^s\Big)(x)\:\dvol(x).
\end{equation}
We now rewrite the term involving the Laplacian using integration by
parts.  Since $\Phi_\tim^s\in Y^\infty(\RR^3)$ Lemma
\ref{lem:poisson1} yields $\Phi_\tim^s\in L^{\infty}(\RR^3)$,
$\Phi_\tim^s(x)=\mathcal O(1/|x|)$ as $|x|\to\infty$,
$\d_{x_l}^2\Phi_\tim^s\in L^1_{\loc}(\RR^3)$, and
$\d_{x_l}^2\Phi_\tim^s(x)=\mathcal O(1/|x|^3)$ as $|x|\to\infty$,
hence $\Phi_\tim^s\:\d_{x_l}^2\Phi_\tim^s\in L^1(\RR^3)$.  Therefore,
when studying the
term $$\int_{\RR^3}(\Phi_\tim^s\:\triangle\Phi_\tim^s)(x)\dvol(x)=
\sum_{l=1}^3\int_{\RR^3}(\Phi_\tim^s\:\d_{x_l}^2\Phi_\tim^s)(x)\dvol(x)
,$$ by Fubini's theorem, it is sufficient to consider the
one-dimensional integrals
$\int_{\RR}(\Phi_\tim^s\:\d_{x_l}^2\Phi_\tim^s)(x)\:dx_l$ for $1\leq
l\leq 3$.  Since by Lemma \ref{lem:poisson1} $\Phi_\tim^s$ and
$\d_{x_l}\Phi_\tim^s\in W_\loc^{1,1}(\RR^3)$, which consists of
functions with a representative that is absolutely continuous on
(bounded intervals of) almost all lines parallel to the coordinate
axes (cf.\ \cite[Theorem 2.1.4, p.\ 44]{Ziemer:89}), integration by
parts may be applied (cf.\ \cite[Theorem 3.35, p.\ 106]{Folland:99}).
Together with the decay conditions of $\Phi_\tim^s$ and
$\d_{x_l}\Phi_\tim^s$ it leads to
$\int_{\RR}(\Phi_\tim^s\:\d_{x_l}^2\Phi_\tim^s)(x)\:dx_l=-\int_{\RR}(\d_{x_l}\Phi_\tim^s)^2(x)\:dx_l$
for $1\leq l\leq 3$. Here, $(\d_{x_l}\Phi_\tim^s)^2=(\nabla\Phi^s)^2$
also is in $L^1(\RR^3)$ and, again by Fubini, we arrive at
\begin{equation}
\action_\tim'(\vphi_\tim,\Phi_\tim^s,\rho_\tim^s)=\int_{\RR^3}\left(
L_\tim^s-\frac{1}{2}\:\Phi_\tim^s\rho_\tim^s-\frac{1}{8\pi G}(\nabla\Phi_\tim^s)^2\right)(x)\:\dvol(x).
\end{equation}
Thus, integrating over the time interval and again including the surface action $\action_{\Sig^\fxs}(\vphi)$ (\ref{eq:action_surf}),
\begin{eqnarray}
&& \action'(\vphi,\Phi^s,\rho^s)=
\int_I\int_{\RR^3}\bigg(\Big(\frac{1}{2}\:(v^s)^2 +v^s\cdot(\Om\times x)
-U^s-(\Phi^s+\Psi^s)-F^s\Big)(x,t)\rho^s(x,t)\nn\\
&&\qquad\qquad\qquad\qquad\qquad\qquad
 -\frac{1}{8\pi G}(\nabla\Phi^s)^2(x,t)\bigg)\:\dvol(x)\mathrm{d}t+ \action_{\Sig^\fxs}(\vphi).
\label{eq:action_mod}
\end{eqnarray}

\subsection{The full action integral incorporating conservation of mass}\label{sssec:action_full}
Finally, we use conservation of mass and formulate $\action'$ as integral over material
quantities.  This step is crucial, as the material description is natural for the variational formulation of field equations based on Hamilton's principle. Moreover, this substitution will eliminate the dependence of the action integral on the variation of density with respect to time.

\subsubsection{The action integral in the material formulation}

For $t\in I$, we decompose $\action'_\tim$ and write,
\begin{eqnarray}
&&\action'_\tim(\vphi_\tim,\Phi_\tim^s,\rho_\tim^s)=\int_{\vphi_\tim(\earthi)}\Big(\frac{1}{2}\:(v^s)^2 +v^s\cdot(\Om\times x)
-U^s-(\Phi^s+\Psi^s)-F^s\Big)(x,t)\rho^s(x,t)\:\dvol(x)\nn\\
&&\qquad\qquad\qquad\qquad\qquad\qquad -\int_{\RR^3}\frac{1}{8\pi G}(\nabla\Phi^s)^2(x,t)\:\dvol(x)
+ \action_{\Sig^\fxs,\tim}(\vphi).
\end{eqnarray}
Note  that we have made use of the fact that $\supp(\rho_\tim^s) \subseteq \vphi_\tim(\ovl\earth)$. In the first integral we substitute $x=\varphi(X,t)$ and make the transitions from the spatial to the material representations $\dot\vphi$, $U$,
$\Phi$, $\Psi$, and $\rho$ according to the general rule 
$q_\tim=q_\tim^s\circ\vphi_\tim$, in particular, recall Equation \eqref{UtransUs} relating $U$ and $U^s$.
 Conservation of mass means, thanks to Corollary \ref{cor:rho0int}, that we may apply the relation  
$\int_{\vphi_\tim(\earthi)}(h_\tim^s\rho_\tim^s)(x)\:\dvol(x)=\int_\earthi(h_\tim\rho^0)(X)\:\dvol(X)$ to the function $h^s=\frac{1}{2}\:(v^s)^2 +v^s\cdot(\Om\times
x)-U^s-(\Phi^s+\Psi^s)-F^s$ and thereby reduce the dependence on $\rho^s$ to one on $\rho^0$. Moreover, $\supp(\rho^0) \subseteq \ovl\earth$, hence we may extend the domain of integration to all of $\RR^3$, which allows us to combine both integrals into one (upon renaming the variable in the second integral). To summarize, we may rewrite $\action'(\vphi,\Phi^s,\rho^s)$ as $\action''(\vphi,\Phi^s)$, where
\begin{equation} 
  \action''\colon W_\motion \times W_\grav\to\RR 
\end{equation} 
is given by
\begin{equation}\label{eq:action_full}
  \action''(\vphi,\Phi^s) =  \int_I \Big(
     \underbrace{\int_{\RR^3}L''(X,t)\:\dvol(X)}_{=: \action''_{\text{vol}}(t)} 
     \,+\, 
     \underbrace{\int_{\Sig^{\fxs}}L_{\Sig^\fxs}''(X,t)\:\dsurf(X)}_{=: \action''_{\Sig^{\fxs}}(t)}
     \Big)\mathrm{d}t,
\end{equation} 
with material volume Lagrangian density \begin{equation}\label{eq:Lfull}
L''(X,t):=\Big(\frac{1}{2}\:\dot\vphi^2
+\dot\vphi\cdot(\Om\times\vphi)-U-(\Phi+\Psi)-F\Big)(X,t)\rho^0(X)
-\frac{1}{8\pi G}(\nabla\Phi^s)^2(X,t), 
\end{equation} 
and material surface Lagrangian density $L_{\Sig^\fxs}''=L_{\Sig^\fxs}=-E_{\Sig^{\fxs}}''$  (the surface energy density $E_{\Sig^{\fxs}}''$ is introduced in \eqref{eq:energy_surf} below)
as 
\begin{equation}\label{eq:Lfull_surf} 
  L_{\Sig^\fxs}''(X,t)  =
  -\left[\int_{t_0}^t\dot\vphi_\timd(X)\cdot T^{\PK}_\timd(X)\:\dint
  t'\right]_-^+\cdot \nu(X), 
\end{equation}
which can be obtained from the energy balance, as we discuss with several aspects in a  little detour in \ref{ssec:energybalance} below.

As will be noted later in \eqref{Asurfdis} we may (at least in the sense of surface densities) write the part contributing to the surface action in the form 
$$
  \action''_{\Sig^{\fxs}}(t)
  = -  \int_{t_0}^t  \int_{\Sig^{\fxs}_\timd}  \left[v_\timd^s \cdot T_\timd^s   \right]_-^+ \cdot \nu_\timd^s \, dS_\timd \, \dint t' ,
$$
where $\Sig^{\fxs}_\timd = \varphi_\timd(\Sig^{\fxs})$, and therefore
obtain the full action $\action''$ in the pure spatial form
\begin{equation}\label{eq:action_full_spatial}
 \action''(\vphi,\Phi^s) =  \int_I \Big(\int_{\RR^3}L''^s(x,t)\:\dvol(x) \,+\, \action''_{\Sig^{\fxs}}(t)\Big)\mathrm{d}t,
\end{equation} 
where the spatial volume Lagrangian density reads
\begin{multline}\label{Lzweifach}
L''^s(x,t) :=  \Big(\frac{1}{2}\:(v^s)^2 + v^s \cdot(\Om\times \Id_{\RR^3})
    -  U^s  -  (\Phi^s+\Psi^s)\Big)(x,t) \rho^s(x,t) 
\\
   - \frac{1}{8\pi G}(\nabla\Phi^s)^2(x,t)
   = \Big(\frac{1}{2}\:(v^s + \Om\times \Id_{\RR^3})^2 
     - U^s - \Phi^s\Big)(x,t) \rho^s(x,t)
                        - \frac{1}{8\pi G}(\nabla\Phi^s)^2(x,t).
\end{multline}

\subsubsection{Determining the surface action from energy balance}
\label{ssec:energybalance}

The specific form of the surface action may be motivated from the following energy considerations. Our starting point is energy conservation, which takes the integrated material form (cf.\ \cite[p.\ 143; without $\Sig^{\fxs}$-term]{MaHu:83rep} or \cite[(3.201), p.\ 91; linearized setting and without $\d\earth$-term]{DaTr:98}; recall the convention $\nu=\nu^-$ (\ref{eq:normal}))
\begin{equation}\label{eq:energyconservation}
\frac{d}{dt}\left(\int_{\earth^\fxs}E''_\tim\:\dvol\right)
=\int_{\d\earthi}\dot\vphi\cdot T^{\PK}\cdot \nu\:\dsurf
-\int_{\Sig^{\fxs}}[\dot\vphi\cdot T^{\PK}]_-^+\cdot \nu\:\dsurf.
\end{equation}
The left-hand side is the time derivative at time $t$ of the total
energy corresponding to the material volume Lagrangian in its final
form $L''_\tim$ (\ref{eq:Lfull}). Due to the regularity required of
the gravitational potential, $\Phi^s\in
W_\grav\sbs\cC^0(I,\cC^1(\RR^3))$, we have no jump terms caused by
gravity.  Moreover, for simplicity, we consider the case of zero
external forces $F=0$. The integral over the exterior boundary
$\d\earth$ vanishes due to the zero traction natural boundary
condition (\ref{eq:NBC_TPK}) at the free surface.  Rewriting
\begin{equation}
\frac{d}{dt}\left(\int_{\earth^\fxs}E''_\tim\:\dvol
+\int_{t_0}^t\int_{\Sig^{\fxs}}[\dot\vphi_\timd\cdot T^{\PK}_\timd]_-^+\cdot \nu\:\dsurf\dint t'\right)=0,
\end{equation}
we can identify the second term as the work against traction due to slip at fluid-solid boundaries that occurred in the time interval $[t_0,t]$ (see also \cite[p.\ 96]{DaTr:98} in the linearized setting). Thus, upon interchanging time and surface integration, energy conservation suggests that we have to introduce a surface energy density given by
\begin{equation}\label{eq:energy_surf}
E_{\Sig^{\fxs}}''(X,t)
=\left[\int_{t_0}^t\dot\vphi_\timd(X)\cdot T^{\PK}_\timd(X)\:\dint t'\right]_-^+\cdot \nu(X)
\end{equation}
and get $\int_{\Sig^{\fxs}}E_{\Sig^{\fxs}}''\dsurf$ as an additional
part of the energy, corresponding to the work which slip performs
against the normal traction. The complete Lagrangian density of the
earth model is given as kinetic minus potential energy density (see
Section \ref{ssec:actionstructure}). Since we do not consider
dissipative forces, the surface energy density obtained above can be
viewed as a part of potential energy (it clearly does not have the
form of a kinetic energy). Consequently, the complete Lagrangian
density has to include an additional surface density given by
\begin{equation}\label{eq:Legendre_surf}
L_{\Sig^{\fxs}}''=-E_{\Sig^{\fxs}}''
\end{equation}
as was announced in (\ref{eq:Lfull_surf}).

\begin{remark}[{{\bf Validity of the Legendre transform for
surface densities}}]\label{rem:Legendre}
The volume energy density $E$ of a dynamical system in the variable
$q(x,t)$ is generally obtained via the Legendre transform of the
volume Lagrangian density
\begin{equation}\label{eq:Legendre}
L\mapsto E:=\dot q\, \frac{\d L}{\d\dot q}-L.
\end{equation}
If the system also involves densities on a discontinuity surface $S$,
their Legendre transform is given by $L_S\mapsto E_S:=-L_S$, which
coincides with \eqref{eq:Legendre_surf}.  This form of the surface
Legendre transform is clear, if $L_S$ does not depend on $\dot q$
(which is true at least in the linear case, cf.\ the second-order
surface Lagrangian density \eqref{eq:Lsurf2} or see
\cite[p.\ 90]{DaTr:98}; in particular, note that (\ref{eq:Lfull_surf})
depends linearly on $\dot\vphi$).  However, as is discussed in a
general nonsmooth geometric setting in \cite[Lemma 3.1]{FeMaWe:03},
elements in the tangent space of the configuration $q$ have zero
normal jump across the discontinuity surface.  Consequently, the
derivative of the surface Lagrangian density with respect to $\dot q$
vanishes and Legendre transform (\ref{eq:Legendre}) is reduced to
$E_S=-L_S$, when applied to a surface density.
\end{remark}

\begin{remark}[{{\bf Local energy balance by Noether's theorem}}]
If $L$ does not explicitly depend on time, then by Noether's theorem (see \cite[p.\ 283]{MaHu:83rep}) the volume energy density $E$, defined by the Legendre transform (\ref{eq:Legendre}) of the volume Lagrangian density $L$, satisfies the local energy balance equation
\begin{equation}
0=\dot E+\div\left(\dot q\cdot\frac{\d L}{\d(\nabla q)}\right).
\end{equation}
For the elastic, self-gravitating earth model we have $q=(\vphi,\Phi^s)$ and $L=L''$ defined in (\ref{eq:Lfull}).
Inserting the volume energy density obtained from (\ref{eq:Legendre})
\begin{eqnarray}
E''&=&\dot\vphi\frac{\d L''}{\d\dot\vphi}+{\dot\Phi}^s\frac{\d L''}{\d{\dot\Phi}^s}-L''
\:=\:\Big(\frac{1}{2}\:\dot\vphi^2 +U+\Phi+\Psi+F\Big)\rho^0+\frac{1}{8\pi G}(\nabla\Phi^s)^2,
\end{eqnarray} 
the energy balance equation reads
\begin{eqnarray}\label{eq:energyconservation_local}
0&=&\dot E''+\div\Big(\dot\vphi\cdot\frac{\d L''}{\d(\nabla\vphi)}+\dot{\Phi}^s\frac{\d L''}{\d(\nabla\Phi^s)}\Big)
\:=\:\dot E''-\div\Big(\dot\vphi\cdot T^{\PK}+\frac{1}{4\pi G}\:\dot{\Phi}^s\:\nabla\Phi^s\Big).
\end{eqnarray}
Integrating over $\earth^\fxs$, using the divergence theorem for composite domains 
(Lemma \ref{lem:divthmComposite}) and the regularity properties of the gravitational potential,
gives (\ref{eq:energyconservation}). 
\end{remark}

\subsection{Stationarity of the action and nonlinear dynamical equations}\label{nonlinearstationarity}

Stationarity of the action $\action''$ \eqref{eq:action_full} with
respect to variations in the state variables $(\vphi,\Phi^s)$ gives
the nonlinear dynamical equations describing the motion of a uniformly
rotating, nonlinear elastic, self-gravitating composite fluid-solid
continuum with initial density $\rho^0\in L^{\infty}(\earth)$ and
internal elastic energy function $U\in
L^{\infty}(\earth,\cC^1(\RR^{3\times 3}))$.  By considering two
classes of variations separately, one class with support avoiding to
the fluid-solid boundaries and the other class with support near these
boundaries, we may consider the contributions of
$\action''_{\text{vol}}$ and $\action''_{\Sig^{\fxs}}$ independently
in deriving Euler-Lagrange equations from stationarity of the total
action functional. While the investigation of stationarity of
$\action''_{\text{vol}}$ is classical, in case of
$\action''_{\Sig^{\fxs}}$ we have to use a concept of weak (or
distributional) stationarity described in detail in \ref{weakstat}
below.  For a discussion in detail of the Euler-Lagrange equations for
linearized elasticity theory with quadratic Lagrangian densities in a
Sobolev space setting we may refer to Subsection
\ref{sssec:varQuadratic}.

\subsubsection{Stationarity of the nonlinear volume action}

\paragraph{Material description}
The nonlinear dynamical equations are the classical Euler-Lagrange
equations (EL) \eqref{eq:Ham_EL}
$$\d_t(\d_{\dot y}L)+\nabla\cdot(\d_{\nabla y}L)-\d_yL=0$$
for the state variable $y = (\vphi,\Phi^s)$ and the volume Lagrangian
$L = L''$ \eqref{eq:Lfull} in the form
\begin{equation}
L''= \left(\frac{1}{2}\:\dot\vphi^2+\dot\vphi\cdot(\Om\times\vphi)-U-(\Phi^s+\Psi^s+F^s)\circ\vphi\right) \rho^0
-\frac{1}{8\pi G}(\nabla\Phi^s)^2.
\end{equation}
We have
\begin{eqnarray}
  \d_{\dot\vphi}L'' &=& \rho^0\left(\dot\vphi+\Om\times\vphi\right),\nn\\
  \d_{\nabla \vphi}L'' &=& -\rho^0\d_{\nabla \vphi}U=-T^{\PK},\nn\\
  \d_{\vphi} L'' &=& \rho^0\left(\dot\vphi\times\Omega+\left(\nabla(\Phi^s+\Psi^s+F^s)\right)\circ\vphi\right),\nn
\end{eqnarray}
where we applied the constitutive relation (\ref{eq:PKconstitutive}) in the second equation.
Variations of $\action''_{\text{vol}}$ in \eqref{eq:action_full} with respect to $\vphi$ thus result in the EL
\begin{multline}
\d_t(\d_{\dot\vphi}L'')+\nabla\cdot(\d_{\nabla \vphi}L'')-\d_{\vphi} L''\\
= \rho^0\left(\ddot\vphi+\Om\times\dot\vphi\right)-\nabla\cdot T^{\PK}
-\rho^0\left(\dot\vphi\times\Omega+\left(\nabla(\Phi^s+\Psi^s+F^s)\right)\circ\vphi\right)=0.
\end{multline}
The terms 
\begin{equation}
g=-(\nabla\Phi^s)\circ\vphi \qquad \text{and} \qquad (\nabla\Psi^s)\circ\vphi=\Omega\times(\Omega\times\vphi)
\end{equation} 
are the material gravitational acceleration and centrifugal
acceleration respectively. Consequently, we have obtained the full
nonlinear material momentum equation \cite[(2.39)]{DaTr:98}
\begin{equation}\label{materialeqm}
\rho^0\left(\ddot\vphi+2\Om\times\dot\vphi+\Omega\times(\Omega\times\vphi)\right)=\nabla\cdot T^{\PK}+\rho^0 g+f,
\end{equation}
that is generalized to include an additional conservative forcing term
$f=\rho^0(\nabla F^s)\circ\vphi$, see \ref{sssec:conforces}.

Variations with respect to $\Phi^s$ give Poisson's equation
\eqref{eq:poisson}:
$$\triangle\Phi^s=4\pi G\rho^s.$$
Indeed, as we have seen in \eqref{eq:action_full_spatial} the action $\action''$ does not change if we write the volume Lagrangian $L''$ in spatial form: 
\begin{equation}
L''^s=  \Big(\frac{1}{2}\:(v^s)^2 + v^s \cdot(\Om\times x)
    -  U^s  -  (\Phi^s+\Psi^s + F^s)\Big) \rho^s  - \frac{1}{8\pi G}(\nabla\Phi^s)^2.
\end{equation}
As a result, we get $\d_{\dot\Phi^s}L''^s=0$, $\d_{\nabla \Phi^s}L''^s= -\frac{1}{4\pi G}(\nabla\Phi^s)$, $\d_{\Phi^s} L ''^s= - \rho^s$
and the corresponding EL read
$$\d_t(\d_{\dot\Phi^s}L''^s)+\nabla\cdot(\d_{\nabla \Phi^s}L''^s)-\d_{\Phi^s} L''^s
= -\frac{1}{4\pi G}\triangle\Phi^s+\rho^s=0 .$$

In addition to the equations of motion and for the gravitational
potential, we obtain the external and welded solid-solid
  dynamical boundary conditions as consequences of stationarity as
well.  As is clear from \eqref{eq:Ham_NBC} and \eqref{eq:Ham_IBC},
stationarity of $\action''_{\text{vol}}$ with respect to $\vphi$
directly implies the zero-traction natural boundary condition
\begin{equation}\label{eq:NBC_TPK}
T^{\PK}\cdot\nu=0
\end{equation}
on $\d\earth$ and the dynamical interface condition of continuity of normal traction
\begin{equation}\label{eq:IBC_TPK}
[T^{\PK}]_-^+\cdot \nu=0
\end{equation}
across all welded solid-solid interior boundaries surfaces $S\sbs B^\text{S}$. The natural and interior boundary conditions obtained from variations of $\action''$ with respect to $\Phi^s$ reduce to the condition
\begin{equation}\label{eq:IBC_Phis}
[\nabla\Phi^s]_-^+\cdot \nu=0,
\end{equation}
valid across all surfaces.
This is equivalent to continuity of the normal component of spatial gravitational acceleration $g^s=-\nabla\Phi^s$ and thus is no new constraint (we even have $\Phi^s_\tim\sbs\cC^1(\RR^3)$ and thus $g^s_\tim\sbs\cC^0(\RR^3)^3$ a.e.\ $t\in I$, see \ref{ssec:gravity}).

\paragraph{Spatial description}
A geometric framework for a completely spatial variational principle
in continuum mechanics is presented in \cite{GBMR:12} and in
\cite{AOS:11} including self-gravitation (an earlier attempt for a
spatial formulation based on a Lagrange multiplier argument can be
found in \cite{SeWh:68}). According to \cite[Subsection 3.2, Equation
  (3.13)]{GBMR:12} (where we have to neglect any additional symmetry
group action in our case), the independent variables in the spatial
action are $v^s$, $\Sig^{\fxs}_\tim$, spatial density, deformation (or
the deformation tensor), and the Riemannian metric $g_t$ on the
deformed earth $\vphi_t(B)$. As described in \cite[Equation (3.16) and
  Theorem 3.2]{GBMR:12}, the metric $g_t$ is tied to the variation of
$\Sig^{\fxs}_\tim$ corresponding to the equation of boundary movement
in the form
\begin{equation}\label{boundary_movement}
  \d_t \Sig^{\fxs}_\tim = g_t(v^s,\nu_t^s).
\end{equation}
 
We start from the Lagrange density in the form given in the second line of \eqref{Lzweifach}, thus writing the spatial volume Lagrangian  in the form 
$$
  L''^s  =  \Big(\frac{1}{2}\:(v^s + \Om\times x)^2 
    -  U^s  -  \Phi^s\Big) \rho^s  - \frac{1}{8\pi G}(\nabla\Phi^s)^2.
$$
Compared to the Lagrange density $\Lambda$ used in \cite[Subsection 2.2]{AOS:11}  (upon identifications of the notation used in \cite{AOS:11} as follows: $f$ with $\vphi_t^{-1}$, $-\tau$ with $T^s$, $U$ with $-\Phi^s$, $(v_1,v_2,v_3)$ with $\dot\vphi^s_t$, $\rho$ with $\rho^s$, $n \varepsilon$ with $\rho^s U^s$) the only change is in the term corresponding to the kinetic energy, where the spatial velocity $v^s$ is replaced by $v^s + \Om\times x$ to incorporate rotation.

The equation of motion derived in \cite[Equation (2.20)]{AOS:11} from variation of $\Lambda$ is
\begin{equation}\label{inertialeqm}
   \rho^s (\dot v^s +  \nabla v^s \cdot v^s) = 
        \nabla \cdot T^s - \rho^s \,\nabla \Phi^s,
\end{equation}
where the terms $\dot v^s +  \nabla v^s \cdot v^s$ correspond to $v_\mu \d_\mu v_i = \d_t v_i + v_j \d_j v_i$  in \cite[Equation (2.20)]{AOS:11}, that is, the material time derivative  denoted by $D_t v^s$ (observe that the notation $D_t a^s$ in \cite{DaTr:98}  for the derivative $a^s$ corresponds to $\d_t a^s + \nabla a^s \cdot v^s$ in our setting). 

The material time derivative of a field $(x,t)\mapsto a^s(x,t)$ in inertial space can be written as
\begin{equation}\label{rotequ0}
D_ta^s=D^{\text{rot}}_t a^s+\Omega\times a^s.
\end{equation}
where relates $(.)^{\text{rot}}$ indicates the coordinates with respect to the rotating reference frame. Analogously, as spatial velocity is obtained as time derivative of the motion $v^s_t\circ\vphi_t=v_t=\dot\vphi_t$, we have
\begin{equation}\label{rotequ}
v^s_t\circ\vphi_t=v_t=\dot\vphi_t=(\dot\vphi_t)^{\text{rot}}+\Omega\times\vphi_t=(v^s_t\circ\vphi_t)^{\text{rot}}+\Omega\times\vphi_t=(v^s_t)^{\text{rot}}\circ\vphi_t+\Omega\times\vphi_t.
\end{equation}
Consequently, as 
$$
(D_ta^s_t)\circ\vphi_t=\d_t(a^s_t\circ\vphi_t),
$$ 
the spatial acceleration in the inertial frame reads 
\begin{eqnarray}
\ddot\vphi_t  = \d_t(v^s_t\circ\vphi_t) &=&  (D_t v^s_t)\circ\vphi_t\nn\\
\text{\small [by \eqref{rotequ0}]} &=&\left(D^{\text{rot}}_tv^s_t+\Omega\times v^s_t\right)\circ\vphi_t\nn\\
&=&\left(D^{\text{rot}}_tv^s_t\right)\circ\vphi_t+\left(\Omega\times v^s_t\right)\circ\vphi_t\nn\\
&=& \left(\d_t(v^s_t\circ\vphi_t)\right)^{\text{rot}} + 
      \Omega\times (v^s_t\circ\vphi_t)\nn\\
\text{\small [by \eqref{rotequ}]} &=& \left(\d_t((v^s_t)^{\text{rot}}\circ\vphi_t+\Omega\times\vphi_t)\right)^{\text{rot}}
     +\Omega\times \left((v^s_t)^{\text{rot}}\circ\vphi_t+\Omega\times\vphi_t\right)\nn\\
&=&\left(D^{\text{rot}}_t(v^s_t)^{\text{rot}}+2\Omega\times (v^s_t)^{\text{rot}}\right)\circ\vphi_t
+\Omega\times(\Omega\times\vphi_t),\nn
\end{eqnarray}
that is, in rotating coordinates, $D_tv^s$ has to be replaced by
\begin{equation}
D_tv^s+2\Omega\times v^s_t+\Omega\times(\Omega\times x),
\end{equation}
where we now drop the label $(.)^{\text{rot}}$. Therefore, in place of \eqref{inertialeqm}
 we arrive at an equation of motion in the form
$$
  \rho^s (D_t v^s_t + 2\Omega\times v^s_t
   +\Omega\times(\Omega\times x)) = \nabla \cdot T^s - \rho^s\, \nabla \Phi^s.
$$
Upon recalling from \eqref{psinote} that $\nabla \Psi^s(x) = \Omega \times (\Omega \times x)$ this yields 
$$
    \rho^s (\dot v^s +  \nabla v^s \cdot v^s + 
     2 \Omega \times v^s) =  \nabla \cdot T^s - \rho^s\, \nabla (\Phi^s+\Psi^s),
$$
which is exactly the nonlinear spatial equation of motion as stated in
\cite[Equation (2.117)]{DaTr:98}.

We obtain the boundary conditions in spatial form, essentially, from
\cite[Theorem 3.2]{GBMR:12} or \cite[Equation (2.19c)]{AOS:11}, but
recalling that in our case of a merely piecewise Lipschitz continuous
motion pointwise conditions on $\dot\vphi$ with only bounded
measurable component functions would not be meaningful (e.g., a
statement like $\dot\vphi_t^s = 0$ on $\vphi_t(\d\earth)$ would not
make sense, in general). However, apart from Equation
\eqref{boundary_movement} mentioned already above, we obtain the
material counterpart of the traction boundary condition
\begin{equation}
   T_t^s\cdot\nu_t^s=0\qquad\textrm{on}\qquad \vphi_t(\d\earth),
\end{equation}
and, taking into account the lower regularity of $T^s$ in our case, we have in addition 
\begin{equation}
[T_t^s]_-^+\cdot \nu_t^s=0
\end{equation}
across all deformations $\vphi_t(S) \subset \vphi_t(B^\text{S})$ of welded solid-solid interior boundaries surfaces $S \subset B^\text{S}$. 

\subsubsection{Weak stationarity of the nonlinear surface action}
\label{weakstat}

We investigate the implications by the contribution of $\action''_{\Sig^{\fxs}}(\vphi,\Phi^s)$, defined by \eqref{eq:action_full} in the form $\action''_{\Sig^{\fxs}}(\vphi,\Phi^s) =   \int_I \action''_{\Sig^{\fxs}}(t)\, dt$ with
$$
  \action''_{\Sig^{\fxs}}(t) = \int_{\Sig^{\fxs}} L_{\Sig^\fxs}''(X,t)\:\dsurf(X),
$$ 
to the action functional at the fluid-solid boundaries. At fixed time $t$, we interpret $\action''_{\Sig^{\fxs}}(t)$ as the action of a distribution on $\RR^3$ with support on a two-dimensional surface, that is, the surface integral is considered as action $\lara{\action''_{\Sig^{\fxs}}(t),h}$ on a smooth compactly supported test function $h$ on $\RR^3$ satisfying $h \!\mid_{\Sig^{\fxs}} = 1$ (which bears some similarity with the concept of mass of a $2$-current as \cite[Subsection 7.2]{KP:08}). According to \eqref{eq:Lfull_surf}, the distributional action is given by integration over $\Sig^{\fxs}$ in the form
\begin{multline*}
   \lara{\action''_{\Sig^{\fxs}}(t),h} = 
   \int_{\Sig^{\fxs}} h(X)\, L_{\Sig^\fxs}''(X,t)\, dS(X)\\ =
  -  \int_{\Sig^{\fxs}} h(X) \left[\int_{t_0}^t\dot\vphi_\timd(X)\cdot T^{\PK}_\timd(X)\:\dint
  t'\right]_-^+\cdot \nu(X) \, dS(X)\\
  = - \int_{t_0}^t \underbrace{\int_{\Sig^{\fxs}} h(X)\,
  \left[\dot\vphi_\timd(X)\cdot T^{\PK}_\timd(X)\right]_-^+\cdot \nu(X) \, dS(X)}_{=: \Lara{\left[\dot\vphi_\timd \cdot T^{\PK}_\timd \cdot
        \nu \, dS \right]_-^+, h}}\, \dint t'
  =  - \int_{t_0}^t \Lara{\left[\dot\vphi_\timd \cdot T^{\PK}_\timd \cdot
        \nu \, dS \right]_-^+, h} \, \dint t',
\end{multline*}
where the distribution $\left[\dot\vphi_\timd \cdot T^{\PK}_\timd
  \cdot \nu \, dS \right]_-^+$ on $\RR^3$ with support in
$\Sig^{\fxs}$ coincides with $\left[\dot\vphi_\timd \cdot
  T^{\PK}_\timd\right]_-^+ \cdot \nu \, dS$ as a distributional
density on the surface (alternatively denoted by
$\left[\dot\vphi_\timd \cdot T^{\PK}_\timd\right]_-^+ \cdot \nu \,
\delta_{\Sig^{\fxs}}$, e.g., \cite[Appendix, \S 1, 4.4, pages
  487-488]{DL:V2}; see also \cite[Equation (3.1.5) and the comment
  about extension to Lipschitz surfaces in the second paragraph on
  page 61; moreover, Theorem 8.1.5 and Example
  8.2.5]{Hoermander:V1}). Our notation here is chosen to make results
better comparable with geophysics literature as in
\cite{DaTr:98}. Note that $\timd \mapsto \left[\dot\vphi_\timd \cdot
  T^{\PK}_\timd \cdot \nu \, dS \right]_-^+$ is a weakly measurable
and bounded map, hence weakly integrable over any bounded interval of
time. Therefore, $$t \mapsto \int_{t_0}^t \left[\dot\vphi_\timd \cdot
  T^{\PK}_\timd \cdot \nu \, dS \right]_-^+ \, \dint t'$$ is weakly
absolutely continuous, in particular, almost everywhere weakly
differentiable.

Applying the Piola transform \eqref{eq:TPK_Piola} yields 
$$
  \Lara{\action''_{\Sig^{\fxs}}(t), h}
  = - \int_{t_0}^t  \Lara{ \left[\dot\vphi_\timd \cdot T^{\PK}_\timd \cdot
        \nu \, dS \right]_-^+, h}\, \dint t'\\ =
  - \int_{t_0}^t  \Lara{ \left[\dot\vphi_\timd \cdot T_\timd J_\timd \nabla \vphi_\timd^{-T} \cdot \nu \, dS \right]_-^+, h} \, \dint t'
$$
and by  an  interpretation of Lemma \ref{lem:intvolsurf}(ii) in the sense of a distributional pull-back we obtain (with $h^s \circ \varphi_\timd = h$ and the spatial velocity $v^s = \dot\vphi^s$)
\begin{equation}\label{Asurfdis}
  \Lara{\action''_{\Sig^{\fxs}}(t), h}
  = - \int_{t_0}^t  \Lara{ \left[v_\timd^s \cdot T_\timd^s  \cdot \nu_\timd^s \, dS_\timd \right]_-^+, h^s}\, \dint t'.
\end{equation}
Here,  $\left[v_\timd^s \cdot T_\timd^s  \cdot \nu_\timd^s \, dS_\timd \right]_-^+$  is a spatial distributional density with support on $\Sig^{\fxs}_\timd$, which again should be understood in the sense $\left[v_\timd^s \cdot T_\timd^s  \right]_-^+ \cdot \nu_\timd^s \, dS_\timd$ or $\left[v_\timd^s \cdot T_\timd^s  \right]_-^+ \cdot \nu_\timd^s \, \delta_{\Sig^{\fxs}_\timd}$ (one may read the explanation following Equation (3.68) in \cite{DaTr:98} also in that way), and is weakly integrable  with respect to $\timd \in I$. Thus, we may consider \eqref{Asurfdis} as action with \emph{surface Lagrangian in spatial representation} at fixed $t$. As we will show in the lines following, the variation of $\action''_{\Sig^{\fxs}}$ with respect to the spatial velocity $v^s$ (still treating $\Sig^{\fxs}_\tim$ as independent variable in the spatial action according to \cite[Subsection 3.2, Equation (3.13)]{GBMR:12}) reproduces the classical fluid-solid boundary conditions \eqref{eq:CauchyIBC} in spatial representation in the form \eqref{eq:FSIBC_TPK}.

For an arbitrary test function $\tilde{h}$ on $I \times \RR^3$ and gravitational potential $\Phi^s$,  we have the surface action contribution as a map $v^s \mapsto \lara{\action''_{\Sig^{\fxs}}(\vphi,\Phi^s), \tilde{h}} = \int_I \lara{\action''_{\Sig^{\fxs}}(t), \tilde{h}(t,.)} \, dt$ whose stationarity with respect to $v^s$ may be investigated at a fixed but arbitrary test function and gravitational potential. In this sense, we obtain a \emph{weak stationarity condition} on $\action''_{\Sig^{\fxs}}$ playing the role of Euler-Lagrange equations. By density of tensor products, it suffices to consider test functions of the form $\tilde{h}= h_0 \otimes h \colon (t,X) \mapsto h_0(t) h(X)$ and we have to investigate the stationarity of the map
$$
    \lara{\action''_{\Sig^{\fxs}}, h_0 \otimes h} \colon 
    v^s \mapsto 
    \int_I h_0(t) \lara{\action''_{\Sig^{\fxs}}(t), h} \, dt \\ =
    - \int_I h_0(t) \int_{t_0}^t   \Lara{\left[v_\timd^s \cdot T_\timd^s  
        \cdot \nu_\timd^s \, dS_\timd \right]_-^+,h^s}\, \dint t' \, dt.
$$
As noted above, $t \mapsto \int_{t_0}^t   \Lara{\left[v_\timd^s \cdot T_\timd^s  
        \cdot \nu_\timd^s \, dS_\timd \right]_-^+,h^s}\, \dint t'$ is differentiable almost everywhere and has an integrable derivative. Moreover, $v^s \mapsto \Lara{\left[v_\timd^s \cdot T_\timd^s  \cdot \nu_\timd^s \, dS_\timd \right]_-^+,h^s}$ defines a family of linear continuous maps $L^\infty(I \times B)^3 \to \RR$ depending measureably on $\timd$. Therefore, $v^s \mapsto \lara{\action''_{\Sig^{\fxs}}(t), h} = - \int_{t_0}^t   \Lara{\left[v_\timd^s \cdot T_\timd^s  \cdot \nu_\timd^s \, dS_\timd \right]_-^+,h^s}\, \dint t'$ is Fr\'{e}chet differentiable with derivative acting  by 
$$
    \frac{\d \lara{\action''_{\Sig^{\fxs}}(t), h}}{\d v^s} \cdot w = 
    - \Lara{\left[ w_t \cdot T_t^s  \cdot \nu_t^s \, dS_t \right]_-^+, h^s}
$$
and the weak stationarity of $\action''_{\Sig^{\fxs}}$ means
$$
     \int_I h_0(t) \Lara{\left[ w_t \cdot T_t^s  \cdot \nu_t^s \, dS_t \right]_-^+, h^s} \, dt
     = 0
$$
for all test functions $h_0 \otimes h$ on $I\times \RR^3$ and for all $w \in L^\infty(I \times B)^3$. Thus, we arrive at the condition
\begin{equation}\label{eq:FSIBC_TPK} 
   [T_t^s\cdot \nu_t^s \, dS_t]_-^+ = 0
\end{equation}
to hold across the fluid-solid boundary $\Sig^{\fxs}_t$ (in agreement with \cite[Equation (3.68)]{DaTr:98}). Note that the above condition coincides with \eqref{eq:CauchyIBC} obtained by Newton's third law.


\section{Linearization}\label{sec:4}

In this section, we linearize the earth model around the reference
state at time $t_0$, which is assumed to be an equilibrium state. To
be more specific, the dynamical variables $\vphi$ and $\Phi^s$ are
decomposed into reference values (Section \ref{ssec:equilibrium}) and
perturbations (Sections \ref{ssec:displacement} and
\ref{ssec:lingravity}). Linearization is justified within the seismic
regime, where the elastic motion of the earth under self-gravitation
results only in a slight departure from equilibrium
(e.g.\ \cite[p.\ 56]{DaTr:98}).
 
Formally, the nonlinear governing equations \eqref{materialeqm} and
\eqref{eq:poisson} split into the equilibrium equations
(\ref{eq:staticeq}) and (\ref{eq:Phi0_Poisson}) constraining the
reference fields and the linear governing equations (\ref{eq:eqm}) and
(\ref{eq:Phi1_Poisson}) determining the evolution of the first-order
perturbations. In this splitting, one neglects quadratic or
higher-order contributions of the perturbations or of their
derivatives.  The boundary and interface conditions are treated in a
similar way. The linear equations and boundary/interface conditions
derived in this way will be said to hold ``correct to first order'' in
the perturbations. However, as will be established in Section
\ref{sec:5}, the governing equations as well as the dynamical
interface and boundary conditions are rigorously obtained from the
first variation of a corresponding action (as Euler-Lagrange equations
EL and associated natural boundary/interface conditions NBC/NIBC
respectively). Consequently, in the linearized setting, action,
Lagrangian, and energy require an approximation correct to second
order in the perturbations (Section \ref{sec:approxaction}).

The omission of terms of order $k+1$ in the perturbed quantities or in their derivatives will be indicated by $\approx_k$. In other words, 
\begin{equation}
a\approx_k b
\end{equation} 
means ``$a$ is equal to $b$ up to $k$-th order'' or ``correct to order $k$''.

\subsection{The prestressed equilibrium state}\label{ssec:equilibrium}

We assume that the earth is in equilibrium at initial time $t_0$, which we use as a reference configuration. Thus the equilibrium state is characterized by specifying earth's rotation $\Omega$, initial density $\rho^0$, whose support gives $\ovl\earth$, and prestress $T^0$. 

Since we use some properties of the equilibrium state when discussing perturbed quantities, which correspond to stationarity of the second-order terms of the approximated action, we announce the equilibrium results which follow from stationarity of the action in the first-order approximation as shown in Section \ref{sec:5}:
The Euler-Lagrange equations \eqref{eq:Ham_EL} consist of the static equilibrium equation (\ref{eq:staticeq}) 
$$\rho^0\nabla(\Phi^0+\Psi^s)-\nabla\cdot T^0=0$$
and the equilibrium Poisson's equation (\ref{eq:Phi0_Poisson})
$$\triangle\Phi^0=4\pi G\rho^0.$$
These equations are obtained from setting $t = t_0$ in the nonlinear
equations \eqref{materialeqm} and \eqref{eq:poisson}. The interior and
exterior boundary conditions are the continuity conditions
\eqref{eq:grav_IBC}, $[\Phi^0]_-^+=0$ and
$[\nabla\Phi^0]_{-}^{+}\cdot\nu=0$ on arbitrary surfaces $S\sbs\RR^3$,
for the initial gravitational potential $\Phi^0$, as well as the
continuity and zero-traction conditions for prestress $T^0$,
$[T^0]_-^+\cdot\nu=0$ \eqref{eq:IBC_FST0} across all surfaces
$S\sbs\earth$ (reducing to $[p^0]_-^+=0$ \eqref{eq:IBC_FSp0} across
fluid-solid interior boundaries) and $T^0\cdot \nu=0$
\eqref{eq:NBC_T0} at the free surface $\d\earth$.

The equilibrium earth model is thus set up as follows: First recall that $\Psi^s$ is fully determined by the angular velocity $\Omega$ from \eqref{eq:Psi}. Then, for given $\rho^0$, the solution of the equilibrium Poisson equation \eqref{eq:Phi0_Poisson} yields $\Phi^0$. Finally, the static equilibrium (\ref{eq:staticeq}) constrains $\nabla\cdot T^0$.

\begin{remark}[{{\bf Constraining the components of prestress}}] 
Note that $T^0$ is not fully determined by the governing equations.
Formally, given $\rho^0$ and $\Omega$, the three components of the static equilibrium equation \eqref{eq:staticeq} (plus the interface and boundary conditions) constrain only three out of the six independent components of $T^0$. The remaining three components, specifically those of the initial deviatoric prestress $T^0_{\mathrm{dev}}$ \eqref{eq:T0dev},  need to be treated as additional material parameters that have to be specified independently \cite[p.\ 100]{DaTr:98}. A new method of parametrization of the equilibrium stress is discussed in \cite{AlWo:10}. In particular, there it is shown that the equilibrium stress field with minimum deviatoric component in terms of a given norm corresponds to the solution of a steady-state incompressible viscous flow problem.
\end{remark}

We summarize the regularity properties  in the equilibrium earth model: 
The conditions for the time-dependent density, gravitational potential, and stress tensor found in Section \ref{sec:2} also hold at initial time $t_0$,  in particular. Consequently, by Definition \ref{def:nonlin_reg} of the general configuration spaces for a composite fluid-solid earth, our general assumption is 
\begin{equation} 
\rho^0\in L^\infty(\RR^3)
\end{equation} 
with compact support contained in $\ovl{\earth}$.
As we have seen in Lemma \ref{lem:poisson1}, Poisson's equation then implies that 
\begin{equation}
\Phi^0\in\cC^1(\RR^3).
\end{equation} 
Hence, the static equilibrium equation $\nabla\cdot T^0=\rho^0\nabla(\Phi^0+\Psi^s)$ (\ref{eq:staticeq}) is valid in $L^\infty(\RR^3)$. As $T^0$ is compactly supported on $\ovl{\earth}$, boundedness of $\nabla\cdot T^0$ implies $\nabla\cdot T^0\in L^2(\earthfsc)$
and we thus obtain 
\begin{equation}
T^0\in H_\text{div}(\earthfsc)^{3\times 3}.
\end{equation}
Then also the boundary conditions for $T^0$ make sense, since $H_\text{div}$ \eqref{eq:Hdiv} has traces in $H^{1/2}$.

\subsection{Displacement}\label{ssec:displacement}  

We write the motion $\vphi\in W_\motion$ (cf.\ Definition
\ref{def:nonlin_reg}) in the form of a perturbation of its equilibrium at time $t_0$,
\begin{equation}\label{eq:displacement}
\vphi (X,t) = X + u(X,t) \qquad (X \in \earth)
\end{equation}
with
\begin{equation}
\vphi_\timO (X) = X.
\end{equation}
The displacement $u$ depends on space and time and, by definition, vanishes at time $t_0$,
\begin{equation}
u_\timO=0.
\end{equation}

Note that $u$ and $\vphi$ have the same differentiability properties on $\earth$ and we are free to set $u$ equal to zero outside the earth. Since $\vphi|_{\ovl\earthi\times I}\in\Am(\ovl\earth\times I)$ we have $u\in\Lip(I,L^{\infty}(\RR^3))^3$ with
\begin{equation}
\supp\:u_\tim \sbs \ovl\earth
\end{equation}
for all $t\in I$, and $u|_{\earthi^\solid\times I}$, $u|_{\earthi^\fluid\times I}$, and $u|_{\earthi^{\cpl}\times I}(=0)$
have Lipschitz continuous extensions to $\ovl{\earth^\solid}\times I$, $\ovl{\earth^\fluid}\times I$, and
$\ovl{\earth^\cpl}\times I$ respectively. Hence, in accordance with (\ref{eq:SobolevRegpLip})
\begin{equation}\label{eq:u_H1}
u|_{\earthfsci\times I^\circ}\in H^1(\earthfsc\times I^\circ)^3.
\end{equation}

Continuity of $\vphi$ (\ref{eq:IBC_SS}) across welded solid-solid interior boundaries directly implies continuity of $u$:
\begin{equation}\label{eq:weldedfirstorder}
[u]_-^+=0\qquad \text{on}\qquad \Sig^{\sxs}.
\end{equation}
We recall that the slipping kinematical interface condition across
fluid-solid interior boundaries, (\ref{eq:IBC_FS}), expresses the
continuity of the normal component of the spatial velocity
$v^s_\tim\cdot \nu_\tim^s$ across $\vphi_\tim(\Sig^\fxs)$. However,
its material counterpart is nonlinear in $u$.  Correct to first order
in $u$ we have the {\bf linear tangential slip condition}
\begin{equation}\label{eq:slipfirstorder}
[u]_-^+\cdot \nu\approx_1 0\qquad \text{on}\qquad \Sig^\fxs
\end{equation}
whose structure is similar to the exact spatial slip
condition. On the linear level,
\begin{equation}\label{eq:H1Sig}
u\in H^1_{\Sig^\fxs}(\earthfsc\times I^\circ)^3:=\{u\in H^1(\earthfsc\times I^\circ)^3:\:[u]_-^+\cdot\nu=0\:\text{ on }\Sig^\fxs\}.
\end{equation}
This is the final space for the variational formulation.

The {\bf second-order tangential slip condition} reads  \cite[(3.95)]{DaTr:98}
\begin{equation}\label{eq:slipsecondorder}
\left[u\cdot \nu -u\cdot\snabla (u\cdot \nu)+\frac{1}{2}\:u\cdot\snabla\nu\cdot u\right]_-^+\approx_20
\qquad \text{on}\qquad \Sig^\fxs.
\end{equation} 
As $u\cdot\snabla (u\cdot \nu)=u\cdot\snabla\nu\cdot u+\nu\cdot\snabla
u\cdot u$ the condition may equivalently be written as
\begin{equation}\label{eq:sslipsecondorder}
\left[u\cdot \nu -\nu\cdot\snabla u\cdot u-\frac{1}{2}\:u\cdot\snabla\nu\cdot u\right]_-^+\approx_20.
\end{equation}

We give a proof of \eqref{eq:slipsecondorder}  along the lines of \cite[p.\ 72]{DaTr:98}:
In the first step we establish the separate first-order expansions  \cite[(3.31) and (3.30)]{DaTr:98} of the spatial surface element $\nu^s\:\dsurf^s$,
\begin{equation}\label{eq:spatialsurfapprox}
\nu^s\approx_1 \nu-(\snabla u)^T\cdot \nu=\nu-\nu\cdot \snabla u
\qquad\text{and}\qquad
\dsurf^s\approx_1(1+\snabla\cdot u)\:\dsurf.
\end{equation}
They are deduced from the relation for material and spatial surface elements (Lemma \ref{lem:intvolsurf}), the first-order expansions  
\begin{equation}\label{eq:Jfristorder}
J=\det(\nabla\vphi)=\det(1_{3\times 3}+\nabla u)\approx_1 1+\nabla \cdot u\quad\textrm{and}\quad
(\nabla\vphi)^{-T}\approx_1 1_{3\times 3}-(\nabla u)^{-T},
\end{equation} 
and the identities
$\nabla\cdot u=\snabla\cdot u+\nu\cdot\nabla u\cdot\nu$ and $\nabla u=\snabla u+(\nabla u\cdot\nu)\nu$:
\begin{eqnarray}
\nu^s\:\dsurf^s=J\:(\nabla\vphi)^{-T}\cdot \nu\:\dsurf
&\approx_1&(1+\nabla\cdot u)(1_{3\times 3}-(\nabla u)^{T})\cdot \nu\:\dsurf\nn\\
&\approx_1&(\nu+(\nabla\cdot u)\nu -(\nabla u)^{T}\cdot \nu)\dsurf\nn\\
&=&(\nu+(\nabla\cdot u)\nu -\nu\cdot\nabla u)\dsurf\nn\\
&=&(\nu+(\snabla\cdot u)\nu -\nu\cdot\snabla u)\dsurf\nn\\
&=&(\nu+(\snabla\cdot u)\nu -(\snabla u)^{T}\cdot \nu)\dsurf\nn\\
&\approx_1&(\nu-(\snabla u)^{T}\cdot \nu)(1+\snabla\cdot u)\dsurf.\nn
\end{eqnarray}
In the second step we consider a slipping surface $\Sig^\fxs$. The situation is illustrated in  \cite[Figure 3.3, p.\ 68]{DaTr:98}. We observe that for all
$x\in\vphi_t(\Sig^\fxs)$ there exist $X',X''\in\Sig^\fxs$ such that 
\begin{equation}\label{eq:surfpoints}
   X'+u_t^+(X') = \vphi_t^+(X') = x = \vphi_t^-(X'') = X''+u_t^-(X''),
\end{equation}
where $\vphi_t^+(X')$ denotes the limit from the $+$-side and
$\vphi_t^-(X'')$ the limit from the $-$-side of $\vphi_t(\Sig^\fxs)$,
analogoulsy for $u_t^\pm$. (The existence of these limits is
guaranteed by the piecewise $\Lip$-regularity of the motion, see
\ref{def:Am_ii} of Definition \ref{def:Am}.)  Omitting the dependence
on $t$ for the moment, if $x=\vphi(X)$ we have $v^s(x)=
\dot{u}(X)$. Together with \eqref{eq:spatialsurfapprox}, this implies
\begin{equation}
(v^s\cdot\nu^s)(x)\approx_2\dot u(X)\cdot(\nu-(\snabla u)^T\cdot \nu)(X)
=(\dot u\cdot\nu-\dot u\cdot(\snabla u)^T\cdot \nu)(X).
\end{equation}
From \eqref{eq:surfpoints} and the exact spatial slip condition $[v^s]_-^+\cdot\nu^s=[v^s\cdot\nu^s]_-^+=0$ we then get
$$
(\dot u^+\cdot\nu-\dot u^+\cdot(\snabla u^+)^T\cdot \nu)(X')
\approx_2(v^s\cdot\nu^s)^+(x)=(v^s\cdot\nu^s)^-(x)
\approx_2(\dot u^-\cdot\nu-\dot u^-\cdot(\snabla u^-)^T\cdot \nu)(X''),
$$
that is,
\begin{equation}\label{eq:tangslipeq}
(\dot u^+\cdot\nu)(X')-(\dot u^-\cdot\nu)(X'')
\approx_2(\dot u^+\cdot(\snabla u^+)^T\cdot \nu)(X')-(\dot u^-\cdot (\snabla u^-)^T\cdot \nu)(X'').
\end{equation}
In particular, up to first order we obtain
$$
(\dot u^+\cdot\nu)(X')-(\dot u^-\cdot\nu)(X'')  \approx_1 0,
$$ 
which upon inserting $X''=X'+u^+(X')-u^-(X'')$ from \eqref{eq:surfpoints} implies
\begin{equation}\label{eq:slipudot}
   \left[ \dot u \cdot\nu\right]^+_- (X') = (\dot u^+\cdot\nu)(X')-(\dot u^-\cdot\nu)(X') 
       \approx_1  (\dot u^+\cdot\nu)(X')-(\dot u^-\cdot\nu)(X'') \approx_1 0.
\end{equation}
We observe that integrating with respect to time yields the
first-order tangential slip condition \eqref{eq:slipfirstorder}.

To obtain the second-order condition, we use again $X''=X'+u^+(X')-u^-(X'')$ and Taylor expansion applied to the second term of the right-hand side in \eqref{eq:tangslipeq}:
\begin{eqnarray}
&&(\dot u^+\cdot(\snabla u^+)^T\cdot \nu)(X')-(\dot u^-\cdot(\snabla u^-)^T\cdot \nu)(X'')\nn\\
&&\quad =(\dot u^+\cdot(\snabla u^+)^T\cdot \nu)(X')-(\dot u^-\cdot(\snabla u^-)^T\cdot \nu)(X'+u^+(X')-u^-(X''))\nn\\
&&\quad \approx_2(\dot u^+\cdot(\snabla u^+)^T\cdot \nu)(X')-(\dot u^-\cdot(\snabla u^-)^T\cdot \nu)(X')
=\left[\dot u\cdot(\snabla u)^T\cdot \nu\right]_-^+(X').\nn
\end{eqnarray}
For the second-order approximation of the left-hand side of \eqref{eq:tangslipeq}, also the first-order terms of the Taylor expansion have to be taken into account:
\begin{eqnarray}
&&(\dot u^+\cdot\nu)(X')-(\dot u^-\cdot\nu)(X'')\nn\\
&&\quad =(\dot u^+\cdot\nu)(X')-(\dot u^-\cdot\nu)(X'+u^+(X')-u^-(X''))\nn\\
&&\quad \approx_2 (\dot u^+\cdot\nu)(X')-(\dot u^-\cdot\nu)(X') - \nabla(\dot u^-\cdot\nu)(X')\cdot\left(u^+(X')-u^-(X'')\right) \nn\\
&&\quad = \left[\dot u\cdot\nu\right]_-^+(X') - \nabla(\dot u^-\cdot\nu)(X')\cdot u^+(X')+\nabla(\dot u^-\cdot\nu)(X')\cdot u^-(X''). \nn
\end{eqnarray}
We rewrite the second term and observe that $\nabla[\dot u\cdot \nu]_-^+(X')\cdot u^+(X') \approx_2 0$ holds, since $[\dot u\cdot \nu]_-^+\approx_1 0$ by \eqref{eq:slipudot}:
\begin{eqnarray}
&&-\nabla(\dot u^-\cdot\nu)(X')\cdot u^+(X')\nn\\
&&\quad =-\nabla(\dot u^-\cdot\nu)(X')\cdot u^+(X')
+\nabla(\dot u^+\cdot\nu)(X')\cdot u^+(X')-\nabla(\dot u^+\cdot\nu)(X')\cdot u^+(X')\nn\\
&&\quad =\nabla[\dot u\cdot \nu]_-^+(X')\cdot u^+(X')-\nabla(\dot u^+\cdot\nu)(X')\cdot u^+(X')\nn\\
&&\quad \approx_2-\nabla(\dot u^+\cdot\nu)(X')\cdot u^+(X'). \nn
\end{eqnarray}
The third term, again upon Taylor expansion, may be written in the form 
$$
\nabla(\dot u^-\cdot\nu)(X')\cdot u^-(X'')\approx_2 \nabla(\dot u^-\cdot\nu)(X')\cdot u^-(X').
$$
The left-hand side of \eqref{eq:tangslipeq} can thus be approximated as follows 
$$
  (\dot u^+\cdot\nu)(X')-(\dot u^-\cdot\nu)(X'') 
  \approx_2 \left[\dot u\cdot\nu\right]_-^+(X') 
       -\nabla(\dot u^+\cdot\nu)(X')\cdot u^+(X')
       +\nabla(\dot u^-\cdot\nu)(X')\cdot u^-(X'),
$$
that is
$$
(\dot u^+\cdot\nu)(X')-(\dot u^-\cdot\nu)(X'')
\approx_2\left[\dot u\cdot\nu-u\cdot\nabla(\dot u\cdot\nu)\right]_-^+(X')
=\left[\dot u\cdot\nu-u\cdot\snabla(\dot u\cdot\nu)\right]_-^+(X').
$$
Combining these results, \eqref{eq:tangslipeq} implies the identity \cite[(3.93)]{DaTr:98}
\begin{equation}
\left[\dot u\cdot\nu-u\cdot\snabla(\dot u\cdot\nu)\right]_-^+
\approx_2\left[\dot u\cdot(\snabla u)^T\cdot \nu\right]_-^+.
\end{equation}
As 
$$\dot u\cdot(\snabla u)^T\cdot \nu=\nu\cdot\snabla u\cdot\dot u
=\snabla(u\cdot\nu)\cdot\dot u-u\cdot\snabla\nu\cdot \dot u
=\dot u\cdot\snabla(u\cdot\nu)-u\cdot\snabla\nu\cdot \dot u$$ 
and due to the symmetry of $\snabla\nu$, we can modify the right-hand side and pull out the time derivative:
$$0\approx_2\left[\dot u\cdot\nu-u\cdot\snabla(\dot u\cdot\nu)-\dot u\cdot\snabla(u\cdot\nu)
+u\cdot\snabla\nu\cdot \dot u\right]_-^+
=\d_t\left[u\cdot\nu-u\cdot\snabla(u\cdot\nu)+\frac{1}{2}\:u\cdot\snabla\nu\cdot u\right]_-^+.$$
Since $u_{t_0}=0$, integration 
from $t_0$ to $t$ finally leads to the tangential slip condition (\ref{eq:slipsecondorder}).

\subsection{Linearized fields}\label{ssec:linfields}

\subsubsection{Gravitational potential perturbation}\label{ssec:lingravity}

The linearization of the (spatial) gravitational potential $\Phi^s$ takes the form
\begin{equation}
\Phi^s=\Phi^0+\Phi^1
\end{equation}
with the equilibrium field given by
\begin{equation}
\Phi^s_\timO=\Phi^0.
\end{equation}
The gravitational potential perturbation (incremental gravitational potential, mass-redistribution
potential) $\Phi^1$ depends on space and time and, by definition, vanishes in the equilibrium state at time $t_0$, that is
\begin{equation}
\Phi^1_\timO=0.
\end{equation}
Since the equilibrium gravitational potential $\Phi^0$ is time-independent,  $\Phi^s$ and $\Phi^1=\Phi^s-\Phi^0$
have the same regularity with respect to time. Moreover, $x=\vphi(X,t)=X+u(X,t)$ and the mean value theorem give
\begin{multline*}
\Phi^1(x,t) = \Phi^s(x,t) - \Phi^0(x) = \Phi^s(X+u(X,t),t)-\Phi^0(X+u(X,t))\\
=\Phi^s(X,t) - \Phi^0(X) + \int_0^1 \left( \nabla\Phi^s(X + r u(X,t),t) 
    - \nabla\Phi^0(X + r u(X,t))\right) \cdot u(X,t) \, \dint r,
\end{multline*}
which shows that  the spatial regularity of $\Phi^1$ will be one order lower compared to that of $\Phi^s$ or $\Phi^0$. Recalling the decay and regularity
results in Lemma \ref{lem:poisson1} (ii) and (iii), we have $\Phi^0\in Y^\infty(\RR^3)\sbs\cC^1(\RR^3)$ with $\Phi^0(x)=\mathcal O(1/|x|)$ as $|x|\to\infty$ and
$\Phi^1\in\cC^0(I,D Y^\infty(\RR^3))\sbs\cC^0(I,\bigcap_{1\leq p<\infty} W_\loc^{1,p}(\RR^3))$
with $\Phi^1_\tim(x)=\mathcal O(1/|x|^2)$ as $|x|\to\infty$, where
$D Y^\infty(\RR^3):=\{f\in\cD'(\RR^3):\:\exists\:g\in Y^\infty(\RR^3),\:\exists\:\al\in\NN_0^3,\:|\al|=1:f= D^\al g\}$.
It follows that
\begin{equation}\label{eq:Phi1_H1}
\Phi^1\in\cC^0(I,H^1(\RR^3)).
\end{equation}
By the Sobolev embedding theorem we have $W_\loc^{1,p}(\RR^3)\sbs\cC^k(\RR^3)$ if $0\leq k<1-3/p$. Since this is satisfied
for $k=0$ and $p>3$ we conclude that $\Phi^1_\tim$ is continuous, that is $\Phi^1\in\cC^0(\RR^3 \times I)$.
Continuity of $\Phi^1_\tim$ and continuous differentiability of $\Phi^0$ imply the conditions
\begin{equation}\label{eq:grav_IBC}
[\Phi^0]_-^+=0,\quad [\nabla\Phi^0]_-^+\cdot\nu=0,\quad\text{and}\quad [\Phi^1]_-^+=0
\end{equation}
to hold on every surface $S$ in $\RR^3$ and in particular on earth's interior and exterior boundaries $\Sig\cup\d\earth$.

\subsubsection{The external force potential in the linearized setting}\label{ssec:linsources}

The force potential $F^s$ may be considered as a perturbative term in the Lagrangian (\ref{eq:Lfull}), similarly as $u$ is viewed as a perturbation of the spatial position in $x = X + u_t(X)$. We observe that $F_\tim=F^s_\tim\circ \vphi_\tim$  implies the Taylor approximation
\begin{equation}\label{eq:forcelin}
F\approx_2 F^s+u\cdot\nabla F^s,
\end{equation} 
where the first term corresponds to first-order approximation, while the second term, being a product of two first-order perturbations, corresponds to second-order approximation.

\subsection{Linear elasticity}\label{ssec:linelastic}

\subsubsection{Linear elasticity including prestress}

In the presence of large ambient stresses encoded in the prestress
$T^0$ (\ref{eq:prestress}), it is customary to use the following
second-order approximation of the internal elastic energy density
\cite[p.\ 76 eq.\ (3.115)]{DaTr:98} 
\begin{equation}
\rho^0 U\approx_2 T^0:e+\frac{1}{2}e:\Xi:e= T^0_{ij}e_{ij}+\frac{1}{2}\:\Xi_{ijkl}e_{ij}e_{kl}.
\end{equation}
Here, $e$ is the full (nonlinear) material strain tensor which was introduced in (\ref{eq:nlstraintensor}). Since $e$ is symmetric, the
time-independent tensor fields $T^0$ and $\Xi$ without loss of generality possess the symmetries
\begin{equation}\label{eq:symmetries}
T^0_{ij}=T^0_{ji}\qquad\text{and}\qquad \Xi_{ijkl}=\Xi_{klij}=\Xi_{jikl}=\Xi_{ijlk}.
\end{equation}

As a consequence of the regularity (\ref{eq:basic_U_reg}) of $U$,
implying $U\in\cC^0(I,L^{\infty}(\earth))$, the components of $T^0$
and $\Xi$ need to be bounded, that is $T^0_{ij}$ and $\Xi_{ijkl}\in
L^{\infty}(\earth)$ ($i,j,k,l=1,2,3$).  The associated stress-strain 
relation is given by (\ref{eq:PKconstitutive}), which, upon recalling $\nabla\vphi = 1_{3\times 3} + \nabla u$ and observing that $\frac{\d \widetilde{U}}{\d (\nabla u)} = \frac{\d U}{\d (\nabla\vphi)}$ in case $\widetilde{U}(X,\nabla u) := U(X, 1_{3\times 3} + \nabla u)$, we may (by common abuse of notation) write in the form 
\begin{equation}\label{eq:PKlinconstitutive}
T^{\PK}=\rho^0\frac{\d U}{\d(\nabla\vphi)}=\rho^0\frac{\d U}{\d(\nabla u)}\approx_1 T^0+T^{\PK 1}.
\end{equation}
We express the perturbation $T^{\PK 1}$ of the first Piola-Kirchhoff stress tensor in terms of $u$.
For the material strain tensor (\ref{eq:nlstraintensor}) we have
\begin{equation}
e=\frac{1}{2}(\nabla\vphi^T\cdot\nabla\vphi-1_{3\times 3})=\frac{1}{2}(\nabla u+\nabla u^T+\nabla u^T\cdot\nabla u)
\end{equation}
or in components
$e_{ij}=\frac{1}{2}(\d_iu_j+\d_ju_i)+\frac{1}{2}(\d_iu_k)(\d_ju_k)$. Hence, the second-order expansion
of the elastic energy density reads
\begin{eqnarray} \label{eq:U_u}
\rho^0U&\approx_2&T^0_{ij}e_{ij}+\frac{1}{2}\:\Xi_{ijkl}e_{ij}e_{kl}\nn\\
&\approx_2&T^0_{ij}\d_ju_i+\frac{1}{2}\left(T^0_{ij}(\d_iu_k)(\d_ju_k)+\Xi_{ijkl}(\d_ju_i)(\d_lu_k)\right)\nn\\
&=&T^0_{ij}\d_ju_i+\frac{1}{2}\left(T^0_{jl}(\d_ju_k)(\d_lu_k)+\Xi_{ijkl}(\d_ju_i)(\d_lu_k)\right)\nn\\
&=&T^0_{ij}\d_ju_i+\frac{1}{2}\left(\de_{ik}T^0_{jl}+\Xi_{ijkl}\right)(\d_ju_i)(\d_lu_k).
\end{eqnarray}
We introduce the {\bf prestressed elasticity tensor} $\La^{T^0}$ as
\begin{equation}\label{eq:Lambda}
\La^{T^0}_{ijkl}:=\de_{ik}T^0_{jl}+\Xi_{ijkl}
\end{equation}
for $i,j,k,l\in\{1,2,3\}$. It follows that $\La^{T^0}_{ijkl}\in L^{\infty}(\RR^3)$, is supported by $\ovl\earth$,
and satisfies the symmetry relations
\begin{equation}
\La^{T^0}_{ijkl}=\La^{T^0}_{klij}.
\end{equation}
Thus we found
\begin{equation}\label{eq:elastenergyapprox}
\rho^0U\approx_2 T^0:\nabla u+\frac{1}{2}\nabla u:\La^{T^0}:\nabla u
\end{equation}
as second-order approximation of the internal energy density in terms of the displacement gradient. Consequently the linearization of the first Piola-Kirchhoff stress tensor is $T^{\PK}\approx_1 T^0+T^{\PK 1}$ with
\begin{equation}\label{eq:Hooke}
T^{\PK 1}:=\La^{T^0}:\nabla u,
\end{equation}
or in components 
$$T^{\PK 1}_{ij}:=\La^{T^0}_{ijkl}\d_lu_k$$ 
for
$i,j=1,2,3$ \cite[(3.120)\footnote{Note that \cite{DaTr:98} employ transposed conventions for $\nabla u$ and $T^{\PK}$. Thus, instead of $\La^{T^0}_{ijkl}=\de_{ik}T^0_{jl}+\Xi_{ijkl}$ \eqref{eq:Lambda} they  replace $i\leftrightarrow j$, $k\leftrightarrow l$ and write $\La^{T^0}_{ijkl}=\de_{jl}T^0_{ik}+\Xi_{jilk}=\de_{jl}T^0_{ik}+\Xi_{ijkl}$ \cite[(3.122)]{DaTr:98}.}]{DaTr:98}.  The linear stress-strain relation
(\ref{eq:Hooke}) generalizes {\bf Hooke's law} to a prestressed
elastic material. 
In fact, the elastic tensor $\Xi$ may also be written in the more general form \cite[(3.135)]{DaTr:98} 
\begin{equation}
\Xi^{T^0}_{ijkl}
:=\Ga_{ijkl}+a(T^0_{ij}\de_{kl}+T^0_{kl}\de_{ij})
+b(T^0_{ik}\de_{jl}+T^0_{jk}\de_{il}+T^0_{il}\de_{jk}+T^0_{jl}\de_{ik}),
\end{equation}
with parameters $a,b\in\RR$ and $\Ga$ satisfying the classical symmetries $\Ga_{ijkl}=\Ga_{klij}=\Ga_{jikl}=\Ga_{ijlk}$.
Upon replacing $i\leftrightarrow j$, $k \leftrightarrow l$, $(i,j)\leftrightarrow (k,l)$ and by employing the symmetry of $T^0$, one easily verifies that $\Xi^{T^0}$ also possesses the desired symmetries \eqref{eq:symmetries}. 
 
In the absence of prestress, $\Ga$ is equal to the classical elasticity tensor (typically denoted by $c$).
In a solid medium, $\Ga$ takes the form \cite[(3.147)]{DaTr:98}
\begin{equation}
\Ga_{ijkl}=(\kappa-\frac{_2}{^3}\mu)\de_{ij}\de_{kl}+\mu(\de_{ik}\de_{jl}+\de_{il}\de_{jk})+\Ga^{\mathrm{a}}_{ijkl},
\end{equation}
where $\kappa$ denotes the bulk modulus (incompressibility), $\mu$ the shear modulus, and $\Ga^\mathrm{a}$ contains the purely anisotropic part. Thus, the generalized prestressed elasticity tensor \eqref{eq:Lambda} reads
\begin{eqnarray}
\La^{T^0}_{ijkl}&=&T^0_{jl}\de_{ik}+\Xi^{T^0}_{ijkl}\\
&=&\Ga_{ijkl}+a(T^0_{ij}\de_{kl}+T^0_{kl}\de_{ij})
+(1+b)T^0_{jl}\de_{ik}+b(T^0_{ik}\de_{jl}+T^0_{jk}\de_{il}+T^0_{il}\de_{jk}).\nn
\end{eqnarray}

\subsubsection{First perturbation of the Lagrangian stress tensor}

The incremental Lagrangian stress tensor is given by
\begin{equation}
   T^{1} \approx_1 T-T^0.
\end{equation} 
Its relation to $T^{\PK1}$ is obtained based on the Piola transform \eqref{eq:TPK_Piola} and the expansions \eqref{eq:Jfristorder}:
\begin{eqnarray}
T^0+T^{\PK1}\approx_1 T^{\PK}=J T\cdot(\nabla\vphi)^{-T}
&\approx_1&(1+\nabla\cdot u)(T^0+T^{1})\cdot(1_{3\times 3}-(\nabla u)^{-T})\nn\\
&\approx_1& T^0+T^0(\nabla\cdot u)-T^0\cdot(\nabla u)^T+T^{1}.
\end{eqnarray}
Thus we established \cite[(3.36)]{DaTr:98}
\begin{equation}\label{eq:TPK1T1}
T^{1}\approx_1 T^{\PK 1}+T^0\cdot(\nabla u)^T-T^0(\nabla\cdot u)=\Upsilon^{T^0}:\nabla u,
\end{equation}
or in coordinates,
$$
T^{1}_{ij}\approx_1 T^{\PK 1}_{ij}+T^0_{il}\d_lu_j-T^0_{ij}\d_ku_k=\Upsilon^{T^0}_{ijkl}\d_lu_k,
$$
where (cf.\ \cite[(3.123) with $i\leftrightarrow j$ and $k\leftrightarrow l$]{DaTr:98})
\begin{eqnarray}
\Upsilon^{T^0}_{ijkl}&=&\La^{T^0}_{ijkl}+T^0_{il}\de_{kj}-T^0_{ij}\de_{kl}
\label{eq:Upsilon}\\
&=&\Xi^{T^0}_{ijkl}+T^0_{jl}\de_{ik}+T^0_{il}\de_{kj}-T^0_{ij}\de_{kl}\nn\\
&=&\Ga_{ijkl}+(a-1)T^0_{ij}\de_{kl}+aT^0_{kl}\de_{ij}
+(b+1)(T^0_{il}\de_{jk}+T^0_{jl}\de_{ik})+b(T^0_{ik}\de_{jl}+T^0_{jk}\de_{il}).\nn
\end{eqnarray}
Symmetry of $T^1$ corresponds to the symmetry relation 
\begin{equation}
\Upsilon^{T^0}_{ijkl}=\Upsilon^{T^0}_{jikl}.
\end{equation}
In accordance with \cite[p.\ 80]{DaTr:98} we will choose $a=-b=1/2$. The advantage of this convention is that it renders $T^{1}$ independent of $p^0=-\frac{1}{3}\tr T^0$. Indeed, if $a=-b=1/2$ then
\begin{equation}
\Upsilon^{T^0}_{ijkl}=\Ga_{ijkl}+\frac{1}{2}\left(-T^0_{ij}\de_{kl}+T^0_{kl}\de_{ij}
+T^0_{il}\de_{jk}+T^0_{jl}\de_{ik}-T^0_{ik}\de_{jl}-T^0_{jk}\de_{il}\right)
\end{equation}
and thus (now writing $=$ instead of $\approx_1$)
\begin{equation}
T^{1}_{ij}=\Ga_{ijkl}\d_l u_k+\frac{1}{2}\left(-T^0_{ij}\d_k u_k+T^0_{kl}\d_l u_k\de_{ij}
+T^0_{il}\d_l u_j+T^0_{jl}\d_l u_i-T^0_{ik}\d_j u_k-T^0_{jk}\d_i u_k\right).
\end{equation}
Applying the decompositions $T^0=-p^01_{3\times 3}+T^0_{\mathrm{dev}}$ \eqref{eq:T0dev} for initial stress and 
\begin{equation}\label{eq:linstrain}
\nabla u=\varepsilon+\omega
\qquad\text{with}\qquad
\varepsilon:=\frac{1}{2}(\nabla u+\nabla u^T)=\varepsilon^T
\quad\text{and}\quad
\omega:=\frac{1}{2}(\nabla u-\nabla u^T)=-\omega^T
\end{equation}
for strain, as well as by symmetry of $\Ga$ and $T^0$, we deduce
\begin{align}
T^{1}&=\Ga:\nabla u + \frac{1}{2} (
\underbrace{-T^0(\nabla\cdot u)+(T^0:\nabla u)1_{3\times 3}}_{
=-T^0 (\tr\:\varepsilon)+(T^0:\varepsilon)1_{3\times 3}}
+\underbrace{T^0\cdot(\nabla u)^T+\nabla u\cdot T^0-T^0\cdot\nabla u-(\nabla u)^T\cdot T^0}_{
=T^0\cdot((\nabla u)^T-\nabla u)+(\nabla u-(\nabla u)^T)\cdot T^0
=2(-T^0\cdot\omega+\omega\cdot T^0)})\nn\\
&=\Ga:\varepsilon +\frac{1}{2}\left(-T^0(\tr\:\varepsilon)+(T^0:\varepsilon)1_{3\times 3}\right)
-T^0\cdot\omega+\omega\cdot T^0\nn\\
&=\Ga:\varepsilon +\frac{1}{2}\left(p^0 (\tr\:\varepsilon)1_{3\times 3}-T^0_{\mathrm{dev}}(\tr\:\varepsilon)
-p^0(\tr\:\varepsilon)1_{3\times 3}+(T^0_{\mathrm{dev}}:\varepsilon)1_{3\times 3}\right)
-T^0_{\mathrm{dev}}\cdot\omega+\omega\cdot T^0_{\mathrm{dev}}\nn\\
&=\Ga:\varepsilon + \frac{1}{2}\left(-T^0_{\mathrm{dev}}(\tr\:\varepsilon)
+(T^0_{\mathrm{dev}}:\varepsilon)1_{3\times 3}\right)
-T^0_{\mathrm{dev}}\cdot\omega+\omega\cdot T^0_{\mathrm{dev}},
\label{eq:TL1}
\end{align}
which does not involve $p^0$. 
In a hydrostatic earth model where the deviatoric prestress $T^0_{\mathrm{dev}}$ vanishes, $T^{1}=\Ga:\varepsilon$ and thus $T^{1}$ is even independent of $T^0$.
Finally, we note that the choice $a=-b=1/2$ yields
\begin{equation}
\La^{T^0}_{ijkl}=\Ga_{ijkl}+\frac{1}{2}\left(T^0_{ij}\de_{kl}+T^0_{kl}\de_{ij}+T^0_{jl}\de_{ik}
-T^0_{ik}\de_{jl}-T^0_{jk}\de_{il}-T^0_{il}\de_{jk}\right)
\end{equation}
and
\begin{equation}
\Xi^{T^0}_{ijkl}=\Ga_{ijkl}+\frac{1}{2}\left(T^0_{ij}\de_{kl}+T^0_{kl}\de_{ij}
  - T^0_{jl}\de_{ik} -T^0_{ik}\de_{jl}-T^0_{jk}\de_{il}-T^0_{il}\de_{jk}\right).
\end{equation}

\subsubsection{Stress tensors in fluid regions}
In a perfect elastic fluid, shear resistance and anisotropy vanish and $\Ga$ reduces to
\begin{equation}
\Ga_{ijkl}=\kappa\de_{ij}\de_{kl}.
\end{equation}
In fluid regions of the earth we also have that   $T^0=-p^01_{3\times 3}$, that is, $T^0_{\mathrm{dev}}=0$. Thus \eqref{eq:TL1} leads to the following simple form of the first perturbation of the Lagrangian stress tensor in elastic prestressed fluids:
$$
T^{1}_{ij}=\kappa(\d_k u_k)\de_{ij}
$$
or
\begin{equation}\label{eq:TL1fluid}
T^{1}=\kappa(\nabla\cdot u)1_{3\times 3}.
\end{equation}
Furthermore, we have
\begin{eqnarray}
\La^{T^0}_{ijkl}&=&\kappa\de_{ij}\de_{kl}-\frac{1}{2}p^0(\de_{ij}\de_{kl}+\de_{kl}\de_{ij}+\de_{jl}\de_{ik}
-\de_{ik}\de_{jl}-\de_{jk}\de_{il}-\de_{il}\de_{jk})\nn\\
&=&\kappa\de_{ij}\de_{kl}-p^0(\de_{ij}\de_{kl}-\de_{jk}\de_{il})\nn\\
&=&p^0(\ga-1)\de_{ij}\de_{kl}+p^0\de_{jk}\de_{il}
\end{eqnarray}
where 
$\ga:=\kappa/p^0$ 
is the fluid's adiabatic index. Hence, the first perturbation of the first Piola-Kirchhoff stress tensor in elastic prestressed fluids is given by
$$
T^{\PK 1}_{ij}=\La^{T^0}_{ijkl}\d_lu_k=p^0(\ga-1)(\d_ku_k)\de_{ij}+p^0\d_iu_j
$$ 
or 
\begin{equation}\label{eq:TPK1fluid}
T^{\PK 1}=p^0(\ga-1)(\nabla\cdot u)1_{3\times 3}+p^0(\nabla u)^T.
\end{equation}
Similarly, 
\begin{eqnarray}
\Xi^{T^0}_{ijkl}&=&\kappa\de_{ij}\de_{kl}-\frac{1}{2}p^0(\de_{ij}\de_{kl}+\de_{kl}\de_{ij}
   - \de_{jl}\de_{ik}
-\de_{ik}\de_{jl}-\de_{jk}\de_{il}-\de_{il}\de_{jk})\nn\\
&=&\kappa\de_{ij}\de_{kl}-p^0(\de_{ij}\de_{kl}-\de_{jk}\de_{il} - \de_{jl}\de_{ik})\nn\\
&=&p^0(\ga-1)\de_{ij}\de_{kl}+p^0(\de_{jk}\de_{il} + \de_{jl}\de_{ik}).
\end{eqnarray}
Inserting $\Xi$ and $T^0$  in Equation \eqref{eq:U_u} yields the following second-order approximation for the elastic energy density in fluid regions of the earth (compare with the discussion of the nonlinear fluid case in \ref{ssec:elasticity} below Equation \eqref{eq:PKconstitutive})
\begin{equation}
  \rho^0 U \approx_2 - p^0 (\nabla \cdot u) + 
   \frac{p^0}{2} \left( (\gamma -1) (\nabla \cdot u)^2 + 
     \nabla u : (\nabla u)^T \right).
\end{equation}
The latter is consistent with the above equation for $T^{\PK 1}$ and \eqref{eq:PKlinconstitutive} in fluids, since a direct computation shows
$$
   \rho^0 \frac{\d U}{\d (\nabla u)} \approx_1 - p^0 1_{3\times 3} + 
     p^0(\ga - 1) (\nabla \cdot u) 1_{3\times 3} +
     p^0 (\nabla u)^T = T^0 + T^{\PK 1}.
$$

\subsubsection{Linearized dynamical interface condition} \label{444}

We consider a sufficiently regular, e.g.\ $\cC^{1,1}$, surface $S$
inside the earth; $S$ may be welded or perfectly slipping.  The exact
dynamical jump condition \eqref{eq:CauchyIBC}, $[T^s]_-^+\cdot\nu^s=0$
on $\vphi_\tim(S)$, takes the linearized form \cite[(3.73)]{DaTr:98}
\begin{equation}\label{eq:linCauchyIBC}
\left[T^{\PK 1}\cdot\nu-\snabla\cdot((T^0\cdot\nu)u)\right]_-^+=0\quad\text{on}\quad S,
\end{equation}
that is, 
$[T^{\PK 1}_{ij}\nu_j-\sd_k(T^0_{ij}\nu_ju_k)]_-^+=0$. We present a derivation of \eqref{eq:linCauchyIBC} along the lines of \cite[3.4.2]{DaTr:98} by techniques similar to the proof of \eqref{eq:slipsecondorder} based on $X'$ and $X''$ as in \eqref{eq:surfpoints}: We rewrite \eqref{eq:spatialsurfapprox}, $\dsurf^s\approx_1(1+\snabla\cdot u)\:\dsurf$,  in the form $\dsurf  \approx_1(1+\snabla\cdot u)^{-1}\:\dsurf^s \approx_1(1-\snabla\cdot u)\:\dsurf^s$; the material counterpart of \eqref{eq:CauchyIBC}, 
$$
   \left((T^0 + T^{\PK 1}) \cdot \nu \, \dint S\right)^+ (X') = 
   \left((T^0 + T^{\PK 1}) \cdot \nu \, \dint S\right)^- (X''),
$$
thus reads
$$
    \left((T^0 + T^{\PK 1}) \cdot \nu \,(1-\snabla\cdot u) \right)^+ (X') \,\dint S^s(x)
    \approx_1 
   \left((T^0 + T^{\PK 1}) \cdot \nu \,(1-\snabla\cdot u) \right)^- (X'') \,\dint S^s(x),
$$
which implies 
\begin{multline*}
    \underbrace{\left(T^{\PK 1} \cdot \nu\right)^+ (X') - 
       \left(T^{\PK 1} \cdot \nu\right)^- (X'')}_{=: A} +
    \underbrace{\left(T^0 \cdot \nu\right)^+ (X') - 
       \left(T^0 \cdot \nu\right)^- (X'')}_{=: B}\\  
    \underbrace{-\left( (T^0 \cdot \nu) (\snabla\cdot u) \right)^+ (X') +
       \left((T^0 \cdot \nu) (\snabla\cdot u) \right)^- (X'')}_{=: C}
      \approx_1 0.
\end{multline*}
A Taylor approximation yields $A \approx_1 [T^{\PK 1} \cdot \nu]_-^+
(X')$ and $C \approx_1 -[(T^0 \cdot \nu) (\snabla\cdot u)]_-^+
(X')$. We find $B \approx_1 -[ \snabla (T^0 \cdot \nu) \cdot u]_-^+
(X')$ by employing, in addition, $[T^0]_-^+\cdot\nu=0$
(cf.~\eqref{eq:IBC_FST0}) and the kinematical slip condition
$[u]_-^+\cdot\nu=0$ \eqref{eq:slipfirstorder}. Combining these terms
we arrive at $[T^{\PK 1}\cdot\nu-\snabla\cdot((T^0\cdot\nu)u)]_-^+
\approx_1 0$, proving \eqref{eq:linCauchyIBC}.

If the surface is welded like a solid-solid boundary $S=\Sig^\sxs$,
the conditions $[T^0]_-^+\cdot\nu=0$ \eqref{eq:IBC_FST0} and
$[u]_-^+=0$ \eqref{eq:weldedfirstorder} immediately give
\begin{equation}
\left[T^{\PK 1}\right]_-^+\cdot\nu=0\quad\text{on}\quad \Sig^\sxs.
\end{equation}
In contrast, when restricting to perfectly slipping fluid-solid boundaries $S=\Sig^\fxs$ we have $T^0 \cdot \nu = -p^0 \nu$ by \eqref{eq:surfpressure}. Thus  
$$
-\snabla\cdot((T^0\cdot\nu)u)=\snabla\cdot(p^0\nu\: u)=\nu\snabla\cdot(p^0u)+p^0\snabla\nu\cdot u.
$$
Due to $\snabla\nu\cdot a^\parallel=-\nu\cdot\snabla a^\parallel$ \eqref{eq:Wswap}, $[p^0]_-^+=0$ \eqref{eq:IBC_FSp0}, and the kinematical slip condition $[u]_-^+\cdot\nu=0$ \eqref{eq:slipfirstorder}
the last term can be replaced by
$[p^0\snabla\nu\cdot u]_-^+=p^0\snabla\nu\cdot [u]_-^+=-p^0\nu\cdot\snabla [u]_-^+=[-p^0\nu\cdot\snabla u]_-^+$. 
Consequently, the dynamical slip condition \eqref{eq:linCauchyIBC} reduces to
$$
\left[T^{\PK 1}\cdot\nu+\nu\snabla \cdot(p^0u)-p^0\nu\cdot\snabla  u\right]_-^+=0\quad\text{on}\quad \Sig^\fxs.
$$
Note that this condition coincides with Equation \eqref{eq:tractionIBC}, which is derived independently in Section \ref{sec:5}  from 
the variational principle  applied to the second-order approximation 
\eqref{eq:action_approx} of the action integral. There the second-order surface Lagrange density is \eqref{eq:Lsurf2} and the linear dynamical jump condition \eqref{eq:tractionIBC} is obtained as natural interior boundary condition (NIBC) according to \eqref{eq:Ham_IBCS}.
 
\subsection{Second-order approximation of the action integral}\label{sec:approxaction}

Aiming at dynamical equations which are linear in $(u,\Phi^1)$, we
approximate the action integral by a quadratic expression in these
perturbations and their first-order derivatives. Inserting
$\vphi=\Id_{\RR^3}+u$, $\Phi^s=\Phi^0+\Phi^1$, and the second-order
relation (\ref{eq:elastenergyapprox}) for prestressed linear
elasticity in the action integral $\action''(\vphi,\Phi^s)$
(\ref{eq:action_full}) yields the approximation
\begin{eqnarray}
\action''(\vphi,\Phi^s)&=&\action''(\Id_{\RR^3}+u,\Phi^0+\Phi^1)\nn\\
&\approx_2&\action_{[0]}''(u,\Phi^1)+\action_{[1]}''(u,\Phi^1)+\action_{[2]}''(u,\Phi^1)
\label{eq:action_approx}
\end{eqnarray}
where $\approx_2$ indicates the omission of terms of third order in $(u,\nabla u,\dot u,\Phi^1,\nabla\Phi^1,\dot\Phi^1)$, with
\begin{equation}\label{eq:action_approxi}
\action_{[i]}''(u,\Phi^1):=\int_I\left(\int_{\RR^3}L_{[i]}''\:\dvol+\int_{\Sig^\fxs}L_{\Sig^\fxs,[i]}''\:\dsurf\right)\mathrm{d}t
\qquad (i=0,1,2).
\end{equation}
As a general principle, we always assume the validity of all first-order results when we modify second-order terms. In particular, we frequently employ the kinematic slip condition $[u]_-^+\cdot \nu\approx_1 0$ \eqref{eq:slipfirstorder} as well as the equilibrium stress interface condition $[T^0]_-^+\cdot \nu=0$ \eqref{eq:IBC_FST0}, which reduces to continuity of $p^0=-\nu\cdot T^0\cdot\nu$ across fluid-solid boundaries.

\subsubsection{Volume Lagrangian densities}\label{ssec:vol_lag}
We show that the approximated Lagrangian densities $L_{[0]}''$, $L_{[1]}''$, and $L_{[2]}''$
of order zero, order one, and order two respectively read:
\begin{eqnarray}
L_{[0]}''&=&-\rho^0(\Phi^0+\Psi^s)-\frac{1}{8\pi G}\:(\nabla\Phi^0)^2,
\label{eq:L0}\\
&&\nn\\
L_{[1]}''&=&\rho^0\dot u\cdot R_\Om\cdot x-\:T^0:\nabla u-\rho^0u\cdot\nabla(\Phi^0+\Psi^s)
\nn\\ &&
-\rho^0\Phi^1-\frac{1}{4\pi G}\:\nabla\Phi^0\cdot\nabla\Phi^1-\rho^0 F^s,
\label{eq:L1}\\
&&\nn\\
L_{[2]}''&=&\frac{1}{2}\:\rho^0\dot u^2+\rho^0\dot u\cdot R_\Om\cdot u-\frac{1}{2}\:\nabla u:\La^{T^0}:\nabla u
-\frac{1}{2}\:\rho^0u\cdot(\nabla\nabla(\Phi^0+\Psi^s))\cdot u\nn\\
&& -\rho^0u\cdot\nabla\Phi^1-\frac{1}{8\pi G}(\nabla\Phi^1)^2-\rho^0u\cdot\nabla F^s.
\label{eq:L2}
\end{eqnarray}
Here we have written the vector product of $\Om$ with $y\in\RR^3$ in a matrix multiplication form:
\begin{equation}
R_\Om\cdot y=\Om\times y \quad\textrm{for}\quad R_\Om:=(\eps_{ijk}\Om_j)_{i,k=1}^3.
\end{equation}
The Lagrangian densities are functions of $x$, $t$, $u(x,t)$, $\Phi^1(x,t)$, $\nabla u(x,t)$, $\dot u(x,t)$, and
$\nabla\Phi^1(x,t)$ (note their independence of $\dot\Phi^1(x,t)$). More precisely, $L_{[0]}''$ is independent of the
variables $(u,\Phi^1)$ and their derivatives as well as of time. Furthermore, $L_{[1]}''$ and
$L_{[2]}''$ are first- and second-order polynomials in $u$, $\Phi^1$, $\nabla u$, $\dot u$, $\nabla\Phi^1$, the force potential $F^s$, and its gradient $\nabla F^s$ with time-independent coefficients (the appearance  of the force terms is explained in \ref{ssec:linsources}). Apart from the inclusion of the external force, twice the second-order Lagrangian density $L_{[2]}''$ coincides
with the volume Lagrangian given by \cite[(3.190), p.\ 89]{DaTr:98} of their ``displacement-potential variational principle''.

The zero-, first-, and second-order actions (\ref{eq:action_approxi}) with Lagrangians (\ref{eq:L0}) to (\ref{eq:L2})
are defined within the regularity setting of the linearized fields on the $\Lip$-composite fluid-solid earth model,
which is in complete consistence with the regularity conditions of Definition \ref{def:nonlin_reg} for the nonlinear case
making the exact action integral (\ref{eq:action_full}) defined.
Indeed, with $\vphi=\Id_{\RR^3}+u$, the kinetic energy density, the Coriolis term, and the centrifugal potential term, respectively read
\begin{eqnarray}
v^2&=&\dot u^2,\\
v\cdot R_\Om\cdot\vphi&=&\dot u\cdot R_\Om\cdot x+\dot u\cdot R_\Om\cdot u,\\
\Psi&=&\Psi^s+u\cdot\nabla\Psi^s+\frac{1}{2}u\cdot\nabla\nabla\Psi^s\cdot u,
\end{eqnarray}
where the last equality holds since $\Psi(X,t)=\Psi^s\circ\vphi_\tim(X)=\Psi^s(X+u_\tim(X))$ and $\Psi^s$
is a second-order polynomial. Upon inserting $\Phi^s=\Phi^0+\Phi^1$, the squared norm of the spatial gradient
$|\nabla\Phi^s|^2=(\d_k\Phi^s)^2$ is
\begin{equation}
(\nabla\Phi^s)^2=\left(\nabla\Phi^0+\nabla\Phi^1\right)^2=(\nabla\Phi^0)^2+2\nabla\Phi^0\cdot\nabla\Phi^1+(\nabla\Phi^1)^2.
\end{equation}
The second-order approximation for the material gravitational potential
\begin{equation}
\Phi_\tim(X)=(\Phi_\tim^s|_{\vphi_\tim(\earthi)})\circ\vphi_\tim(X)
=\Phi_\tim^s(X+u_\tim(X))=\Phi^0(X+u_\tim(X))+\Phi^1_t(X+u_\tim(X))
\end{equation}
reads
\begin{eqnarray}
\Phi&\approx_2&\Phi^0+(u\cdot\nabla\Phi^0+\Phi^1)+\left(\frac{1}{2}\:u\cdot\nabla\nabla\Phi^0\cdot u+u\cdot\nabla\Phi^1\right)
\end{eqnarray}
(where $\approx_2$ indicates the omission of terms of third order involving $(u,\Phi^1)$ or their derivatives).
Inserting these approximations and (\ref{eq:elastenergyapprox}) for linear prestressed elasticity into the
Lagrangian density (\ref{eq:Lfull}) and comparing with the powers of $(u,\Phi^1)$ and their derivatives in (\ref{eq:action_approx}) gives equations (\ref{eq:L0}) to (\ref{eq:L2}).

\subsubsection{Surface Lagrangian densities}\label{ssec:surf_lag}

We will show below that the approximated surface Lagrangian densities $L_{\Sig^\fxs,[0]}''$, $L_{\Sig^\fxs,[1]}''$, $L_{\Sig^\fxs,[2]}''$
of orders zero up to two are given by the following expressions (or several equivalent formulae for $L_{\Sig^\fxs,[2]}''$ in Remark \ref{RemEquiv}):
\begin{eqnarray}
L_{\Sig^\fxs,[0]}''&=&L_{\Sig^\fxs,[1]}''=0,
\label{eq:Lsurf01}\\
&&\nn\\
L_{\Sig^\fxs,[2]}''&=&-p^0\left[\nu\cdot\snabla u\cdot u
+\frac{1}{2}u\cdot\snabla \nu\cdot u\right]_-^+.
\label{eq:Lsurf2}
\end{eqnarray}
The surface Lagrangian $L_{\Sig^\fxs,[2]}''$, in the equivalent form (\ref{eq:Lsurf2v4}) and recalling that the Jacobi matrix in \cite{DaTr:98} is transposed, coincides with the surface density given in \cite[(3.165), p.\ 89]{DaTr:98}.

\begin{remark}[{{\bf Equivalent representations of the surface Lagrangian}}]\label{RemEquiv} 
Since $\nu\cdot\snabla u\cdot u=u\cdot\snabla (u\cdot \nu)-u\cdot\snabla\nu\cdot u$
we may rewrite \eqref{eq:Lsurf2} as 
\begin{equation}\label{eq:Lsurf2v1}
L_{\Sig^\fxs,[2]}''=-p^0\left[u\cdot\snabla(\nu\cdot u)
-\frac{1}{2}u\cdot\snabla \nu\cdot u\right]_-^+
\end{equation}
and obtain, upon integration,
\begin{eqnarray}
\int_{\Sig^{\fxs}}L_{\Sig^\fxs,[2]}''\:\dint S
&=&\int_{\Sig^{\fxs}}\left[(\nu\cdot u)\snabla\cdot(p^0u)
+\frac{1}{2}p^0u\cdot\snabla \nu\cdot u\right]_-^+\dint S\label{eq:Lsurf2v3}\\
&=&\frac{1}{2}\int_{\Sig^{\fxs}}\left[(\nu\cdot u)\snabla\cdot(p^0u)
-p^0\nu\cdot\snabla u\cdot u\right]_-^+\dint S\label{eq:Lsurf2v4}.
\end{eqnarray}
Here, we used the identities
\begin{eqnarray}
p^0u\cdot\snabla(\nu\cdot u)
&=&\snabla\cdot\left(p^0u(\nu\cdot u)\right)-(\nu\cdot u)\snabla\cdot(p^0u)\nn\\
p^0u\cdot\snabla \nu\cdot u
&=&\snabla\cdot(p^0u(\nu\cdot u))-(\nu\cdot u)\snabla\cdot(p^0u)
-p^0\nu\cdot\snabla u\cdot u\nn
\end{eqnarray}
and applied the surface divergence theorem to the corresponding surface action, where the line integrals vanish as $p^0=0$ on $\d\Sig^{\fxs}\subseteq\d\earth$.
\end{remark}

The derivation of the second-order surface action is more involved than in case of the volume action. It requires in addition that the earth's fluid-solid interior boundaries $\Sig^{\fxs}$ are at least $\cC^{1,1}$. The second-order surface action is obtained as approximation of the exact surface action
$\action_{\Sig^\fxs}(\vphi)=\action_{\Sig^\fxs}''(\vphi)=\int_I\action_{\Sig^\fxs,\tim}''(\vphi)\:\mathrm{d}t$ (\ref{eq:action_surf})
with surface Lagrangian (\ref{eq:Lfull_surf}), that is
\begin{equation}
\action_{\Sig^\fxs,\tim}''(\Id_{\RR^3}+u)=\int_{\Sig^{\fxs}}L_{\Sig^\fxs,\tim}''\dsurf
=-\int_{t_0}^t\int_{\Sig^{\fxs}}\left[\dot u_\timd\cdot T^{\PK}_\timd \right]_-^+\cdot \nu\:\dsurf\:\mathrm{d}t'.
\end{equation}
In the sequel we discuss three variants to derive the approximation of the action up to second order: (A) starting from the linearized dynamical interface condition \eqref{eq:tractionIBC} as presented in \ref{444}, (B) by direct expansion of the surface Lagrangian density \eqref{eq:Lfull_surf}, and (C) from energy conservation and the second-order tangential slip condition \eqref{eq:sslipsecondorder}.

\paragraph{Variant (A) - derivation via linearized dynamical interface condition} We split the surface action $\action_{\Sig^\fxs,\tim}''$ into two parts corresponding to $T^\PK \approx_1 T^0+T^{\PK1}$,
\begin{equation}
\int_{\Sig^{\fxs}}L_{\Sig^\fxs,\tim}''\dsurf \approx_2
-\int_{t_0}^t\int_{\Sig^{\fxs}}\left[\dot u_\timd\cdot
  T^0\right]_-^+\cdot \nu\:\dsurf\:\mathrm{d}t'
-\int_{t_0}^t\int_{\Sig^{\fxs}}\left[\dot u_\timd\cdot T^{\PK1}_\timd
  \right]_-^+\cdot \nu\:\dsurf\:\mathrm{d}t',
\end{equation}
and consider the first-order term. Pulling out the time derivative 
and employing the normality condition $T^0\cdot\nu=-p^0\nu$ \eqref{eq:surfpressure}, the continuity of $p^0$ \eqref{eq:IBC_FSp0}, and the first-order tangential slip condition $[u]_-^+\cdot\nu=0$ \eqref{eq:slipfirstorder} on $\Sig^\fxs$, we deduce 
\begin{equation}
L_{\Sig^\fxs,[1]}''=
-\int_{t_0}^t\left[\dot u\cdot T^0\right]_-^+\cdot \nu\:\mathrm{d}t'
=-\int_{t_0}^t\frac{\d}{\d t}\left[u\cdot T^0\right]_-^+\cdot \nu\:\mathrm{d}t'
=-\left[u\cdot T^0\right]_-^+\cdot \nu
=p^0\left[u\right]_-^+\cdot \nu=0.
\end{equation}
We proceed to investigate the second-order term. We employ \cite[Lemma 3.1]{dHHP:15}, which is based on the linear dynamical slip condition \eqref{eq:tractionIBC} and the specific form  of $T^{\PK1}$ in fluids \eqref{eq:TPK1fluid}. Therefore, the integrand of the second-order contribution can be rewritten in the following form:
\begin{eqnarray}
-\int_{\Sig^{\fxs}}\left[\dot u\cdot T^{\PK1} \right]_-^+\cdot \nu\:\dsurf
&=&-\int_{\Sig^{\fxs}}p^0\left([u]_-^+\cdot\snabla(\dot u\cdot\nu)+[\dot u]_-^+\cdot\snabla(u\cdot\nu)
-\left[u\cdot\snabla\nu\cdot\dot u\right]_-^+\right)\dsurf\nn\\
&=&-\frac{\d}{\d t}\int_{\Sig^{\fxs}}\left(p^0[u]_-^+\cdot\snabla(u\cdot\nu)
-\frac{1}{2}p^0\left[u\cdot\snabla\nu\cdot u\right]_-^+\right)\dsurf,\nn
\end{eqnarray}
where the second equality follows from the product rule and the symmetry of the Weingarten operator $\snabla\nu$.
Application of the surface divergence theorem in combination with the zero traction condition $p^0=0$ on $\d\Sig^\fxs\subseteq\d\earth$ yields 
\begin{eqnarray}
\int_{\Sig^\fxs}L_{\Sig^\fxs,[2]}''\dsurf
&=&-\int_{t_0}^t\int_{\Sig^\fxs}\left[\dot u\cdot T^{\PK1}\right]_-^+\cdot \nu\:\dsurf\:\mathrm{d}t'\nn\\
&=&-\int_{\Sig^{\fxs}}\left(p^0[u]_-^+\cdot\snabla(u\cdot\nu)
-\frac{1}{2}p^0\left[u\cdot\snabla\nu\cdot u\right]_-^+\right)\dsurf\nn \\
&=&\int_{\Sig^{\fxs}}\left((u\cdot\nu)\snabla\cdot(p^0[u]_-^+)
+\frac{1}{2}p^0\left[u\cdot\snabla\nu\cdot u\right]_-^+\right)\dsurf.
\end{eqnarray}
Finally, continuity of $p^0$ and $[u]_-^+\cdot\nu=0$ across $\Sig^\fxs$ allow to pull out the jump operator, thereby showing that the second-order term $\int_{\Sig^\fxs}L_{\Sig^\fxs,[2]}''\dsurf$ coincides with \eqref{eq:Lsurf2v3}.

\paragraph{Variant (B) - direct expansion}
We first apply the Piola transform \eqref{eq:TPK_Piola} $T^\PK=J T\cdot(\nabla\vphi)^{-T}$ and linearize each factor: 
\begin{equation}
T\approx_1 T^0+T^1,\qquad
J\approx_1 1+\nabla\cdot u
\qquad\text{and}\qquad 
(\nabla\vphi)^{-T}\approx_1 1_{3\times 3}-(\nabla u)^{T}.
\end{equation}
Consequently,
\begin{eqnarray}
T^\PK=J T\cdot(\nabla\vphi)^{-T}&\approx_1&(1+\nabla\cdot u)(T^0+T^1)\cdot( 1_{3\times 3}-(\nabla u)^{T})\nn\\
&\approx_1& T^0+(\nabla\cdot u)T^0+T^1-T^0\cdot(\nabla u)^{T}.
\end{eqnarray}
Thus, to second order and noticing $T^0\cdot(\nabla u)^{T}=(\nabla u\cdot T^0)^T$ due to symmetry of $T^0$,
\begin{eqnarray}
\dot u\cdot T^{\PK}\cdot\nu &\approx_2&
\dot u\cdot T^0\cdot\nu+(\nabla\cdot u)\dot u\cdot T^0\cdot\nu
+\dot u\cdot T^1\cdot\nu-\dot u\cdot T^0\cdot(\nabla u)^{T}\cdot\nu\\
&=&\dot u\cdot T^0\cdot\nu+(\nabla\cdot u)\dot u\cdot T^0\cdot\nu
+\dot u\cdot T^1 \cdot\nu-\nu\cdot(\nabla u\cdot T^0)\cdot\dot u.
\end{eqnarray}
Rewriting the second and the last term in terms of surface derivatives yields
\begin{multline*}
  (\nabla\cdot u)\dot u\cdot T^0\cdot\nu-\nu\cdot(\nabla u\cdot T^0)\cdot\dot u \\
  = (\snabla\cdot u)\dot u\cdot T^0\cdot\nu-\nu\cdot (\snabla u\cdot T^0)\cdot\dot u
	 + (\nu\cdot\nabla u\cdot \nu)\dot u\cdot T^0\cdot\nu - 
	   (\nu\cdot\nabla u\cdot\nu)\nu\cdot T^0\cdot\dot u\\
		= (\snabla\cdot u)\dot u\cdot T^0\cdot\nu-\nu\cdot (\snabla u\cdot T^0)\cdot\dot u,
\end{multline*}
since $T^0$ is symmetric. As we will consider the jump across the fluid solid boundary and the normal component of $u$ is continuous, we argue that we may approximate the material Cauchy stress perturbation $T^1$ by the surface derivative of $T^0$ in direction of $u$, that is, 
\begin{equation}\label{eq:T1approx}
T^1\approx_1 \snabla T^0\cdot u.
\end{equation}
Indeed, by \cite[Equation (3.16)]{DaTr:98}, $T^1 \approx_1 T^{s1} + \nabla T^0 \cdot u =  T^{s} - T^0 + \nabla T^0 \cdot u$, hence $[\dot u\cdot T^1 \cdot\nu]^+_- \approx_2 [\dot u\cdot T^{s} \cdot\nu]^+_- - [\dot u\cdot T^0 \cdot\nu]^+_- + [\dot u\cdot (\nabla T^0 \cdot u) \cdot\nu]^+_- = [\dot u\cdot (\snabla T^0 \cdot u) \cdot\nu]^+_-$, since $T^s$ and $T^0$ satisfy the normality condition \eqref{eq:normality} and $[u\cdot\nu]^+_- = 0$. 
Consequently, 
\begin{eqnarray}
\dot u\cdot T^{\PK}\cdot\nu &\approx_2&
   \dot u\cdot T^0\cdot\nu+(\snabla\cdot u)\dot u\cdot T^0\cdot\nu
+\dot u\cdot T^1 \cdot\nu-\nu\cdot (\snabla u\cdot T^0)\cdot\dot u\nn\\
&=&\dot u\cdot T^0\cdot\nu+(\snabla\cdot u)\dot u\cdot T^0\cdot\nu
+\dot u\cdot(\snabla T^0\cdot u)\cdot\nu-\nu\cdot (\snabla u\cdot T^0)\cdot\dot u.
\end{eqnarray}

Successive application of the product rule and symmetry of $T^0$ and
$\snabla\nu$ allows to write the sum of the second and the third term
in the form
\begin{eqnarray}
   \dot u \cdot ((\snabla\cdot u)T^0+(\snabla T^0\cdot u)) \cdot \nu
   &=&\dot u_i ((\sd_k u_k)T^0_{ij}+(\sd_k T^0_{ij}) u_k)) \nu_j
\nn\\
&=&\dot u_i\sd_k(T^0_{ij}u_k)\nu_j
\nn\\
&=&\sd_k(\dot u_iT^0_{ij}u_k\nu_j)-(\sd_k\dot u_i)T^0_{ij}u_k\nu_j-\dot u_iT^0_{ij}u_k(\sd_k\nu_j)
\nn\\
&=&\snabla\cdot ((\dot u\cdot T^0\cdot\nu)u)
-(T^0\cdot\nu)\cdot\snabla\dot u\cdot u-\dot u\cdot (T^0\cdot\snabla\nu)\cdot u
\nn\\
&=&\snabla\cdot ((\dot u\cdot T^0\cdot\nu)u)
-\nu\cdot(T^0\cdot\snabla\dot u)\cdot u-u\cdot (T^0\cdot\snabla\nu)\cdot \dot u.
\end{eqnarray}
By the surface divergence theorem and the zero traction boundary condition, $T^0\cdot\nu=0$ on $\d\Sig^\fxs\subseteq\d\earth$, the integral of the surface divergence term over $\Sig^\fxs$ vanishes. Thus, 
\begin{eqnarray}
\dot u\cdot T^{\PK}\cdot\nu &\approx_2&
\dot u\cdot T^0\cdot\nu
-u\cdot (T^0\cdot\snabla\nu)\cdot \dot u
-\nu\cdot(T^0\cdot\snabla\dot u)\cdot u
-\nu\cdot (\snabla u\cdot T^0)\cdot\dot u.\nn\\
&=&\frac{\d}{\d t}\left(u\cdot T^0\cdot\nu-\frac{1}{2}\:u\cdot (T^0\cdot\snabla\nu)\cdot u \right)
-\nu\cdot (\snabla u\cdot T^0\cdot\dot u+T^0\cdot\snabla \dot u\cdot u)
\end{eqnarray}
Considering the deviatoric initial stress $T^0_\mathrm{dev}$ \eqref{eq:T0dev} as an effect of first order (see \cite[p. 102]{DaTr:98}), we have
\begin{equation}\label{eq:devstresszero}
T^0\approx_0-p^01_{3\times 3}.
\end{equation} 
Therefore, up to second order, also the last term is a time derivative
\begin{equation}
-\nu\cdot (\snabla u\cdot T^0\cdot\dot u+T^0\cdot\snabla \dot u\cdot u)
\approx_2\nu\cdot p^0(\snabla u\cdot\dot u+\snabla \dot u\cdot u)=p^0\frac{\d}{\d t}(\nu\cdot\snabla u\cdot u)
\end{equation}
and with 
$u\cdot T^0\cdot\nu-\frac{1}{2}\:u\cdot (T^0\cdot\snabla\nu)\cdot u
\approx_2-p^0(u\cdot\nu-\frac{1}{2}\:u\cdot\snabla\nu\cdot u)$ 
we arrive at
\begin{equation}
\dot u\cdot T^{\PK}\cdot\nu \approx_2
-p^0\frac{\d}{\d t}\left(u\cdot\nu-\nu\cdot\snabla u\cdot u-\frac{1}{2}\:u\cdot \snabla\nu\cdot u\right).
\end{equation}
Consequently, on a fluid-solid interface, the up to second-order expansion of the surface Lagrangian \eqref{eq:Lfull_surf} is given by
\begin{eqnarray}
L_{\Sig^\fxs}''
&\approx_2& \left[p^0\left(u\cdot\nu-\nu\cdot\snabla u\cdot u-\frac{1}{2}\:u\cdot \snabla\nu\cdot u\right)\right]_-^+
\label{eq:Lsurf12}\\
&=&[p^0(u\cdot\nu)]_-^+
-\left[p^0\left(\nu\cdot\snabla u\cdot u+\frac{1}{2}\:u\cdot \snabla\nu\cdot u\right)\right]_-^+\nn \\[0.2cm]
&=&L_{\Sig^\fxs,[0]}''+L_{\Sig^\fxs,[1]}''+L_{\Sig^\fxs,[2]}''. \nn
\end{eqnarray}
We observe that there is no constant part: $L_{\Sig^\fxs,[0]}''=0$. The linear part vanishes to first order, due to continuity of $p^0$ \eqref{eq:IBC_FSp0} and  the first-order tangential slip condition $[u]_-^+\cdot\nu=0$ \eqref{eq:slipfirstorder}: 
$L_{\Sig^\fxs,[1]}'':=[p^0(u\cdot\nu)]_-^+=p^0[u]_-^+\cdot\nu\approx_1 0$. 
The second-order part $L_{\Sig^\fxs,[2]}''$ also coincides with \eqref{eq:Lsurf2}. 
      
As an additional result, \eqref{eq:Lsurf12} incorporates the linear plus quadratic approximation of the surface energy density on fluid-solid interfaces, since by \eqref{eq:Legendre_surf} $E_{\Sig^{\fxs}}''=-L_{\Sig^{\fxs}}''$,
\begin{eqnarray}\label{eq:Esurf12}
E_{\Sig^{\fxs}}''&\approx_2&
-p^0\left[u\cdot\nu-\nu\cdot\snabla u\cdot u-\frac{1}{2}\:u\cdot \snabla\nu\cdot u\right]_-^+\\
&=&E_{\Sig^\fxs,[0]}''+E_{\Sig^\fxs,[1]}''+E_{\Sig^\fxs,[2]}''.\nn
\end{eqnarray}
      
\paragraph{Variant (C) - derivation from energy considerations}
Energy conservation in the nonlinear setting was discussed in \ref{ssec:energybalance}.
In the linearized setting, as is shown in \cite[Appendix C]{dHHP:15}, energy conservation implies that 
the second-order surface energy must be equal to the work due to slip against initial stress on $\Sig^\fxs$, whereas the zero and first-order contribution  to the surface energy vanishes: 
\begin{eqnarray}
E_{\Sig^{\fxs},[0]}''&=&E_{\Sig^{\fxs},[1]}''=0,\nn\\
E_{\Sig^{\fxs},[2]}''&=&[u\cdot T^0]_-^+\cdot \nu. \label{eq:Esurf2}
\end{eqnarray}
With the second-order tangential slip condition 
$\left[u\cdot \nu -\nu\cdot\snabla u\cdot u-\frac{1}{2}\:u\cdot\snabla\nu\cdot u\right]_-^+\approx_20$ \eqref{eq:sslipsecondorder} we obtain
\begin{eqnarray}
\int_{\Sig^{\fxs}}E_{\Sig^{\fxs},[2]}''\:\dint S 
=-\int_{\Sig^{\fxs}}[u\cdot T^0]_-^+\cdot \nu\:\dint S 
&=& \int_{\Sig^{\fxs}}p^0[u]_-^+\cdot \nu\dint S \nn\\
&\approx_2&
\int_{\Sig^{\fxs}}p^0\left[\nu\cdot\snabla u\cdot u+\frac{1}{2}\:u\cdot \snabla\nu\cdot u\right]_-^+\dint S.\nn
\end{eqnarray}
Consequently, in agreement with the second-order terms of \eqref{eq:Esurf12}, we have
\begin{equation}
E_{\Sig^{\fxs},[2]}''=p^0\left[\nu\cdot\snabla u\cdot u+\frac{1}{2}\:u\cdot \snabla\nu\cdot u\right]_-^+
\end{equation}
and the corresponding second-order surface Lagrangian
$L_{\Sig^{\fxs},[2]}''=-E_{\Sig^{\fxs},[2]}''$ coincides with \eqref{eq:Lsurf2}. 

\subsection{Hydrostatic prestress}

In a {\bf hydrostatic equilibrium earth model}, prestress is given by
\eqref{eq:fluidprestress},
\begin{equation*}
   T^0 = -p^0 1_{3\times 3},
\end{equation*}
and the static equilibrium equation \eqref{eq:staticeq} reduces to
{\bf hydrostatic balance}
\begin{equation}\label{eq:hydrostaticeq}
\nabla p^0=-\rho^0\nabla(\Phi^0+\Psi^s).
\end{equation}
In a hydrostatic equilibrium earth, pressure gradients are parallel to
the gravitational plus centrifugal acceleration terms.  This also
follows by taking the cross product of \eqref{eq:hydrostaticeq} with
$\nabla p^0$, which results in $\nabla
p^0\times\nabla(\Phi^0+\Psi^s)=0$.  Let us consider regions where
$\rho^0$ is sufficiently smooth such that all terms in
\eqref{eq:hydrostaticeq} are at least $\cC^1$. This is the case, e.g.,
if $\rho^0$ is $\cC^{1,\alpha}$ for $0<\alpha < 1$, because solutions
$\Phi^0$ of Poisson equation then are $\cC^{3,\alpha}$ by elliptic
regularity on H\"older-Zygmund spaces \cite[Notes at the end in
  Section 8.6, referring to Corollary 8.4.7]{Hoermander:97}. The
properties of these spaces with respect to multiplication (see
\cite[Proposition 8.6.8]{Hoermander:97}) guarantee that $\nabla p^0$
is $\cC^{1,\alpha}$.

Taking the curl of \eqref{eq:hydrostaticeq} yields
$\nabla\rho^0\times\nabla(\Phi^0+\Psi^s)=0$.
Consequently, in a hydrostatic earth, 
\begin{equation}
\nabla \rho^0\: \parallel \: \nabla p^0\: \parallel \: \nabla(\Phi^0+\Psi^s).
\end{equation}
In other words, the level surfaces of the density $\rho^0$, the
pressure $p^0$ and the geopotential $\Phi^0+\Psi^s$ are parallel. Due
to the zero-traction condition \eqref{eq:NBC_T0} we have $p^0=0$ on
the exterior boundary $\d\earth$, that is, $\d\earth$ is a level set
for $p^0$. Therefore also $\rho^0$ and $\Phi^0+\Psi^s$ must be
constant on $\d\earth$.  Furthermore, in the hydrostatic model one
assumes that all interior boundaries $\Sig$ of a composite fluid-solid
earth are level surfaces of $\rho^0$, $p^0$ and $\Phi^0+\Psi^s$. On
$\cC^1$-discontinuity surfaces of $\rho^0$, which are contained in
$\Sig\cup\d\earth$, the $\pm$-traces, $(\nabla\rho^0)_\pm$, $(\nabla
p^0)_\pm$, and $(\nabla(\Phi^0+\Psi^s))_\pm$, must also be parallel
\cite[Lemma 2.1]{dHHP:15}. Consequently, the respective surface
gradients vanish \cite[(3.258)]{DaTr:98}:
\begin{equation}\label{eq:snabla0}
   \snabla \rho^0=0, \quad \snabla p^0=0,\quad \snabla(\Phi^0+\Psi^s)=0
\quad \text{on}\quad \Sig\cup\d\earth.
\end{equation}
The hydrostatic assumption severely restricts the possible equilibrium
earth models. In absence of rotation, hydrostatic equilibrium implies
spherical symmetry of the planet, that is, its material parameters are
functions of the radial coordinate only. Uniform rotation leads to
rotational ellipsoidal symmetry (up to first order in the
centrifugal-to-gravitational-force ratio, which is sufficiently
accurate in case of the relatively slowly rotating earth
\cite[p.\ 597]{DaTr:98}). Thus, in the hydrostatic equilibrium earth
model, all level sets and interior boundaries are oblate
ellipsoids. In particular, the fields $\rho^0$, $p^0$, and
$\Phi^0+\Psi^s$ depend only on the ellipsoidal radial distance from
the center of the earth but are constant in lateral (that is\ latitude
or longitude) directions. Consequently, the presence of any lateral
heterogeneity in $\rho^0$ requires a non-zero initial deviatoric
stress $T^0_\text{dev}$ for its support, so that the hydrostatic
assumption $T^0=-p^0 1_{3\times 3}$ does no longer hold. In practice
this discrepancy is often ignored and one assumes validity of the
hydrostatic equations even if $\rho^0$ is laterally
heterogeneous. However, as $T^0_{\text{dev}}$ is small compared to the
earth's rigidity, this so-called {\bf quasi-hydrostatic assumption} is
indeed justified \cite[p. 102]{DaTr:98}.

The advantage of the hydrostatic assumption is that it simplifies the
elastic-gravitational equations and interface conditions, especially
if the Lagrangian stress perturbation $T^1$ is employed. As by
\eqref{eq:TL1} $T^1$ is independent of initial pressure and
$T^0_\text{dev}=0$, we immediately get the constitutive law of
classical (linearized) elasticity, recalling the definition of the
linearized strain tensor $\varepsilon=\frac{1}{2}(\nabla u+(\nabla
u)^T)$:
\begin{equation}
T^1=\Gamma:\varepsilon.
\end{equation} 
The linearized dynamical boundary conditions simplify to \cite[(3.265)-(3.267)]{DaTr:98}
\begin{equation}\label{eq:BChyd}
T^1\cdot\nu=0\quad\text{on}\quad\d\earth,\qquad
\left[T^1\right]_-^+\cdot\nu=0\quad\text{on}\quad\Sig=\Sig^\sxs\cup\Sig^\fxs, 
\end{equation}
and 
the additional normality condition \begin{equation}\label{eq:BCnorhyd}
T^1\cdot\nu=(\nu\cdot T^1\cdot\nu)\nu\quad\text{on}\quad\Sig^\fxs.
\end{equation}
The identities follow from the relation \eqref{eq:TPK1T1} between
$T^{\PK 1}$ and $T^1$, $T^0=-p^01_{3\times 3}$, and the isobaric
condition $\snabla p^0=0$ \eqref{eq:snabla0}.  In consistence with the
variational derivation of the dynamical interface conditions as NIBC
\eqref{eq:Ham_IBCS}, the second-order surface energy vanishes if
prestress is hydrostatic. This immediately follows from the
representation \eqref{eq:Esurf2} of the surface energy, continuity of
$p^0$ \eqref{eq:IBC_FSp0}, and the first-order tangential slip
condition $[u]_-^+\cdot\nu=0$ \eqref{eq:slipfirstorder}:
\begin{equation}
E_{\Sig^{\fxs},[2],\text{hyd}}''=[u\cdot T^0]_-^+\cdot \nu
=-[p^0 u]_-^+\cdot \nu=-p^0[u]_-^+\cdot \nu=0. 
\end{equation}
Consequently, also the second-order surface Lagrangian vanishes,
\begin{equation}
L_{\Sig^{\fxs},[2],\text{hyd}}''=-E_{\Sig^{\fxs},[2],\text{hyd}}''=0.
\end{equation}
The second-order action integral in \eqref{eq:action_approxi} thus contains only a volume integral:
\begin{equation}
\action_{[2],\text{hyd}}''(u,\Phi^1)
:=\int_I\left(\int_{\RR^3}L_{[2],\text{hyd}}''\:\dvol\right)\mathrm{d}t
\end{equation}
with \cite[(3.275)]{DaTr:98}
\begin{eqnarray}
L_{[2],\text{hyd}}''&=&\frac{1}{2}\:\rho^0\dot u^2+\rho^0\dot u\cdot R_\Om\cdot u
-\frac{1}{2}\:\varepsilon:\Ga:\varepsilon
-\frac{1}{2}\rho^0\nabla (\Phi^0+\Psi^s)\cdot \left(\nabla u\cdot u-(\nabla\cdot u)u\right)
\nn\\
&& 
-\frac{1}{2}\:\rho^0u\cdot(\nabla\nabla(\Phi^0+\Psi^s))\cdot u
-\rho^0u\cdot\nabla\Phi^1-\frac{1}{8\pi G}(\nabla\Phi^1)^2-\rho^0u\cdot\nabla F^s.
\label{eq:L2hyd}
\end{eqnarray}
Indeed, by hydrostatic balance \eqref{eq:hydrostaticeq} twice the fourth term is equal to $\nabla p^0\cdot (\nabla u\cdot u-(\nabla\cdot u)u)$. The divergence theorem with $p^0=0$ on $\d\earth$ implies $-p^0\nabla\cdot (\nabla u\cdot u-(\nabla\cdot u)u)$, which by the product rule and the symmetry of $\nabla\nabla u$ reduces to $-p^0\left((\nabla u):(\nabla u)^T-(\nabla\cdot u)^2\right)$. As $\Ga:\varepsilon=T^1=\Upsilon^{T_0}:\nabla u$, where $\Upsilon^{T_0}$ and $\Lambda^{T_0}$ 
are related by  \eqref{eq:Upsilon}, and as $\Ga$ has all classical symmetries, one may show that the sum of the third and the fourth term coincides with $\frac{1}{2}\nabla u:\La^{T^0}:\nabla u$. This proves the equivalence of \eqref{eq:L2hyd} to the general form of the second-order Lagrangian density \eqref{eq:L2} in case of hydrostatic prestress.

\section{Variational principle}\label{sec:5} 

We have derived a second-order approximation of
the full action integral (\ref{eq:action_full}) by replacing the
configuration variables $(\vphi,\Phi^s)$ by equilibrium fields
$(\Id_{\RR^3},\Phi^0)$ plus perturbations $(u,\Phi^1)$.  Stationarity
of the approximated action integral
$\action_{[0]}''+\:\action_{[1]}''+\action_{[2]}''$
(\ref{eq:action_approx}) with respect to variations in $(u,\Phi^1)$ is
understood as stationarity of the actions of each order $0,1,2$
separately. Then the equilibrium equations coupling
$\rho^0$, $\Phi^0$, and $T^0$ will follow from the variation of
$\action_{[1]}''$, whereas the variation of $\action_{[2]}''$ will yield the
dynamical equations for $u$ and $\Phi^1$ (Section \ref{ssec:EL}).  However, before studying
stationarity we need to establish a suitable function space setting
for the different approximations of the action in which we can
calculate the Fr\'echet derivatives.

\subsection{Stationarity of the action for the composite fluid-solid earth model}
\label{ssec:Ham}

\subsubsection{Regularity}\label{ssec:Hamreg}

First we observe that in the approximated action
(\ref{eq:action_approxi}), we may instead of $\RR^3$ integrate over
the unbounded $\Lip$-composite domain $\earthfsc$
(\ref{eq:domain_sfc}), since its interior boundaries
$\Sig^\fxs\cup\d\earth$ have measure zero. This restriction allows us
to benefit from the higher regularity inside the interior regions of
the earth. Indeed the displacement associated to a piecewise
$\Lip$-continuous motion is just $L^\infty$ on $\earth$ but $H^1$ on
fluid and solid interior regions separately. Furthermore, spatial
derivatives of the motion have been defined as $L^{\infty}$-functions
on $\earth$ by simply neglecting the possible discontinuity surface
$\Sig^\fxs$, and not globally as distributional derivatives.  Further
note that since $\supp\rho^0=\supp u_\tim=\ovl\earth$ the only
non-zero contributions to the volume integral exterior to $\earth$ are
integrands associated to the squared gradient of the gravitational
potential, namely $-\frac{1}{8\pi G}\:(\nabla\Phi^0)^2$,
$-\frac{1}{4\pi G}\:(\nabla\Phi^0)\cdot(\nabla\Phi^1)$, and
$-\frac{1}{8\pi G}(\nabla\Phi^1)^2$ appearing in $L_{[0]}''$,
$L_{[1]}''$, and $L_{[2]}''$ respectively (see Subsection
\ref{ssec:vol_lag}).

Assumption~\ref{ass:lin_reg} below collects the regularity conditions
on the material parameters $c$, $\rho^0$, the reference fields $T^0$,
$\Phi^0$, the perturbations $u$, $\Phi^1$ of the linearized
setting, and of the force potential $F^s$, which will be sufficient to
define the approximated action integrals (\ref{eq:action_approxi}) and
study their Fr\'echet differentiability.

\begin{assumption}[{{\bf Regularity conditions for linearized setting}}] \label{ass:lin_reg}
We assume that
$$
\rho^0,\:T^0_{ij},\:c_{ijkl}\in L^{\infty}(\RR^3)\:\textrm{ and compactly supported by }\:\ovl\earth
$$
for $i,j,k,l\in\{1,2,3\}$, whence by (\ref{eq:Lambda}) this is also true for $\La^{T^0}$,
$$
\Phi^0\in Y^\infty(\RR^3),\quad \Phi^1\in H^1(\earthfsc\times I^\circ),\quad 
u\in H^1_{\Sig^\fxs}(\earthfsc\times I^\circ)^3
\:\textrm{ with }\:\supp(u_\tim)=\ovl\earth\quad(t\in I),
$$
and $F^s\in L^2(I^\circ,H^1(\RR^3))^3$. Furthermore we assume that
$\Sig^\fxs$ locally is a $\mathcal{C}^{1,1}$-surface.
\end{assumption}

We emphasize that these conditions in turn are deduced from the conditions on $\vphi$, $\Phi^s$, $\rho^0$, $U$, and $F^s$
(see Definition \ref{def:nonlin_reg}) which are sufficient to define the full action in the nonlinear setting:
More precisely, the regularity conditions on the reference quantities $\Phi^0$, $T^0$, the material parameters $\rho^0$, $c$,
and the force potential $F^s$ directly follow from the regularity conditions of Definition \ref{def:nonlin_reg}, namely
(\ref{eq:basic_Phi_reg}), (\ref{eq:basic_rho_reg}), (\ref{eq:basic_U_reg}), and (\ref{eq:basic_force_reg}) of $\Phi^s$, $\rho^s$, $U$,
and $F^s$ respectively. The requirement that $u$ is piecewise $H^1$ is a consequence of $\vphi$ being a piecewise
$\Lip$-regular motion of the composite fluid-solid earth, as we have seen in equation (\ref{eq:u_H1}).
For technical reasons we make a slightly stronger additional assumption concerning the regularity of $\Phi^1$
with respect to time. Indeed, the regularity (\ref{eq:basic_Phi_reg}) of $\Phi^s$, only implies continuity of $\Phi^1$ with respect
to time, see also equation (\ref{eq:Phi1_H1}). However the time derivative of $\Phi^1$ does neither appear in the action integral
nor other calculations below. Nonetheless, the higher temporal regularity for $\Phi^1$ will be justified from the existence of solutions \cite{dHHP:15}. 

The conditions summarized in Assumption~\ref{ass:lin_reg} allow us to study the variational problem for the approximated action
(\ref{eq:action_approx}) in a Sobolev space setting. To simplify the notation and draw the attention to the structure of
the action, we combine $u$ and $\Phi^1$ to a single variable and set
\begin{equation}\label{eq:yuPhi1}
y:=(u,\Phi^1)^T\in H
\quad \text{with} \quad H:=H^1_{\Sig^\fxs}(\earthfsc\times I^\circ)^3\times H^1(\earthfsc\times I^\circ),
\end{equation}
where $H^1_{\Sig^\fxs}$ was defined in \eqref{eq:H1Sig}.
The first derivative is given by
\begin{equation}
D y=\left(\arr{cc}{\nabla u & \dot u \\ (\nabla\Phi^1)^T & \dot\Phi^1}\right)\in L^2(\earthfsc\times I^\circ)^{4\times 4}.
\end{equation}

\subsubsection{Quadratic Lagrangians in a Sobolev-space setting}\label{sssec:varQuadratic}

The classical theory of calculus of variations for functionals of the
form $\func(y) = \int_V F(.,y,D y)\:\dvol$ is reviewed in Appendix
\ref{app:var}. Based on Sobolev space techniques we will show that
with an integrand $F$ of the action functional in the form of a
second-order polynomial in $y$ and $D y$, the first variation
$\de\func(y_0,h)$ may still be determined using formula
\eqref{eq:firstvar_scalc} for the weak EL, even with regularity lower
than the classical $\cC^2$-condition. This will allow to extend the
validity of the classical results to functions of lower regularity, in
particular, the EL (\ref{eq:Ham_EL}) and NBC (\ref{eq:Ham_NBC}), as
well as the NIBC (\ref{eq:Ham_IBC}) for composite domains and their
generalization (\ref{eq:Ham_IBCS}) for functionals including a surface
term.

We start with the volume action and consider the functional $\func$ in
(\ref{eq:func}) of $y \colon V \to \RR^m$ for $V$ open (and possibly
unbounded) with integrand of the form of a quadratic polynomial
$F:=F_{[0]}+F_{[1]}+F_{[2]}$, that is
\begin{equation}\label{eq:func012_quad}
\func(y):
=(\func_{[0]}+\func_{[1]}+\func_{[2]})(y)
\end{equation}
with $\displaystyle{\func_{[i]}(y):=\int_VF_{[i]}(x,y(x),Dy(x))\:\dvol(x)}$ for $i=0,1,2$ and
\begin{eqnarray}
F_{[0]}(x,y(x),D y(x))&:=&f_0(x)
\label{eq:func0_quad}\\
F_{[1]}(x,y(x),D y(x))&:=&\lara{f_1(x)|(y(x),D y(x))}
\label{eq:func1_quad}\\
F_{[2]}(x,y(x),D y(x))&:=&\lara{f_2^{0}(x)\cdot y(x)|y(x)}+\lara{f_2^{1}(x)\cdot y(x)|D y(x)}
\nn\\
&&+\lara{f_2^{2}(x)\cdot D y(x)|D y(x)}.
\label{eq:func2_quad}
\end{eqnarray}
Here, by abuse of notation, we suppress explicit switching between
vector and matrix notation and simply write $\lara{.|.}$ for inner
products in $\RR^p$ or $\RR^{q\times r}$, for $p,q,r\in\NN$. We will
also identify $\RR^{m\times n}$ with $\RR^{mn}$.

The constituents of the Lagrangian density in (\ref{eq:func0_quad}) - (\ref{eq:func2_quad}) correspond to \eqref{eq:L0} - \eqref{eq:L2}: $f_0=L_{[0]}''$ and $f_1$ as well as $f_2^{j}$ for $j=0,1,2$ are obtained from $L_{[1]}''$ and $L_{[2]}''$  respectively. Under the regularity conditions 
\begin{equation}\label{eq:Lagreg}
\text{$f_0\in L^1(V)$, $f_1\in L^{2}(V)^{m+ mn}$, 
$f_2^{0}\in L^{\infty}(V)^{m\times m}_\sym$,
$f_2^{1}\in L^{\infty}(V)^{(mn)\times m}$, $f_2^{2}\in L^{\infty}(V)^{(mn)\times (mn)}_\sym$},
\end{equation}
we easily verify
\begin{equation}
\func:H^1(V)^m\to\RR.
\end{equation}
Indeed, integrability of $F_{[i]}(.,y,D y)$ ($i=0,1,2$) on $V$ is a consequence of the inclusions
$H^1\sbs L^2$, $L^{\infty}\cdot L^2\sbs L^2$, $L^2\cdot L^2\sbs L^1$, and $L^{\infty}\cdot L^1\sbs L^1$.
However, application requires the condition
$$
f_2^{0}\in\bigcap_{1\leq p<\infty} L^p(V)^{m\times m}_\sym.
$$
Thus proving integrability of this term is more subtle, since the components of $f_2^{0}$ are in $L^p$
for $1\leq p<\infty$ but not for $p=\infty$. We use Lemma \ref{lem:L246H1} below to show that
$\lara{f_2^{0}y|y}$ is integrable at least on $V\sbs\RR^4$ or $\RR^3$. Based on the Sobolev embedding
theorem it gives a result on $L^p$-multipliers that map $H^1$ to $L^2$, if $V$ is open and satisfies the cone property
(see \cite[p.\ 65]{Adams:75}). Note that this is true if $V$ is a $\Lip$-domain or a $\Lip$-composite domain.

\begin{lemma}[{{\bf Products of $L^p$ and $H^1$-functions via Sobolev embeddings}}]\label{lem:L246H1}
Let $V\sbs\RR^n$ be open and have the cone property. Then $H^1(V)$ is
embedded in $L^p(V)$ with $\norm{y}{L^p(V)}\leq c\norm{y}{H^1(V)}$ for
a constant $c>0$ and $p\leq \frac{2n}{n-2}$.  In particular, the
estimate holds for $p\leq 4$ if $n=4$ and for $p\leq 6$ if
$n=3$. Moreover
\begin{enumerate}[label=(\roman*)]
\item  If $n=4$, $y\in H^1(V)$, and $f\in L^4(V)$, then $f\:y\in L^2(V)$ with
\begin{equation}\label{eq:L4H1}
\norm{fy}{L^2(V)}\leq c\norm{f}{L^4(V)}\norm{y}{H^1(V)}
\end{equation}
and $f\:y^2\in L^1(V)$ with $\norm{fy^2}{L^1(V)}\leq c\norm{f}{L^4(V)}\norm{y}{H^1(V)}^2$.
\item  If $n=3$, $y\in H^1(V)$, and $f\in L^6(V)\cap L^2(V)$, then $f\:y\in L^2(V)$ with
\begin{equation}\label{eq:L62H1}
\norm{fy}{L^2(V)}\leq c^{\frac{3}{4}}(\norm{f}{L^6(V)}^3\norm{f}{L^2(V)})^{\frac{1}{4}}\norm{y}{H^1(V)}
\end{equation}
and $f\:y^2\in L^1(V)$ with
$\norm{fy^2}{L^1(V)}\leq  c^{\frac{3}{4}}(\norm{f}{L^6(V)}^3\norm{f}{L^2(V)})^{\frac{1}{4}}\norm{y}{H^1(V)}^2$.
\end{enumerate}
\end{lemma}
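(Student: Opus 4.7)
The plan is to derive the main Sobolev embedding statement first, then obtain parts (i) and (ii) as combinations of the embedding with Hölder's inequality (for (i)) and a Hölder-plus-interpolation argument (for (ii)). Since $V$ has the cone property, the Sobolev embedding theorem as cited in \cite{Adams:75} (and used already in the proof of Lemma \ref{lem:poisson1}(iii)) applies to $W^{1,2}(V) = H^1(V)$ and yields $H^1(V) \hookrightarrow L^p(V)$ precisely for $1 \le p \le 2n/(n-2)$ (with $n \ge 3$), along with a continuous embedding constant $c = c(V,n,p) > 0$. Specializing $n=4$ gives the threshold $p = 4$, and $n=3$ gives $p = 6$, so that both embeddings $H^1(V) \hookrightarrow L^4(V)$ in the four-dimensional case and $H^1(V) \hookrightarrow L^6(V)$ (hence also $\hookrightarrow L^p(V)$ for every $2 \le p \le 6$) in the three-dimensional case are available.

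For part (i), I would apply Hölder's inequality with conjugate exponents $(4,4)$ and $1/2 = 1/4 + 1/4$ to estimate $\|f y\|_{L^2(V)} \le \|f\|_{L^4(V)} \|y\|_{L^4(V)}$, and then insert the Sobolev embedding $\|y\|_{L^4(V)} \le c\|y\|_{H^1(V)}$ to get \eqref{eq:L4H1}. For the product $f y^2$ I would factor $f y^2 = (fy) \cdot y$ and use $(2,2)$-Hölder to obtain $\|f y^2\|_{L^1(V)} \le \|f y\|_{L^2(V)} \|y\|_{L^2(V)}$, then invoke \eqref{eq:L4H1} and the trivial bound $\|y\|_{L^2(V)} \le \|y\|_{H^1(V)}$.

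For part (ii) the point is to trade one power of $y$ (to keep it controllable by $H^1$) against a balanced $L^p$-norm of $f$ built from $\|f\|_{L^6}$ and $\|f\|_{L^2}$. I would write $\|fy\|_{L^2(V)}^2 = \int_V |f|^2 |y|^2 \, dV$ and split $|f|^2 = |f|^{3/2} \cdot |f|^{1/2}$, then apply the three-factor Hölder inequality with exponents $(4,4,2)$, giving
\begin{equation*}
\|f y\|_{L^2(V)}^2 \le \bigl\| |f|^{3/2} \bigr\|_{L^4(V)} \bigl\| |f|^{1/2} \bigr\|_{L^4(V)} \bigl\| |y|^2 \bigr\|_{L^2(V)} = \|f\|_{L^6(V)}^{3/2} \|f\|_{L^2(V)}^{1/2} \|y\|_{L^4(V)}^2 .
\end{equation*}
The interpolation identity $\|y\|_{L^4(V)} \le \|y\|_{L^2(V)}^{1/4} \|y\|_{L^6(V)}^{3/4}$ (from $1/4 = (1/4)(1/2) + (3/4)(1/6)$) combined with the Sobolev embedding $\|y\|_{L^6(V)} \le c \|y\|_{H^1(V)}$ yields $\|y\|_{L^4(V)}^2 \le c^{3/2} \|y\|_{H^1(V)}^2$, and extracting a square root gives \eqref{eq:L62H1}. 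The bound on $fy^2$ then follows exactly as in part (i), by writing $f y^2 = (fy)\cdot y$, applying Hölder in $(2,2)$, and using \eqref{eq:L62H1} together with $\|y\|_{L^2(V)} \le \|y\|_{H^1(V)}$.

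The argument is essentially routine once the Sobolev embedding and the right Hölder splittings are in place, so no step is really a serious obstacle; the only point requiring some attention is the choice of interpolation exponents in part (ii), where picking the triple $(4,4,2)$ for $|f|^{3/2}$, $|f|^{1/2}$, $|y|^2$ is what makes the factor $(\|f\|_{L^6}^3 \|f\|_{L^2})^{1/4}$ emerge naturally and produces the correct constant $c^{3/4}$.
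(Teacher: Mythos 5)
Your proof is correct throughout, and all the numerology checks out: the interpolation exponent $1/4 = (1/4)(1/2) + (3/4)(1/6)$ is right, and the resulting constant $c^{3/4}(\|f\|_{L^6}^3\|f\|_{L^2})^{1/4}$ agrees exactly with \eqref{eq:L62H1}. Part (i) is handled identically to the paper (Cauchy--Schwarz after squaring, then the embedding $H^1 \hookrightarrow L^4$). In part (ii) you take a mildly different route: you split $|f|^2 = |f|^{3/2}\cdot|f|^{1/2}$ inside a single three-factor H\"older with exponents $(4,4,2)$ and then absorb $\|y\|_{L^4}$ via the Lyapunov interpolation $\|y\|_{L^4} \le \|y\|_{L^2}^{1/4}\|y\|_{L^6}^{3/4}$, whereas the paper proceeds symmetrically in $f$ and $g$, applying Cauchy--Schwarz twice -- first to $|f|^2|g|^2$, then to $|f|^3\cdot|f|$ and $|g|^3\cdot|g|$ -- and only then specializes $g=y$. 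Both arguments use nothing beyond H\"older and the Sobolev embedding and deliver the same bound; yours is arguably more transparent in isolating the role of $\|y\|_{L^4}$ and of the interpolation between $L^2$ and $L^6$, while the paper's two successive Cauchy--Schwarz steps are more elementary but leave the $L^6$/$L^2$ split of $f$ implicit until the final line.
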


\begin{proof}
The Sobolev embedding result is found for instance in \cite[Theorem
  5.4 (part I case A), p.\ 97]{Adams:75}.  For (i) note that for
$h,g\in L^4(V)$ we have $hg\in L^2(V)$ since by the Cauchy-Schwarz
inequality
\[
   \norm{h g}{L^2(V)}^2=\int_V|h|^2|g|^2
     \leq \norm{|h|^2}{L^2(V)}\norm{|g|^2}{L^2(V)}
               = \norm{h}{L^4(V)}^2\norm{g}{L^4(V)}^2 .
\]
Therefore, $f y\in L^2(V)$ and by Sobolev embedding of $H^1(V)$ in
$L^4(V)$ the inequality $\norm{fy}{L^2(V)}\leq
c\norm{f}{L^4(V)}\norm{y}{H^1(V)}$ holds.  Furthermore, using again
Cauchy-Schwarz and the inequality above, gives
\begin{multline*}
   \norm{fy^2}{L^1(V)}
   = \int |fy| |y|
   \leq \norm{fy}{L^2(V)} \norm{y}{L^2(V)}
\\
   \leq c \norm{f}{L^4(V)} \norm{y}{H^1(V)} \norm{y}{L^2(V)}
   \leq c \norm{f}{L^4(V)} \norm{y}{H^1(V)}^2 ,
\end{multline*}
completing the proof of (i).

In a similar way, (ii) is a consequence of Cauchy-Schwarz and using
the Sobolev embeddings $H^1\sbs L^2$ and $H^1\sbs L^6$ for
$n=3$. Indeed, if $f,g\in L^6(V)\cap L^2(V)$ then $fg\in L^2(V)$ since
\begin{multline*}
   \norm{fg}{L^2(V)}^2=\norm{|f|^2|g|^2}{L^1(V)}
   \leq \norm{|f|^2}{L^2(V)}\norm{|g|^2}{L^2(V)}
      = (\norm{|f|^3|f|}{L^1(V)}
         \norm{|g|^3|g|}{L^1(V)})^{\frac{1}{2}}
\\
   \leq (\norm{|f|^3}{L^2(V)}
         \norm{|f|}{L^2(V)}
         \norm{|g|^3}{L^2(V)}
         \norm{|g|}{L^2(V)})^{\frac{1}{2}}
\\
   = \norm{f}{L^6(V)}^{\frac{3}{2}}
        \norm{f}{L^2(V)}^{\frac{1}{2}}
        \norm{g}{L^6(V)}^{\frac{3}{2}}
        \norm{g}{L^2(V)}^{\frac{1}{2}}
\end{multline*}
by twice applying Cauchy-Schwarz. Combining this result with the
embedding inequalities $\norm{y}{L^2(V)}\leq \norm{y}{H^1(V)}$ and
$\norm{y}{L^6(V)}\leq c\norm{y}{H^1(V)}$ for $y\in H^1(V)$ yields the
assertion (ii).
\end{proof}

We are ready to prove Fr\'echet differentiability of the functional
(\ref{eq:func012_quad}).

\begin{proposition}[{{\bf Fr\'echet derivatives for volume integrals}}]
\label{prop:stat_Frechet}
Let $V\sbs\RR^n$ be open and $F_{[i]}$ ($i=0,1,2$) be given by \eqref{eq:func0_quad} to \eqref{eq:func2_quad}
for coefficients $f_0$, $f_1$, $f_2^0$, $f_2^1$, and $f_2^2$ with regularity \eqref{eq:Lagreg}.
Then the functionals
$\func_{[i]}$ $(i=0,1,2)$ in \eqref{eq:func012_quad}
are Fr\'echet-differentiable on $H^1(V)^m$ with derivatives
\begin{eqnarray}
D\func_{[0]}(y)(h)&=&0,\\[0.2cm]
D\func_{[1]}(y)(h)&=&\int_V\lara{f_1|(h,D h)}\:\dvol=\func_{[1]}(h),\\
D\func_{[2]}(y)(h)
&=&\int_V\Big(\lara{2f_2^{0}\cdot y+(f_2^{1})^T\cdot D y|h}+\lara{f_2^{1}\cdot y+2f_2^{2}\cdot D y|D h}\Big)\:\dvol
\end{eqnarray}
for $y,h\in H^1(V)^m$. Furthermore, instead of assuming boundedness of $f_2^{0}$,
the result also holds if $f_2^{0}\in L^4(V)^{m\times m}_\sym$ for $V\sbs\RR^4$ with cone property
or if $f_2^{0}\in L^2(V)^{m\times m}_\sym\cap L^6(V)^{m\times m}_\sym$ for $V\sbs\RR^3$ with cone property.
\end{proposition}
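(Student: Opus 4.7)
The plan is to verify Fr\'echet differentiability directly by computing the Taylor-type expansion of each $\func_{[i]}$ around $y$ and showing the remainder is of order $o(\norm{h}{H^1(V)^m})$. For $\func_{[0]}$ the integrand $f_0 \in L^1(V)$ does not depend on $y$ or $Dy$, so $\func_{[0]}(y+h) - \func_{[0]}(y) = 0$ identically and the derivative vanishes. For $\func_{[1]}$, observe that $y \mapsto \func_{[1]}(y)$ is already linear in $(y,Dy)$; applying Cauchy--Schwarz with $f_1 \in L^2(V)^{m+mn}$ yields $|\func_{[1]}(y)| \leq \norm{f_1}{L^2(V)} \norm{y}{H^1(V)^m}$, so $\func_{[1]}$ is a bounded linear functional on $H^1(V)^m$ and therefore coincides with its own Fr\'echet derivative.

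The main case is $\func_{[2]}$. Using the prescribed symmetries $f_2^{0} = (f_2^{0})^T$ and $f_2^{2} = (f_2^{2})^T$, a direct expansion of the three quadratic summands in $F_{[2]}(\cdot, y+h, D(y+h))$ separates into terms of degree $0$, $1$, and $2$ in $h$. The degree-zero contribution is $\func_{[2]}(y)$, the degree-one contribution equals precisely the integral claimed for $D\func_{[2]}(y)(h)$ (here the symmetry of $f_2^{0}$ produces the factor $2$ in front of $f_2^{0}\cdot y$, and analogously for $f_2^{2}$, while the mixed term splits into one piece paired with $h$ via transposition of $f_2^{1}$ and one paired with $Dh$), and the degree-two contribution is the remainder
\begin{equation*}
R(h) := \int_V \bigl( \lara{f_2^{0}\cdot h|h} + \lara{f_2^{1}\cdot h|Dh} + \lara{f_2^{2}\cdot Dh|Dh}\bigr)\,\dvol .
\end{equation*}
It suffices to show $|R(h)| \leq C\,\norm{h}{H^1(V)^m}^2$, since this is $o(\norm{h}{H^1(V)^m})$ as $h \to 0$ in $H^1(V)^m$. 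The same estimates also ensure well-definedness of $\func_{[2]}$ on $H^1(V)^m$.

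For the last two summands of $R(h)$, Cauchy--Schwarz combined with $f_2^{1},\,f_2^{2} \in L^\infty$ immediately yields the bounds $\norm{f_2^{2}}{L^\infty} \norm{Dh}{L^2}^2$ and $\norm{f_2^{1}}{L^\infty} \norm{h}{L^2}\norm{Dh}{L^2}$ respectively, each dominated by $C\,\norm{h}{H^1}^2$. The delicate term is $\int_V \lara{f_2^{0}\cdot h|h}\,\dvol$. Under the standing assumption $f_2^{0} \in L^\infty(V)$ it is bounded by $\norm{f_2^{0}}{L^\infty}\norm{h}{L^2}^2$. Under the weaker hypothesis $f_2^{0} \in L^4(V)$ with $V \sbs \RR^4$ satisfying the cone property (respectively $f_2^{0} \in L^2(V)\cap L^6(V)$ with $V \sbs \RR^3$), Lemma~\ref{lem:L246H1} delivers exactly the estimate $|\int_V \lara{f_2^{0}\cdot h|h}\,\dvol| \leq C\,\norm{h}{H^1(V)^m}^2$ with the appropriate mixed norm on $f_2^{0}$ as the constant.

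The only nontrivial step is the last bound on the $f_2^{0}$ term under the weakened integrability hypothesis: it is precisely there that the dimensional restriction $n \leq 4$ enters, via the Sobolev embeddings $H^1(V) \hookrightarrow L^4(V)$ for $n=4$ and $H^1(V) \hookrightarrow L^6(V) \cap L^2(V)$ for $n=3$ supplied by Lemma~\ref{lem:L246H1}. All other steps reduce to routine applications of Cauchy--Schwarz and the algebraic symmetries of $f_2^{0}$ and $f_2^{2}$ when rearranging bilinear forms.
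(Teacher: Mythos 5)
Your proof is correct and follows essentially the same route as the paper: a direct expansion of the quadratic form so that $\func_{[2]}(y+h)=\func_{[2]}(y)+D\func_{[2]}(y)(h)+\func_{[2]}(h)$, with the remainder $R(h)=\func_{[2]}(h)$ bounded by $C\norm{h}{H^1}^2$ via Cauchy--Schwarz for the $L^\infty$ coefficients and Lemma~\ref{lem:L246H1} for the weakened $f_2^0$ hypothesis. The only point worth making explicit (which the paper also elides) is that the same estimates, applied linearly in $h$ with Cauchy--Schwarz, show $h\mapsto D\func_{[2]}(y)(h)$ is a \emph{bounded} linear functional, not merely a linear one.
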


\begin{proof}
The functional $\func_{[0]}$ is constant and $\func_{[1]}$ linear and
bounded, because by the Cauchy-Schwarz inequality
$|\func_{[1]}(y)|\leq\norm{f_1^0}{L^2}\norm{y}{L^2}+\norm{f_1^1}{L^2}\norm{D
  y}{L^2}\leq c\norm{y}{H^1}$.  Thus their Fr\'echet derivatives are
well defined (see Example \ref{rem:frechet_linquad}).  For
$\func_{[2]}$ we have
\begin{eqnarray}
\func_{[2]}(y+h)&=&
\int_V\Big(\lara{f_2^{0}\cdot (y+h)|y+h}+\lara{f_2^{1}\cdot(y+h)|(D y+D h)}\nn\\
&&\qquad+\lara{f_2^{2}\cdot(D y+D h)|(D y+D h)}\Big)\:\dvol
=\func_{[2]}(y)+D\func_{[2]}(y)(h)+\func_{[2]}(h)\nn
\end{eqnarray}
with the Fr\'echet derivative $D\func_{[2]}(y)(h)$ as claimed above. Thus $D\func_{[2]}(y)$ is linear and continuity
follows from the estimate $|D\func_{[2]}(y)(h)|\leq c(y)\norm{h}{H^1(V)^m}^2$ obtained from the Cauchy-Schwarz inequality
(and possibly using the inequalities (\ref{eq:L4H1}) or (\ref{eq:L62H1}) of Lemma \ref{lem:L246H1} for estimating the term involving $f_2^{0}$).
Similarly, the remainder satisfies $|\func_{[2]}(h)|\leq c\norm{h}{H^1(V)^m}^2$
implying $|\func_{[2]}(h)|/\norm{h}{H^1(V)^m}\to 0$ as $h\to 0$ in
$H^1(V)^m$. Hence, we have also proven Fr\'echet-differentiability of $\func_{[2]}$.
\end{proof}

Thus the  Fr\'echet derivative of $\func$ at $y_0\in H^1(V)^m$ has the same form as formula (\ref{eq:firstvar_calc})
and, using the notation $F_0$ introduced in \eqref{eq:evalLag}, the weak EL read
\begin{eqnarray}
D\func(y_0)(h)&=&\int_V\Big(\sum_{i=1}^m(\d_{y_i}F_0)h_i+\sum_{i=1}^m\sum_{j=1}^n(\d_{\d_jy_i}F_0)\d_jh_i\Big)\:\dvol\nn\\
&=&\int_V\Big(\lara{\d_{y}F_0|h}_{\RR^m}+\lara{\d_{D y}F_0|D h}_{\RR^{m\times n}}\Big)\:\dvol=0
\label{eq:wELquad}
\end{eqnarray}
for all $h\in H^1(V)^m$. For $h\in\cD(V)^m$ we can write these equations as distributional duality
\begin{equation}
\lara{\d_{y}F_0,h}_{\RR^m}+\lara{\d_{D y}F_0,D h}_{\RR^{m\times n}}=0
\end{equation}
implying
\begin{equation}
\div(\d_{D y}F_0)-\d_{y}F_0=0
\end{equation}
in $\cD'(V)^m$. These are the distributional EL which formally
coincide with the classical EL (\ref{eq:Ham_EL}). The distributional
EL follow from the weak EL by the distributional analogue of
integration by parts with the divergence theorem, Lemma
\ref{lem:divthm}, for matrix-valued functions and distributions, Lemma
\ref{lem:div_dist}.

We discuss Fr\'echet differentiability of a quadratic functional of the form of a surface integral
\begin{equation}
\func_S(y):=\int_{S}F_{S}(x,y(x),\wtil{D}y(x))\:\dsurf(x)
\end{equation}
for a closed $\Lip$-hypersurface $S$ in the interior of $V$ and with the special surface Lagrangian
\begin{equation}\label{eq:funcsurf_quad}
F_{S}(x,y(x),\wtil{D} y(x)):=[\lara{g^{0}(x)\cdot y(x)|y(x)}+\lara{g^{1}(x)\cdot y(x)|\wtil{D} y(x)}]_-^+.
\end{equation}
Here $y$ has to be understood in the sense of the trace of functions in $H^1(V)^m$ in $H^{\frac{1}{2}}(S)^m$, which possibly consists
of different values $y^+$ and $y^-$ when $S$ is approached from its two different sides. The surface gradient operator
$\wtil{D}$ \eqref{eq:surfdiff} acts on the different traces of $y$.
If 
\begin{equation}
\text{$g^{0}\in L^{\infty}(S)^{m\times m}_{\textrm{sym}}$ and $g^{1}\in L^{\infty}(S)^{(mn)\times m}$},
\end{equation}
then clearly
\begin{equation}
\func_S:H^1(V)^m\to\RR.
\end{equation}
Indeed we have for the trace $y\in H^{\frac{1}{2}}(S)^m$,
$\wtil{D}y\in H^{-\frac{1}{2}}(S)^m$, and thus for the surface
integrand $F_{S}(.,y,\wtil{D} y)\in L^{\infty}\cdot
H^{\frac{1}{2}}\cdot H^{-\frac{1}{2}}\sbs L^1$ on $S$ (see
\cite{Wloka:82,KiOd:88}).  The same estimates and interchanging the
jump operator $[.]_-^+$ with derivatives allow to establish Fr\'echet
differentiability of $\func_S$ on $H^1(V)^m$.

However, by comparing with the surface Lagrangian
$L_{\Sig^\fxs,[2]}''$ \eqref{eq:Lsurf2} for a composite fluid-solid
earth model, it turns out that the nonzero components of $g^{0}$ and
$g^{1}$ are proportional to $p^0 \wtil{D}\nu$ and $p^0\nu$
respectively, where $p^0=-\nu\cdot T^0\cdot\nu$ for $T^0$ a bounded
two-tensor field, see (\ref{eq:surfpressure}).  If $S$ is a
$\Lip$-hypersurface then $\nu\in L^{\infty}(S)^n$ and $\wtil{D}\nu\in
H^{-1}(S)^{n\times n}$. Thus we must allow $g^{0}\in
H^{-1}(S)^{m\times m}_{\textrm{sym}}$ and $g^{1}\in
L^{\infty}(S)^{(mn)\times m}$.  This implies in particular that for
$y\in H^1(V)^m$ the term $\int_{S}\lara{g^{0}\cdot y|y}\dsurf$ is of
the form $\lara{H^{-1}\cdot H^{\frac{1}{2}},H^{\frac{1}{2}}}$ which
does not make sense as
$\lara{H^{-\frac{1}{2}},H^{\frac{1}{2}}}$-duality.

We may compensate the lower regularity of the coefficients in the
surface integral by requiring higher regularity from
$y$. Specifically, if we restrict $\func_S$ to $y\in H^2(V)^m$ whose
trace is in $H^{\frac{3}{2}}(S)^m\sbs L^2(S)^m$, both terms can be
interpreted by duality on $S$ as follows: $\int_{S}\lara{g^{0}\cdot
  y|y}\dsurf\in \lara{H^{-1}\cdot
  H^{\frac{3}{2}},H^{\frac{3}{2}}}\sbs\lara{H^{-1},H^{1}}$ and
$\int_{S}\lara{g^{1}\cdot y|\wtil{D} y}\dsurf\in \lara{L^{\infty}\cdot
  H^{\frac{3}{2}},H^{\frac{1}{2}}}\sbs\lara{L^2,L^2}$.

Alternatively, we can stay with $\func_S$ defined on $H^1(V)^m$ but
may assume that $S$ is more regular, which will improve the regularity
of the coefficients: If $S$ is a $\cC^{1,1}$-surface then
$\nu\in\Lip(S)^n$ hence $\wtil{D}\nu\in L^{\infty}(S)^{n\times n}$,
implying $g^{0}\in L^{\infty}(S)^{m\times m}_{\textrm{sym}}$ and
$g^{1}\in\Lip(S)^{(mn)\times m}$ with $m=n$.  Consequently,
$\int_{S}\lara{g^{0}\cdot y|y}\dsurf\in \lara{L^{\infty}\cdot
  H^{\frac{1}{2}},H^{\frac{1}{2}}}$ and $\int_{S}\lara{g^{1}\cdot
  y|\wtil{D} y}\dsurf\in \lara{\Lip\cdot H^{\frac{1}{2}},
  H^{-\frac{1}{2}}}$ are defined as
$\lara{H^{-\frac{1}{2}},H^{\frac{1}{2}}}$- and
$\lara{H^{\frac{1}{2}},H^{-\frac{1}{2}}}$-duality respectively. Again,
the same estimates allow to establish Fr\'echet differentiability on
$H^1(V)^m$ (and consequently also on $H^2\sbs H^1$).  However, since
we aim at dealing with functionals which are the sum of a volume and a
surface integral, we will prefer the case in which $S$ is $\cC^{1,1}$
which allows one to use the larger domain $H^1(V)^m$ for the combined
functional.

We summarize our findings about Fr\'echet differentiability of the surface integral $\func_S$ in the following proposition.

\begin{proposition}[{{\bf Fr\'echet derivatives for surface integrals}}]
\label{prop:stat_SFrechet}
Let $V\sbs\RR^n$ be open, $S$ a $\Lip$-hypersurface in $V$, and $F_S$
be given by \eqref{eq:funcsurf_quad} with $g^{0}\in
L^{\infty}(S)^{m\times m}_{\textrm{sym}}$ and $g^{1}\in
L^{\infty}(S)^{(mn)\times m}$.  Then $\mathcal{J}_S$ is
Fr\'echet-differentiable on $H^1(V)^m$ with derivative
\begin{equation}
D\func_S(y)(h)=\int_{S} [\lara{2g^{0}\cdot y + (g^{1})^T\cdot\wtil{D} y|h}
+\lara{g^{1}\cdot y|\wtil{D} h}]_-^+\:\dsurf
\end{equation}
for $y,h\in H^1(V)^m$. Furthermore, instead of assuming boundedness of $g^{0}$, $g^{1}$
the result also holds if $g^{0}\in L^{\infty}(S)^{m\times m}_{\textrm{sym}}$, $g^{1}\in\Lip(S)^{(mn)\times m}$
and $S$ a $\cC^{1,1}$-surface (or for $y\in H^2(V)^m$ if $g^{0}\in H^{-1}(S)^{m\times m}_{\textrm{sym}}$,
$g^{1}\in L^{\infty}(S)^{(mn)\times m}$).
\end{proposition}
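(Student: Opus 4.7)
The plan is to imitate the argument of Proposition \ref{prop:stat_Frechet} but with volume integrals replaced by surface integrals on the $\Lip$-hypersurface $S$. Since $F_S$ is purely quadratic in $(y,\snabla y)$, a direct expansion gives
\begin{equation*}
\func_S(y+h) = \func_S(y) + D\func_S(y)(h) + \func_S(h),
\end{equation*}
where the prospective derivative $D\func_S(y)(h)$ is precisely the expression stated in the proposition (obtained by differentiating the quadratic form, using symmetry of $g^0$ and interchanging the jump bracket $[\,.\,]_-^+$ with linear operations). Thus the task reduces to (a) verifying that $D\func_S(y)$ is a bounded linear map $H^1(V)^m\to\RR$ and (b) showing $|\func_S(h)|\leq C\norm{h}{H^1(V)^m}^2$, which immediately yields $\func_S(h) = o(\norm{h}{H^1(V)^m})$ as $h\to 0$.

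For the basic case of bounded $g^0, g^1$ on a $\Lip$-hypersurface $S$, both estimates follow from the trace inequality $\norm{y|_S}{H^{1/2}(S)^m}\leq C\norm{y}{H^1(V)^m}$ together with the boundedness (from Section \ref{sec:2}) of the surface gradient operator $\snabla\colon H^{1/2}(S)^m \to H^{-1/2}(S)^{m\times n}$ applied to each one-sided trace. The jumps $[\,.\,]_-^+$ act componentwise on these traces; by Cauchy--Schwarz each quadratic term of type $\lara{g^0 y|y}$ is estimated via the $L^\infty\cdot H^{1/2}\cdot H^{1/2}$ pairing $\lara{H^{-1/2},H^{1/2}}$, while the term $\lara{g^1 y|\snabla y}$ is estimated via the $L^\infty\cdot H^{1/2}\cdot H^{-1/2}$ pairing. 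Summing over the $\pm$-contributions, both $D\func_S(y)$ and $\func_S$ satisfy the required quadratic bounds in $\norm{.}{H^1(V)^m}$, giving Fr\'echet differentiability.

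For the two variants I would argue as follows. If $S$ is $\cC^{1,1}$, then $\nu\in\Lip(S)^n$ and consequently $\snabla\nu\in L^\infty(S)^{n\times n}$, so the increase in regularity of $S$ compensates the lower regularity allowed for $g^0$; the products $g^0\cdot y$ and $g^1\cdot y$ remain in $H^{1/2}(S)^m$ thanks to the fact that $H^{1/2}(S)$ is a module over $\Lip(S)$, and the previous duality argument goes through unchanged. In the alternative case $y\in H^2(V)^m$, the trace lies in $H^{3/2}(S)^m\sbs L^2(S)^m$, and the pairings become $\lara{H^{-1},H^1}$ (for $g^0\in H^{-1}$) and $\lara{L^2,L^2}$ (for $g^1\in L^\infty$), which once again are well defined and yield continuous quadratic estimates.

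The only delicate point is that $\snabla y$ of a function with only $H^1$-regularity in $V$ is an object in $H^{-1/2}(S)^{m\times n}$, not in $L^2(S)$, so one cannot freely move factors between the two arguments of the pairing. This is where the additional assumption $g^1\in\Lip$ in the $\cC^{1,1}$-surface variant, or the upgrade of $y$ to $H^2$ in the other variant, becomes essential to make the product $g^1\cdot y$ sufficiently regular to pair against $\snabla y \in H^{-1/2}$. Handling this cleanly, and checking that the jump bracket does not spoil the trace regularities at the two sides of $S$, is the principal bookkeeping obstacle; everything else is routine Cauchy--Schwarz with trace inequalities.
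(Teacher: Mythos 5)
Your proposal follows the same route as the paper: expand $\func_S(y+h)$ using the quadratic structure, read off the linear part as the candidate derivative, and establish boundedness and the remainder estimate via trace inequalities and $H^{\pm 1/2}(S)$-duality, with the two variants obtained by reallocating regularity between $g^0$, $g^1$, the trace of $y$, and the smoothness of $S$. The one inaccuracy worth correcting is in your $\cC^{1,1}$ variant: you assert that both $g^0\cdot y$ and $g^1\cdot y$ lie in $H^{1/2}(S)^m$ because $H^{1/2}(S)$ is a $\Lip$-module, but the proposition (and the intended application, where $g^0\propto p^0\snabla\nu$) keeps $g^0$ only in $L^{\infty}(S)$, not $\Lip(S)$, so $g^0\cdot y$ only lands in $L^2(S)\sbs H^{-1/2}(S)$ — which nevertheless is exactly what is needed to pair with $y\in H^{1/2}(S)$ in $\lara{H^{-1/2},H^{1/2}}$-duality, so the conclusion stands but your stated justification for that term should be changed.
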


Thus the corresponding weak surface EL at $y_0\in H^1(V)^m$ are of the form
\begin{eqnarray}
D\func_S(y_0)(h)&=&\int_{S}\Big(\sum_{i=1}^m(\d_{y_i}F_{S0}) h_i
+\sum_{i=1}^m\sum_{j=1}^n(\d_{\sd_jy_i}F_{S0})\sd_j h_i\Big)(x)\:\dsurf\nn\\
&=&\int_{S} (\lara{\d_{y}F_{S0}|h}_{\RR^m}+\lara{\d_{\wtil{D}y}F_{S0}|\wtil{D} h}_{\RR^{m\times n}})\:\dsurf=0
\label{eq:wELSquad}
\end{eqnarray}
for all $h\in H^1(V)^m$. By the surface version of Lemma \ref{lem:div_dist} and the fundamental lemma 
we deduce the associated distributional surface EL in $H^{-\frac{1}{2}}(S)^m$,
\begin{equation}
\label{eq:dELSquad}
\sdiv(\d_{\wtil{D}y}F_{S0})-\d_{y}F_{S0}=0.
\end{equation}
In total we have thus established Fr\'echet differentiability of a functional $\func:H^1(V)^m\to\RR$
comprising a volume and a surface integral with integrands which are quadratic polynomials in $y$ and $D y$ or $\wtil{D}y$.
Using the Sobolev versions of the divergence theorem (Lemma \ref{lem:divthm} or Lemma \ref{lem:divthmComposite})
we can establish the validity of the EL (\ref{eq:Ham_EL}) in $L^2(V)$ and the NBC (\ref{eq:Ham_NBC}) in
$H^{-\frac{1}{2}}(\d V)$. If 
$\d_{D y}F_0\in H_{\div}\left(\bigcup V_k\right)$, 
we get the NIBC (\ref{eq:Ham_IBC}) in
$H^{-\frac{1}{2}}(\Sig\setminus S)$, or, if in addition $\d_yF_{S0}\in H^{-\frac{1}{2}}(S)$ and $\d_{\wtil{D}y}F_{S0}\in H^{\frac{1}{2}}(S)$,
the NIBC (\ref{eq:Ham_IBCS}) in $H^{-\frac{1}{2}}(S)$.

These results extend to a $\Lip$-composite domain $V$ with interior boundary $\Sig$ (and if $m=n$), where stationarity only is required on 
\begin{equation}
H^1_S(V)^n=\{h\in H^1(V)^n:\:[h]_-^+\cdot\nu=0\:\text{ on }S\sbs\Sig\}
\end{equation} 
introduced in \eqref{eq:H1Sig}, provided that the corresponding normality condition \eqref{eq:normality} holds on $S$:

\begin{corollary}[{{\bf Fr\'echet differentiability, EL, NBC, and NIBC for $\Lip$-composite domains}}]\label{cor:Frechet}
Let $V$ be a $\Lip$-composite domain with interior boundary $\Sig$ such that $S\sbs\Sig$.
Then the results of Proposition \ref{prop:stat_Frechet} for $\func$ and of Proposition \ref{prop:stat_SFrechet} for $\func_S$ hold, if one considers the restricted functional $\func\colon H^1_S(V)^n\to\RR$ or $\func_S\colon H^1_S(V)^n\to\RR$. 
If 
\begin{equation}
\d_{D y}F_0\in H_{\div}(V)^{n\times n}
\end{equation}
then stationarity of $\func$ implies the strong EL \eqref{eq:Ham_EL} which hold in $L^2(V)^n$, the NBC 
\eqref{eq:Ham_NBC} valid in $H^{-\frac{1}{2}}(\d V)^n$,  as well as the  NIBC \eqref{eq:Ham_IBC} valid in $H^{-\frac{1}{2}}(\Sig\setminus S)^n$. 
Moreover, if in addition $\d_{D y}F_0$ 
satisfies the normality condition \eqref{eq:normality} on $S$, stationarity of $\func$ implies the  NIBC \eqref{eq:Ham_IBC} valid in $H^{-\frac{1}{2}}(S)^n$. Stationarity of the combined functional 
\begin{equation}
\func+\func_S\colon H^1_S(V)^n\to\RR
\end{equation} 
leads to the NIBC \eqref{eq:Ham_IBCS} in $H^{-\frac{1}{2}}(S)^n$, provided that
$[\d_{D y} F_0]_-^+\cdot \nu+\sdiv(\d_{\wtil{D}y}F_{S0})-\d_{y}F_{S0}\in H^{-\frac{1}{2}}(S)^n$
and satisfies the normality condition \eqref{eq:normality} on $S$.
\end{corollary}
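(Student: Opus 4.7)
The plan is to reduce the statement to the propositions already established on $H^1(V)^n$ and then to derive the strong form of the equations by combining distributional integration by parts (Lemma~\ref{lem:div_dist}) with the variants of the divergence theorem for composite domains (Lemma~\ref{lem:divthmComposite}) and for surfaces (Lemma~\ref{lem:surfdivthm}). First, I would observe that $H^1_S(V)^n$ is a closed linear subspace of $H^1(V)^n$, since the constraint $[h]_-^+\cdot\nu = 0$ is a continuous linear functional on $H^1(V)^n$ with values in $H^{1/2}(S)$. Consequently, Fr\'echet differentiability of $\func$ and $\func_S$ passes directly to the restriction, with derivatives obtained by restricting the derivatives given in Propositions~\ref{prop:stat_Frechet} and \ref{prop:stat_SFrechet}. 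In particular, stationarity on the restricted space yields the weak identity \eqref{eq:wELquad} for all $h \in H^1_S(V)^n$.

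Second, I would extract the distributional EL by testing against $h \in \cD(V)^n$, which are trivially in $H^1_S(V)^n$. Lemma~\ref{lem:div_dist} then converts \eqref{eq:wELquad} into the identity $\div(\d_{Dy}F_0) - \d_y F_0 = 0$ in $\cD'(V)^n$. Under the additional assumption $\d_{Dy}F_0 \in H_{\div}(V)^{n\times n}$, both terms lie in $L^2(V)^n$, upgrading this to an $L^2$-identity; this is precisely \eqref{eq:Ham_EL}. To reach the NBC and the NIBC on $\Sig\setminus S$, I would now substitute the EL back into \eqref{eq:wELquad} for general $h \in H^1_S(V)^n$, localizing by cutoff functions, and apply Lemma~\ref{lem:divthmComposite}. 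For $h$ supported away from $\Sig$, the only surviving term is $\int_{\d\ovl{V}} h\cdot(\d_{Dy}F_0)\cdot\nu\,\dsurf$, and as the admissible traces range over a dense subset of $H^{1/2}(\d V)^n$ this forces \eqref{eq:Ham_NBC} in $H^{-1/2}(\d V)^n$. For $h$ supported near a point of $\Sig\setminus S$, the trace constraint is void there, the same argument on the interior surface integral yields \eqref{eq:Ham_IBC} in $H^{-1/2}(\Sig\setminus S)^n$.

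The main obstacle, and the place where the normality condition enters decisively, is the NIBC on $S$ itself: here the admissible test functions are constrained by $[h]_-^+\cdot\nu=0$, so one cannot directly deduce a pointwise jump condition from integration by parts as in the previous step. The remedy is to apply the second half of Lemma~\ref{lem:divthmComposite}: if $\d_{Dy}F_0$ satisfies \eqref{eq:normality} on the $-$-side of $S$, then the surface contribution takes the form $\int_S h^+ \cdot [\d_{Dy}F_0]_-^+\cdot\nu\,\dsurf$. Given any $\psi \in H^{1/2}(S)^n$, one can construct $h \in H^1_S(V)^n$ with traces $h^+ = h^- = \psi$ (for example via a linear extension operator combined with symmetric reflection across $S$), so that $h^+$ ranges freely over $H^{1/2}(S)^n$. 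This enforces $[\d_{Dy}F_0]_-^+\cdot\nu = 0$ in $H^{-1/2}(S)^n$, establishing \eqref{eq:Ham_IBC} on $S$.

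Finally, to obtain the surface Euler--Lagrange equation \eqref{eq:Ham_IBCS} from stationarity of $\func + \func_S$, I would add the surface contribution given by Proposition~\ref{prop:stat_SFrechet}, namely $\int_S[\,\d_y F_{S0}\cdot h + \d_{\snabla y}F_{S0}:\snabla h\,]_-^+\,\dsurf$, to the volume contribution processed as above. On the $S$-integral I would apply the distributional integration-by-parts formula on surfaces (the surface version of Lemma~\ref{lem:div_dist}), using the surface divergence theorem (Lemma~\ref{lem:surfdivthm}) together with $p^0 = 0$ on $\d S \sbs \d\earth$ to discard the line integral on $\d S$. Summing with the volume side and invoking the normality of $\d_{Dy}F_0$ as above, the weak identity becomes $\int_S h^+ \cdot\bigl(\d_y F_{S0} - \sdiv(\d_{\snabla y}F_{S0}) - [\d_{Dy}F_0]_-^+\cdot\nu\bigr)\,\dsurf = 0$ for all $h^+ \in H^{1/2}(S)^n$, under the stated regularity hypothesis on the bracketed expression. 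The fundamental lemma then yields \eqref{eq:Ham_IBCS} in $H^{-1/2}(S)^n$, completing the proof.
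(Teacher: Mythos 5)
Your proposal is correct and follows essentially the same route as the paper: restriction of the Fr\'echet derivatives to the closed subspace $H^1_S(V)^n$, then the normality variant of Lemma~\ref{lem:divthmComposite} and (for the surface term) the surface divergence theorem, to read off the strong EL, NBC, and NIBC. Your explicit construction of test functions with $h^+ = h^-$, so that the trace on $S$ ranges freely, is precisely what makes the paper's terse appeal to ``the last part of Lemma~\ref{lem:divthmComposite}'' operational.
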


\begin{proof}
Restricting $\func$ to $H^1_S\sbs H^1$ \eqref{eq:H1Sig} formally does
not alter its derivative, so the Fr\'echet derivatives (the weak EL)
take the same form.  The validity of the EL, NBC, and NIBC in their
strong form follow from the last part of Lemma
\ref{lem:divthmComposite}.
\end{proof}

\subsubsection{Fr\'echet derivatives}\label{sssec:Frechet}

The Lagrangians $L_{[i]}''$ in \eqref{eq:L0}-\eqref{eq:L2} and
$L_{\Sig^\fxs,[i]}''$ in \eqref{eq:Lsurf01}-\eqref{eq:Lsurf2} have the
same structure and regularity of coefficients as the functionals with
quadratic Lagrangians $F_{[i]}$ discussed in \ref{sssec:varQuadratic}
(except of terms arising from the force potential, but which are of
lower order). Consequently, with $V=\earthfsc\times I^\circ$ and
$S=\Sig^\fxs\times I^\circ\sbs\Sig\times I^\circ$ and the regularities
collected in Assumption \ref{ass:lin_reg}, we have integrability of
the integrals defining $\action_{[0]}''$, $\action_{[1]}''$, and
$\action_{[2]}''$ in (\ref{eq:action_approxi}) on $\earthfsc\times
I^\circ$. Fr\'echet differentiability of the action integrals on
$H^1(\earthfsc\times I^\circ)^4$ is established by Propositions
\ref{prop:stat_Frechet} and \ref{prop:stat_SFrechet}. By Corollary
\ref{cor:Frechet} the result also holds on
$H=H^1_{\Sig^\fxs}(\earthfsc\times I^\circ)^3\times
H^1(\earthfsc\times I^\circ)$ \eqref{eq:yuPhi1}.  Hence, for $y, h \in
H$ the Fr\'echet derivatives are given by the formulas
\begin{eqnarray}
&&D\action_{[0]}''(y)(h)=0,\\
&&D\action_{[1]}''(y)(h)=\int_I\int_{\earthfsci}\Big(\lara{\d_yL_{[1]}''|h}_{\RR^4}
+\lara{\d_{D y}L_{[1]}''|D h}_{\RR^{4\times 4}}\Big)\:\dvol\:\mathrm{d}t,
\label{eq:Frechet_action1}\\
&&\nn\\
&&D\action_{[2]}''(y)(h)=\int_I\int_{\earthfsci}\Big(\lara{\d_yL_{[2]}''|h}_{\RR^4}
+\lara{\d_{D y}L_{[2]}''|D h}_{\RR^{4\times 4}}\Big)\:\dvol\:\mathrm{d}t\nn\\
&&\hspace{2cm}
\quad +\int_I\int_{\Sig^{\fxs}} \Big(\lara{\d_y L_{\Sig^\fxs,[2]}''|h}_{\RR^4}
+\lara{\d_{\wtil{D} y} L_{\Sig^\fxs,[2]}''|\wtil{D}h}_{\RR^{4\times 4}}\Big) \dsurf\:\mathrm{d}t.
\label{eq:Frechet_action2}
\end{eqnarray}
The derivatives of the Lagrangians at $y=(u,\Phi^1)^T$ read
\begin{eqnarray}
\d_yL_{[1]}''&=&
\left(\arr{c}{-\rho^0\nabla(\Phi^0+\Psi^s)\\ -\rho^0}\right), \qquad
\d_{D y}L_{[1]}''
=\left(\arr{cc}{-T^0 & \rho^0R_\Om\cdot x\\ -\frac{1}{4\pi G}(\nabla\Phi^0)^T & 0}\right),\\
&&\nn\\
\d_yL_{[2]}''
&=&\left(\arr{c}{-\rho^0\nabla F^s-\rho^0\nabla\nabla(\Phi^0+\Psi^s)\cdot u+\rho^0R_\Om^T\cdot\dot u-\rho^0\nabla\Phi^1 \\ 0}\right),\\
\d_{D y}L_{[2]}''&=&
\left(\arr{cc}{-\La^{T^0}:\nabla u & \rho^0R_\Om\cdot u+\rho^0\dot u\\ (-\rho^0u-\frac{1}{4\pi G}\nabla\Phi^1)^T& 0}\right),\\
\d_yL_{\Sig^\fxs,[2]}''
&=&\left(\arr{c}{-p^0 [\snabla\nu\cdot u+\nu\cdot\snabla u]_-^+\\ 0}\right)
\quad\textrm{and}\quad
\d_{\wtil{D} y}L_{\Sig^\fxs,[2]}''
=\left(\arr{cc}{-p^0\:[\nu u]_-^+ & 0_{3\times 1}\\ 0_{1\times 3} & 0}\right) .
\label{eq:surfcoeff}
\end{eqnarray}
According to Hamilton's principle the system of dynamical equations
for $y\in H$ follow from the stationarity of the regional action
integrals, expressed by the conditions
\begin{equation}\label{eq:stationarity}
D\action_{[1]}''(y)(h)=0\qquad\textrm{and}\qquad D\action_{[2]}''(y)(h)=0
\end{equation}
for all $h\in H$. These equations are the linearized Euler-Lagrange
equations for the composite fluid-solid earth model in their most
general weak form, coinciding with the principle of virtual work
\eqref{eq:Ham_virtual}.

\subsection{Euler-Lagrange equations}\label{ssec:EL}

\subsubsection{Spatially weak formulation}\label{sssec:ELweak}
 
We separate space and time components of derivatives and dot products
and study stationarity on $\cD'(I^\circ, H^1(\earthfsc)^4)$ by
inserting a test function $h(x,t)=z(x)\psi(t)$ for $(x,t)\in
\earthfsc\times I^\circ$, $z\in H^1(\earthfsc)^4$, and
$\psi\in\cD(I^\circ)$.  The (temporally distributional) spatially weak
EL for $\action_{[1]}''$, $\action_{[2]}''$ in $\cD'(I^\circ)$ which
must hold for all $z\in H^1(\earthfsc)^4$ are then given by
\begin{equation}
0=\int_{\earthfsci}\Big(\lara{\d_yL_{[1]}''|z}_{\RR^4}+\lara{\d_{\nabla y}L_{[1]}''|\nabla z}_{\RR^{4\times 3}}\Big)\dvol
\label{eq:wELintLag1}
\end{equation}
and
\begin{eqnarray}
0&=&\frac{d}{dt}\Big(\int_{\earthfsci}\lara{\d_{\dot y}L_{[2]}''|z}_{\RR^{4}}\:\dvol
+\int_{\Sig^{\fxs}}\lara{\d_{\dot y}L_{\Sig^\fxs,[2]}''|z}_{\RR^{4}}\:\dsurf\Big)\nn\\
&&-\int_{\earthfsci}\Big(\lara{\d_{y}L_{[2]}''|z}_{\RR^4}+\lara{\d_{\nabla y}L_{[2]}''|\nabla z}_{\RR^{4\times 3}}\Big)\dvol\nn\\
&&-\int_{\Sig^{\fxs}} \Big(\lara{\d_{y}L_{\Sig^\fxs,[2]}'' |z}_{\RR^4}
+\lara{\d_{\snabla y}L_{\Sig^\fxs,[2]}''|\snabla  z}_{\RR^{4\times 3}}\Big) \dsurf ,
\label{eq:wELintLag2}
\end{eqnarray}
respectively. We further split the spatial test function $z$, in
accordance with the components $u$ and $\Phi^1$ of $y$, into
\begin{equation}
z=(v,w). 
\end{equation}
Then the spatially weak equilibrium and dynamical EL (\ref{eq:wELintLag1}) and (\ref{eq:wELintLag2}) take the form
\begin{equation}
0=\int_{\earthfsci}\Big(-T^0:\nabla v-\rho^0\nabla(\Phi^0+\Psi^s)\cdot v-\frac{1}{4\pi G}\nabla\Phi^0\cdot\nabla w-\rho^0w\Big)\dvol
\label{eq:wELint1}
\end{equation}
and
\begin{eqnarray}
0&=&\frac{d}{dt}\Big(\int_{\earthfsci}\rho^0\dot u\cdot v\:\dvol\Big)
+\int_{\earthfsci}\rho^0\Big(2\Om\times \dot u+\nabla\nabla(\Phi^0+\Psi^s)\cdot u+\nabla\Phi^1+\nabla F^s\Big)\cdot v\:\dvol\nn\\
&&+\int_{\earthfsci}(\La^{T^0}:\nabla u):\nabla v\:\dvol+\int_{\earthfsci}(\rho^0u+\frac{1}{4\pi G}\nabla\Phi^1)\cdot\nabla w\:\dvol\nn\\
&&+\int_{\Sig^{\fxs}}p^0\Big[u\cdot (\snabla \nu)\cdot v
+\nu\cdot(\snabla  u)\cdot v
+\nu\cdot(\snabla  v)\cdot u\Big]_-^+\dsurf
\label{eq:wELint2}
\end{eqnarray}
$$\forall \: v\in H^1_{\Sig^\fxs}(\earthfsc)^3,\quad w\in H^1(\earthfsc).$$ 
To arrive at the surface integrand we have used the symmetry of $\snabla \nu$ and continuity of $p^0$ across $\Sig^\fxs$ (\ref{eq:IBC_FSp0}).
The weak dynamical EL (\ref{eq:wELint2}) are in accordance with the weak form given in
\cite[eq.\ (36)]{Valette:86}. 
Finally, we emphasize again that the spatially weak equilibrium equations (\ref{eq:wELint1}) follow from $D\action_{[1]}''(u,\Phi^1)=0$ and the
spatially weak dynamical equations (\ref{eq:wELint2}) follow from $D\action_{[2]}''(u,\Phi^1)=0$ on $\cD'(I^\circ, H^1(\earthfsc)^4)$.

We observe that including already at this stage the fluid-solid version \eqref{eq:tractionIBC}  of the linear dynamical slip condition \eqref{eq:linCauchyIBC} in Section \ref{444} would allow to apply \cite[Lemma 3.1]{dHHP:15} and to write the surface integrals exactly as in \cite[Equation (4.1)]{dHHP:15}, that is, 
\begin{multline*}
  \int_{\Sig^{\fxs}}p^0\Big[u\cdot (\snabla \nu)\cdot v
+\nu\cdot(\snabla  u)\cdot v
+\nu\cdot(\snabla  v)\cdot u\Big]_-^+\dsurf =\\
  \int_{\Sig^{\fxs}}p^0 \Big( - [u\cdot (\snabla \nu)\cdot v]^+_-
+  [u]^+_- \cdot \snabla  (\nu\cdot v)
+ [v]^+_- \cdot \snabla  (\nu\cdot u) \Big) \dsurf.
\end{multline*}
We note that the derivation of \eqref{eq:tractionIBC} in Section \ref{sec:5} is independent of the above observation.

\begin{remark}[{{\bf Geometric formulation}}]
We recall the geometric invariant form
\eqref{eq:Lagrangian_uncon_geom} of the Lagrangian density in the
nonlinear theory, with similar adaptations for the linearized theory
(with $g$ denoting the Riemannian metric in the reference
configuration). In particular, the spatially weak EL
\eqref{eq:wELint2} in the setting of Riemannian manifolds will read
\begin{eqnarray}
0&=&\frac{d}{dt}\Big(\int_{\earthfsci} g(\rho^0\dot u,
v)\:\dvol\Big)\nn\\ && +\int_{\earthfsci} \left( g(\rho^0
({\operatorname{Cor}}(\dot u) +\nabla\Phi^1+\nabla F^s),v) +
\operatorname{Hess}(\Phi^0+\Psi^s)(u,v) \right)
\:\dvol\nn\\ &&+\int_{\earthfsci}(\La^{T^0}:\nabla u):\nabla
v\:\dvol+\int_{\earthfsci} g(\rho^0u+\frac{1}{4\pi
  G}\nabla\Phi^1,\nabla
w)\:\dvol\nn\\ &&+\int_{\Sig^{\fxs}}p^0\Big[\snabla \nu (u,v) +
  \snabla u (\nu,v) + \snabla v (\nu,u)\Big]_-^+\dsurf,
\end{eqnarray}
where $\dvol$ and $\dsurf$ denote the Riemannian volume and surface
forms (see Remark \ref{geom1}), $\nabla$ and $\snabla$ are the metric
gradients and induced metric surface gradients (that is, $\snabla$ is
the gradient with respect to the metric induced on the surface as in
Remark \ref{geom1}; a coordinate expression can be read off from
Appendix \ref{app:surf}), $\operatorname{Cor}(\dot u)$ denotes
Coriolis acceleration (\cite[Box 3.1 in Chapter 2]{MaHu:83rep}),
$\operatorname{Hess}$ denotes the Hessian, that is, the second
covariant derivative, of a scalar function (\cite[Chapter 3,
  Definition 48]{ONeill:83}), $A:B$ denoting contraction of tensors
$A$ and $B$ (natural in case of contra- and covariant indices -- as
with $(\Lambda^{T^0} : \nabla u)_i^j = (\La^{T^0})_{i k}^{j l} \, \d_l
u^k$ -- and involving metric coefficients otherwise \cite[Chapter 2,
  pages 40-42, and Chapter 3, pages 81-84]{ONeill:83}), the surface
derivatives $\snabla \nu$, $\snabla v$, $\snabla u$ act as
two-tensors, the test function $w$ belongs to the Riemannian version
of $H^1(\earthfsc)$ (cf.\ \cite{Hebey:96}), and the test vector field
$v$ is an $H^1$ section of the tangent bundle $T \earthfsc$ with
additional continuity in directions normal to $\Sig^{\fxs}$. As for
the geometrically invariant tensorial definition of $\La^{T^0}$, we
note that Equation \eqref{eq:elastenergyapprox} implies $\La^{T^0} =
\rho^0 \frac{\d^2 U}{\d (\nabla u)^2}$, that is, in coordinates
$(\La^{T^0})_{i k}^{j l} = \rho^0 \frac{\d^2 U}{\d (\d_j u^i) \d (\d_l
  u^k)}$ (in analogy to \cite[Chapter 3, Proposition
  4.4(b)]{MaHu:83rep}).
\end{remark}

\subsubsection{Strong formulation}\label{sssec:ELstrong}

Clearly, if the components of $v,w$ are in $\cD(\earthfsc)$, the
surface term vanishes and the spatially weak EL
(\ref{eq:wELint1}), (\ref{eq:wELint2}) reduce to the strong, that is,
distributional EL (\ref{eq:staticeq}), (\ref{eq:Phi0_Poisson}) and
(\ref{eq:eqm}), (\ref{eq:Phi1_Poisson}). The strong form of the EL is
obtained by applying the distributional integration by parts formula
\eqref{eq:IBC_calc_dist} to the general weak EL with
$h\in\cD(\earthfsc\times I^\circ)^4$. The component of the equations
$D\action_{[1]}''(u,\Phi^1)=0$ that corresponds to variations with
respect to $u$ is the static equilibrium equation
\begin{equation}\label{eq:staticeq}
\rho^0\nabla(\Phi^0+\Psi^s)-\nabla\cdot T^0=0
\end{equation}
and variation with respect to $\Phi^1$ yields equilibrium Poisson's equation
\begin{equation}\label{eq:Phi0_Poisson}
\triangle\Phi^0=4\pi G\rho^0.
\end{equation}
The dynamical equations $D\action_{[2]}''(u,\Phi^1)=0$ consist of the
linear equation of motion
\begin{equation}\label{eq:eqm}
\rho^0\left(\ddot u+2\Om\times\dot u\:+\:\nabla\nabla(\Phi^0+\Psi^s)\cdot u+\nabla\Phi^1\right)
-\nabla\cdot(\La^{T^0}:\nabla u)=\rho^0\nabla F^s
\end{equation}
when varying with respect to $u$. Here we have inserted
$R_\Om\cdot\dot u=\Om\times\dot u$. Stationarity with respect to
variations of $\Phi^1$ implies Poisson's equation for $\Phi^1$
\begin{equation}\label{eq:Phi1_Poisson}
\triangle\Phi^1=-4\pi G\nabla\cdot(\rho^0u).
\end{equation}
Alternatively, the system (\ref{eq:staticeq}) to (\ref{eq:Phi1_Poisson}) can be written in index notation $(i=1,2,3)$:
\begin{eqnarray*}
\rho^0\:\d_i(\Phi^0+\Psi^s)-\d_jT_{ij}^0&=&0 ,\\
\d_k^2\Phi^0&=&4\pi G\rho^0 ,\\
\rho^0(\ddot u_i+2\epsilon_{ijk}\Omega_j\dot u_k+u_j\d_j\d_i(\Phi^0+\Psi^s)+\d_i\Phi^1)
-\d_j(\La^{T^0}_{ijkl}(\d_lu_k))&=&\rho^0\d_i F^s ,\\
\d_k^2\Phi^1&=&-4\pi G\:\d_j(\rho^0u_j) .
\end{eqnarray*}
Thus on $\earthfsc$ the state of the earth model of material parameters $\rho^0$, $\La^{T^0}$ and prestress
$T^0$ is given by the solution $(u,\Phi^1)$ of the dynamical equations (\ref{eq:eqm}) and (\ref{eq:Phi1_Poisson}) with equilibrium fields
$\Phi^0$ and $T^0$ satisfying the equilibrium equations (\ref{eq:staticeq}) and (\ref{eq:Phi0_Poisson}). More precisely,
the equilibrium and dynamical equations hold in the sense of $\cD'(\earthfsc)$ and $\cD'(\earthfsc\times I^\circ)$
respectively. In particular, outside the earth the only nontrivial equations are the two Laplace equations
\begin{equation}
\triangle\Phi^0=0\qquad\textrm{and}\qquad\triangle\Phi^1=0
\end{equation}
valid in $\cD'(\earth^\cpl)$ and $\cD'(\earth^\cpl\times I^\circ)$
respectively.

As was discussed in Section \ref{sssec:varQuadratic}
the strong form of the EL, which holds in $\cD'(\earthfsc\times I^\circ)^4$, formally coincides with the classical EL \eqref{eq:Ham_EL}:
\begin{equation}
\d_t(\d_{\dot y}L)+\nabla\cdot(\d_{\nabla y}L)-\d_yL=0
\quad\text{in}\quad \earthfsc\times I
\end{equation}
with 
$y=(u,\Phi^1)^T$ as in \eqref{eq:yuPhi1}. Under the higher regularity conditions 
$\d_{(\dot u, \nabla u)}L\in H_{\div}(\earthfsc\times I^\circ)^4$, 
as was established in Corollary \ref{cor:Frechet}, these dynamical equations are valid in 
$L^2(\earthfsc\times I^\circ)^4$.
Application to $L=L''_{[1]}$ \eqref{eq:L1} leads to the equilibrium equations (\ref{eq:staticeq}) and (\ref{eq:Phi0_Poisson}).
For $L=L''_{[2]}$ \eqref{eq:L2} the EL consist of the dynamical equations (\ref{eq:eqm}) and (\ref{eq:Phi1_Poisson}), which take the form \cite[(3.169) and (3.192)]{DaTr:98}
\begin{equation}
\d_t(\d_{\dot u}L''_{[2]})+\nabla\cdot(\d_{\nabla u}L''_{[2]})-\d_uL''_{[2]}=0
\quad\text{and}\quad
\nabla\cdot(\d_{\nabla \Phi^1}L''_{[2]})=0.
\end{equation}
We announce that the dynamical exterior boundary conditions, which hold in $\cD'(I^\circ,H^{-\frac{1}{2}}(\d\earth))^4$ if $\d_{(\nabla u,\nabla\Phi^1)}L''_{[2]}$ is of $H_{\div}$-regularity, will coincide with the classical NBC \eqref{eq:Ham_NBC}
\begin{equation} 
(\d_{\nabla y}L)\cdot\nu=0
\quad\textrm{on}\quad\d \earth\times I.
\end{equation}
The dynamical interface conditions, which hold in $\cD'(I^\circ,H^{-\frac{1}{2}}(\Sig^\fxs))^4$, are equivalent to the NIBC \eqref{eq:Ham_IBCS},
\begin{equation} 
\d_{y}{L_{\Sig^\fxs}}-\d_t(\d_{\dot{y}}{L_{\Sig^\fxs}})-\snabla\cdot(\d_{\snabla y}{L_{\Sig^\fxs}})-[\d_{\nabla y}L]_-^{+}\cdot\nu=0
\quad\textrm{on}\quad \Sig^\fxs\times I.
\end{equation}
A detailed discussion of these boundary and interface conditions is given in Subsection \ref{ssec:IBC}.

\subsubsection{Hydrostatic assumption}

Since the second-order Lagrangian density $L_{[2],\text{hyd}}''$
\eqref{eq:L2hyd} is a special case of $L_{[2]}''$ \eqref{eq:L2} the
EL, NBC, and NIBC also hold within the hydrostatic setting and under
the same regularity conditions. In particular, the EL giving the
dynamical equations read \cite[(3.270) and (3.276)]{DaTr:98}
\begin{equation}
\d_t(\d_{\dot u}L''_{[2],\text{hyd}})+\nabla\cdot(\d_{\nabla u}L''_{[2],\text{hyd}})-\d_uL''_{[2],\text{hyd}}=0
\quad\text{and}\quad
\nabla\cdot(\d_{\nabla \Phi^1}L''_{[2],\text{hyd}})=0,
\end{equation}
both valid in $\earthfsc\times I$. The contributions involving $\Phi^1$ remain unchanged by adopting the hydrostatic assumption. In the hydrostatic case the equation of motion \eqref{eq:eqm} thus reduces to \cite[(3.259)]{DaTr:98}
\begin{equation}\label{eq:eqmhyd}
\rho^0\left(\ddot u+2\Om\times\dot u+\nabla\Phi^1\right)
-\left(\nabla\cdot(\rho^0 u)\right)\nabla(\Phi^0+\Psi^s)
+\nabla\left(\rho^0 \nabla(\Phi^0+\Psi^s)\cdot u\right)
-\nabla\cdot(\Ga:\nabla u)=\rho^0\nabla F^s.
\end{equation}
This is equivalent to the equation of motion expressed in terms of $T^1$ \cite[(3.56)]{DaTr:98}
\begin{equation}\label{eq:eqmT1}
\rho^0\left(\ddot u+2\Om\times\dot u+\nabla\Phi^1\right)
-\left(\nabla\cdot(\rho^0 u)\right)\nabla(\Phi^0+\Psi^s)
+\nabla\cdot(\nabla T^0\cdot  u)
-\nabla\cdot(\Upsilon^{T^0}:\nabla u)=\rho^0\nabla F^s
\end{equation}
restricted to zero deviatoric initial stress, where we recall from \eqref{eq:TL1} that $T^1=\Upsilon^{T^0}:\nabla u=\Gamma : \varepsilon$  if $T^0_\text{dev}=0$. 
The fact that, correct to first order, \eqref{eq:eqmT1} coincides with the equation of motion \eqref{eq:eqm} that involves $T^{\PK 1}=\La^{T^0}:\nabla u$ can be seen from the identity \cite[(3.61)]{DaTr:98}
\begin{equation}
\nabla\cdot T^{1}-\nabla\cdot (\nabla T^0\cdot u)+\left(\nabla\cdot(\rho^0 u)\right)\nabla(\Phi^0+\Psi^s)
\approx_1 \nabla\cdot T^{\PK 1}- \rho^0\nabla\nabla(\Phi^0+\Psi^s)\cdot u,
\end{equation}
which is a consequence of $T^{1}\approx_1 T^{\PK 1}+T^0\cdot(\nabla u)^T-T^0(\nabla\cdot u)$ \eqref{eq:TPK1T1}, the static equilibrium equation $\nabla\cdot T^0=\rho^0\nabla(\Phi^0+\Psi^s)$ \eqref{eq:staticeq}, and the product rule.

The NBC are \cite[(3.271)]{DaTr:98}
\begin{equation}
(\d_{\nabla u}L''_{[2],\text{hyd}})\cdot\nu=0
\end{equation}
on $\d \earth\times I$. As $L''_{\Sig^\fxs,[2],\text{hyd}}=0$ the NIBC reduce to \cite[(3.272)]{DaTr:98}
\begin{equation} 
[\d_{\nabla u}L''_{[2],\text{hyd}}]_-^{+}\cdot\nu=0
\end{equation}
on $\Sig^\fxs\times I$. The last two equations coincide with the
dynamical exterior boundary condition $T^1\cdot\nu=0$ and the
dynamical interface condition $\left[T^1\right]_-^+\cdot\nu=0$ given
in \eqref{eq:BChyd}.

\subsection{Boundary and interface conditions}\label{ssec:IBC}

Under higher regularity conditions on the material parameters
$\rho^0$, $\La^{T^0}$ and equilibrium fields $T^0$ and $\Phi^0$ one
can derive the NBC and the NIBC from the stationarity
(\ref{eq:wELint1}) and (\ref{eq:wELint2}) of the action integrals.
Let $S'$ be an interior $\Lip$-regular surface in $\earthfsc$
separating it into connected components $V_k$.  In particular, we can
choose
\begin{equation}
S'=\Sig\cup\d\earth=\Sig^\sxs\cup\Sig^\fxs\cup \Sig^\fxf\cup\d\earth
\end{equation} 
such that $\Sig^\fxs\sbs S'$. We apply Lemma
\ref{lem:divthmComposite} for integration by parts on composite
domains. If the test functions $v$ and $w$ are in $H^1$ on
$\earthfsc$, they also are in $H_\div(\bigcup_kV_k)$ (the more general
case where testfunctions are restricted to $v\in H^1_{\Sig^\fxs}$
\eqref{eq:H1Sig} is covered by Corollary \ref{cor:Frechet}, leading to
the normality conditions \eqref{eq:norT0} and \eqref{eq:nortau} given
below). Consequently, if the components of $T^0$ and $\nabla\Phi^0$
are in $H_\div(\bigcup_kV_k)$ (which follows from solvability of the
static equilibrium equation \eqref{eq:staticeq}, see Section
\ref{ssec:equilibrium}), the weak EL (\ref{eq:wELint1}) imply the
equilibrium NIBC
\begin{equation}
[T^0]_{-}^{+}\cdot\nu=0
\qquad\textrm{and}\qquad
[\nabla\Phi^0]_{-}^{+}\cdot\nu=0\qquad\textrm{on}\quad S'
\end{equation}
and, as a special case, the equilibrium NBC
\begin{equation}
T^0\cdot\nu=0\qquad\textrm{on}\quad\d\earth.
\end{equation}
Since there is no first-order contribution from the surface action, slip at fluid-solid boundaries does not affect the
equilibrium equations. 

If the components of the first-order perturbation of Piola-Kirchhoff stress tensor $T^{\PK 1}=\La^{T^0}:\nabla u$
and $\rho^0u+\frac{1}{4\pi G}\nabla\Phi^1$  are in $H_\div(\bigcup_kV_k)$, then (\ref{eq:wELint2}) gives the dynamical NIBC
\begin{equation}
[T^{\PK 1}]_{-}^{+}\cdot\nu=0\qquad\textrm{on}\quad S'\setminus\Sig^\fxs
\qquad\text{and}
\qquad 
[4\pi G\rho^0u+\nabla\Phi^1]_{-}^{+}\cdot\nu=0\qquad\textrm{on}\quad S'
\end{equation}
and, as a special case, the dynamical NBC
\begin{equation}
T^{\PK 1}\cdot\nu=0\qquad\textrm{on}\quad\d\earth.
\end{equation}
On $\Sig^\fxs$ the surface term in (\ref{eq:wELint2}) gives an additional contribution to the dynamical NIBC, as is already clear from \eqref{eq:Frechet_action2}  or \eqref{eq:wELintLag2}. 
We evaluate the NIBC \eqref{eq:Ham_IBCS} for the second-order Lagrangians \eqref{eq:Lsurf2},
\begin{equation}
\d_t(\d_{\dot{ u}}{L_{\Sig^\fxs,[2]}''})+\snabla\cdot(\d_{\snabla u}{L_{\Sig^\fxs,[2]}''})-\d_{u}{L_{\Sig^\fxs,[2]}''}+[\d_{\nabla u}L_{[2]}'']_-^+\cdot \nu=0
\end{equation}
valid in $H^{-\frac{1}{2}}(\Sig^\fxs)$. Since 
\begin{eqnarray}
\d_{\nabla u}L_{[2]}''&=&-T^{\PK 1} ,\nn\\[0.25cm] 
\d_{\dot{ u}}{L_{\Sig^\fxs,[2]}''}&=&0 ,\nn\\ 
\d_{u}{L_{\Sig^\fxs,[2]}''}&=&-p^0\left[\nu\cdot\snabla u+u\cdot \snabla\nu\right]_-^+
=-p^0\nu\cdot\left[\snabla u\right]_-^+-p^0\snabla\nu\cdot [u]_-^+ ,\nn\\ &&\nn\\
\snabla\cdot(\d_{\snabla u}{L_{\Sig^\fxs,[2]}''})&=&\snabla\cdot(-p^0\nu [u]_-^+)
=-\nu\snabla\cdot(p^0 [u]_-^+)- p^0\snabla\nu\cdot[u]_-^+ ,
\end{eqnarray}
 the interface condition is (the terms $p^0\snabla\nu\cdot[u]_-^+$ cancel)
$$-\nu\snabla\cdot(p^0 [u]_-^+)+p^0\nu\cdot\left[\snabla u\right]_-^+ -[T^{\PK 1}]_-^+\cdot \nu=0,$$ that is,
\begin{equation}
[T^{\PK 1}]_-^+\cdot \nu+\nu\snabla\cdot(p^0 [u]_-^+)-p^0\nu\cdot\left[\snabla u\right]_-^+ =0.
\end{equation}
Finally, as $[u]_-^+\cdot\nu=0$, \eqref{eq:Jswap} allows us to pull out the jump operator: 
\begin{equation}
\left[ T^{\PK 1}\cdot \nu+\nu\snabla\cdot(p^0 u)-p^0\nu\cdot\snabla u\right]_-^+ =0.
\end{equation} 
Thus the dynamical traction NIBC on $\Sig^\fxs$ finally read
\begin{equation}\label{eq:tractionIBC}
[\tau^{\PK 1}]_-^+=0\qquad\textrm{on}\quad\Sig^\fxs
\end{equation}
for the modified surface traction vector
\begin{equation}
\tau^{\PK 1}:=T^{\PK 1}\cdot\nu+\nu\snabla \cdot(p^0u)-p^0\nu\cdot(\snabla  u).
\end{equation}
This is a special case of \eqref{eq:linCauchyIBC} and coincides with \cite[eq.\ (3.80)]{DaTr:98}.

We discuss the fluid-solid normality condition \eqref{eq:normality}, $f \cdot \nu =(\nu\cdot f\cdot\nu)\nu$ on $\Sig^\fxs$, which by Corollary \ref{cor:Frechet} is
required to obtain the NIBC \eqref{eq:Ham_IBC} or \eqref{eq:Ham_IBCS} by application of  Lemma \ref{lem:divthmComposite} to the weak EL. For a volume action of the form $\func=\int_V F\dvol$ we have  $f=\d_{D y} F$.  Consequently normality corresponds to the dynamical interface condition \eqref{eq:surfpressure} for prestress  
\begin{equation}\label{eq:norT0}
T^0\cdot \nu=(\nu\cdot T^0\cdot \nu)\nu\quad\text{on}\quad\Sig^\fxs
\end{equation} 
(as well as to \eqref{eq:BCnorhyd}, $T^1\cdot \nu=(\nu\cdot T^1\cdot \nu)\nu$  on $\Sig^\fxs$ in the hydrostatic case).
In the presence of a surface action, that is, if the action integral is of the form $\int_V F\dvol+\int_S F_S\dsurf$ with $S=\Sig^\fxs$, we have $f \cdot \nu =[\d_{D y} F]_-^+\cdot \nu+\sdiv(\d_{\wtil{D}y}F_{S})-\d_{y}F_{S}$. For the composite fluid-solid earth model only the second-order action possesses a surface term. Thus normality corresponds to the following dynamical interface condition for the modified surface traction vector:
%
%
\begin{equation}\label{eq:nortau}
   \tau^{\PK1} = (\tau^{\PK1} \cdot \nu) \nu
                       \quad\text{on}\quad\Sig^\fxs .
\end{equation}
Together, \eqref{eq:surfpressure} and \eqref{eq:nortau} are the linearization of the normality condition  $T^s\cdot \nu=(\nu\cdot T^s\cdot \nu)\nu$ \eqref{eq:Cauchynofriction}, on slipping fluid-solid interfaces.

The interior boundary conditions for equilibrium quantities and perturbations are summarized in the following table
(using index notation for $1\leq i\leq 3$  and summation convention):
\begin{center}
\begin{tabular}{|l|l|l|}
\hline
boundaries &  conditions for      & conditions for \\
                   &  equilibrium fields & perturbations\\
\hline \hline
all boundaries $S\sbs\earth$ & & \\
& $[\Phi^0]_-^+=0$ & $[\Phi^1]_-^+=0$ \\
& $[\partial_j\Phi^0]_{-}^{+}\nu_j=0$ & $[\partial_j\Phi^1+4\pi G\rho^0u_j] _-^+\nu_j=0$\\
& & \\ \hline
solid-solid boundaries $\Sig^{\sxs}$ & & \\
& $\qquad - \qquad$ & $ [u_i]_-^{+}=0$ \\
& $[T_{ij}^0]_-^{+}\nu_j=0$ & $[T_{ij}^{\PK 1}]_-^{+}\nu_j=0$ \\
& & \\ \hline
fluid-solid boundaries $\Sig^\fxs$  & & \\
& $\qquad - \qquad$ & $[u_j]_-^{+}\nu_j=0$ \\
& $[T_{ij}^0]_-^+\nu_j=0$ & $[\tau_i^{\PK 1}]_-^+=0 $ \\
& $T_{ij}^0\nu_j=(T_{kj}^0\nu_k\nu_j)\:\nu_i$
& $\tau_i^{\PK 1}=\tau_j^{\PK 1}\nu_j\nu_i $ \\
& & \\ \hline
free surface $\d\earth$ & & \\
& $T_{ij}^0\nu_j=0$ & $T_{ij}^{\PK 1}\nu_j=0 $ \\
& & \\ \hline
\end{tabular}
\end{center}

\begin{remark}
Not all these conditions follow from the variational principle. For
example, continuity of $\Phi^0$ and $\nabla\Phi^0\cdot\nu$ is clear
since $\Phi^0\in Y^\infty$ by Poisson's equation for $\rho^0\in
L^\infty$. The condition $T^0\cdot\nu=-p^0\nu$ with $p^0=-\nu\cdot
T^0\cdot\nu$ is valid by definition of a fluid, see
\eqref{eq:surfpressure}.
\end{remark}

Under the hydrostatic assumption, the natural interface conditions are
given by $[p^0]_-^+=0$ and $\left[T^1\right]_-^+\cdot\nu=0$ on any
interior boundary $S\sbs B$. The natural boundary conditions read
$p^0=0$ and $T^1\cdot\nu=0$ on $\d\earth$ and normality reduces to
$T^1\cdot \nu=(\nu\cdot T^1\cdot \nu)\nu$ on $\Sig^\fxs$.

Solving the weak EL already implies that the NBC and NIBC
hold. Indeed, as in the equilibrium case, the regularity conditions
required for the formulation of the dynamical NBC and NIBC also follow
from the solvability and regularity of solutions of the weak EL
\eqref{eq:wELint2}. This requires the additional assumption of bounded
and positive $\rho^0$ and $\La^{T^0}$, that are piecewise $\Lip$, see
\cite{dHHP:15}.

\section{Earthquake ruptures}
\label{sec:ruptures}

\newcommand{\dd}{\,\mathrm{d}}
\def\mathbi#1{\textbf{\em #1}}
\newcommand{\superscript}[1]{\ensuremath{^{\textrm{#1}}}}
\newcommand{\etal}{et al.}

\newcommand{\mm}{{-}}
\newcommand{\pp}{{+}}
\newcommand{\bmm}{{(-)}}
\newcommand{\bpp}{{(+)}}
\newcommand{\bpm}{{(\pm)}}
\newcommand{\TT}{{\mathrm{T}}}
\newcommand{\ii}{{(i)}}
\newcommand{\jj}{{(j)}}
\newcommand{\iijj}{{(ij)}}
\newcommand{\jjii}{{(ji)}}
\newcommand{\IM}{{\mathrm{IM}}}
\newcommand{\EX}{{\mathrm{EX}}}
\newcommand{\btb}{{(t)}}
\newcommand{\bDtb}{{(t+\Dt)}}
\newcommand{\ini}{\mathrm{ini}}

\newcommand{\njmp}[1]{{\,[\![#1\,]\!]}}
\newcommand{\njmpN}[1]{{\,[\![#1\,]\!]}_\NN}
\newcommand{\njmpT}[1]{{\,[\![#1\,]\!]}_\TT}
\newcommand{\jmp}[1]{{[#1\,]}}
\newcommand{\jmpT}[1]{{[#1\,]}_\TT}
\newcommand{\DDt}[1]{\frac{d#1}{dt}}
\newcommand{\ddt}[1]{\frac{\partial #1}{\partial t}}
\newcommand{\Norm}[1]{\left\Vert #1 \right\Vert}
\newcommand{\Amp}[1]{\left| #1 \right|}
\newcommand{\Dt}{{\Delta t}}

\newcommand{\Domn}{{\Omega}}
\newcommand{\Domne}{\Omega^{\mathrm{e}}}
\newcommand{\Surf}{{\Sigma}}
\newcommand{\Surft}{\Sigma_t}
\newcommand{\cSurf}{\Surf_\mathrm{c}}
\newcommand{\fSurf}{\Surf_\mathrm{f}}
\newcommand{\Surfe}{\mathcal{T}^{\mathrm{e}}}
\newcommand{\cSurfe}{{\Surfe_\mathrm{c}}}
\newcommand{\fSurfe}{{\Surfe_\mathrm{f}}}
\newcommand{\cfSurfe}{{\Surfe_\mathrm{c,f}}}
\newcommand{\Surfc}{\mathfrak{T}^\mathrm{c}}
\newcommand{\Surff}{\mathfrak{T}^\mathrm{f}}
\newcommand{\DomnIM}{{\Omega}_\mathrm{IM}}
\newcommand{\DomnEX}{{\Omega}_\mathrm{EX}}

\newcommand{\tsC}{\mathbi{C}}
\newcommand{\tsE}{\mathbi{E}}
\newcommand{\tsH}{\mathbi{H}}
\newcommand{\tsS}{{(\tsC \tsE\,)}}
\newcommand{\tsSsup}[1]{{(\tsC \tsE^{#1}\,)}}
\newcommand{\tsT}{{(\tsC \tsH\,)}}

\newcommand{\aSt}{{\tau_f}}
\newcommand{\vex}{\mathbi{x}}
\newcommand{\veu}{\mathbi{u}}
\newcommand{\ven}{\nu}
\newcommand{\venu}{\boldsymbol\nu}
\newcommand{\vev}{\mathbi{v}}
\newcommand{\vevn}{\vev_\NN}
\newcommand{\vevt}{\vev_\TT}
\newcommand{\dvevt}{{\mathbi{V}_\TT}}
\newcommand{\dvt}{{V_\TT}}
\newcommand{\tildvt}{{\widetilde{V}_\TT}}
\newcommand{\avn}{{\tfrac12(\vevn^\ii+\vevn^\jj)}}
\newcommand{\veS}{{\boldsymbol\tau}}
\newcommand{\veSn}{\veS_\NN}
\newcommand{\veSt}{\veS_\TT}
\newcommand{\aveSt}{{\boldsymbol\tau_f}}
\newcommand{\vew}{\mathbi{w}}

\newcommand{\Energy}{{\mathcal{E}}}
\newcommand{\WaveOp}{{\mathcal{L}}}
\newcommand{\FluxOp}{{\mathcal{F}}}
\newcommand{\FricOp}{{\mathcal{G}}}
\newcommand{\BndcOp}{{\mathcal{B}}}
\newcommand{\TintOp}{{\mathcal{R}}}

It is natural to add shear faulting or shear ruptures, incorporating a
friction law, to the spatially weak formulation of the EL
(cf.~\eqref{eq:wELint1}-\eqref{eq:wELint2}), namely, through a fault
boundary integral term that has certain similarities with the
fluid-solid boundary integral term. We discuss the details in this
section. Thus we extend the variational description of
elastic-gravitational deformations to one with earthquakes.

\subsection{The equation of motion as a first-order system in the
            spatially weak formulation}
\label{ssec:firstorder} 

We combine displacement $u$, velocity $\dot{u}$, and the derivative of
the displacement $\nabla u$ into a new set of independent variables,
that is,
\begin{equation}
   z = \begin{pmatrix}z_0\\ z_1\\ z_2 \end{pmatrix} = 
       \begin{pmatrix}u\\ \dot u\\ \nabla u \end{pmatrix}
       \colon\ \earth \times I
                \to \RR^{3} \times \RR^{3} \times \RR^{3\times 3} ,
\end{equation}
and briefly discuss how to turn each of the three variants of linear
equation of motion, classical elasticity and prestressed elasticity
with $T^{\PK 1}$ or $T^1$, into a first-order system.

The equation of motion for classical linearized elasticity on $\earth
\sbs \RR^3$ without prestress is the \textbf{elastic wave equation}
\begin{equation}\label{eq:eqmclassical}
   \rho^0 \ddot u - \nabla \cdot T = 0 ,
\end{equation}
where $T = c : \nabla u$ with $c\colon B\to \RR^{3\times 3\times
  3\times 3}$ the classical stiffness tensor (we may identify all
stress tensors $T = T^1 = T^s = T^{s1} = T^{\PK} = T^{\PK 1}$ in the
linearized setting, since $T^0 = 0$). Equation \eqref{eq:eqmclassical}
is equivalent to the first-order system
\begin{equation}\label{eq:eqmclassical1}
   \dot z = \begin{pmatrix} z_1\\
         \frac{1}{\rho^0} \nabla \cdot (c:z_2)\\
         \nabla z_1 \end{pmatrix} .
\end{equation}
In case of a self-gravitating, uniformly rotating, fluid-solid
composite earth model, the strong formulation of the equation of
motion with $T^{\PK 1} = \Lambda^{T^0} : \nabla u$ \eqref{eq:eqm},
\[
   \rho^0 (\ddot u + 2 \Om \times \dot u
        + \nabla\nabla (\Phi^0 + \Psi^s) \cdot u
            + \nabla S(u)) - \nabla \cdot T^{\PK 1} = 0 ,
\]
can be written as
\begin{equation}
   \dot z = \begin{pmatrix} z_1\\ 
   \frac{1}{\rho^0} \nabla \cdot (\La^{T^0} : z_2)
        - \nabla\nabla (\Phi^0 + \Psi^s) \cdot z_0
            - \nabla S(z_0) - 2 \Om \times z_1\\
   \nabla z_1 \end{pmatrix} .
\end{equation}
Here, we have omitted the external conservative field of force
contribution ($F^s = 0$). The dependence of $\Phi^1$ on $z_0 = u$ is
given in terms of the solution operator, $S$, of Poisson's equation
\eqref{eq:Phi1_Poisson}, $\triangle\Phi^1 = -4\pi G \nabla \cdot
(\rho^0 u)$: $u \mapsto \Phi^1 = S(u)$. Alternatively, we may start
from the equation of motion \eqref{eq:eqmT1} expressed in terms of the
incremental Lagrangian stress $T^1 = \Upsilon^{T^0} : \nabla u$
(cf.~\eqref{eq:TPK1T1})
\[
   \rho^0 (\ddot u + 2 \Om \times \dot u
        + \nabla S(u))
   - (\nabla \cdot (\rho^0 u)) \nabla(\Phi^0 + \Psi^s)
        + \nabla \cdot (\nabla T^0 \cdot u) - \nabla \cdot T^1 = 0 ,
\]
and obtain it in first-order form
\begin{equation}
   \dot z = \begin{pmatrix} z_1\\ 
   \frac{1}{\rho^0} \left(\nabla \cdot (\Upsilon^{T^0} : z_2
        - \nabla T^0 \cdot z_0)
        + (\nabla \cdot (\rho^0 z_0))
                                  \nabla(\Phi^0 + \Psi^s)\right)
        - \nabla S(z_0) - 2 \Om \times z_1\\
   \nabla z_1 \end{pmatrix} .
\end{equation} 
In both cases, the first and third component equations are exactly as
in the classical case \eqref{eq:eqmclassical1}. The second component
equation involving $\dot z_1$ contains a similar second-order term of
the form $\frac{1}{\rho^0}\nabla\cdot(c:z_2)$ with $c$ replaced by
$\La^{T^0}$ or $\Upsilon^{T^0}$ and, in addition, lower-order terms
due to self-gravitation, prestress, and Coriolis acceleration that
only depend on $z_0, z_1$.

The weak formulation of the first-order system can, in principle, be
obtained by taking the $L^2$-inner product of all components with
spatial test functions, $h_0, h_1 \in H^1(\earth)^3$ and $h_2 \in
H^1(\earth)^{3\times 3}$. The first component equation, ${\dot z}_0 =
z_1$, immediately gives
\begin{equation} \label{eq:weaksystem1}
   \frac{d}{d t} \int_{\earth} z_0 \cdot h_0 \:\dvol
                 = \int_{\earth} z_1 \cdot h_0 \:\dvol .
\end{equation}
In the case of classical linearized elasticity, we multiply and
contract the third component equation with $c : h_2$, while assuming
that the components of the elasticity tensor $c$ are piecewise $\Lip$
with respect to the interior boundary $\Sig \cup \d\earth$; we obtain
\begin{multline} \label{eq:weaksystem3}
   \frac{d}{d t} \int_{\earth} z_2 : (c : h_2) \:\dvol
   = \int_{\earth} \nabla z_1 : (c : h_2) \:\dvol
\\
   =-\int_{\earth} z_1 \cdot
            \left(\nabla \cdot (c : h_2)\right) \:\dvol
   - \int_{\Sig \cup \d\earth} [z_1 \cdot (c : h_2)]_-^+ \cdot \nu
            \:\dsurf
   = -\int_{\earth} z_1 \cdot
            \left(\nabla \cdot (c : h_2)\right) \:\dvol
\end{multline}
in view of the usual NBC, even at a fluid-solid boundary\footnote{This
  holds true since integration by parts in (\ref{eq:weaksystem3}) also
  holds on slipping surfaces. This follows by applying
  (\ref{eq:IBC_calc}) (from Lemma~\ref{lem:divthmComposite}) with $h =
  z_1 = \dot u \in H^1_\Sigma$ and $f = c : h_2$ piecewise in
  $H_{\div}$, assuming that $f$ satisfies the normality condition
  (\ref{eq:normality}), $f \cdot \nu = (\nu \cdot f \cdot \nu)
  \nu$.}. These weak equations also hold in the more general earth
model $\earthfsc$ while $\Sigma \cup \d\earth$ coincides with
$\Sig^{\sxs} \cup \Sig^{\fxs} \cup \Sig^{\fxf} \cup \d\earth$ as
before: \eqref{eq:weaksystem1} remains unchanged and in
\eqref{eq:weaksystem3} one just has to replace $c$ by $\Lambda^{T^0}$
or $\Upsilon^{T^0}$ in case of the $T^{\PK 1}$ or the $T^1$
formulation, respectively.

In the case of classical linearized elasticity, the second component
equation yields,
\begin{multline} \label{eq:weaksystem2}
   \frac{d}{d t} \int_{\earth} \rho^0 z_1 \cdot h_1 \:\dvol
   = \int_{\earth} \left(\nabla \cdot (c:z_2)\right) \cdot h_1
                     \:\dvol
\\
   =-\int_{\earth} (c : z_2) : \nabla h_1 \:\dvol
    - \int_{\Sig \cup \d\earth} [h_1 \cdot (c : z_2)]_-^+ \cdot \nu
                     \:\dsurf
    = -\int_{\earth} (c : z_2) : \nabla h_1 \:\dvol
\end{multline}
in view of the usual NBC in the purely elastic case. In the more
general earth model $\earthfsc$, one has to replace $c$ in the volume
integral in \eqref{eq:weaksystem2} by $\Lambda^{T^0}$ or
$\Upsilon^{T^0}$ in case of the $T^{\PK 1}$ or the $T^1$ formulation,
respectively, while adding the corresponding lower-order terms. In the
$T^{\PK 1}$ case, we get
\begin{multline} \label{eq:weaksystem4}
   \frac{d}{d t}
   \int_{\earthfsci} \rho^0 z_1 \cdot h_1 \:\dvol
   = -\int_{\earthfsci} \rho^0
       \left(\nabla\nabla (\Phi^0 + \Psi^s) \cdot z_0
          + \nabla S(z_0) + 2 \Om \times z_1\right) \cdot h_1 \:\dvol
\\
     - \int_{\earthfsci} (\Lambda^{T^0} : z_2) : \nabla h_1 \:\dvol
\\
     - \int_{\Sig^{\fxs}} p^0 \Big(
             -[z_0 \cdot (\snabla \nu) \cdot h_1]^+_-
            + [z_0]^+_- \cdot \snabla (\nu \cdot h_1)
            + [h_1]^+_- \cdot \snabla (\nu \cdot z_0) \Big) \dsurf
\end{multline}
(cf.~(\ref{eq:wELint2})). Using \cite[Lemma 3.1]{dHHP:15}, we find that
\begin{equation}
   \int_{\Sig^{\fxs}} p^0 \Big(
             -[z_0 \cdot (\snabla \nu) \cdot h_1]_-^+
            + [z_0]^+_- \cdot \snabla (\nu \cdot h_1)
            + [h_1]^+_- \cdot \snabla (\nu \cdot z_0) \Big) \dsurf
   = \int_{\Sig^{\fxs}} [(T^{\PK 1} \cdot \nu) \cdot h_1]_-^+ \:\dsurf ,
\end{equation}
while $\Lambda^{T^0} : z_2$ is identified with $T^{\PK 1}$ in the
second integral on the right-hand side.

We let $\fSurf$ be a potential fault plane that yields to slip. (We
let $\fSurf$ be part of $\Sigma$ in a $\Lip$ composite domain.) The
active portion of the rupture $\Surft \subset \fSurf$ dynamically
evolves as a function of time $t$. Incorporating a basic rupture
yields adding a term,
\[
   - \int_{\Surft} [\tau \cdot h_1]_-^+ \:\dsurf ,
\]
to (\ref{eq:weaksystem4}), where $\tau$ is the incremental
traction. That is,
\begin{multline} \label{eq:weaksystem4f}
   \frac{d}{d t}
   \int_{\earthfsci} \rho^0 z_1 \cdot h_1 \:\dvol
   = -\int_{\earthfsci} \rho^0
       \left(\nabla\nabla (\Phi^0 + \Psi^s) \cdot z_0
          + \nabla S(z_0) + 2 \Om \times z_1\right) \cdot h_1 \:\dvol
\\
     - \int_{\earthfsci} (\Lambda^{T^0} : z_2) : \nabla h_1 \:\dvol
\\
     - \int_{\Sig^{\fxs}} p^0 \Big(
             -[z_0 \cdot (\snabla \nu) \cdot h_1]^+_-
            + [z_0]^+_- \cdot \snabla (\nu \cdot h_1)
            + [h_1]^+_- \cdot \snabla (\nu \cdot z_0) \Big) \dsurf
     - \int_{\Surft} [\tau \cdot h_1]_-^+ \:\dsurf .
\end{multline}
We note that $\tau$ needs to belong to $H^{-1/2}(\Surft)^3$, and that
$[h_1]_-^+ \cdot \nu = 0$ on $\Surft$. The formulation in terms of
$T^1$ is analogous. Incorporating slip in (\ref{eq:weaksystem3})
yields
\begin{equation} \label{eq:weaksystem3-slip}
   \frac{d}{d t} \int_{\earth} z_2 : (c : h_2) \:\dvol
   =-\int_{\earth} z_1 \cdot
            \left(\nabla \cdot (c : h_2)\right) \:\dvol
   - \int_{\Surft} [z_1 \cdot (c : h_2)]_-^+ \cdot \nu
            \:\dsurf ,
\end{equation}
where
\begin{equation}
   [z_1 \cdot (c : h_2)]_-^+ \cdot \nu
         = \{ z_1 \} \cdot [(c : h_2) \cdot \nu]_-^+
               + [z_1]_-^+ \cdot \{ (c : h_2) \cdot \nu \} ,\quad
   \{ z_1 \} = \tfrac{1}{2} (z_1^+ + z_1^-) .
\end{equation}
(Here, $h_2$ is not subjected to the normality conditions on the
fault surface.)

\begin{remark}[{{\bf Velocity-strain formulation}}]
Due to the symmetries of the classical elasticity tensor ($c_{ijkl} =
c_{jikl} = c_{ijlk} = c_{klij}$) we have $T = c : \nabla u = c :
\varepsilon$ with the symmetric linearized strain tensor $\varepsilon
= \frac{1}{2} \left(\nabla u + (\nabla u)^T\right)$
\eqref{eq:linstrain}. Thus, in terms of
\begin{equation} 
  \tilde{z} = \begin{pmatrix} z_0\\ z_1\\ \tilde{z}_2 \end{pmatrix}
  = \begin{pmatrix} u\\ \dot u\\ \varepsilon \end{pmatrix}
    \colon \earthfs\times I \to \RR^{3} \times \RR^{3}
                   \times \RR^{3\times 3}_{\text{sym}} ,
\end{equation}
the elastic wave equation \eqref{eq:eqmclassical} may be written as a
first order system in velocity-strain formulation:
\begin{equation}
   \dot{\tilde{z}} = \begin{pmatrix} z_1\\
           \frac{1}{\rho^0}\nabla\cdot(c:\tilde{z}_2) \\ 
   \frac{1}{2}\left(\nabla z_1 + (\nabla z_1)^T\right) \end{pmatrix} .
\end{equation}
The same is possible under the hydrostatic assumption, because $T^1 =
\Ga : \nabla u = \Ga : \varepsilon$, as $\Ga$ possesses the classical
symmetries. Hence, in the hydrostatic earth model, the equation of
motion \eqref{eq:eqmhyd},
\[
   \rho^0 (\ddot u + 2 \Om \times \dot u + \nabla S(u))
   - (\nabla \cdot (\rho^0 u)) \nabla (\Phi^0 + \Psi^s)
           + \nabla (\rho^0 \nabla (\Phi^0 + \Psi^s) \cdot u)
          - \nabla \cdot T^1 = 0 ,
\]
has the velocity-strain formulation
\begin{equation}
   \dot{\tilde{z}} = \begin{pmatrix} z_1\\ 
      \frac{1}{\rho^0} \nabla\cdot(\Ga:\tilde{z}_2)
   + \left(\nabla \cdot (\rho^0 z_0)\right) \nabla(\Phi^0 + \Psi^s)
   - \nabla\Phi^1(z_0) - 2 \Om \times z_1\\
   \frac{1}{2}\left(\nabla z_1 + (\nabla z_1)^T\right)
                     \end{pmatrix} .
\end{equation}
\end{remark}

\begin{remark}[{{\bf Energy considerations}}]
Following \cite[Appendix C]{dHHP:15}, we find that the energy
dissipation is given by
\begin{equation}
   \dot E = -\int_{\Surft} [(\tau + \nu \cdot T^0)
                    \cdot \dot u]_-^+ \:\dsurf .
\end{equation}
Unlike the corresponding integral over a fluid-solid boundary, this
integral does not vanish. The energy released by the rupture follows
from time integration over its duration.
\end{remark}

\noindent
In fact, in general, we need to account for a possible change in
rotation rate in response to the final, static deformation.

\subsection{Rupture dynamics: Dissipation and augmented system}
\label{sec:rupt}

We use the subscript $\,_\TT$, here, for tangential components, such
as the tangential particle velocity,
\[
   z_{1;\TT} = z_1 - (z_1 \cdot \ven) \, \ven
\]
($z_{1;\TT} = z_1^{\|}$; see (\ref{eq:tangpart})). We then introduce
the \textbf{slip rate},
\begin{equation}\label{eq:sliprate}
   V_{\TT} = [z_{1;\TT}]_-^+ = z_{1;\TT}^+ - z_{1;\TT}^- ,
\end{equation}
with its norm denoted by $|V_{\TT}|$. The support of the slip rate
vector coincides with the active portion of the fault, $\Surft$; we
assume that the slip vanishes on the boundary $\partial \Surft$. We
recall that
\[
   \tau = T^{\PK1} \cdot \nu ,\quad
                 T^{\PK1} = \Lambda^{T^0} \colon z_2 .
\]
The boundary conditions (under shear faulting or shear ruptures) on
$\Surft$ are
\begin{equation} \label{eq:fbdc}
   \ven \cdot (z_1^+ - z_1^-) = 0 ,
\end{equation}
hence, $V_{\TT} = [z_1]_-^+$, and
\begin{equation*}
   [T^{\PK 1} \cdot \ven
         - \snabla \cdot ((T^0 \cdot \ven) z_0)]_-^+ = 0
   \quad\text{on}\quad \Surft
\end{equation*}
(cf.~(\ref{eq:linCauchyIBC})). These are supplemented by
\begin{equation}
\label{eq:rupture condition}
   \sigma_\TT = \aSt ,
\end{equation}
where
\[
   \sigma = T^{\PK 1} \cdot \ven
                   - \snabla \cdot ((T^0 \cdot \ven) z_0)
            + T^0 \cdot \ven
   \quad\text{on}\quad \Surft
\]
and $\aSt$ signifies the \textbf{frictional force}. The
\textbf{compressive normal stress} applied to the fault plane is
positive and is given by
\[
   \sigma_N = -\ven \cdot \sigma
   \quad\text{on}\quad \Surft .
\]
The changes in $T^0$ during rupture are typically assumed to be
negligible. The frictional force satisfies the relation \cite{Day2005}
\begin{equation}
\label{eq:friction direction}
   |\tau_f| \, V_\TT - \tau_f \, |V_\TT| = 0 ;
\end{equation}
that is, the frictional force is parallel to the slip rate vector.

The dependency of friction on slip rate, compressive normal stress and
a ``state'' variable has been studied extensively \cite{Dieterich1979,
  Ruina1981, Ruina1983, Rice1983, Rice1983a}. Laboratory experiments
have been conducted on various types of rocks or fault gouge layers
over a wide range of slip rates and compressive normal stresses. Many
studies have been carried out examining the effect of variable normal
stress on rock friction \cite{Linker1992, Prakash1998, Richardson1999,
  Bureau2000}. The stability of a rate- and state-dependent friction
law can be analyzed through a linearized perturbation of steady state
sliding. We briefly mention the extensive work of Rice \textit{et
  al}. \cite{Rice2001} and indicate how to integrate it in our
variational framework.

\subsubsection{General assumptions}

We summarize several features of the frictional force that have been
identified from experimental observations:
\begin{enumerate}[label=($\mathfrak{a}.\arabic*$)]
\item the instantaneous frictional force is positively related to the
  amplitude of normal stress;\label{asm:A1}
\item the instantaneous frictional force is positively related to the
  magnitude of slip rate;\label{asm:A2}
\item the long-term variation of frictional force is accumulatively
  affected by the history of slip rate and normal
  stress;\label{asm:A3}
\item a steady-state frictional force can be monotonically approached
  with any given constant slip rate and normal stress.\label{asm:A4}
\end{enumerate}

A universal representation capturing the features above can be found
in \cite[pp.1869-1870]{Rice2001}. The frictional force follows a
general nonlinear constitutive relation
\begin{equation}\label{eq:friction law}
   |\aSt| = F(\sigma_N,|\dvt|,\psi) .
\end{equation}
The state variable $\psi$ is introduced to measure the average contact
maturity, which evolves with time following the nonlinear ordinary
differential,
\begin{equation}\label{eq:state ODE}
   \dot\psi = -G(\sigma_N,\dot\sigma_N,|\dvt|,\psi) .
\end{equation}
From \ref{asm:A1}-\ref{asm:A3} it follows that
\[
   \frac{\partial F}{\partial \sigma_N} > 0 ,
\quad
   \frac{\partial F}{\partial |\dvt|} > 0
\quad\text{and}\quad
   \frac{\partial F}{\partial \psi} > 0 .
\]
These dependencies capture not only the long time effects of $|\dvt|$,
but also the memory dependency of variation in $\sigma_N$. Following
\ref{asm:A4}, we define the steady-state value, $\psi_\std$, of $\psi$
which satisfies
\begin{equation}\label{eq:steady stat}
   G(\sigma_N,0,|\dvt|,\psi_\std(\sigma_N,|\dvt|)) = 0 ,
\end{equation}
under the constraint that $|\dot\dvt| = 0$ and $\dot\sigma_N = 0$. The
corresponding frictional force is then given by
\begin{equation}\label{eq:steady force}
   |\aSt|_\std(\sigma_N,|\dvt|)
          = F(\sigma_N,|\dvt|,\psi_\std(\sigma_N,|\dvt|)) .
\end{equation}

\medskip\medskip

\noindent
The system of equations is then given by (\ref{eq:weaksystem1}),
(\ref{eq:weaksystem3-slip}) and (\ref{eq:weaksystem4f}) augmented with
(\ref{eq:sliprate}), (\ref{eq:fbdc}), (\ref{eq:friction direction}),
(\ref{eq:friction law}) and (\ref{eq:state ODE}). One can view this as
a linear hyperbolic system with nonlinear feedback to (interior)
Neumann boundaries. We need to impose initial conditions at $t = t_0 =
0$ for
\[
   z_0 = u ,\ z_1 = \dot u,\ z_2 = \nabla u\ \text{and}\ \psi
\]
which imply initial values for
\[
   V_\TT ,\ \sigma_N\ \text{and}\ \aSt .
\]
In fact, the initial values for $z_0$, $z_1$ and, hence, $V_\TT$ will
be zero, while the initial value for $\psi$ on $\Surft$ will be
positive and must be derived from the nucleation. Nucleation requires
that the shear stress locally exceeds a fault's static strength. It is
expected that earthquakes nucleate at heterogeneities
\cite{Schmitt2015}. The initial value for $z_2$ is also zero. We note
that the initial values for the total traction at $\Surft$ -- that is,
$\sigma_N$ and $\aSt$ -- are implicitly equal to $\nu \cdot T^0$.

\begin{remark}
The slip velocity ahead of the rupture front (that is, on what is
generally regarded as the ``locked'' part of the fault) is in reality
nonzero in contradiction with our assumptions. However, for typical
choices of initial state variable and shear stress, the resulting slip
velocity is very small and several orders of magnitude below even the
plate rate.

The direct effect of rate-and-state friction provides an increase of
frictional strength at the rupture front to a value that can be
thought of as ``static'' friction. The evolution effect then provides
a ``slip-weakening'' type reduction of fault strength. The resulting
plot of strength versus slip has nonzero fracture energy. So it is not
necessary to introduce any additional crack tip physics or procedures
to account for fracture energy; one just selects rate-and-state
parameters to give the desired fracture energy.
\end{remark}

\subsubsection{The quasi-static assumption}

Commonly the Amontons-Coulomb law is assumed to hold, when the
frictional force $\aSt$ and compressive normal stress $\sigma_N$ are
proportional to one another \cite[eq. (4a)]{Ruina1983}):
\begin{equation}\label{eq:Coulomb law}
   |\aSt| = \sigma_N \, f(|\dvt|,\psi) ;
\end{equation}
$f$ is referred to as the \textbf{coefficient of friction}. In
conjunction with this assumption, one can linearize (\ref{eq:state
  ODE}) in the time derivative of the compressive normal stress
\cite[p. 1870]{Rice2001}
\begin{equation}\label{eq:state ODE linear}
   \dot{\psi} = -G_1(\sigma_N,|\dvt|,\psi)
                     - \dot{\sigma}_N G_2(\sigma_N,|\dvt|,\psi) .
\end{equation}
The quasi-static assumption yields a sufficiently slow slip rate and
an almost constant compressive normal stress. A steady state must
satisfy
\[
   G_1(\sigma_N,|\dvt|,\psi_\std(\sigma_N,|\dvt|)) = 0 .
\]
The following expression has been proposed for the function $G_2$
\cite[eq. (8)]{Perfettini2001}:
\[
   G_2(\sigma_N,|\dvt|,\psi) = \gamma(|\dvt|,\psi) / \sigma_N ,
\]
in which $\gamma$ is chosen to be a constant \cite{Linker1992}, or to
be equal to the coefficient of friction, that is, $\gamma(|\dvt|,\psi)
= f(|\dvt|,\psi)$ \cite[eq. (9)]{Perfettini2001}, though the
limitation of this form is assessed in several studies
\cite{Richardson1999, Bureau2000, Bureau2006}.

Under the quasi-static assumption, the slip rate of sliding motion is
sufficiently slow that the inertia of block mass can be neglected. By
fixing the value of $\sigma_N$ (omitting the effect of normal stress
variations), there are further empirical results that lead to the
following assumptions

\begin{enumerate}[label=($\mathfrak{b}_\arabic*$)]
\item there is a characteristic length for a steady-sliding rupture
  evolving into the next steady state after a sudden change of slip
  rate, regardless of the value of slip rate;\label{asm:B1}
\item the instantaneous rate-dependent frictional force is
  approximately proportional to the logarithm of slip
  rate;\label{asm:B2}
\item the steady-state frictional force is approximately proportional
  to the logarithm of slip rate.\label{asm:B3}
\end{enumerate}

Assumption \ref{asm:B1} implies that the slip rate stays constant,
equal to $\tildvt$ say, after a sudden jump. A linearization of
equation (\ref{eq:state ODE linear}) as a perturbation of the steady
state with $\dot{\sigma}_N = 0$ yields \cite[eq. (7)]{Ruina1983}
\begin{equation}
    \DDt \psi = -\left.
        \frac{\partial G_1}{\partial \psi}\right|_{\psi_\std}
                (\psi - \psi_\std) ,
\end{equation}
which has the solution, at $t = L/|\dvt|$ \cite[eq. (8)]{Ruina1983}
\begin{equation}
   \psi(L/|\dvt|) = \psi_\std(\sigma_N,|\dvt|) +
                      (\psi(0) - \psi_\std(\sigma_N,|\dvt|))
   \exp\left( -\frac{L}{|\dvt|} \left.
       \frac{\partial G_1}{\partial \psi}\right|_{\psi_\std} \right) ,
\end{equation}
where $\psi(t) = \psi(\sigma_N,|\dvt|,t)$, $|\dvt| = \tildvt$, and $L$
signifies the slip distance. The characteristic length is defined as
\[
   L_c = \left[ \frac{1}{|\dvt|} \left.
       \frac{\partial G_1}{\partial \psi}\right|_{\psi_\std}
          \right]^{-1} .
\]
After slipping for a distance of $L_c$ under constant normal stress
and slip rate, the state variable evolves towards the steady state by
a ratio $1/e$. Assumption \ref{asm:B1} implies that $L_c$ must be
independent of $|\dvt|$.

With assumption \ref{asm:B2}, the friction law (\ref{eq:Coulomb law})
involves \cite[p. 1873]{Rice2001}
\begin{equation}\label{eq:friction log}
   f(|\dvt|,\psi) = f_0 + 
     a \, \operatorname{log}\left(\frac{|\dvt|}{V_0}\right)
               + \psi ,
\end{equation}
where $f_0$ and $V_0$ are reference values for the coefficient of
friction and slip rate. Assumption \ref{asm:B3} implies that, with $f$
as in (\ref{eq:friction log}), the steady state should take the form
\cite[p. 13,457]{Perfettini2001}
\begin{equation}\label{eq:friction steady log}
   \psi_\std(\sigma_N,|\dvt|) = \psi_\std(|\dvt|)
          = -b \, \operatorname{log}\left(\frac{|\dvt|}{V_0}\right) ,
\end{equation}
so that \cite[eq. (7)]{Perfettini2001}
\begin{equation}\label{eq:friction coeff std}
   f(|\dvt|,\psi_\std(|\dvt|)) = f_0 + 
       (a - b) \, \operatorname{log}\left(\frac{|\dvt|}{V_0}\right) .
\end{equation}
(We note that $f_0$ represents the coefficient of friction at steady
state at slip rate $V_0$.) The sign of $a-b$ characterizes whether the
steady-state dependency is slip-strengthening or slip-weakening. In
the above, the parameters $a, b$ and $L_c$ are independent of
$\sigma_N, |\dvt|$.

The logarithm in (\ref{eq:friction log})-(\ref{eq:friction coeff std})
causes a problem when the slip rate approaches zero. A common remedy
involves appending a small creep rate $V^{\mathrm{creep}}$ to the slip
caused by seismicity, that is, replacing $|\dvt|$ by
\[
   |\dvt|^\mathrm{reg} = |\dvt| + V^\mathrm{creep}
\]
\cite[p. 4]{Woodhouse2015}, \cite[p. 6]{Schmitt2015}\footnote{
  Typically $V^{\mathrm{creep}}$ is taken to be uniformly
  distributed.}. A further regularization of the family of fricton
laws, extending them to negative slip rate, is achieved by replacing
$f(|\dvt|,\psi)$ in (\ref{eq:Coulomb law}) by \cite[p. 1875]{Rice2001}
\begin{equation}
   f^\mathrm{reg}(|\dvt|,\psi) = a \, \operatorname{arcsinh}
    \Big(\tfrac12\exp\big(f(|\dvt|^\mathrm{reg},\psi)/a\big)\Big) ,
\end{equation}
which is based on the asymptotic behavior at large $|\dvt|$, and
(\ref{eq:friction log}) by
\begin{equation}\label{eq:friction asinh}
   f^\mathrm{reg}(|\dvt|,\psi) = a \, \operatorname{arcsinh}
   \left(\frac{|\dvt|^\mathrm{reg}}{2V_0}
            \exp\left(\frac{f_0+\psi}{a}\right)\right) .
\end{equation}
Indeed, now $f^\mathrm{reg}$ vanishes at $|\dvt|^\mathrm{reg} = 0$ as
it should. In this case, the creep velocity takes the role of
maintaining an initially quasi-static status before the rupture
propagates, in which the frictional force balances the traction on the
fault \cite[p. 5]{Ampuero2008} everywhere except at the nucleation
patch where the initial shear stress (that is, tangential component of
the traction) exceeds the initial strength given by the friction law,
and causes an abrupt increase in slip velocity
\cite[p. 5]{Kaneko2008a}.

\newpage

\appendix
\section{Identities for surface operations}\label{app:surf}

We recall that $\nu\colon S \to\RR^n$ denotes the unit normal vector
of a sufficiently regular surface $S$ in $\RR^n$, directed from its
$-$ to its $+$-side, see \eqref{eq:normal}. The {jump} across $S$ of
any vector(space)-valued function $f$ defined in a neighborhood of the
surface is given by $[f]_-^+:=f^+-f^-$, see \eqref{eq:jump}.

The jump operator commutes with any linear operator $P$ which is continuous across $S$, that is, $P^+=P^-$, since
$[Pf]_-^+=(Pf)^+-(Pf)^-=P^+f^+-P^-f^-=Pf^+-Pf^-=P(f^+-f^-)=P[f]^+_-$:
\begin{equation}
[P]_-^+=0\qquad\Longrightarrow\qquad [Pf]_-^+=P[f]_-^+.
\end{equation}
Moreover, as $[fg]_-^+=(fg)^+-(fg)^-=f^+(g^+-g^-)+(f^+-f^-)g^-$ the jump of the product of $f$ and $g$ satisfies the following {\bf Leibniz rule}:
\begin{equation}\label{eq:Leibnizjump}
[fg]_-^+=f^+[g]_-^++[f]_-^+g^-.
\end{equation}
Similarly, one deduces that
\begin{equation}\label{eq:Leibnizjump2}
   [fg]_-^+ = \{f\} [g]_-^+ + [f]_-^+ \{g\} ,
\end{equation}
where $\{f\} := \frac{1}{2} (f^+ + f^-)$ denotes the mean value of $f$
on $S$. If $f$ takes values in $\RR^{\ldots\times n}$ then
\begin{equation}\label{eq:tangpart}
f^\parallel:=f\cdot ( 1_{n\times n}-\nu\nu)=f-(f\cdot\nu)\nu
\end{equation} 
denotes its {\bf tangential part}, which is defined such that 
\begin{equation}
f^\parallel\cdot\nu=0.
\end{equation}
Consequently, if the normal part $(f\cdot\nu)\nu$ of $f$ is continuous across $S$, the jump involves the tangential part only. Hence,
\begin{equation}\label{eq:Jswap}
[f]_-^+\cdot\nu=0\qquad\Longrightarrow\qquad [f]_-^+=[f^\parallel]_-^+=([f]_-^+)^\parallel
\qquad\text{and}\qquad [\snabla f]_-^+=\snabla [f]_-^+.
\end{equation} 
Here $\snabla$ is the {\bf surface gradient}, given by the derivative in directions tangential to $S$, that is, 
\begin{equation}\label{eq:surfdiff}
\snabla f:=(\nabla f)^\parallel=(\nabla f)\cdot ( 1_{n\times n}-\nu\nu)=\nabla f-(\nabla f\cdot\nu)\nu.
\end{equation}
By definition
\begin{equation}
\snabla f\cdot\nu=0.
\end{equation}  
If $a,b$ are $\RR^n$-valued fields we have
\begin{equation}
\snabla f\cdot a=\snabla f\cdot (a^\parallel+(a\cdot\nu)\nu)=\snabla f\cdot a^\parallel
\end{equation}
and, as the {\bf Weingarten map} $\snabla\nu$ is a symmetric bilinear form projecting on the tangent space of $S$, 
\begin{equation}
a\cdot\snabla \nu\cdot b=a\cdot\snabla \nu\cdot b^\parallel=b^\parallel\cdot\snabla \nu\cdot  a
=b^\parallel\cdot\snabla \nu\cdot  a^\parallel=b\cdot\snabla \nu\cdot a.
\end{equation}
From 
$$
0=\snabla 0=\snabla(a^\parallel\cdot\nu)=\nu\cdot\snabla a^\parallel+a^\parallel\cdot\snabla\nu
$$  
we then also get the identity
\begin{equation}\label{eq:Wswap} 
\nu\cdot\snabla a^\parallel=-a^\parallel\cdot\snabla\nu
\quad(\:=-a\cdot\snabla\nu=-\snabla\nu\cdot a=-\snabla\nu\cdot a^\parallel\:).
\end{equation}
The {\bf surface divergence} is the trace of the surface gradient when contracting the last two indices:
\begin{equation}\label{eq:surfdiv}
\sdiv f:=\snabla\cdot f=\tr(\snabla f)=\nabla\cdot f-(\nabla f\cdot\nu)\cdot\nu.
\end{equation}
If $f$ is $\RR^n$-valued, the last term is $\nu\cdot\nabla f\cdot\nu$.  
Finally,
\begin{equation}
\div f=\nabla\cdot f=\snabla\cdot f^\parallel+(\nabla f\cdot\nu)\cdot\nu+(f\cdot\nu)(\snabla\cdot\nu),
\end{equation}
where $\snabla\cdot\nu$ is the sum of the $n-1$ {\bf principal curvatures} of $S$.

\section{Classical differential and variational calculus on Banach spaces}
\label{app:var}

Variational calculus in field theories is based on nonlinear functionals on function spaces given in the form of an integral as in (\ref{eq:Ham_action}) and crucial aspects of stationarity can be expressed in terms of derivatives of the functional on these infinite dimensional spaces of fields. Although the latter are often equipped with a natural inner product, in intermediate steps we also consider function spaces with norms that do not stem from an inner product,   e.g., (local) $L^p$-norms in Lemma \ref{lem:poisson1}, (iii),  and Lemma \ref{lem:poisson1} describing regularity properties of gravitational potential, or $L^\infty$ and Lipschitz norms in the description of admissible motions in Definition \ref{def:Am} and  the basic configuration spaces in Definition \ref{def:nonlin_reg}. Furthermore, the setting of classical differential calculus is a concept of normed spaces, independent of an inner product, and therefore we briefly review basic concepts of differential calculus on Banach spaces following \cite{BlBr:92}. Let $E_1$, $E_2$ be Banach spaces, $\emptyset\neq M\sbs E_1$ be open, 
and let $L(E_1,E_2)$ denote the set of continuous linear operators $E_1\to E_2$.

\begin{definition}[{{\bf Fr\'echet and G\^ateaux derivative}}]\label{def:FrechetGateaux}
Let $f\colon M\to E_2$.
\begin{enumerate}[label=(\roman*)]
\item The Fr\'echet derivative (strong derivative) of $f$ at $y\in M$ is an operator $Df(y)\in L(E_1,E_2)$ such that 
$$Df(y)(h)=f(y+h)-f(y)+o(y,h)$$
for all $h\in E_1$ with $y+h\in M$ and with $o(y,\:.\:)\colon E_1\to E_2$ such that $o(y,0)=0$ and 
$$\lim_{\norm{h}{E_1}\to 0}\frac{\norm{o(y,h)}{E_2}}{\norm{h}{E_1}}=0.$$
If $Df(y)$ exists, $f$ is called (Fr\'echet) differentiable at $y$ and if this is true for all $y\in M$ then $f$ is differentiable on $M$. 
If $Df\colon M\to L(E_1,E_2)$ is continuous, $f$ is called continuously differentiable on $M$.

\item The G\^ateaux derivative (weak derivative, directional derivative) of $f$ at $y\in M$ in the direction $h\in E_1$ is defined by 
$\de f(y,h):=\Phi'(0)$ with $\Phi\colon(-\veps_0,\veps_0)\to E_2$,
$\Phi(\veps)=f(y+\veps h)$ for $\veps_0>0$, that is,
$$\de f(y,h)=\frac{d}{d\veps}(f(y+\veps h))|_{\veps=0}=\lim_{\veps\to 0}
\frac{f(y+\veps h)-f(y)}{\veps}.$$
If $\de f(y,.)\in L(E_1,E_2)$, then $f$ is called G\^ateaux differentiable on $M$.
\end{enumerate}
\end{definition}

The Fr\'echet derivative is stronger than the G\^ateaux derivative in the sense that if $Df$ exists then also $\de f$ exists and $\de f=Df$. 
Conversely, if the G\^ateaux derivative enjoys additional regularity then it is actually a Fr\'echet derivative: If $\de f(y,.)$ exists for 
all $y$ in a neighborhood $U$ of $y_0\in M$ and if $y\mapsto\de f(y,.)$ is continuous $U\to L(E_1,E_2)$, then $D f(y_0)$ exists and 
$Df(y_0)=\de f(y_0,.)$ \cite[Lemma 2.3.2, p.\ 47]{BlBr:92}. 

\begin{example}[{{\bf Basic examples of Fr\'echet derivatives}}]\label{rem:frechet_linquad}
$\:$
\begin{enumerate}[label=(\roman*)]
\item Every linear operator $l\colon E_1\to E_2$ is Fr\'echet (and hence G\^ateaux) differentiable with $Dl(y)=l$ (indeed $l(y+h)=l(y)+l(h)$ 
for all $y$ and $h\in E_1$).

\item On a Hilbert space $H$ with inner product $\lara{.|.}$ and associated norm $\|.\|$, every quadratic form $Q\colon H\to\CC$, 
$Q(y)=\lara{Ay|y}$ with a bounded linear operator $A\colon H\to H$ is Fr\'echet differentiable on $H$ with $DQ(y)(h)=\lara{Ay|h}+\lara{Ah|y}$, 
since $Q(y+h)=\lara{A(y+h)|y+h}=\lara{Ay|y}+\lara{Ay|h}+\lara{Ah|y}+\lara{Ah|h}$, $|\lara{Ah|h}|/\norm{h}{}\leq\norm{A}{\op}\norm{h}{}\to 0$ as 
$\norm{h}{}\to 0$, and $h\mapsto\lara{Ay|h}+\lara{Ah|y}$ is bounded and linear.
    
\item For a functional $\func\colon M\to \CC$ we obtain $D\func(y_0)\in E_1':=L(E_1,\CC)$ (the normed dual space of $E_1$) for every $y_0\in M$.

\item The Fr\'echet derivative $Df$ of a vector-valued function $f\colon M\to\RR^m$ on $M\sbs\RR^n$ coincides with the linear operator 
corresponding to the Jacobi matrix of $f$, that is $Df(y_0)(h)=Df(y_0)\cdot h$ with $Df=(\d_j f_i)_{i,j=1}^{m,n}$ (identifying linear 
operators with matrices).
\end{enumerate}
\end{example}

\begin{definition}[{{\bf Regular and stationary points}}]\label{def:stationary}
Let $f\colon M\to E_2$ be differentiable at $y_0\in M$. If $Df(y_0)\colon E_1\to E_2$ is surjective, $y_0$ is called a regular point of $f$. 
Otherwise, $y_0$ is called a stationary (or critical) point of $f$. In particular, a point $y_0\in M\sbs E_1$ is a stationary point of a functional 
$\func\colon M\to\RR$ if $D\func(y_0)=0$ in $E_1'$, that is
\begin{equation}\label{eq:stationary}
D\func(y_0)(h)=0\quad\textrm{for all}\quad h\in E_1.
\end{equation}
\end{definition}

The Lagrange multiplier theorem gives necessary conditions for stationary points of a functional $\func\colon M\to\RR$ upon restriction to a subset of $M$ specified by a constraint condition (cf.\  \cite[Theorem 4.3.3, p.\ 74]{BlBr:92}). Let $g\colon M\to E_2$ and $y_0\in M$. Then $y_0$ is called a \textbf{stationary point} of $\func$ \textbf{subject to the constraint} $g=0$ on $M$, if $y_0$ is a stationary point for the 
restricted functional $\func|_{g^{-1}(\{0\})}$. Note that continuity of $g$ implies that $g^{-1}(\{0\})$ is a closed set.

\begin{theorem}[{{\bf Lagrange multiplier theorem}}]\label{rem:LagMult} 
Let $\func\colon M\to\RR$ and $g\colon M\to E_2$ be Fr\'echet differentiable and $y_0\in M$ be a stationary point of $\func$ subject 
to the constraint $g(y)=0$ ($y\in M$). If $y_0$ is a regular point of $g$ and $\ker(Dg(y_0))$ has a topological complement\footnote{that is, there exists a closed subspace $V \subseteq E_1$ such that $E_1 = \ker(Dg(y_0)) \oplus V$; note that $\ker(Dg(y_0))$ is closed, since $Dg(y_0) \in L(E_1,E_2)$ by the requirement of Fr\'echet differentiability.} in $E_1$, 
then there exists a continuous linear functional $\la\colon E_2\to\RR$, called Lagrange multiplier, such that $D\func(y_0)=\la\circ Dg(y_0)$, 
that is, $y_0$ is a stationary point of the functional $\func^\la:M\to\RR$, $\func^\la:=\func-\la\circ g$.
\end{theorem}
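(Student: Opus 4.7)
My plan is to prove this via a local parameterization of the constraint set $g^{-1}(\{0\})$ near $y_0$ obtained from the implicit function theorem, followed by a factorization argument through $Dg(y_0)$. Let $N := \ker(Dg(y_0))$ and let $V$ be the topological complement so that $E_1 = N \oplus V$ as a topological direct sum; both $N$ and $V$ are then closed subspaces of $E_1$ and inherit Banach space structure.

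First I would observe that the restriction $A := Dg(y_0)|_V\colon V \to E_2$ is a continuous linear bijection: it is injective because $V \cap \ker(Dg(y_0)) = \{0\}$, and it is surjective because $y_0$ is a regular point of $g$ (so $Dg(y_0)$ is surjective) and $N$ is mapped to $0$. By the open mapping theorem, $A^{-1}\colon E_2 \to V$ is continuous. Next I would apply the implicit function theorem to the map $G\colon N \times V \to E_2$, $G(n,v) := g(y_0 + n + v)$, which satisfies $G(0,0)=0$ and $\partial_v G(0,0) = A$, a toplinear isomorphism. This yields, on a neighborhood $U \subseteq N$ of $0$, a $C^1$ map $\vphi\colon U \to V$ with $\vphi(0)=0$ and $D\vphi(0) = -A^{-1}\circ\partial_n G(0,0) = -A^{-1}\circ Dg(y_0)|_N = 0$, such that $G(n,\vphi(n)) = 0$. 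Equivalently, for each $n\in N$ the curve $c_n(t) := y_0 + tn + \vphi(tn)$ lies in $g^{-1}(\{0\})$ for $t$ near $0$, with $c_n(0) = y_0$ and $c_n'(0) = n$.

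With this parameterization in hand, the stationarity of $\func|_{g^{-1}(\{0\})}$ at $y_0$ translates via the chain rule into $\tfrac{d}{dt}\func(c_n(t))\big|_{t=0} = D\func(y_0)(n) = 0$ for every $n \in N$. Thus $D\func(y_0)$ vanishes on $\ker(Dg(y_0))$, so it descends to a continuous linear functional $\overline{D\func(y_0)}$ on the quotient $E_1/N$. The induced map $\overline{Dg(y_0)}\colon E_1/N \to E_2$ is a continuous linear bijection, hence a toplinear isomorphism by the open mapping theorem. I would then define the Lagrange multiplier as
\begin{equation*}
\la := \overline{D\func(y_0)} \circ \overline{Dg(y_0)}^{-1} \in E_2' ,
\end{equation*}
which is continuous by construction, and verify directly that $\la \circ Dg(y_0) = \overline{D\func(y_0)} \circ \pi = D\func(y_0)$, where $\pi\colon E_1 \to E_1/N$ is the quotient map.

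The main obstacle is the application of the implicit function theorem in the second step: the classical IFT requires $g$ to be continuously differentiable on a neighborhood of $y_0$, whereas the hypothesis as literally stated only requires Fréchet differentiability at $y_0$. In the intended application (Lemma~\ref{prop:con_var1}) $g$ is affine and $Dg$ is even constant, so the stronger $C^1$ regularity is automatic; more generally the theorem should be read with $g \in C^1$ near $y_0$, or else one replaces the IFT by a direct contraction-mapping construction of $\vphi$ whose only quantitative input is the estimate $\|g(y_0+n+v_1) - g(y_0+n+v_2) - A(v_1-v_2)\| = o(\|v_1-v_2\|)$. The topological complement hypothesis is essential because it ensures $V$ is a closed subspace, hence a Banach space on which the contraction (or the IFT) can be run, and it is precisely what converts the abstract surjectivity of $Dg(y_0)$ into an honest right inverse $A^{-1}$.
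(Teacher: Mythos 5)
The paper does not prove this theorem; it imports it by citing \cite[Theorem 4.3.3]{BlBr:92}, so there is no in-paper proof to compare against. Your argument is the canonical one and is correct in outline: split $E_1 = \ker Dg(y_0) \oplus V$, invert $Dg(y_0)|_V$ via the open mapping theorem, parameterize the constraint set near $y_0$ by the implicit function theorem, deduce that $D\func(y_0)$ vanishes on $\ker Dg(y_0)$ by differentiating along the resulting curves, and factor through the quotient $E_1/\ker Dg(y_0)$. You also correctly identify the one real mismatch between the statement and the argument: the implicit function theorem (or Lyusternik's contraction-mapping substitute) needs $g$ to be $\cC^1$, or at least strictly differentiable, near $y_0$, which is strictly stronger than Fr\'echet differentiability at the single point $y_0$; that hypothesis is tacit in the cited source and is automatic in the paper's only application of this theorem (Lemma \ref{prop:con_var1}, where $g$ is continuous affine), so the gap you flag is one of phrasing in the statement rather than a defect in your argument.
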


\begin{remark}
The topological complement condition (on the closed subspace) $K := \ker(Dg(y_0))$ is automatically satisfied if $E_1$ is a Hilbert space, since the orthogonal complement $K^\bot$  is closed and we always have $E_1=K\oplus K^\bot$.
\end{remark}

In the classical calculus of variations (\cite{GiHi:96i}) one
considers functionals $\func$ given by an integral in the
form\footnote{In favour of coherence with the notation $\dvol(x)$ for
  the standard Lebesgue measure used in the main sections -- to
  emphasize ``volume integral'' versus surface integrals -- we accept
  here an unfortunate clash of notation with the basic domain being
  denoted by $V$.}
\begin{equation}\label{eq:func}
\func(y):=\int_V F(x,y(x),Dy(x))\:\dvol(x)
\qquad \textrm{for}\quad y\in\cC^1(\ovl{V})^m,
\end{equation}
where $V\subseteq\RR^n$ is nonempty, open, and bounded and the Lagrangian $F\in\cC^1(U)$ is defined on the open set $U\sbs\RR^n\times\RR^m\times\RR^{m\times n}$. The space $\cC^1(\ovl{V})^m$
equipped with the norm $\norm{y}{\cC^1(\ovl{V})^m}:=\max_{x\in \ovl V}(\norm{y(x)}{\RR^m}+\norm{Dy(x)}{\RR^{m\times n}})$
is a Banach space. The set of all {\bf admissible} points
\begin{equation}
W:=\Big{\{}y\in\cC^1(\ovl{V})^m:\:\{(x,y(x),Dy(x))\:|\:x\in\ovl{V}\}\sbs U\Big{\}},
\end{equation}
that is, functions $y$ whose 1-graph is contained in $U$, is open in $\cC^1(\ovl{V})^m$.
Thus $\func:W\to\RR$ is a functional defined on an open subset of a Banach space.

\begin{definition}[{{\bf First variation}}]
The first variation of $\func$ at $y_0\in W$ in the direction $h\in\cC^1(\ovl{V})^m$ is defined as the
G\^ateaux derivative $\de \func(y_0,h)=\frac{d}{d\veps}(\func(y_0+\veps h))|_{\veps=0}$.
\end{definition}

The first variation of $\func$, given by (\ref{eq:func}), exists in the sense of Definition \ref{def:FrechetGateaux}.
Indeed we have $\de\func(y, h)=\Phi'(0)$ with the continuously differentiable function $\Phi:(-\veps_0,\veps_0)\to E_2$,
$\Phi(\veps):=\func(y+\veps h)$, defined for small $\veps$ (cf.\ \cite[p.\ 12]{GiHi:96i}).
Continuity and linearity of the functional $ h\mapsto\de\func(y, h)$  on $\cC^1(\ovl{V})^m$ will be seen directly from formula (\ref{eq:firstvar_calc}) below.
We introduce the short-hand notation 
$D^\al F_\veps:=(D^\al F)\circ\eta_\veps:\ovl V\to\RR$
for $\al\in\NN_0^n$, $\veps\in\RR$,
and the mapping $\eta_\veps:=(\Id_{\ovl V},y_0+\veps h,Dy_0+\veps Dh)\colon\ovl V\to\ovl V\times\RR^m\times\RR^{m\times n}$.
In particular, for $x\in\ovl V$,
\begin{equation}\label{eq:evalLag}
F_0(x)=F(x,y_0(x),Dy_0(x)) \quad \text{and}\quad
(D^\al F_0)(x)
=( D^\al F)(x,y_0(x),Dy_0(x)).
\end{equation}
The first variation of (\ref{eq:func}) at $y_0\in W$ in the direction $h\in\cC^1(\ovl{V})^m$ then reads
\begin{eqnarray}
\de\func(y_0,h)&=&\frac{d}{d\veps}\Big(\func(y_0+\veps h)\Big)\Big|_{\veps=0}
=\frac{d}{d\veps}\int_VF_\veps(x)\:\dvol(x)\:\Big|_{\veps=0}\nn\\
&=&\int_V\Big(\sum_{i=1}^m(\d_{y_i}F_0)h_i
+\sum_{i=1}^m\sum_{j=1}^n(\d_{\d_jy_i}F_0)\d_jh_i\Big)(x)\:\dvol(x).
\label{eq:firstvar_calc}
\end{eqnarray}
Indeed, interchanging differentiation and integration, that is,
$\frac{d}{d\veps}\int_VF_\veps(x)\:\dvol(x)=\int_V\d_\veps F_\veps(x)\:\dvol(x)$, is justified, since we have that $F_\veps\in L^1(V)$ with derivative $\d_\veps F_\veps$ existing almost everywhere in $V$ and for all $\veps$ near zero, and satisfying
$|\d_\veps F_\veps|\leq g$ with $g\in L^1(V)$ independent of $\veps$ (see \cite[Satz 5.7, p.\ 147]{Elstrodt:07}). Then by the chain rule,
$(\d_\veps F_\veps)(x)=\d_\veps(F\circ\eta_\veps)(x)
=(DF)(\eta_\veps(x))\cdot(\d_\veps\eta_\veps)(x)
=\sum_{i=1}^m(\d_{y_i}F)(\eta_\veps(x))h_i(x)+\sum_{i=1}^m\sum_{j=1}^n(\d_{\d_jy_i}F)(\eta_\veps(x))(\d_jh_i)(x)$
for $x\in\ovl V$. Finally, setting $\veps=0$ yields the form of the integrand above.

In accordance with Definition \ref{def:stationary}, a point $y_0\in W$ is called a \textbf{stationary point of $\func$}, if 
\begin{equation}\label{eq:firstvar_stat}
\de\func(y_0, h)=0\quad\textrm{for all}\: h\in\cC^1(\ovl{V})^m.
\end{equation}
In the context of classical calculus of variations one often considers variations with fixed boundary values of $y$.
In this case it suffices to require $\de\func(y_0, h)=0$ for all $ h\in\cC^1_0(\ovl{V})^m$. By density of $\cC^\infty_c({V})^m$ in
$\cC^1_0(\ovl{V})^m$ this is equivalent to $\de\func(y_0, h)=0$ for all $ h\in\cC^\infty_c({V})^m$.

The {\bf EL (Euler-Lagrange equations)} are necessary conditions for
stationary points which are twice continuously differentiable. If the
(open and bounded) domain $V$ has some additional regularity, then
also {\bf NBC (natural boundary conditions)} can be deduced. Both EL
and NBC follow from the formula (\ref{eq:firstvar_calc}) using the
divergence theorem (Lemma \ref{lem:divthm}) and applying the

\begin{theorem}[{\bf Fundamental lemma of calculus of variations}]\label{lem:fundlemma}

(i) Let $V\sbs\RR^n$ be open and $f\in L^1_\loc(V)$. Then $\int_Vf(x)\eta(x)\:\dvol(x)=0$
for all $\eta\in\cC_c^\infty(V)$ is equivalent to $f(x)=0$ for almost all $x\in V$.
In particular, if $f\in\cC^0(V)$ we may conclude that $f(x)=0$ for all $x\in V$.

(ii) Let $S\sbs\d V\sbs\RR^n$ be a subset of a $\Lip$-boundary and $f\in H^{-\frac{1}{2}}(S)$.
Then $\int_S f(x)\eta(x)\:\dsurf(x)=0$ for all $\eta\in\cC^1(\ovl{V})$ is equivalent to $f=0$
in  $H^{-\frac{1}{2}}(S)$. In particular, if $f\in L^2(S)$  we may conclude that $f=0$ almost everywhere on $S$. If $f\in \cC^0(S)$, we may conclude that $f(x)=0$ for all $x\in S$.

\end{theorem}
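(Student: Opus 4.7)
For part (i), the reverse implication is immediate. For the forward direction the plan is a standard mollification argument. Fix a point $x_0\in V$ and choose $r>0$ so small that $\ovl{B_{2r}(x_0)}\sbs V$. Take a standard mollifier $\rho\in\cC^\infty_\cpt(\RR^n)$ with $\supp\rho\sbs B_1(0)$ and $\int\rho=1$, set $\rho_\veps(z):=\veps^{-n}\rho(z/\veps)$, and for $x\in B_r(x_0)$ and $\veps<r$ define $\eta(y):=\rho_\veps(x-y)$. Then $\eta\in\cC^\infty_\cpt(V)$, so the hypothesis gives $(f\ast\rho_\veps)(x)=\int_Vf(y)\rho_\veps(x-y)\dvol(y)=0$ for every such $x$. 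Standard properties of mollifiers (e.g.\ \cite[Theorem 2.29, p.\ 64]{Folland:99}) yield $f\ast\rho_\veps\to f$ in $L^1(B_r(x_0))$ and at every Lebesgue point of $f$; letting $\veps\to 0$ therefore forces $f=0$ a.e.\ on $B_r(x_0)$. Covering $V$ by such balls gives $f=0$ a.e.\ on $V$. If $f\in\cC^0(V)$, then every $x_0$ is a Lebesgue point, hence $f(x_0)=0$ for every $x_0\in V$.

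For part (ii), again the reverse direction is trivial, so only the forward direction requires argument. The key fact is the surjectivity and continuity of the trace operator $\mathrm{tr}\colon H^1(V)\to H^{\frac12}(\d V)$ on $\Lip$-domains (see \cite[Theorem 1.5.1.3]{Grisvard:85}), combined with density of $\cC^1(\ovl V)$ in $H^1(V)$. Composing these two facts shows that the restrictions $\{\eta|_{\d V}:\eta\in\cC^1(\ovl V)\}$ form a dense subset of $H^{\frac12}(\d V)$. After restriction to $S$ (with $H^{\frac12}(S)$ defined as the image of $H^{\frac12}(\d V)$ under restriction, equipped with the quotient norm), the same collection is dense in $H^{\frac12}(S)$. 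Since $H^{-\frac12}(S)$ is the topological dual of $H^{\frac12}(S)$ via the pairing in the assumption and $f$ annihilates a dense subspace, $f=0$ in $H^{-\frac12}(S)$. The refinements for $f\in L^2(S)$ (with $L^2(S)\hookrightarrow H^{-\frac12}(S)$) and $f\in\cC^0(S)$ follow by applying part (i) in local Lipschitz charts that flatten pieces of $S$ to subsets of $\RR^{n-1}$, using that $\cC^1$ test functions on such charts can be extended to elements of $\cC^1(\ovl V)$.

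The principal technical obstacle is the density of the trace restrictions $\cC^1(\ovl V)|_S$ in $H^{\frac12}(S)$ when $S$ is a proper subset of the Lipschitz boundary; this is delicate because boundary behavior of $H^{\frac12}$-functions near $\d S$ is not immediately covered by interior density statements. However, the formulation is natural: any element of $H^{\frac12}(S)$ is, by definition of the restriction norm, arbitrarily well approximated by restrictions of elements of $H^{\frac12}(\d V)$, and the latter are in turn approximated by traces of $\cC^1(\ovl V)$-functions. All the remaining steps are straightforward applications of duality and mollification.
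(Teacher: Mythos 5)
Your proposal is correct and follows essentially the same route as the paper's proof. For part (i) the paper simply cites H\"ormander for this standard distribution-theoretic fact, while you reproduce the underlying mollification argument; for part (ii) both arguments rest on the same two ingredients -- density of $\cC^1(\ovl V)$ in $H^1(V)$ and continuity and surjectivity of the trace $H^1(V)\to H^{1/2}(\d V)$ -- combined with duality, with you merely spelling out the restriction from $\d V$ to $S$ a bit more explicitly than the paper does.
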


\begin{proof} Part (i) is a standard result in distribution theory (\cite[Theorem 1.2.5]{Hoermander:V1}).
For (ii) note that since $\cC^1(\ovl{V})\sbs H^1(V)$ is dense, $\int_S f\eta\:\dsurf=0$ for all $\eta\in\cC^1(\ovl{V})$,
implies $\int_S f\:T\eta\:\dsurf=0$ for all $\eta\in H^1(V)$, with $T:H^1(V)\to H^{\frac{1}{2}}(\d V)$ the trace operator.
Since $T$ is continuous and surjective (\cite{Wloka:82}, Thm.\ 8.8), it follows $f=0$ in $H^{-\frac{1}{2}}(S)$.
If $f\in L^2(S)$ we conclude $f=0$ almost everywhere on $S$. The converse direction is clear.
\end{proof}

Let $\func$ be as in (\ref{eq:func}), with $F$, $U$, $V$, $W$ defined as above.
In addition, let $V$ be a $\Lip$-domain with exterior normal $\nu\colon\d V\to\RR^n$ (defined almost everywhere),
let $F\in\cC^2(U)$, and let $y_0\in W \cap\, \cC^2(V)^m$ be a stationary point of $\func$, that is $y_0$ satisfies
(\ref{eq:firstvar_stat}).  
Then $y_0$ satisfies the EL
\begin{equation}\label{eq:func_EL}
\sum_{j=1}^n\d_j(\d_{\d_jy_i}F_0)-\d_{y_i}F_0=0\quad\textrm{in}\:\: V
\end{equation}
and the NBC 
\begin{equation}\label{eq:func_NBC}
\sum_{j=1}^n(\d_{\d_jy_i}F_0)\:\nu_j=0\quad\textrm{on}\:\: \d V,
\end{equation}
for $1\leq i\leq m$. These equations are deduced as follows: The divergence theorem (Lemma \ref{lem:divthm}) applied to the first variation of $\func$
given by (\ref{eq:firstvar_calc}) yields the {\bf weak EL}
\begin{equation}\label{eq:firstvar_scalc}
\de\func(y_0, h)=\int_V\Big(\sum_{i=1}^m\d_{y_i}F_0
-\sum_{i=1}^m\sum_{j=1}^n\d_j(\d_{\d_jy_i}F_0)\Big) h_i\:\dvol
+\int_{\d V}\sum_{i=1}^m\sum_{j=1}^n(\d_{\d_jy_i}F_0) h_i\:\nu_j\:\dsurf.
\end{equation}
The fundamental lemma of variational calculus applied
to $V$, $\de\func(y_0, h)=0$ for all
$ h\in\cC^\infty_c(\ovl{V})^m\sbs\cC^1(\ovl{V})^m$ implies the EL (\ref{eq:func_EL}) in $V$ (the surface integral
vanishes identically since $ h|_{\d V}=0$). Furthermore, requiring stationarity for all $ h\in\cC^1(\ovl{V})^m$ yields the EL as well as the
condition $\int_{\d V}\sum_{i=1}^m\sum_{j=1}^n(\d_{\d_jy_i}F_0) h_i\:\nu_j\:\dsurf=0$ for all $ h\in\cC^1(\ovl{V})^m$.
By the fundamental lemma applied to $\d V$ we deduce the NBC (\ref{eq:func_NBC}). 

The {\bf NIBC (natural interior boundary conditions)} are deduced
analogously based on the divergence theorem for $\Lip$-composite
domains and for surfaces, Lemma \ref{lem:divthmComposite} and Lemma
\ref{lem:surfdivthm} respectively.

\bibliographystyle{siam}
\bibliography{brazda_ref}

\end{document}